%% file: main.tex
\documentclass[11pt]{article}

\usepackage{amsthm}
\usepackage{graphicx} % support the \includegraphics command and options
\usepackage{array} % for better arrays (eg matrices) in maths

\usepackage{amsmath, amssymb, amsfonts, verbatim}
\usepackage{hyphenat,epsfig,subcaption,multirow}
\usepackage{nicefrac}
\usepackage{paralist}
\usepackage{thm-restate}
\usepackage{mleftright}

\usepackage[font=small, labelfont=bf]{caption}

\usepackage[dvipsnames]{xcolor}
\usepackage{pifont}   % for \ding symbols

\usepackage[ruled]{algorithm2e}

\DeclareFontFamily{U}{mathx}{\hyphenchar\font45}
\DeclareFontShape{U}{mathx}{m}{n}{
      <5> <6> <7> <8> <9> <10>
      <10.95> <12> <14.4> <17.28> <20.74> <24.88>
      mathx10
      }{}
\DeclareSymbolFont{mathx}{U}{mathx}{m}{n}
\DeclareMathSymbol{\bigtimes}{1}{mathx}{"91}

\usepackage{tcolorbox}
\tcbuselibrary{skins,breakable}
\tcbset{enhanced jigsaw}

\usepackage[normalem]{ulem}
\usepackage[compact]{titlesec}

\definecolor{DarkRed}{rgb}{0.5,0.1,0.1}
\definecolor{DarkBlue}{rgb}{0.1,0.1,0.5}

\definecolor{colorOne}{HTML}{E69F00}   % orange
\definecolor{colorTwo}{HTML}{56B4E9}   % sky blue
\definecolor{colorThree}{HTML}{009E73} % bluish green
\definecolor{colorFour}{HTML}{F0E442}  % yellow
\definecolor{colorFive}{HTML}{0072B2}  % blue
\definecolor{colorSix}{HTML}{D55E00}   % vermillion
\definecolor{colorSeven}{HTML}{CC79A7} % reddish purple
\definecolor{colorEight}{HTML}{000000} % black
\definecolor{colorNine}{HTML}{999999}  % gray

\usepackage{nameref}
\definecolor{ForestGreen}{rgb}{0.1333,0.5451,0.1333}
\definecolor{Red}{rgb}{0.9,0,0}
\usepackage[linktocpage=true,
	pagebackref=true,colorlinks,
		urlcolor=black,
	linkcolor=DarkRed,citecolor=ForestGreen,
	bookmarks,bookmarksopen,bookmarksnumbered]
	{hyperref}
\usepackage[noabbrev,nameinlink]{cleveref}
\crefname{property}{property}{Property}
\creflabelformat{property}{(#1)#2#3}
\crefname{equation}{eq}{Eq}
\creflabelformat{equation}{(#1)#2#3}

\usepackage{bm}
\usepackage{url}
\usepackage{xspace}
\usepackage[mathscr]{euscript}

\usepackage{environ}  % For creating new environments
\usepackage{chngcntr} % For changing counters

\usepackage{tikz}
\usetikzlibrary{arrows}
\usetikzlibrary{arrows.meta}
\usetikzlibrary{shapes}
\usetikzlibrary{backgrounds}
\usetikzlibrary{positioning}
\usetikzlibrary{decorations.markings}
\usetikzlibrary{decorations.pathmorphing}
\usetikzlibrary{patterns}
\usetikzlibrary{calc}
\usetikzlibrary{fit}
\tikzset{vertexA/.style={circle,fill=black,minimum size=5pt,inner sep=0pt, font=\small}}
\tikzstyle{vertexB}=[circle,fill=black,minimum size=3pt,inner sep=0pt, font=\small]
\tikzstyle{selected vertex} = [vertex, fill=red!24]
\tikzstyle{edge} = [draw, black!50, -]
\tikzstyle{vertexRed}=[circle,fill=red,draw,minimum size=10pt,inner sep=0pt, font=\small, line width=0.5pt]
\tikzstyle{vertexBlack}=[circle,fill=black,draw,minimum size=10pt,inner sep=0pt, font=\small, line width=0.5pt]

\usepackage{mdframed}

\usepackage[noend]{algpseudocode}
\makeatletter
\def\BState{\State\hskip-\ALG@thistlm}
\makeatother

\usepackage{cite}
\usepackage{enumitem}

\usepackage[margin=1in]{geometry}

\usepackage{float} % Add this in your preamble

\newtheorem{theorem}{Theorem}
\newtheorem{lemma}{Lemma}[section]

\newtheorem{proposition}[lemma]{Proposition}
\newtheorem{corollary}[lemma]{Corollary}
\newtheorem{claim}[lemma]{Claim}

\newtheorem*{claim*}{Claim}
\newtheorem*{proposition*}{Proposition}
\newtheorem*{lemma*}{Lemma}
\newtheorem*{problem*}{Problem}

\crefname{lemma}{Lemma}{Lemmas}
\crefname{claim}{Claim}{Claims}

\newtheorem{mdresult}{Result}

\newtheorem*{mdresult*}{Main Result}

\newtheorem{definition}[lemma]{Definition}
\theoremstyle{definition}
\newtheorem{remark}[lemma]{Remark}
\newtheorem{observation}[lemma]{Observation}

\newtheorem{mdinvariant}[lemma]{Lemma}

\theoremstyle{definition}
\newtheorem{mdalg}{Algorithm}
\newenvironment{Algorithm}{\begin{tbox}\begin{mdalg}}{\end{mdalg}\end{tbox}}

\allowdisplaybreaks

\renewcommand{\qed}{\nobreak \ifvmode \relax \else
      \ifdim\lastskip<1.5em \hskip-\lastskip
      \hskip1.5em plus0em minus0.5em \fi \nobreak
      \vrule height0.75em width0.5em depth0.25em\fi}

\renewcommand{\leq}{\leqslant}
\renewcommand{\geq}{\geqslant}

\input{macros}

\title{Sublinear-Time Lower Bounds for Approximating Matching Size using Non-Adaptive Queries}
\author{
Vihan Shah\footnote{(\href{mailto:vihanshah98@gmail.com}{\text{vihanshah98@gmail.com}}) 
Cheriton School of Computer Science, University of Waterloo. }  
}

\date{}

\begin{document}
\maketitle

%%\vspace{-0.55cm}
\thispagestyle{empty}
\pagenumbering{roman}

\input{abstract}
%
%
\clearpage
\bigskip
\setcounter{tocdepth}{3}
\tableofcontents

\clearpage

\pagenumbering{arabic}
\setcounter{page}{1}

\input{intro}

\input{prelim}

\input{log-apx-LB-AdjList}

\input{structure.tex}

\input{coupling.tex}

\input{algorithm}

\input{tree-query}

\section{Open Problems}

Our work provides the first non-adaptive bounds for a fundamental graph problem, establishing that adaptivity is necessary to achieve strong approximations for estimating the maximum matching size in sublinear time. However, several intriguing questions remain open.

\begin{enumerate}
	\item \textbf{Tight Bounds for Non-Adaptive Algorithms.}  
	Our bounds leave a polynomial gap between the achievable $n^{1/2}$-approximation and the $n^{1/3}$-approximation ruled out by our lower bound. Closing this gap, either by improving the lower bound or by designing better non-adaptive algorithms, remains a central open direction.
	
	\item \textbf{Intermediate Levels of Adaptivity.}  
	The literature to date has studied fully adaptive algorithms for matching size, and this work studies the fully non-adaptive setting. What happens in between? In particular, our lower bound does not extend to algorithms with two rounds of non-adaptive queries, leaving open what can be achieved in two rounds, or more generally with $r$ rounds of non-adaptive queries? Understanding this spectrum of partial adaptivity remains an important direction for future work. More concretely, can we achieve an $n^{\eps}$-approximation in $\Theta(1)$ rounds or a $\polylog{n}$-approximation in $\polylog{n}$ rounds?
	
	\item \textbf{Beyond Matching and Vertex Cover.}  
	Prior to this work, non-adaptive algorithms in the sublinear-time setting had not been studied. We initiate this study for matching and vertex cover size. It is natural to ask how other problems, such as maximum flow, minimum cut, and vertex connectivity, behave under non-adaptivity or low adaptivity.
	
\end{enumerate}

\subsection*{Acknowledgments}
The author is deeply grateful to Sepehr Assadi for insightful discussions and valuable guidance throughout the project. The author also thanks Janani Sundaresan and Eric Blais for their helpful feedback and conversations.

\bibliographystyle{alpha}
\bibliography{reference}

\end{document}

%% file: macros.tex
\newcommand{\Ot}{\ensuremath{\widetilde{O}}}

\newcommand{\eps}{\ensuremath{\varepsilon}}

\newcommand{\bracket}[1]{\left[#1\right]}
\newcommand{\paren}[1]{\ensuremath{\left(#1\right)}\xspace}
\newcommand{\card}[1]{\left\vert{#1}\right\vert}

\newcommand{\prob}[1]{\Pr\paren{#1}}
\newcommand{\expect}[1]{\Exp\bracket{#1}}

\newcommand{\set}[1]{\ensuremath{\left\{ #1 \right\}}}
\newcommand{\poly}{\mbox{\rm poly}}
\newcommand{\polylog}[1]{\textnormal{polylog}\,{#1}\xspace}

\newcommand{\Alg}{\ensuremath{\mathsf{Alg}}\xspace}

\DeclareMathOperator*{\Exp}{\ensuremath{{\mathbb{E}}}}
\DeclareMathOperator*{\Prob}{\ensuremath{\textnormal{Pr}}}
\renewcommand{\Pr}{\Prob}

\newenvironment{tbox}{\begin{tcolorbox}[
		enlarge top by=5pt,
		enlarge bottom by=5pt,
		 breakable,
		 boxsep=2pt,
                  left=5pt,
                  right=7pt,
                  top=10pt,
                  arc=0pt,
                  boxrule=1pt,toprule=1pt,
                  colback=white
                  ]%%
	}
{\end{tcolorbox}}

\newcommand{\II}{\ensuremath{\mathbb{I}}}

\newcommand{\mireal}[1][]{
  \ifx\relax#1\relax%
    \II(\mione \,; \mitwo)%
  \else%
    \II(\mione \,; \mitwo\mid #1)%
  \fi
}

\newcommand{\block}[1]{\ensuremath{{B}^{#1}}}

\renewcommand{\leq}{\leqslant}

\renewcommand{\geq}{\geqslant}
\renewcommand{\ge}{\geq}

\newcommand{\myqed}[1]{\let\qed\relax \hspace*{\fill} #1 \ensuremath{\square}}

\NewEnviron{constraint}{
	\begin{tcolorbox}
		\begin{equation}
			\BODY
		\end{equation}
	\end{tcolorbox}
}

\newcommand{\gyes}{\ensuremath{G_{\text{\tiny YES}}}}
\newcommand{\gno}{\ensuremath{G_{\text{\tiny NO}}}}

\newcommand{\dyes}{\ensuremath{\mathcal{D}_{\text{\tiny YES}}}}
\newcommand{\dno}{\ensuremath{\mathcal{D}_{\text{\tiny NO}}}}

\newcommand{\dd}{\ensuremath{\mathcal{D}^{D} }}

\newcommand{\dcno}{\ensuremath{\mathcal{D}^C_{\text{\tiny NO}}}}
\newcommand{\dcyes}{\ensuremath{\mathcal{D}^C_{\text{\tiny YES}}}}

\newcommand{\gcno}{\ensuremath{G^C_{\text{\tiny NO}}}}
\newcommand{\gcyes}{\ensuremath{G^C_{\text{\tiny YES}}}}

\newcommand{\gobs}{\ensuremath{G_{\text{obs}}}}
\newcommand{\eobs}{\ensuremath{E_{\text{obs}}}}

\newcommand{\hh}{\ensuremath{ \mathcal{H} }}
\newcommand{\xx}{\ensuremath{ \mathcal{X} }}
\newcommand{\yy}{\ensuremath{ \mathcal{Y} }}
\newcommand{\mm}{\ensuremath{ \mathcal{M} }}

\newcommand{\pyes}{\ensuremath{P_{\text{\tiny{YES}}}}}
\newcommand{\pno}{\ensuremath{P_{\text{\tiny{NO}}}}}
\newcommand{\ppair}{\ensuremath{P_{\text{pair}}}}
\newcommand{\snl}{\ensuremath{\mathsf{SNL}}\xspace}
\newcommand{\snlobs}{\ensuremath{\mathsf{SNL}_{u,u'}^{\text{obs}} }\xspace}
\newcommand{\snldup}{\ensuremath{\mathsf{SNL}_{u,u'}^{\text{dup}} }\xspace}
\newcommand{\snlcount}{\ensuremath{\mathsf{snl}^{\text{obs}} }\xspace}
\newcommand{\snldupct}{\ensuremath{\mathsf{snl}^{\text{dup}} }\xspace}
\newcommand{\snlsize}{\ensuremath{\mathsf{snl}_{u,u'} }\xspace}

\newcommand{\vhigh}{\ensuremath{V_{h}}\xspace}
\newcommand{\vlow}{\ensuremath{V_{\ell}}\xspace}

%% file: abstract.tex
\begin{abstract}

We study the problem of estimating the size of the maximum matching in the sublinear-time setting. This problem has been extensively studied, with several known upper and lower bounds. A notable result by Behnezhad (FOCS 2021) established a $2$-approximation in $\Ot(n)$ time. 
%On the lower bound side, \cite{Behnezhad2023Sublinear} showed that $\Omega(n^{1.2 - o(1)})$ queries are necessary to achieve better than a $2/3$-approximation.

However, all known upper and lower bounds are in the adaptive query model, where each query can depend on previous answers. In contrast, non-adaptive query models—where the distribution over all queries must be fixed in advance—are widely studied in property testing, often revealing fundamental gaps between adaptive and non-adaptive complexities. This raises the natural question: is adaptivity also necessary for approximating the maximum matching size in sublinear time? This motivates the goal of achieving a constant or even a polylogarithmic approximation using $\Ot(n)$ non-adaptive adjacency list queries, similar to what was done by Behnezhad using adaptive queries.

We show that this is not possible by proving that any randomized non-adaptive algorithm achieving an $n^{1/3 - \gamma}$-approximation, for any constant $\gamma > 0$, with probability at least $2/3$, must make $\Omega(n^{1 + \eps})$ adjacency list queries, for some constant $\eps > 0$ depending on $\gamma$. This result highlights the necessity of adaptivity in achieving strong approximations. However, non-trivial upper bounds are still achievable: we present a simple randomized algorithm that achieves an $n^{1/2}$-approximation in $O(n \log^2 n)$ queries.

Moreover, our lower bound also extends to the newly defined variant of the non-adaptive model, where queries are issued according to a fixed query tree, introduced by Azarmehr, Behnezhad, Ghafari, and Sudan (FOCS 2025) in the context of Local Computation Algorithms.

%More generally, we establish a tradeoff between the number of queries and the approximation factor in the lower bound: any $n^{\delta}$-approximation requires $\Omega(n^{1+\eps})$ queries for constants $\eps, \delta > 0$ satisfying $4\eps + 3\delta < 1$. On the upper bound side, we present a generalized algorithm that achieves an $n^{\delta}$-approximation using $O(n^{2 - 2\delta} \log n)$ non-adaptive queries for any $0 < \delta < 1$.
%
%Together, our results provide the first polynomial upper and lower bounds for estimating maximum matching size using non-adaptive queries, and demonstrate that non-adaptivity imposes strong limitations on the achievable approximation guarantees in this setting.

\end{abstract}

%We study the problem of estimating the size of the maximum matching in the sublinear-time setting. This problem has been extensively studied, with several known upper and lower bounds. A notable result by Behnezhad (FOCS 2021) established a 2-approximation in O~(n) time. 
%
%However, all known upper and lower bounds are in the adaptive query model, where each query can depend on previous answers. In contrast, non-adaptive query models—where the distribution over all queries must be fixed in advance—are widely studied in property testing, often revealing fundamental gaps between adaptive and non-adaptive complexities. This raises the natural question: is adaptivity also necessary for approximating the maximum matching size in sublinear time? This motivates the goal of achieving a constant or even a polylogarithmic approximation using O~(n) non-adaptive adjacency list queries, similar to what was done by Behnezhad using adaptive queries.
%
%We show that this is not possible by proving that any randomized non-adaptive algorithm achieving an n^{1/3 - gamma}-approximation, for any constant gamma > 0, with probability at least 2/3, must make Omega(n^{1 + eps}) adjacency list queries, for some constant eps > 0 depending on gamma. This result highlights the necessity of adaptivity in achieving strong approximations. However, non-trivial upper bounds are still achievable: we present a simple randomized algorithm that achieves an n^{1/2}-approximation in O(n \log n) queries.

%% file: intro.tex
\section{Introduction}
Maximum matching is among the most fundamental and well-studied problems in combinatorial optimization and theoretical computer science.
A \textbf{matching} in a graph is a set of edges no two of which share an endpoint, and a \emph{maximum matching} is a matching of the largest size.
Edmonds, in his seminal paper \cite{edmonds1965paths}, gave the first polynomial-time algorithm for the maximum matching problem, a breakthrough that influenced the formal development of computational complexity theory and inspired later formalizations of complexity classes, such as P and NP. This runtime was later improved to $O(m \sqrt{n})$ \cite{hopcroftKarp,micali1980v}, where $n$ is the number of vertices and $m$ is the number of edges. Further improving this runtime is a major open problem. The recent breakthrough on max flow \cite{chen2022maximum,van2023deterministic} implies an $m^{1+o(1)}$ time deterministic algorithm for bipartite matching. 
While $O(m \sqrt{n})$ remains the best-known bound for exact solutions in general graphs, allowing approximations makes the problem significantly easier. For instance, a greedy maximal matching gives a $2$-approximation in linear time.
A $(1+\eps)$-approximation to the maximum matching can be achieved in $O(m/\eps)$ time \cite{hopcroftKarp,duan2014linear}.

However, in the era of massive graphs, the traditional goal of linear-time computation becomes inadequate.
This motivates the study of \emph{sublinear-time} algorithms: algorithms that estimate the solution without reading the entire input. In the case of maximum matching, a natural question that arises is whether there exist sublinear-time algorithms for approximating the maximum matching.

\subsection{Prior Work}
The sublinear-time setting assumes restricted access to the graph, typically via the \emph{adjacency list model} (our focus) or the \emph{adjacency matrix model}. In the former, algorithms may query the degree of a vertex or its $i^\text{th}$ neighbor (the algorithm receives NULL if the vertex has fewer than $i$ neighbors). In the latter, the algorithm specifies a pair of vertices and learns whether they are adjacent.

Unfortunately, there are lower bounds that answer the aforementioned question on sublinear-time approximate matching in the negative. 
In the adjacency matrix model or the adjacency list model, any algorithm that outputs a constant approximate matching needs $\Omega(n^2)$ queries \cite{Assadi2019Coloring,bhattacharya2023sublinear}.
While this lower bound rules out the possibility of efficiently outputting the edges of a matching, it does not preclude meaningful progress. Rather than retrieving the matching itself, we can shift our focus to estimating its size, which turns out to be a much more feasible goal. 

The first wave of results \cite{parnas2007approximating,Nguyen2008Constant,Yoshida2009improved,Onak2012Near} focused mainly on bounded-degree graphs and were not sublinear-time for maximum degree $\Delta = \Theta(n)$.
The second wave of results run in truly sublinear-time even on graphs with a large maximum degree, starting with the works \cite{chen2020Sublinear,Kapralov2020Space}.
\cite{Behnezhad2021Optimal} marked a milestone by establishing a $2$-approximation in $\Ot(n)$ time\footnote{Throughout the paper, we use $\Ot(f) := O(f \cdot \polylog(n))$ to hide polylogarithmic factors in $n$.}.
These were followed by a series of results by \cite{Behnezhad2023Beating,Behnezhad2023Sublinear,bhattacharya2023sublinear,Behnezhad2023Sublinear} culminating in a $(1, \eps n)$-approximation\footnote{An $(x, y)$-approximation achieves multiplicative error $x$ and additive error $y$.} in $o(n^2)$ time  \cite{Bhattacharya2023Dynamic}.
Note that all these results hold in the adjacency list query model, except for the last one, which holds in the adjacency matrix model but is believed to extend to the adjacency list setting as well \cite{Behnezhad2024Approximating}. 

These sublinear-time upper bounds are complemented by lower bounds on the query complexity which bound the running time of any algorithm that approximates the maximum matching size.
These begin with the \(\Omega(n)\) lower bound for constant-factor approximations \cite{parnas2007approximating}. Subsequent results established super-linear lower bounds for achieving better approximations, including \(\Omega(n^{1.2 - o(1)})\) for a \((1.5 - \eps)\)-approximation \cite{Behnezhad2023Sublinear} and \(\Omega(n^{2 - f(\eps)})\) for a \((1, \eps n)\)-approximation \cite{Behnezhad2024Approximating}.

While these results have significantly advanced our understanding of sublinear-time algorithms for maximum matching, they all rely on adaptive querying strategies, where queries are chosen based on answers to previous queries. 
In many cases, non-adaptive queries offer distinct advantages. Since all queries come from a predetermined distribution, they can be executed independently, making them well-suited for parallel computation frameworks.
This motivates the study of algorithms for maximum matching in a fully non-adaptive setting, where the distribution over queries must be fixed in advance.

Non-adaptive query models have been widely studied in property testing algorithms \cite{Goldreich2001Three,Ben2003Some,FISCHER2004107,Goldreich2011Algorithmic,Blais2023Testing}, often revealing fundamental gaps between adaptive and non-adaptive complexities \cite{raskhodnikova2006note,Gonen2007Benefits,Canonne2017Adaptivity,Goldreich2022Non}.
For instance, \cite{raskhodnikova2006note} shows that in the bounded degree model for graph property testing, adaptivity is essential.
Non-adaptive queries have been studied in the cut query model \cite{addanki2022non,mordoch2025cut} and more recently for the maximum matching problem in the context of Local Computation Algorithms \cite{azarmehr2025lower}. 
While non-adaptive queries have not been explicitly studied in the context of sublinear-time algorithms, we note that the sublinear-time algorithms for vertex coloring in \cite{Assadi2019Coloring,assadi2025simple} are fully non-adaptive.
This motivates understanding whether adaptivity is necessary for approximating the maximum matching size in sublinear-time.

However, for approximating maximum matching size, no prior work has explored the power or limitations of non-adaptive queries. 
Investigating this model could provide new insights into the role of adaptivity in sublinear-time graph algorithms and reveal whether adaptivity is essential for achieving strong approximations. 
Thus, similar to the adaptive case \cite{Behnezhad2021Optimal}, we would ideally like to obtain a constant-factor approximation around $2$ using $\Ot(n)$ non-adaptive queries. However, since this may be challenging, we instead consider a more relaxed question:  
\begin{quote}
	\emph{Can we get an $O(1)$ or even a $\polylog{n}$-approximation to the maximum matching size using $\Ot(n)$ non-adaptive adjacency list queries?}
\end{quote}

\subsection{Our Contributions}
We answer this question in the negative and show that adaptivity is not just helpful but essential for strong approximations, as formalized in the following theorem.
\begin{theorem}\label{thm:lb-large-apx}
	For any constant $\gamma>0$ there exists a constant $\eps>0$ such that any randomized algorithm that is given access to an arbitrary graph using non-adaptive adjacency list queries and can get an $n^{1/3-\gamma}$-approximation to the maximum matching size with probability at least $2/3$, needs to use $\Omega(n^{1+\eps})$ queries.
\end{theorem}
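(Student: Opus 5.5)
We prove \Cref{thm:lb-large-apx} with Yao's minimax principle. We construct two distributions, \dyes and \dno, on $n$-vertex graphs. Graphs drawn from \dyes have maximum matching size $\Omega(n^{1/3-\gamma}\cdot k)$, while graphs drawn from \dno have maximum matching size $O(k)$, with high probability, for a suitable $k$. We then show that any deterministic non-adaptive algorithm making $q = n^{1+\eps}$ adjacency list queries sees nearly the same distribution of answers under both. Non-adaptivity means the query set $Q$ is a fixed set of pairs $(v,i)$ and degree queries. We randomly permute vertex labels and adjacency list orders, so a fixed query becomes a query to a uniformly random vertex's uniformly random neighbor slot.

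\textbf{Construction.} The template I would use has a small set $\vhigh$ of high-degree vertices, of size about $n^{1/3}$ times the matching size, and a large set $\vlow$ of low-degree vertices whose neighbors lie mostly in $\vhigh$.
\begin{itemize}
\item In \gno, every low vertex attaches only to the same small core. Any matching must use a core vertex, so the matching number is at most $|\text{core}| = O(k)$.
\item In \gyes, a hidden set of about $n^{1/3-\gamma}k$ low vertices each have one extra private edge to a distinct partner. This creates a large matching.
\item The partners are disguised. They are low-degree vertices with the same degree profile, and their extra edge is placed at a random position in an adjacency list that is otherwise dominated by edges into high-degree vertices.
\end{itemize}
Degrees are matched exactly between the two cases, so degree queries reveal nothing. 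The parameters are chosen so that each hidden edge occupies a $1/d$ fraction of slots of a degree-$d$ vertex, and there are few such vertices relative to $n$.

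\textbf{Coupling.} Couple \gyes and \gno on everything except the hidden edges. The algorithm distinguishes only if some query in $Q$ lands on a hidden edge. Detecting a hidden edge requires more than hitting one of its slots: the answer must be recognizable, so the algorithm must also learn enough about the endpoints to see that the edge is private rather than a core edge. Since queries are non-adaptive, the algorithm cannot first locate a vertex and then explore it. Distinguishing therefore requires two queries at endpoints of the same hidden edge, which I would call a ``pair'' event. The chance of a pair event is about
\[
\sum_{u,u'} \Pr[\text{both touched}] \approx \frac{q^2 \cdot (\#\text{hidden edges})}{n^2 d^2}.
\]
With $d$ and the hidden count tuned against the approximation gap $n^{1/3-\gamma}$, this is $o(1)$ when $q = n^{1+\eps}$.

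\textbf{Main obstacle.} The hardest step is the balancing act. Against an arbitrary fixed query multiset, the observed subgraph $\gobs$ must have the same distribution under both cases up to $o(1)$ total variation, even when many queries concentrate on a few vertices. The difficulty is an algorithm that queries all $n$ slots of some vertices. I would handle this by charging queries to each vertex $u$ and bounding collisions through the number of pairs $(u,u')$ whose combined query load exceeds a threshold. This is where the exponent $1/3$ comes from: balancing $|\vhigh|$, the degree $d$, and the gap yields $n^{1/3}$. Finally, concentration shows that the matching sizes hold with high probability, and a TV bound plus Yao's principle gives the $\Omega(n^{1+\eps})$ lower bound for randomized algorithms.
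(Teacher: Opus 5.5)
There is a genuine gap, and it lies in the construction. In your instance the edges present only in $\dyes$ can be recognized from a single query answer, so the premise that distinguishing requires a ``pair event'' at both endpoints fails.

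In your $\dno$, every neighbor of a low vertex is a core vertex. The core vertices must be high-degree, at least on average, since $O(k)$ of them absorb the roughly $nd$ edges of the low vertices. The private partner $w$ of a hidden vertex $v$ is low-degree by design. So after $O(n)$ degree queries, the single answer $v\to w$ exhibits a low--low edge, which never occurs in $\dno$.

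Such hits cannot be made rare. In a simple graph a low vertex has $d\le |\text{core}|+1=O(k)$. There are $2h\approx 2n^{1/3-\gamma}k$ hidden endpoints, each with its private edge in a uniformly random slot. Querying $n^{\eps}$ slots of every vertex therefore yields about $2hn^{\eps}/d=\Omega(n^{1/3-\gamma+\eps})$ hits in expectation, and even one query per vertex suffices. Padding degrees with extra vertices does not help: those vertices must then lie in the no-case vertex cover, so $d$ stays at most the cover size.

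In fact, in the paper's instance the algorithm does observe about $n^{\eps+\delta}$ core edges (\Cref{clm:no-edges-core}). What saves the argument there is that an observed core edge looks the same in both cases. Separately, ``degrees are matched exactly'' is inconsistent with adding a private edge unless you also delete a core edge, which shifts core degrees.

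The missing idea is to make the case-dependent edges locally indistinguishable. The paper's instance works as follows.
\begin{itemize}
\item Every core vertex has degree exactly $d^*=n^{1-\delta}/2$, padded by dummies $D$. The dummies are identifiable by degree but lie in the no-case cover $A\cup D$.
\item Both cases have $2n$ core edges: $\dyes$ uses $A$--$A$ and $B$--$B$ edges, while $\dno$ uses $A$--$B$ edges.
\item The only hidden information is whether a core vertex lies in $A$ (with $n^{\delta}$ core edges) or $B$ (with one core edge).
\end{itemize}

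With this design, your intuition that the algorithm must learn about both endpoints becomes correct. The paper formalizes it in \Cref{lem:disj-union-stars}: with high probability the observed core graph is a disjoint union of stars whose petals are never happy. For a single core edge, the probability that its tail observes it and its head is happy is about $\sum_{u,v} n^{-2}\cdot\frac{q_u}{d^*}\cdot\frac{n^{\delta}q_v}{d^*}=O(n^{2\eps+3\delta-2})$. Summing over $2n$ edges gives the constraint $n^{2\eps+3\delta}=o(n)$. This, not a balancing of $|V_h|$, $d$ and the gap, is where the exponent $1/3$ comes from.

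Even then, coupling ``on everything except the hidden edges'' is not enough. A petal's observed dummy edges are distributed slightly differently depending on whether it lies in $A$ or $B$. The paper handles this with a dual-permutation coupling that swaps petal labels. It also uses the heavy-petal bucketing of \Cref{clm:discovered-heavy-edges} to bound $\sum_u q_u$ over petals; this is where your charging of heavily queried vertices shows up. Together these show that matched blocks agree in size up to a $1\pm o(1)$ factor. The statement then follows from \Cref{thm:lb} with $\delta=1/3-\gamma$ and any $\eps<3\gamma/2$.
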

Note that our hard instance is a multi-graph; we allow multiple edges to simplify the analysis, though we believe similar bounds hold without them.

This theorem rules out any $\polylog{n}$-approximations in $\Ot(n)$ queries.
In fact, it says something much stronger.
It rules out even a polynomial approximation better than $n^{1/3}$ in $\Ot(n)$ queries.
We prove a more general theorem which we state below:
\begin{restatable}{theorem}{lbthm} \label{thm:lb}
	Any randomized algorithm that is given access to an arbitrary graph using non-adaptive adjacency list queries and can get an $n^{\delta}$-approximation to the maximum matching size with probability at least $2/3$, needs to use $\Omega(n^{1+\eps})$ queries for any $\eps,\delta>0$ satisfying $n^{2\eps+3\delta} \log^3 n =o(n)$.
	In other words, this implies a lower bound of $\widetilde{\Omega}(n^{1.5 - 1.5\delta})$ queries for achieving an $n^{\delta}$-approximation.
\end{restatable}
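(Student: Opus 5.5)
By Yao's minimax principle, it suffices to exhibit two distributions over (multi-)graphs, \dyes{} and \dno{}, such that every graph in the support of \dyes{} has maximum matching size more than $n^{2\delta}$ times that of every graph in the support of \dno{}, yet no deterministic non-adaptive query set $Q$ with $|Q| = O(n^{1+\eps})$ can tell the two apart with advantage better than $o(1)$. For a non-adaptive algorithm, a deterministic algorithm is a fixed set of pairs $(v,i)$, each asking for the $i$-th neighbor of $v$, together with degree queries. The answers therefore form a fixed-position sample of the adjacency lists. Since the labels and adjacency-list orders are randomized, I will assume that the queried positions correspond to uniformly random slots.

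\textbf{Construction.} Both distributions share a set $V_h$ of $n^{\delta}$-ish high-degree hub vertices and a large set $V_\ell$ of low-degree vertices; every vertex has the same degree in both distributions, so degree queries reveal nothing.
\begin{itemize}
\item In \dno{}, every low vertex sends all its edges to hubs, possibly as parallel edges. The graph is then essentially bipartite with one side of size $|V_h|$, so the matching is at most about $|V_h|$.
\item In \dyes{}, a random subset of low vertices spends a small number of its edge slots on a hidden perfect matching among low vertices, using parallel edges so that degrees stay balanced. This makes the matching size $\Theta(n)$ while the degree sequence is unchanged.
\end{itemize}
Concretely, I would set the low-vertex degree to $d \approx n^{\eps+\delta}$ and let each low vertex in \dyes{} place only a $1/d$-ish fraction of its slots on the matching edge. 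The ratio is then $n/|V_h| \ge n^{\delta}$ after tuning.

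\textbf{Indistinguishability.} Let \gobs{} be the observed subgraph. The algorithm can only distinguish the two distributions by seeing a low--low edge $(u,u')$ and recognizing it as such. Knowing the other endpoint's identity tells it that endpoint is low, since hubs are labeled randomly but have huge degree, and degrees are queryable. So I must make low--low edges rarely observed, or else camouflage them.

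To camouflage, I would add to \dno{} "decoy" low--low edges that form a different structure: low vertices grouped into small clusters whose induced graph has tiny matching, such as stars or components where every low--low edge touches a few "semi-high" centers. Then observing single low--low edges is uninformative. The distinguishing event becomes observing a \emph{structure}, e.g. two disjoint low--low edges among a cluster, or a path certifying a local matching versus a star.

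The counting argument bounds the expected number of such witness structures within $Q$. Each query lands on a low--low slot with probability about $p$. A witness needs about two or three correlated hits among vertices that the algorithm must guess to be related, out of $n$ random labels. This gives a bound of the form $|Q|^2 p^2 \cdot n^{O(\delta)}/n$ or $|Q|^{3}\cdots$. Plugging in $|Q| = n^{1+\eps}$ and optimizing the parameters should yield exactly the condition $n^{2\eps+3\delta}\log^3 n = o(n)$, with the $\log^3 n$ coming from union bounds and concentration of degrees. Conditioned on no witness, I would couple \gobs{} under \dyes{} and \dno{} so that the two views are identically distributed (the coupling section), and conclude via total variation.

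\textbf{Main obstacle.} The hardest step is the design of the decoy structure in \dno{} together with the coupling, since they must simultaneously:
\begin{itemize}
\item keep the matching small in \dno{};
\item match all degrees exactly;
\item make every small observed subgraph equally likely in both distributions.
\end{itemize}
Non-adaptivity is essential here: the algorithm cannot follow an edge to its endpoint's list. I would first try a case analysis on pairs of queried low vertices $u,u'$, splitting into whether their shared-neighbor-list slots are observed or duplicated, bound each case's probability, and show the rest is symmetric.
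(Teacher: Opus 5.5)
Your high-level skeleton (Yao's principle, a structural lemma saying the observed graph is very sparse, then a coupling) matches the paper's. But there is a genuine gap at the core: the proposal never gives a hard instance that works.

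\textbf{The construction.} The instance you do specify is trivially distinguishable. A single observed low--low edge, whose endpoints a degree query identifies as low, certifies a yes instance. The repair is a decoy structure in \dno{} that keeps the matching small and the degrees identical while making observed edges uninformative. That repair is exactly the content of the theorem, and you leave it as the ``main obstacle''.

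The missing idea is that the small vertex cover of the no instance must be degree-indistinguishable from the vertices carrying the yes matching. It must also have the same number of non-padding edges in both cases. The paper does this as follows. It takes a core $A\cup B$ with $|A|=2n^{1-\delta}$ and $|B|=2n$, and pads every core vertex to degree $d^*=n^{1-\delta}/2$ with edges to a dummy set $D$ of size $2n^{1-\delta}$. Every $A$ vertex gets exactly $n^{\delta}$ core edges and every $B$ vertex exactly one, in both cases. In yes, there is an identity matching $B^L$--$B^R$ plus $n^{\delta}$ random perfect matchings between $A^L$ and $A^R$ (a multigraph). In no, $A^R$ is perfectly matched to each group $B^L_i$ and $A^L$ to each group $B^R_i$. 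Then $A\cup D$ covers every no graph, so $\mu\le 4n^{1-\delta}$, while $\mu\ge n$ in yes. In any observed core edge, the type of the unqueried endpoint stays hidden, and that type is exactly what differs between the cases.

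\textbf{The degree parameter.} Your choice $d\approx n^{\eps+\delta}$ goes the wrong way. With $n^{1+\eps}$ queries, each list is seen in an $n^{-\delta}$ fraction, so about $n^{1-\delta}$ matching edges are observed, many of them from both endpoints. The threshold in the theorem comes from $d^*\approx n^{1-\delta}$. Then the expected number of core edges observed from one endpoint whose other endpoint is happy is $O(n\cdot n^{\delta}n^{2\eps}/(d^*)^2)=O(n^{2\eps+3\delta-1})$. This count, together with analogous counts for shared observed neighbors and multi-edges, is what makes the observed core a disjoint union of stars.

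\textbf{The indistinguishability step.} This also does not work as you describe. Even after conditioning on the disjoint-stars event, the two views are not identically distributed. The petal behind a given observed label is of type $A$ in one case and $B$ in the other. It therefore has $d^*-n^{\delta}$ versus $d^*-1$ dummy neighbors, so its queried slots and the edges from $D$ into it are distributed slightly differently.

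The paper handles this in four steps:
\begin{enumerate}
\item It couples cores through quads, replacing $(b^L_{i,j},a^R_{j'}),(b^R_{i,j},a^L_j)$ by $(b^L_{i,j},b^R_{i,j}),(a^L_j,a^R_{j'})$.
\item It uses a dual labeling permutation that swaps labels within each petal pair.
\item It gives each pair a shared neighbor list to be split later.
\item It shows that matched blocks have sizes within a $1\pm o(1)$ factor, losing $\exp(-O(n^{2\delta-1}(q_u+q_{u'}+n^{\delta}+n^{\eps})))$ per petal pair.
\end{enumerate}
Controlling the product of these losses requires a bound on the total number of queries landing on petals. Here your assumption that queried positions act as uniformly random slots is not enough. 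Randomizing the lists makes positions within a list uniform, but the algorithm may still concentrate its queries on a few labels. The paper handles this with a dyadic count of $2^i$-heavy petals, which is where the $\log^3 n$ in the constraint comes from. Your claim that optimizing a witness count ``should yield exactly'' the stated condition is asserted, not derived.
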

This theorem is more general and gives a tradeoff between the approximation factor and the number of queries required. Moreover, since our hard instance is a \emph{bipartite} graph, the lower bound also applies to algorithms restricted to bipartite inputs.
If we want to stick to a $\polylog{n}$-approximation, then \Cref{thm:lb} shows that we get a strong lower bound of $\widetilde{\Omega}(n^{1.5})$ queries.
In particular, we have the following corollary.
\begin{corollary}\label{corr:lb-large-queries}
	Any randomized algorithm that is given access to an arbitrary graph using non-adaptive adjacency list queries and can get better than a $\polylog{n}$-approximation to the maximum matching size with probability at least $2/3$, needs to use $\widetilde{\Omega}(n^{1.5})$ queries.
\end{corollary}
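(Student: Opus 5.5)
This follows directly from \Cref{thm:lb} by choosing the parameters suitably. To beat a $\polylog{n}$ approximation, the algorithm must in particular achieve some $\polylog{n}$-approximation, say an $n^{\delta}$-approximation with $n^{\delta} = \log^{c} n$ for a constant $c$, that is, $\delta = c\log\log n/\log n = o(1)$. Then $n^{3\delta} = \log^{3c} n$ is polylogarithmic.

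I then apply \Cref{thm:lb} with this $\delta$ and with $\eps$ chosen as large as the hypothesis permits. The condition $n^{2\eps+3\delta}\log^3 n = o(n)$ becomes $n^{2\eps} = o\big(n/\log^{3c+3} n\big)$. Taking $n^{\eps} = \sqrt{n}/\log^{(3c+3)/2+1} n$ satisfies this with room to spare, since the ratio is $1/\log^{2} n \to 0$. The theorem then gives a lower bound of $\Omega(n^{1+\eps}) = \Omega\big(n^{1.5}/\polylog{n}\big) = \widetilde{\Omega}(n^{1.5})$ queries. Equivalently, the stated form $\widetilde{\Omega}(n^{1.5-1.5\delta})$ with $n^{1.5\delta} = \polylog{n}$ gives the same bound.

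The only point needing care is that \Cref{thm:lb} is stated for constants $\eps,\delta$, whereas here they depend on $n$. I would check that its proof uses only the asymptotic condition $n^{2\eps+3\delta}\log^3 n = o(n)$ and never constancy. Since that condition is exactly what the hard-instance analysis consumes (the $o(1)$ distinguishing advantage), this should go through. If it did not, a fallback is to apply the theorem with a fixed small constant $\delta$, obtaining $\Omega(n^{1.5-1.5\delta-o(1)})$ for every $\delta>0$. That still rules out $\polylog{n}$ approximations in $O(n^{1.5-\eta})$ queries for any fixed $\eta > 0$, but it is weaker than the stated $\widetilde{\Omega}(n^{1.5})$.
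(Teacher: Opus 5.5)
Your proposal is correct and matches the paper's argument. The paper states the corollary as an immediate consequence of \Cref{thm:lb}, obtained by taking $n^{\delta}=\polylog{n}$ and $n^{\eps}=\sqrt{n}/\polylog{n}$, and your parameter check $n^{2\eps+3\delta}\log^3 n = n/\log^2 n = o(n)$ is right. Your worry about constancy is already settled by the statement of \Cref{thm:lb}, which is phrased for any $\eps,\delta>0$ meeting the asymptotic constraint; its $\widetilde{\Omega}(n^{1.5-1.5\delta})$ reformulation already forces $\eps$ to depend on $n$, so the fallback is unnecessary.
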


These results highlight the necessity of adaptivity in existing algorithms. 
In the $\Ot(n)$ query regime, there is a drastic difference: adaptive algorithms achieve a $2$-approximation, while non-adaptive ones cannot get better than an $n^{1/3}$ approximation.
But is anything achievable with non-adaptive queries? At first glance, the model may seem too restrictive to yield meaningful approximations. We show that this is not the case by presenting an algorithm that achieves an $n^{\delta}$-approximation to the maximum matching size using $O(n^{2 - 2\delta} \log n)$ non-adaptive adjacency list queries. 
In particular, when $\delta=1/2$, the algorithm gives a $\sqrt{n}$-approximation in $O(n \log n)$ queries.
This shows that non-adaptive queries, despite their limitations, can still achieve non-trivial approximations.

\begin{restatable}{theorem}{ubthm} \label{thm:ub}
	There is a randomized algorithm that outputs an $n^{\delta}$-approximation to the maximum matching size with probability $1-n^{-1}$ using $O(n^{2-2\delta} \log^2 n)$ non-adaptive queries in the adjacency list model and runs in the same asymptotic time, for any constant $0<\delta < 1$. In particular, when $\delta=1/2$, the algorithm gives a $\sqrt{n}$-approximation to the matching size in $O(n \log^2 n)$ queries.
\end{restatable}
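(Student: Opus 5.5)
We need a non-adaptive algorithm achieving an $n^{\delta}$-approximation with $O(n^{2-2\delta}\log^2 n)$ queries. The plan is to split on the unknown matching size $\mu$, run one estimator for the large-$\mu$ regime and one for the small-$\mu$ regime, and combine them. Both estimators will use only non-adaptive queries.

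\textbf{Large-$\mu$ regime: sampled vertices with all neighbor slots.} When $\mu \ge n^{1-\delta}$, the trivial output $n$ is already an $n^{\delta}$-approximation, but the algorithm does not know $\mu$, so it needs a certificate. I would sample a set $S$ of about $s = n^{1-\delta}\log n$ vertices uniformly and query their neighbors. Querying every neighbor slot $1,\dots,n$ of each is too expensive in general, but querying the first $n^{1-\delta}$ slots costs $n^{2-2\delta}\log n$, which fits the budget. Each sampled $v\in S$ gets a degree query and neighbor queries for indices $1,\dots,k$ with $k=n^{1-\delta}$. From these edges the algorithm computes a greedy/maximum matching $M_S$ of the observed subgraph; it is a true matching, hence a valid lower bound. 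The key step is to show that if $\mu$ is large, then $|M_S|$ scaled by $n/s$, up to constant factors, is at least $\mu/n^{\delta}$. Let $M^*$ be a maximum matching. About $s\mu/n$ sampled vertices are matched in $M^*$. A sampled vertex of degree $\le k$ reveals its $M^*$-edge. A vertex of degree $>k$ reveals $k$ distinct neighbors, so a greedy matching can pair it with some partner as long as fewer than $k$ partners are already used, and fewer than $k$ are used when $|M_S|<k$. So either $|M_S|\ge k$, or every sampled vertex can be matched greedily up to a factor-2 loss. In the latter case the low-degree vertices produce edges too. I need to handle collisions: two sampled vertices may share a low-degree $M^*$-partner, but since each partner is hit by the sample with probability about $s/n$, a factor-2 argument suffices. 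The output from this part is $\max\bigl(|M_S|,\ \tfrac{n}{2s}\cdot|M_S|/n^{\delta}\bigr)$, clipped appropriately.

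\textbf{Small-$\mu$ regime: edge counting.} For $\mu<n^{1-\delta}$, I use vertex cover. Any vertex cover has size at most $2\mu$, and the number of vertices of degree $\ge 1$ relates to $\mu$ via the max degree. A simpler route is to estimate the number of non-isolated vertices $N_1$ by sampling degrees: $O(n^{1-\delta'}\log n)$ degree queries. We have $\mu \le N_1$, but $N_1$ can exceed $\mu$ by a lot because of stars. To fix this, I would output an estimate of $\mu$ as
\[
\widehat{\mu}=\max\Bigl(\tfrac{n}{s}|M_S|,\ \min\bigl(N_1,\ n^{1-\delta}\bigr)/n^{\delta}\Bigr)
\]
and verify both directions by cases. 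Upper bound: $\tfrac{n}{s}|M_S|\lesssim \mu$ requires showing $|M_S|\lesssim s\mu/n$. Each matched sampled vertex lies in a min vertex cover of size $\le 2\mu$ or has an edge into it, and a counting bound handles this. Lower bound: the analysis of the large-$\mu$ regime above.

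\textbf{Main obstacle.} The hardest step is the upper bound $\tfrac{n}{s}|M_S| = O(\mu\, n^{\delta})$: sampled vertices outside the vertex cover $C$ can all be matched to distinct cover vertices, giving $|M_S|$ up to $\min(s,|C|)$. I would instead output $|M_S|\cdot n^{\delta}$-scaled estimates. Since $|M_S|\le \mu$ always holds, I would take $\widehat{\mu}=\max(|M_S|,\ \tfrac{s'}{n}\cdot\ldots)$, which is a clean lower bound. It then suffices to prove $|M_S|\ge \mu/n^{\delta}$ with high probability. If $\mu\le k$, every $M^*$ edge must be found directly; this suggests also adding $n^{1-\delta}$ random slots for all vertices, i.e., $n\cdot n^{1-2\delta}$ queries. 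I expect the real proof to balance these two samplings, with the $\log^2 n$ factor coming from geometric guessing of $\mu$ over $\log n$ scales, and I would first try that guessing scheme.
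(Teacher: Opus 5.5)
\textbf{There is a genuine gap: your proposal never settles on an estimator that is provably an $n^{\delta}$-approximation.} The paper requires $n^{-\delta}\mu \le \widehat{\mu} \le \mu$, and both terms of your $\widehat{\mu}$ overshoot. First, $\frac{n}{s}|M_S| = \frac{n^{\delta}}{\log n}|M_S|$ can be of order $\mu n^{\delta}/\log n$, which is the obstacle you notice yourself. Second, for $\delta<1/2$, $\min(N_1,n^{1-\delta})/n^{\delta}=n^{1-2\delta}$ on a star, where $\mu=1$.

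If you drop the scaling and output $|M_S|$, you do get a valid lower bound. Your greedy argument is then essentially sound for large $\mu$. The expected number of sampled $V(M^*)$-vertices is $2s\mu/n=2\mu n^{-\delta}\log n$. So for $\mu\ge Cn^{\delta}$, Chernoff plus your argument gives $|M_S|\ge\min(\Omega(\mu n^{-\delta}\log n),\Omega(n^{1-\delta}))=\Omega(\mu n^{-\delta})$. But for $\mu\lesssim n^{\delta}/\log n$ (say, a graph with a single edge) the sample misses $V(M^*)$ altogether, and you output $0$.

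The missing ingredient is that \emph{every} vertex has to be probed. Your last paragraph gestures at this, but the scheme is not pinned down: $n^{1-\delta}$ slots per vertex is not $n\cdot n^{1-2\delta}$ queries. It is also not analyzed. And the stated reason, that every $M^*$ edge must be found, is not what is needed; an $n^{-\delta}$ fraction suffices.

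That closing idea is in fact the paper's whole algorithm. It makes $q=80n^{1-2\delta}\log n$ random-neighbor queries at every vertex and outputs the size of a greedy maximal matching on the returned edges; the case split appears only in the analysis.
\begin{itemize}
\item If $\mu\le n^{\delta}$, one found edge suffices.
\item Otherwise split $V(M^*)$ at degree $4n^{1-\delta}$. If at least $\mu/2$ matched vertices have low degree, each independently finds its own $M^*$-edge with probability about $q/\deg(v)\ge 20n^{-\delta}\log n$. Chernoff then gives $\Omega(n^{-\delta}\log n\cdot\mu)$ of them.
\item If at least $\mu/2$ matched vertices have high degree, process them in a fixed order. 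While the greedy matching has fewer than $n^{1-\delta}$ edges, fewer than $2n^{1-\delta}$ of the at least $4n^{1-\delta}$ neighbors are blocked. So each step extends the matching with probability at least $1/2$, yielding $\min(\Omega(\mu),n^{1-\delta})\ge n^{-\delta}\mu$.
\end{itemize}
The $\log^2 n$ factor is one $\log n$ in $q$ plus one from simulating random-neighbor queries non-adaptively by guessing each degree in powers of two. It does not come from guessing $\mu$.

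You could alternatively finish your own route. Keep the sampled-vertex scheme with unscaled output; reading the fixed slots $1,\dots,n^{1-\delta}$ even avoids the random-neighbor simulation. Then add a slot-$1$ query at all $n$ vertices to catch an edge when $\mu<Cn^{\delta}$, and absorb the constant $C$ by rescaling. That extra $n$ queries fits the budget only for $\delta\le 1/2$. For $\delta>1/2$ the budget is $o(n)$, and no algorithm can tell a single-edge graph from the empty one. This is also where the paper's remark that one edge is always found breaks down, since for $q<1$ each vertex is queried only with probability $q$.
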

%It is worth noting that this theorem is even stronger, as it provides not only an approximation to the matching size but also returns the edges of the approximate matching.
If we fix the number of queries to be $\Ot(n)$, then our algorithm achieves an $n^{1/2}$-approximation.
On the other hand, \Cref{thm:lb-large-apx} shows that any algorithm achieving an $n^{1/3-\delta}$-approximation requires more than $\Ot(n)$ queries. 
More generally, we show that any $n^{\delta}$-approximation requires $\widetilde{\Omega}(n^{1.5 - 1.5\delta})$ queries, and can be achieved with $\Ot(n^{2 - 2\delta})$ queries.
Together, these results provide polynomial upper and lower bounds in the $\Ot(n)$ query regime.

\paragraph{Extension to the Tree-Based Non-Adaptive Model.}
We now turn to a different non-adaptive query model recently introduced by \cite{azarmehr2025lower} in the context of Local Computation Algorithms for the maximum matching problem. 
Informally, the algorithm selects a root vertex $r$ and commits to a fixed tree-shaped query pattern rooted at $r$. It first queries some neighbors of the root, then the neighbors of those vertices, and so on, following the predetermined query tree. We refer to this as the \emph{Non-Adaptive Tree Probe Model}. Randomness may be used in choosing the root and the query structure, but all queries are fixed in advance, before any answers are revealed.

At first glance, this model may appear adaptive, since it allows querying neighbors of previously queried vertices. However, because the query structure is determined entirely in advance, it remains non-adaptive in an \emph{oblivious} sense.

While \cite{azarmehr2025lower} analyze this model in the context of Local Computation Algorithms for the maximum matching problem, our focus is on approximating the maximum matching \emph{size} using non-adaptive queries. Specifically, their problem is to decide whether a given edge belongs to a constant-factor approximate maximum matching constructed by the LCA, whereas we study the global task of estimating the maximum matching size to within an approximation factor.

This tree-based model appears more powerful than the standard non-adaptive adjacency list query model, as it enables exploration of multi-hop neighborhoods and can simulate random walks. Nevertheless, we show that this added flexibility does not yield a fundamental advantage for approximating the matching size. Specifically, we prove that any algorithm operating in this seemingly stronger non-adaptive model still requires $\Omega(n^{1+\eps})$ queries to achieve an $n^{\delta}$-approximation, matching the lower bound in \Cref{thm:lb}.
\begin{restatable}{theorem}{treelbthm} \label{thm:tree:lb}
	Any randomized algorithm that is given access to an arbitrary graph using adjacency list queries in the non-adaptive tree probe model and can get an $n^{\delta}$-approximation to the maximum matching size with probability at least $2/3$, needs to use $\Omega(n^{1+\eps})$ queries for any $\eps,\delta>0$ satisfying $n^{2\eps+3\delta} \log^3 n =o(n)$.
\end{restatable}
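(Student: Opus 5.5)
The plan is to reduce \Cref{thm:tree:lb} to the hard instance and indistinguishability argument behind \Cref{thm:lb}, and to show that tree-shaped query patterns gain essentially nothing on that instance. By Yao's minimax principle, it suffices to fix a deterministic query tree (or a distribution over roots and trees that is fixed in advance) with $q = O(n^{1+\eps})$ total queries. We then show that the distribution of answers is nearly the same under the YES distribution $\dyes$ and the NO distribution $\dno$ used for \Cref{thm:lb}. Recall that in those distributions the matching sizes differ by a factor larger than $n^{\delta}$, and the vertex labels and adjacency-list orderings are randomly permuted.

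\textbf{Step 1: bound what the queries can reach.} A non-adaptive tree probe is specified in advance by a root and, for each tree node, the indices $i$ at which to query the $i^{\text{th}}$ neighbor of the vertex returned at that node. Because labels and list orders are random permutations, each query returns a neighbor that is uniformly random among the current vertex's edges of a given type. The exploration is therefore a branching random walk with a fixed branching pattern. I will bound the number of tree nodes whose answer lands in the ``informative'' part of the instance, namely the high-degree side or the small dense gadget whose presence distinguishes YES from NO. The root is random, so it lies in the informative set $S$ only with probability $|S|/n$. Each step from a low-degree vertex reaches $S$ only with the small probability dictated by the degree structure, and a step from a high-degree vertex gives a random neighbor, most of which are uninformative. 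Summing over all $q$ tree nodes, I expect the expected number of informative edges seen to be at most what $O(q \cdot \polylog n)$ independent adjacency-list queries would reveal.

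\textbf{Step 2: couple with the flat non-adaptive model.} I will couple the tree exploration with the non-adaptive adjacency-list model of \Cref{thm:lb}, so that the view of the tree probe is a function of the answers to at most $O(q\log n)$ ordinary non-adaptive queries plus independent randomness. Then the total-variation bound between the YES and NO views established for \Cref{thm:lb} (via the coupling section) transfers directly. Alternatively, if the proof of \Cref{thm:lb} works by bounding the probability of observing certain ``witness'' structures, such as pairs of edges that reveal a duplicated or shared neighborhood, I will directly bound the probability that the tree contains such a witness. This goes by a union bound over pairs of tree nodes. Each pair meets the witness condition with probability depending only on degrees, and the condition $n^{2\eps+3\delta}\log^3 n = o(n)$ makes the sum $o(1)$.

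\textbf{Main obstacle.} The hard step is Step 2. Tree queries are correlated: the vertex at a child node depends on the answer at its parent, so queries no longer target independent uniformly random positions. This matters most when the tree repeatedly expands a high-degree vertex into many children. Those children are neighbors of one vertex, so the probe can gather many edges sharing an endpoint. I would first try a deferred-decision argument: reveal the graph only along the tree and show that, conditioned on the history, each new answer is close to uniform among unrevealed edges of the appropriate type. This should hold because the tree has only $q = o(n^{1.5})$ nodes while each relevant vertex class is much larger. The remaining risk is the dense gadget. I would bound it by showing that hitting it twice along a single root-to-leaf path, or across two branches, requires a collision event of probability $O(n^{2\eps+3\delta}\log^3 n / n)$, matching the parameter regime in the statement.
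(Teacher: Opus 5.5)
Your framing (Yao's principle, the hard instance of \Cref{thm:lb}, showing that tree-shaped probing buys nothing) matches the paper, but the step that carries the weight does not go through.

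\textbf{Step~2 as proposed fails.} The main route is to write the tree probe's view as a function of $O(q\log n)$ ordinary non-adaptive queries and transfer the bound. But flat queries are label--index pairs fixed in advance, so they cannot target the vertex returned at a tree-node's parent. In the tree, every child is adjacent to its parent by construction; a flat algorithm sees such an adjacency only by coincidence. The deferred-decision repair would at best give a lazy, history-dependent simulation. That simulation is adaptive, so \Cref{thm:lb} does not apply to it. Its premise is also false: $q=n^{1+\eps}$ exceeds $|A|=|D|=2n^{1-\delta}$ and even $|B|=2n$. So the tree necessarily revisits the same vertices, dummies above all, many times.

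\textbf{Step~1 misreads the instance and aims at the wrong target.} The high-degree vertices are the dummies $D$. They are distributed identically under $\dyes$ and $\dno$, and all their neighbors are core vertices. The only distinguishing information is whether a core vertex's few core neighbors lie in $A$ or in $B$. Moreover, matching the \emph{expected number} of informative edges to the flat model does not let you invoke the coupling of \Cref{sec:coupling}. That coupling consumes structural facts about the observed graph, not counts.

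\textbf{The missing lemma.} You need that, with high probability, no directed $2$-path $x\to y\to z$ of the query tree has both edges in the core. The argument runs as follows.
\begin{itemize}
\item There are at most $q$ such paths, since each is determined by its lower edge.
\item Any tree-node is in $A$ with probability at most $2n^{-\delta}$.
\item A query at an $A$ (resp.\ $B$) vertex returns a core edge with probability at most $2n^{2\delta-1}$ (resp.\ $2n^{\delta-1}$).
\item So each path lies in the core with probability $O(n^{3\delta-2})$, and a union bound gives $O(n^{3\delta+\eps-1})=o(1)$.
\end{itemize}

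\textbf{What this buys.} Every observed core edge then leaves a core vertex whose tree-parent is a dummy (or the root). One query at a dummy returns a fixed core vertex with probability at most $2.5/n$, by degree concentration in $D$. From this you can re-derive exactly the hypotheses the coupling uses:
\begin{itemize}
\item $\Ot(n^{\eps+\delta})$ observed core edges;
\item petals are never happy and no two core vertices observe the same neighbor, so the observed core is a disjoint union of stars;
\item the $2^i$-heavy petal bounds for $i\in[\alpha,\beta]$;
\item in-degree at most $4n^{\eps}$ from $D$, obtained by grouping queries by dummy vertex rather than by tree-node, since dummies recur.
\end{itemize}
With these in hand, the coupling applies unchanged.

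Your alternative (b) points in this direction. As written, though, it is conditional and names neither the $2$-path obstruction nor the list of properties the coupling needs, so it is not yet a proof.
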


Finally, we observe a simple corollary. Although our results are stated for estimating the maximum matching size, they also extend to the minimum vertex cover size. This is because the sizes of a maximum matching and a minimum vertex cover are within a factor of $2$ in general graphs, and exactly equal in bipartite graphs. Since our lower bound instance is bipartite, the same hardness results carry over directly to minimum vertex cover estimation. The upper bounds remain valid: for super-constant approximation factors, the same bounds apply, and for constant-factor approximations, our algorithm already uses $O(n^2)$ queries.
\begin{corollary}
	Any randomized algorithm that is given access to an arbitrary graph using adjacency list queries in the non-adaptive query model or the non-adaptive tree probe model and can get an $n^{\delta}$-approximation to the minimum vertex cover size with probability at least $2/3$, needs to use $\Omega(n^{1+\eps})$ queries for any $\eps,\delta>0$ satisfying $n^{2\eps+3\delta} \log^3 n =o(n)$.
\end{corollary}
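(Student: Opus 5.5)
This corollary should follow directly from \Cref{thm:lb} and \Cref{thm:tree:lb} by a reduction argument; no new hard instance is needed. The plan is to show that any algorithm estimating minimum vertex cover size would also estimate maximum matching size on the hard instances used in those theorems.

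The first step is to record that both lower bounds are proved with hard distributions $\dyes$ and $\dno$ supported on bipartite (multi-)graphs. I would check that this holds in the construction of \Cref{thm:lb}, as the paper states after that theorem, and in the tree-probe construction of \Cref{thm:tree:lb}, which I expect reuses the same instances. Multi-edges do not affect either parameter. In a bipartite graph, K\H{o}nig's theorem gives that the minimum vertex cover size equals the maximum matching size, even for multigraphs, since parallel edges can be collapsed without changing either quantity.

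Next, I would argue by contradiction. Suppose $\Alg$ outputs an $n^{\delta}$-approximation to the minimum vertex cover size with probability at least $2/3$ on every graph. It must use either non-adaptive adjacency list queries or the non-adaptive tree probe model, with $o(n^{1+\eps})$ queries, where $n^{2\eps+3\delta}\log^3 n=o(n)$. Run $\Alg$ unchanged and output its answer as the estimate of the matching size. On every bipartite input, in particular every graph in the support of the hard distributions, this output is an $n^{\delta}$-approximation to the maximum matching size with probability at least $2/3$. The query model and query count are the same as those of $\Alg$.

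The lower-bound proofs of \Cref{thm:lb} and \Cref{thm:tree:lb} only require correctness on graphs drawn from $\dyes$ and $\dno$, via Yao's principle and the indistinguishability argument. Hence this derived algorithm contradicts the respective theorem, and the corollary follows.

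The only step needing care is confirming that those lower-bound arguments use correctness only on the bipartite hard instances, and not on arbitrary graphs. I expect this to be immediate from the construction. As a fallback, if some instance were not bipartite, one could use $\mu(G)\le \tau(G)\le 2\mu(G)$ and absorb the factor $2$ into $n^{\delta}$ by slightly shrinking $\delta$. This is harmless because the condition on $\eps,\delta$ is a strict asymptotic one.
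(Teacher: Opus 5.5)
Your proposal is correct and is essentially the paper's own argument: the paper observes that the hard instance, reused unchanged for the tree probe model, is bipartite, so by K\H{o}nig's theorem the minimum vertex cover size equals the maximum matching size and both matching lower bounds transfer directly. One cosmetic point is that a vertex cover estimate naturally satisfies $\tau(G)\leq s\leq n^{\delta}\tau(G)$, while the paper's matching approximation is one-sided from below, so you should output $s/n^{\delta}$ rather than $s$ unchanged (or threshold $s$ directly against the gap between $n$ and $4n^{1-\delta}$); this changes nothing else in your reduction.
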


\begin{corollary}
	There is a randomized algorithm that with probability $1-1/n$ can output an $n^{\delta}$-approximation to the minimum vertex cover size using $O(n^{2-2\delta} \log n)$ non-adaptive queries in the adjacency list model for any constant $0<\delta < 1$. 
\end{corollary}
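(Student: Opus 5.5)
Since a minimum vertex cover and a maximum matching have equal size in bipartite graphs (K\H{o}nig's theorem), the plan is to reuse the upper-bound algorithm of \Cref{thm:ub} almost unchanged and argue that its estimate also approximates the vertex cover size in general graphs, losing at most a constant factor. Write $\mu(G)$ for the maximum matching size and $\tau(G)$ for the minimum vertex cover size. For every graph, $\mu(G)\le \tau(G)\le 2\mu(G)$: every matching edge needs its own cover vertex, and the endpoints of a maximal matching form a cover. So if $\hat\mu$ is an $n^{\delta}$-approximation of $\mu(G)$, meaning $\mu(G)/n^{\delta}\le \hat\mu\le \mu(G)$ (or the symmetric convention the paper uses), then $2\hat\mu$ satisfies $\tau(G)/n^{\delta}\le 2\hat\mu \le 2\tau(G)$. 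This is a $2n^{\delta}$-approximation of $\tau(G)$.

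The one place needing care is the loss of the factor $2$ against the claimed $n^{\delta}$ approximation. I would handle it by running the algorithm of \Cref{thm:ub} with a slightly smaller exponent $\delta'$ chosen so that $2n^{\delta'}\le n^{\delta}$, for instance $\delta' = \delta - \log 2/\log n$. The query bound is then $O(n^{2-2\delta'}\log n)=O(4n^{2-2\delta}\log n)$, which is the same up to constants. If $\delta'$ must be a fixed constant in \Cref{thm:ub}, I would instead open the algorithm (presented in the algorithm section) and check that its guarantee is really of the form ``approximation within $c\cdot n^{\delta}$ using $O(n^{2-2\delta}\log n)$ queries'' for a parameter that can absorb constant factors.

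It is also convenient to scale the estimate by a factor $\sqrt2$ instead, so that the output lies between $\tau/(\sqrt2 n^{\delta})$ and $\sqrt2\,\tau$, giving a two-sided approximation with ratio $2n^{\delta}$ overall. The success probability $1-1/n$ carries over directly, since the only change is deterministic post-processing of the output. Non-adaptivity is likewise unaffected, because the query distribution is exactly that of the matching algorithm.

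I expect the main obstacle to be reconciling the query count: \Cref{thm:ub} states $O(n^{2-2\delta}\log^2 n)$, while this corollary claims $O(n^{2-2\delta}\log n)$. I would re-examine the algorithm's analysis to see whether the extra $\log n$ factor comes only from amplifying the success probability to $1-n^{-1}$, or from a union bound that a cover-based estimate does not need. If it cannot be removed, the corollary should be read with the $\log^2 n$ bound inherited from \Cref{thm:ub}.

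For constant $\delta$ with $n^{\delta}$ bounded, the paper's remark applies: the query budget is then essentially $O(n^2)$, and reading the whole graph gives the answer exactly.
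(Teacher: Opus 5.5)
Your proposal is correct and is essentially the paper's argument: the paper only remarks that $\mu\le\tau\le 2\mu$ and that for super-constant approximation factors the factor $2$ is absorbed, which is exactly your rescaling of $\delta$ (at a cost of a factor $4$ in queries, as the paper also notes when replacing maximum by maximal matching). On the $\log n$ versus $\log^2 n$ discrepancy you flag, the extra $\log n$ in \Cref{thm:ub} comes from simulating random-neighbor queries by guessing degrees in powers of $2$, not from probability amplification or a union bound, so the corollary's stated bound holds only if random-neighbor queries are charged unit cost, and reading it with the inherited $\log^2 n$ is the consistent interpretation.
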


\input{tech-overview}

\subsection{Related Work}

\paragraph{Comparison with \texorpdfstring{ \cite{Behnezhad2023Sublinear}}{[BRR23]}.}

Behnezhad, Roghani, and Rubinstein (henceforth BRR23) prove an $\Omega(n^{1.2 - o(1)})$ lower bound for a $(2/3 + \eps)$-approximation to the maximum matching size using adaptive queries in the adjacency list model. Our setting is non-adaptive, and our lower bound exhibits a significantly larger gap between the matching sizes in the yes and no distributions.

Both works use coupling-based arguments, but have different structural properties of the observed graph. BRR23 shows that their observed graph forms a rooted forest, while we prove that ours forms a disjoint union of stars. The main technical part of their proof is a correlation decay, but we do not need that in our proof because we get a \emph{disjoint} union of stars.
However, given the large approximation parameter, we need a fundamentally different coupling argument.

Finally, while the models differ, the $n^{1.2}$ barrier in their work does not arise in our setting, and we achieve a better exponent: we show that even polylogarithmic approximations require $\Omega(n^{1.5 - o(1)})$ queries.

\paragraph{Other Related Work.}

In addition to sublinear-time algorithms, maximum matching has been extensively studied in several related computational models.
Maximum matching is one of the most central problems in the \emph{dynamic graph} literature \cite{Onak2010Maintaining,Manoj2013Fully,Baswana2015Fully,Bernstein2016Faster,BehnezhadK2022New,Behnezhad2023Dynamic,BhattacharyaKS2023Dynamic,Assadi2023Regularity,Bhattacharya2023Dynamic,Liu2024Approximate,Behnezhad2024Fully,assadi2025improved}.
Some of these works have shown that sublinear-time matching size estimators can be used to design dynamic algorithms.

The problem has also been heavily explored in the \emph{streaming model}, including insertion-only streams under both single-pass \cite{Feigenbaum05graph,Assadi2023Regularity} and multi-pass settings \cite{Fischer2022Deterministic,Konrad2024Unconditional}, as well as dynamic streams, again in both single-pass \cite{AssadiKLY16,ChitnisCEHMMV16,DarkK20,Assadi2022Asymptotically} and multi-pass settings \cite{Ahn2015Access,Assadi2024Simple,Assadi2024Settling}.
Finally, the maximum matching problem has also been studied in \emph{distributed models} \cite{fischer2020improved,Fischer2022Deterministic,Salwa2023Local,Ghaffari2023Faster,Ghaffari2024Parallel,mitrovic2025framework} and \emph{local computation algorithms} \cite{behnezhad2023local,azarmehr2025lower}.

%% file: tech-overview.tex
\subsection{Technical Overview}
In this subsection, we present a simplified construction and analysis to build intuition for a weaker version of our lower bound.

\paragraph{Graph Construction.}
Consider the construction illustrated in \Cref{fig:yes-no-fig}. Note that this is not the final distribution. We have a bipartite graph $G = (V = L \cup R, E)$, where the vertex set is partitioned into three subsets: $A$, $B$, and $D$, of sizes $2n^{1-\delta}$, $2n$, and $2n^{1-\delta}$, respectively. Each of these sets is evenly split between the left and right sides of the bipartition.
The vertices in $D$ are \textbf{dummy} nodes whose sole purpose is to increase the degrees of vertices in $A \cup B$, which we refer to as the \textbf{core}.

\begin{figure}[H]
	\centering
	\scalebox{0.7}{
	\begin{minipage}{0.48\textwidth}
		\centering
		\input{hardInstance_yes_D}
		\caption*{(a) Yes Instance }
	\end{minipage}
	\hfill
	\begin{minipage}{0.48\textwidth}
		\centering
		\input{hardInstance_no_D}
		\caption*{(b) No Instance }
	\end{minipage}
}
\caption{Hard Distribution}
	\label{fig:yes-no-fig}
\end{figure}

In the \emph{yes} case, there is a perfect matching of size $n$ between the $B$ vertices, and $n^{\delta}$ random matchings between the $A$ vertices. In the \emph{no} case, the $B$ vertices are divided into groups of size $n^{1-\delta}$, and each group has a random perfect matching to $A$.
In both cases, every vertex in the core has an edge to a vertex in $D$ on the opposite side with probability $1/2$. This completes the construction.
The adjacency lists of all vertices are randomly permuted. Due to the edges to dummy vertices, the degree of each core vertex is $\Omega(n^{1-\delta})$.

It is easy to see that in the \emph{yes} case, the matching has size at least $n$, while in the \emph{no} case, the set $A \cup D$ forms a vertex cover of the graph, implying that the maximum matching has size at most $4n^{1-\delta}$.
We now fix a deterministic algorithm $\Alg$ that makes $q = n^{1+\eps}$ non-adaptive queries and distinguishes between the \emph{yes} and \emph{no} cases. To hide the structure of the edges within the core, we apply a random permutation to the names of all vertices, so that $\Alg$ has no information about which vertex belongs to which set.
The queries made by $\Alg$ are distributed arbitrarily across the vertex set, but the random permutation ensures that, on average, each vertex is queried approximately $n^{\eps}$ times. 
%For simplicity, we assume $\Alg$ makes exactly $n^{\eps}$ queries on all vertices.
%Since we are aiming to establish a weaker bound, we treat both $\eps$ and $\delta$ as small constants throughout the analysis.

\paragraph{Disjoint union of stars.}
We define a vertex as \textbf{happy} if it observes an edge within the core. For a single query, this happens with probability at most $n^{\delta} / \Omega(n^{1-\delta})$.
Leveraging the randomness introduced by the permutation that shuffles vertex names, and noting that the algorithm $\Alg$ makes an average of $n^{\eps}$ queries, the overall probability that a vertex is happy becomes $O(n^{2\delta + \eps - 1})$.
Therefore, the expected number of happy vertices is $O(n^{2\delta + \eps})$. Now consider an edge: the probability that both its endpoints are happy is $O(n^{4\delta + 2\eps - 2})$.
Since there are just $2n$ edges in the core, the number of edges with both happy endpoints will be $0$ with a large probability.
As a result, the algorithm $\Alg$ observes only a disjoint union of stars in the core—since the petals of the stars are not happy and cannot observe neighbors in the core.

Note that if a star has degree greater than $1$, then its center must be an $A$ vertex, since $B$ vertices participate in at most one matching in the \emph{yes} case. However, if a star has degree exactly $1$, its center could be either an $A$ or a $B$ vertex.
We can tell $\Alg$ the labels of the star centers—whether they belong to $A$ or $B$—but what remains crucially hidden is the labels of the petals.
In the \emph{yes} case, all $A$ star centers are connected to $A$ petals, and all $B$ star centers to $B$ petals. In the \emph{no} case, the pattern is reversed: $A$ star centers have $B$ petals, and $B$ star centers have $A$ petals.
To show that $\Alg$ cannot distinguish between the \emph{yes} and \emph{no} cases, it suffices to prove that $\Alg$ succeeds with probability at most $1/2 + o(1)$.

\paragraph{Coupling Argument.}
We formalize this by constructing a coupling between the \emph{yes} and \emph{no} distributions. The coupling is a one-to-one mapping from the \emph{no} distribution to the \emph{yes} distribution such that the observed graph (the edges seen by $\Alg$) remains identical under this mapping.
To construct this mapping, we partition the support of both distributions into blocks, where all elements within a block correspond to the same observed graph. We then define the coupling to map between corresponding blocks in the two distributions.
Finally, we show that this mapping covers a $(1 - o(1))$ fraction of the support in both the \emph{yes} and \emph{no} distributions. Thus, for most observed graphs, the probability that they arise from the \emph{yes} or \emph{no} distribution is roughly $1/2$.
For the remaining $o(1)$ fraction, the algorithm $\Alg$ can succeed with probability at most $1$, leading to an overall success probability of at most $1/2 + o(1)$.
This coupling argument is the main technical ingredient of our proof.

%% file: hardInstance_yes_D.tex
\begin{tikzpicture}[every node/.style={font=\small}, node distance=0.3cm and 0.5cm, decoration={snake, amplitude=0.8pt, segment length=5pt}]
	
	% Box dimensions
	\def\aboxheight{2.0}
	\def\bboxheight{3.2}
	\def\dheight{1.0}
	\def\boxwidth{1.2}
	\def\xsep{3.2} % reduced from 4.5
	
	% Left side blocks
	\node[draw, rounded corners, minimum width=\boxwidth cm, minimum height=\aboxheight cm] (AL) at (0,0) {};
	\node[draw, rounded corners, minimum width=\boxwidth cm, minimum height=\bboxheight cm, below=of AL, yshift=-1cm] (BL) {};
	\node[draw, rounded corners, minimum width=\boxwidth cm, minimum height=\dheight cm, below=of BL, yshift=-0.7cm] (DL) {$D^L$};
	
	% Right side blocks
	\node[draw, rounded corners, minimum width=\boxwidth cm, minimum height=\aboxheight cm, right=\xsep cm of AL] (AR) {};
	\node[draw, rounded corners, minimum width=\boxwidth cm, minimum height=\bboxheight cm, below=of AR, yshift=-1cm] (BR) {};
	\node[draw, rounded corners, minimum width=\boxwidth cm, minimum height=\dheight cm, below=of BR, yshift=-0.7cm] (DR) {$D^R$};
	
	% === Core Box ===
	%	\node[draw, dashed, fit=(AL)(BL)(AR)(BR), inner sep=8pt, label={[yshift=10pt]above:\textbf{Core}}] (CORE) {};
	
	\draw[dashed]
	([xshift=-0.4cm, yshift=-0.3cm]BL.south west) rectangle
	([xshift=0.4cm, yshift=0.8cm]AR.north east);

	% Labels
	\node at (AL.north) [above=0.2cm] {$A^L$};
	\node at (BL.north) [above=0.2cm] {$B^L$};
	\node at (AR.north) [above=0.2cm] {$A^R$};
	\node at (BR.north) [above=0.2cm] {$B^R$};
%	\node at ($(AL)!0.5!(BL) + (-1.4,0)$) {Core (L)};
%	\node at ($(AR)!0.5!(BR) + (1.4,0)$) {Core (R)};
	
	% Nodes in A^L and A^R
	\foreach \i in {1,...,4} {
		\node[circle, draw, inner sep=1pt, fill=white] (al\i) at ($(AL.north)+(0,-0.4*\i)$) {};
		\node[circle, draw, inner sep=1pt, fill=white] (ar\i) at ($(AR.north)+(0,-0.4*\i)$) {};
	}
	
	% Nodes in B^L and B^R
	\foreach \i in {1,...,8} {
		\node[circle, draw, inner sep=1pt, fill=white] (bl\i) at ($(BL.north)+(0,-0.35*\i)$) {};
		\node[circle, draw, inner sep=1pt, fill=white] (br\i) at ($(BR.north)+(0,-0.35*\i)$) {};
	}
	
	% Identity matching between B^L and B^R using colorFive (blue)
	\foreach \i in {1,...,8} {
		\draw[colorTwo, thick] (bl\i) -- (br\i);
	}
	
	% First random matching between A^L and A^R using colorTwo (sky blue)
	\draw[colorSeven, thick] (al1) -- (ar3);
	\draw[colorSeven, thick] (al2) -- (ar1);
	\draw[colorSeven, thick] (al3) -- (ar4);
	\draw[colorSeven, thick] (al4) -- (ar2);
	
	% Second random matching between A^L and A^R using colorThree (bluish green)
	\draw[colorThree] (al1) to[bend left=20] (ar2);
	\draw[colorThree] (al2) to[bend left=20] (ar3);
	\draw[colorThree ] (al3) to[bend left=20] (ar1);
	\draw[colorThree] (al4) to[bend left=20] (ar4);

	% === Squiggly connections from D blocks to Core ===
	
	% D^L to right side of Core
	\draw[decorate, thick, colorOne] (DL.east) -- ++(3.8, 1.2);
	
	% D^R to left side of Core
	\draw[decorate, thick, colorOne] (DR.west) -- ++(-3.8, 1.2);

	% Legend
%	\node[colorFive] at ($(br4)!0.5!(bl4) + (0,-0.6)$) {\small{Identity matching}};
%	\node[colorTwo] at ($(al2)!0.5!(ar2) + (0,1)$) {\small{Matching 1}};
%	\node[colorThree] at ($(al3)!0.5!(ar3) + (0,1.4)$) {\small{Matching 2}};
	
\end{tikzpicture}

%% file: hardInstance_no_D.tex
\begin{tikzpicture}[every node/.style={font=\small}, node distance=0.3cm and 0.5cm, decoration={snake, amplitude=0.8pt, segment length=5pt}]
	
	% Box dimensions
	\def\aboxheight{2.0}
	\def\bboxheight{3.2}
	\def\dheight{1.0}	
	\def\boxwidth{1.2}
	\def\xsep{3.2} % horizontal gap between L and R blocks
	
	% === Core Blocks ===
	% Left side
	\node[draw, rounded corners, minimum width=\boxwidth cm, minimum height=\aboxheight cm] (AL) at (0,0) {};
	\node[draw, rounded corners, minimum width=\boxwidth cm, minimum height=\bboxheight cm, below=of AL, yshift=-1cm] (BL) {};
	\node[draw, rounded corners, minimum width=\boxwidth cm, minimum height=\dheight cm, below=of BL, yshift=-0.7cm] (DL) {$D^L$};	
	
	% Right side
	\node[draw, rounded corners, minimum width=\boxwidth cm, minimum height=\aboxheight cm, right=\xsep cm of AL] (AR) {};
	\node[draw, rounded corners, minimum width=\boxwidth cm, minimum height=\bboxheight cm, below=of AR, yshift=-1cm] (BR) {};
	\node[draw, rounded corners, minimum width=\boxwidth cm, minimum height=\dheight cm, below=of BR, yshift=-0.7cm] (DR) {$D^R$};
	
	% === Core Box ===
	%	\node[draw, dashed, fit=(AL)(BL)(AR)(BR), inner sep=8pt, label={[yshift=10pt]above:\textbf{Core}}] (CORE) {};
	
	\draw[dashed]
	([xshift=-0.6cm, yshift=-0.3cm]BL.south west) rectangle
	([xshift=0.6cm, yshift=0.8cm]AR.north east);

	% === Labels for blocks ===
	\node at (AL.north) [above=0.2cm] {$A^L$};
	\node at (BL.north) [above=0.2cm] {$B^L$};
	\node at (AR.north) [above=0.2cm] {$A^R$};
	\node at (BR.north) [above=0.2cm] {$B^R$};
%	\node at ($(AL)!0.5!(BL) + (-1.4,0)$) {Core (L)};
%	\node at ($(AR)!0.5!(BR) + (1.4,0)$) {Core (R)};
	
	% === A^L and A^R Nodes ===
	\foreach \i in {1,...,4} {
		\node[circle, draw, inner sep=1pt, fill=white] (al\i) at ($(AL.north)+(0,-0.4*\i)$) {};
		\node[circle, draw, inner sep=1pt, fill=white] (ar\i) at ($(AR.north)+(0,-0.4*\i)$) {};
	}
	
	% === B^L and B^R Nodes ===
	\foreach \i in {1,...,8} {
		\node[circle, draw, inner sep=1pt, fill=white] (bl\i) at ($(BL.north)+(0,-0.35*\i)$) {};
		\node[circle, draw, inner sep=1pt, fill=white] (br\i) at ($(BR.north)+(0,-0.35*\i)$) {};
	}
	
	% === Group Boxes with Shifts for Easy Editing ===
	
	% B^L_1
	\begin{scope}[shift={(0,0)}]
		\node[
		draw, dashed, rounded corners,
		fit=(bl1)(bl2)(bl3)(bl4),
		inner sep=3pt,
		label={[xshift=-8pt, yshift=0pt]left:$B^L_1$}
		] {};
	\end{scope}
	
	% B^L_2
	\begin{scope}[shift={(0,0)}]
		\node[
		draw, dashed, rounded corners,
		fit=(bl5)(bl6)(bl7)(bl8),
		inner sep=3pt,
		label={[xshift=-8pt, yshift=0pt]left:$B^L_2$}
		] {};
	\end{scope}
	
	% B^R_1
	\begin{scope}[shift={(0,0)}]
		\node[
		draw, dashed,rounded corners,
		fit=(br1)(br2)(br3)(br4),
		inner sep=3pt,
		label={[xshift=8pt, yshift=0pt]right:$B^R_1$}
		] {};
	\end{scope}
	
	% B^R_2
	\begin{scope}[shift={(0,0)}]
		\node[
		draw, dashed,rounded corners,
		fit=(br5)(br6)(br7)(br8),
		inner sep=3pt,
		label={[xshift=8pt, yshift=0pt]right:$B^R_2$}
		] {};
	\end{scope}
	
	% === Edges for NO Instance ===
	
	% B^R_1 → A^L (identity, colorThree)
	\draw[colorTwo, thick] (br1) -- (al1);
	\draw[colorTwo, thick] (br2) -- (al2);
	\draw[colorTwo, thick] (br3) -- (al3);
	\draw[colorTwo, thick] (br4) -- (al4);
	
	% B^R_2 → A^L (identity, colorThree)
	\draw[colorTwo] (br5) -- (al1);
	\draw[colorTwo] (br6) -- (al2);
	\draw[colorTwo] (br7) -- (al3);
	\draw[colorTwo] (br8) -- (al4);

	% B^L_1 → A^R (colorTwo)
	% A^R → B^L_1 (same structure as previous A^L–A^R matching, now reversed)
	\draw[colorSeven, thick] (ar3) -- (bl1); % a^R_3 → b^L_{1,1}
	\draw[colorSeven, thick] (ar1) -- (bl2); % a^R_1 → b^L_{1,2}
	\draw[colorSeven, thick] (ar4) -- (bl3); % a^R_4 → b^L_{1,3}
	\draw[colorSeven, thick] (ar2) -- (bl4); % a^R_2 → b^L_{1,4}

	% B^L_2 → A^R (colorTwo)
	% A^R → B^L_2 (same as second A^L–A^R matching in yes instance)
	\draw[colorThree] (ar2) -- (bl5); % a^R_2 → b^L_{2,1}
	\draw[colorThree] (ar3) -- (bl6); % a^R_3 → b^L_{2,2}
	\draw[colorThree] (ar1) -- (bl7); % a^R_1 → b^L_{2,3}
	\draw[colorThree] (ar4) -- (bl8); % a^R_4 → b^L_{2,4}

	% === Squiggly connections from D blocks to Core ===
	
	% D^L to right side of Core
	\draw[decorate, thick, colorOne] (DL.east) -- ++(3.8, 1.2);
	
	% D^R to left side of Core
	\draw[decorate, thick, colorOne] (DR.west) -- ++(-3.8, 1.2);
	
	% === Legend (Optional) ===
%	\node[colorTwo] at ($(bl3)!0.5!(ar3) + (0,-0.8)$) {\small{$B^L_i \rightarrow A^R$}};
%	\node[colorThree] at ($(br3)!0.5!(al3) + (0,-1.2)$) {\small{$B^R_i \rightarrow A^L$}};
	
\end{tikzpicture}

%% file: prelim.tex
% !TeX root = main.tex 
%!TEX root = main.tex

\section{Preliminaries}\label{sec:prelim}

\paragraph{Graph Notation and Terminology.}
For a graph $G=(V,E)$, we use $n$ to represent a constant factor approximation to the number of vertices. $m$ represents the 
number of edges and $\mu(G)$ to represent the size of any \emph{maximum matching} of $G$.
We use $\deg(v)$ and $N(v)$ for each vertex $v \in V$ to denote the degree 
and neighborhood of $v$, respectively. 
Any vertex $v$ with $\deg(v)=0$ is called an \textbf{isolated} vertex.
A \textbf{star} graph is one where there is a central vertex, called the \textbf{star center}, connected to all other vertices, called \textbf{petals}, with no edges between the petals. Note that there are no multi-edges between the petals.

For a subset $F$ of edges in $E$, we use $V(F)$ to denote 
the vertices incident on $F$; similarly, for a set $U$ of vertices, $E(U)$ denotes the edges 
incident on $U$. We further use $G[U]$ for any set $U$ of vertices to denote the induced subgraph 
of $G$ on $U$. 
%For sets of vertices $S$ and $T$, $S$-to-$T$ edges are edges with one endpoint in $S$ and the other in $T$. 
We use $\card{T}$ to denote the number of elements in $T$ when it is a set, and the size of its support when $T$ is a distribution.

A \textbf{perfect matching} between two sets of vertices $L$ and $R$ is a matching in which each edge connects a unique vertex in $L$ to a unique vertex in $R$, and every vertex in $L \cup R$ is matched. This requires that $\card{L} = \card{R}$. An \textbf{identity matching} between $L$ and $R$ assumes that both sets are ordered, and matches the $i^{\text{th}}$ vertex in $L$ to the $i^{\text{th}}$ vertex in $R$ for all $i \in [\card{L}]$.

Let $M^*$ denote a maximum matching in the graph.
We define an $n^{\delta}$-\textbf{approximate matching} as a matching of size at least a $1/n^{\delta}$ fraction of the size of $M^*$.
We define an $n^{\delta}$-\textbf{approximation} to the \emph{matching size} as an estimate $s$ such that
\[
{n^{-\delta}} \cdot \card{M^*} \leq s \leq \card{M^*}.
\]

\paragraph{Probability Bounds.}
Throughout, we will use the term ``\textbf{with high probability}" to mean with probability at least $1-o(1)$.
In some places, we obtain a probability bound of $1 - 1/\poly(n)$. We do not refer to this as occurring ``with high probability''; instead, we state the bound explicitly.
We use the following standard form of Chernoff bounds:
\begin{proposition}[Chernoff bound; c.f.~\cite{dubhashi1996balls,dubhashi2009concentration}]\label{prop:chernoff}
	Suppose $X_1,\ldots,X_m$ are $m$ independent or negatively associated random variables with range $[0,1]$ each. Let $X := \sum_{i=1}^m X_i$ and $\mu_L \leq \expect{X} \leq \mu_H$. Then, for any $\eps > 0$, 
	\[
	\Pr\paren{X >  (1+\eps) \cdot \mu_H} \leq \exp\paren{-\frac{\eps^2 \cdot \mu_H}{3+\eps}} \quad \textnormal{and} \quad \Pr\paren{X <  (1-\eps) \cdot \mu_L} \leq \exp\paren{-\frac{\eps^2 \cdot \mu_L}{2+\eps}}.
	\]
\end{proposition}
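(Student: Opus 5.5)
This is the standard Chernoff bound for sums of independent, or negatively associated, $[0,1]$-valued variables, quoted from \cite{dubhashi1996balls,dubhashi2009concentration}; I would recall the usual exponential-moment argument rather than treat it as new. I will first prove the upper tail for independent variables. For $t>0$, Markov's inequality applied to $e^{tX}$ gives
\[
\Pr\paren{X > (1+\eps)\mu_H} \leq e^{-t(1+\eps)\mu_H}\,\Exp\bracket{e^{tX}}.
\]
Independence factorizes the moment generating function as $\Exp[e^{tX}]=\prod_i \Exp[e^{tX_i}]$. Convexity of $x\mapsto e^{tx}$ on $[0,1]$ gives $e^{tx}\leq 1+x(e^t-1)$, hence $\Exp[e^{tX_i}]\leq \exp\paren{(e^t-1)\Exp[X_i]}$. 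Multiplying these yields
\[
\Exp\bracket{e^{tX}}\leq \exp\paren{(e^t-1)\Exp[X]}\leq \exp\paren{(e^t-1)\mu_H},
\]
where the last step uses $t>0$ and $\Exp[X]\leq\mu_H$. Choosing $t=\ln(1+\eps)$ gives the bound $\paren{e^{\eps}/(1+\eps)^{1+\eps}}^{\mu_H}$.

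The lower tail is symmetric. Apply Markov to $e^{-tX}$, and use $e^{-tx}\leq 1-x(1-e^{-t})$ together with $\Exp[X]\geq\mu_L$ and the sign of $e^{-t}-1$. Taking $t=-\ln(1-\eps)$ for $\eps<1$ gives $\paren{e^{-\eps}/(1-\eps)^{1-\eps}}^{\mu_L}$. For $\eps\geq1$ the event $X<(1-\eps)\mu_L\leq 0$ is empty, so that case is trivial.

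The step I expect to be the only real work is converting these into the stated clean forms. For the upper tail I need
\[
(1+\eps)\ln(1+\eps)-\eps\;\geq\;\frac{\eps^2}{3+\eps}.
\]
I would prove it by defining $f(\eps)$ as the difference of the two sides, noting $f(0)=0$, and checking $f'\geq0$. The derivative is $\ln(1+\eps)-\frac{\eps^2+6\eps}{(3+\eps)^2}$, and it is nonnegative by the standard estimate $\ln(1+\eps)\geq \frac{2\eps}{2+\eps}$ followed by a direct polynomial comparison. For the lower tail I need
\[
-\eps-(1-\eps)\ln(1-\eps)\leq -\frac{\eps^2}{2}\leq-\frac{\eps^2}{2+\eps},
\]
which follows from the Taylor expansion of $(1-\eps)\ln(1-\eps)$, since every term beyond the quadratic has a favorable sign.

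Finally, the negatively associated case. The only place independence was used is the factorization of $\Exp[e^{tX}]$. For negatively associated variables, the functions $e^{tX_i}$ are all monotone in the same direction (all nondecreasing for $t>0$, all nonincreasing for $t<0$), so negative association gives $\Exp\bracket{\prod_i e^{tX_i}}\leq\prod_i\Exp[e^{tX_i}]$. This inequality is exactly what the argument needs, so the rest of the proof carries over verbatim, as in \cite{dubhashi1996balls}.
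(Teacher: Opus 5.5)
Your proof is correct and is the standard exponential-moment argument: Markov on $e^{\pm tX}$, the convexity bound on $[0,1]$, the choices $t=\ln(1+\eps)$ and $t=-\ln(1-\eps)$, and negative association used only to get $\Exp[\prod_i e^{tX_i}]\leq\prod_i\Exp[e^{tX_i}]$. This is what the cited sources do; the paper gives no proof of its own beyond the citation, and your reductions to the $\eps^2/(3+\eps)$ and $\eps^2/(2+\eps)$ forms check out. The only implicit assumption is $\mu_L\geq 0$, which you use in the $\eps\geq 1$ case and again when multiplying the exponent inequality by $\mu_L$. For $\mu_L<0$, however, the right-hand side exceeds $1$, so that case is trivial.
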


%% file: log-apx-LB-AdjList.tex
\section{The Lower Bound}
\label{sec:const-apx-lb}

\begin{table}[t]
	\scriptsize
	\centering
	\renewcommand{\arraystretch}{1.2}
	
	\begin{minipage}[t]{0.44\textwidth}
		\centering
		\resizebox{\linewidth}{!}{
		\begin{tabular}{|c|>{\raggedright\arraybackslash}p{4.9cm}|}
			\hline
			\textbf{Symbol} & \textbf{\makebox[\linewidth][c]{Description}} \\
			\hline
			\multicolumn{2}{|c|}{\textbf{Graph Components}} \\
			\hline
			$G = (V, E)$ & Bipartite graph with $V := L \cup R$ \\
			$A, B, D$ & Sets of vertices $A^L \cup A^R$, $B^L \cup B^R$, $D^L \cup D^R$ of sizes $2n^{1-\delta}, 2n, 2n^{1-\delta}$ \\
			$C$ & $C := A \cup B$ is the core of $G$ \\
			$B_i^L, B_i^R$ & Groups within $B^L$ and $B^R$ ($i \in [n^{\delta}]$) \\
			$b_{i,j}^R$ & $j^{th}$ vertex in group $B^R_i$ ($j \in [n^{1-\delta}]$)\\			
			$a_i^R$ & $i^{th}$ vertex in $A^R$ ($i \in [n^{1-\delta}]$) \\			
			$b_i^R$ & $i^{th}$ vertex in $B^R$ ($i \in [n]$) \\						
			$\mathcal{U}, \mathcal{V}$ & Label sets for $L$ and $R$ \\			
			\hline
			\multicolumn{2}{|c|}{\textbf{Parameters}} \\
			\hline
			$n'$ & Size of $L$ and $R$, $n' = \Theta(n)$ \\
			$\delta$ & Parameter for approximation factor\\
			$\eps$ & Parameter for number of queries \\			
			$q:=n^{1+\eps}$ & Total number of non-adaptive queries \\
			$q_v$ & Number of queries for vertex $v$ \\
			$d^*$ & Degree of each vertex in $C$ \\
			\hline
		\end{tabular}
	}
		\caption{Notation for Graph Components and Parameters.}
	\end{minipage}
	\hfill
	\begin{minipage}[t]{0.52\textwidth}
		\centering
	\resizebox{\linewidth}{!}{
		\begin{tabular}{|c|>{\raggedright\arraybackslash}p{6.2cm}|}
			\hline
			\textbf{Symbol} & \textbf{\makebox[\linewidth][c]{Description}} \\
			\hline
			\multicolumn{2}{|c|}{\textbf{Distributions}} \\
			\hline
			$\dyes$, $\dno$ & ``Yes'' and ``No'' input distributions \\
			$\mathcal{D}$ & Mixture: $\frac{1}{2} \dyes + \frac{1}{2} \dno$ \\
			$\dd$ & Distribution of edges between Core and $D$ \\
			$\dcyes$, $\dcno$ & Distribution of edges in the Core in $\dyes$ and $\dno$ \\
			\hline
			\multicolumn{2}{|c|}{\textbf{Random Variables}} \\
			\hline
			$\pi$ & Random labeling permutation from $\Pi$ \\			
			$\ell(v)$ & Adjacency list of vertex $v$ \\
			$\psi(v)$ & Randomly chooses core indices in $\ell(v)$ \\
			$\sigma_C(v), \sigma_D(v)$ & Random permutation of core and dummy neighbors \\
			$\sigma(v)$ & Random permutation of all neighbors for $v \in D$ \\			
			\hline
			\multicolumn{2}{|c|}{\textbf{Observed Graph and Algorithm}} \\
			\hline
			$\Alg$ & Fixed deterministic algorithm making non-adaptive queries \\
			$\gobs = (V, \eobs)$ & Directed graph observed by $\Alg$ \\
			$\eobs^C$ & Observed edges in core ($C$) \\
			\hline
		\end{tabular}
	}
		\caption{Notation for the Distributions, Randomness, and Observed Graph.}
	\end{minipage}
\end{table}

In this section, we show that any algorithm that can get an $O(n^{\delta})$-approximation to the maximum matching 
size, needs to use $\Omega(n^{1+\eps})$ non-adaptive queries for any $\eps,\delta>0$ satisfying the equation $n^{2\eps+3\delta} \log^3 n =o(n)$.
\lbthm*

We will prove this theorem in this section. 
The goal is to establish a lower bound for randomized algorithms that achieve an $O(n^{\delta})$-approximation to the maximum matching size. 
To accomplish this, we construct a hard input distribution under which no deterministic algorithm, restricted to $n^{1+\eps}$ queries, can guarantee such an approximation. By Yao’s minimax principle, this immediately yields a worst-case lower bound for randomized algorithms. The construction is shown below.

\subsection{The Hard Instance Construction}

The hard instance is a bipartite graph $G=(V,E)$ (see \Cref{fig:yes-no-D-side-by-side}). 

\textbf{The vertex set:} The vertex set on the left $L$ is of size $n'=n+2n^{1-\delta}$ and is composed of three distinct 
subsets $A^L , B^L , D^L$ of size $n^{1-\delta}, n$ and $n^{1-\delta}$ respectively. 
Similarly, the vertex set on the right $R$ of size $n'$ is composed of $A^R , B^R , D^R$ of 
size $n^{1-\delta}, n$ and $n^{1-\delta}$ respectively. 
So the number of vertices in the graph is $2n' = \Theta(n)$.
We also divide $B^L$ (similar for $B^R$) into groups $B^L_i$ for $i\in [n^{\delta}]$ of size $n^{1-\delta}$.
Throughout, we may write $A, B, D$ to respectively refer to sets $A^L \cup A^R,B^L \cup B^R, D^L 
\cup D^R$.
We use $a^R_i$ to denote the $i^{th}$ vertex of $A^R$.
We do the same for the others.
Additionally, we use $b^R_{i,j}$ to denote the $j^{th}$ vertex in the group $B^R_i$.
Note that when considering vertices, capital letters denote sets of vertices and lowercase letters denote vertices.
Finally, we will refer to $C:=A \cup B$ as the \textbf{core} and call vertices in $D$ as \textbf{dummy} vertices (since they are added to just raise the degrees of all vertices in the core).

\textbf{The edge set:} For the edge set $E$, we give two distributions: in distribution $\dyes$ the 
maximum matching of $G$ is sufficiently large, and in distribution $\dno$ the maximum matching of 
$G$ is sufficiently small. 
The final input distribution $\mathcal{D} := \frac{1}{2} \dyes +\frac{1}{2} \dno$ draws its input from 
$\dyes$ with probability $1/2$ and from $\dno$ with probability $1/2$. 
The following edges are common in both distributions $\dyes$ and $\dno$ and come from distribution $\dd$:
\begin{itemize}
	\item Every vertex in $A^R$ has $n^{1-\delta}/2 - n^{\delta}$ edges to random vertices in $D^L$ sampled without replacement (same between $A^L$ and $D^R$).
	\item Every vertex in $B^R$ has $n^{1-\delta}/2 - 1$ edges to random vertices in $D^L$ sampled without replacement (same between $B^L$ and $D^R$).
\end{itemize}

The following edges are specific to $\dyes$ and $\dno$ respectively:
\begin{itemize}
	\item In $\dyes$, we additionally add an identity matching between $B^L$ and $B^R$ i.e. edges $(b^L_i,b^R_i)$ for all $i\in [n]$. 
	We also add $n^{\delta}$ random perfect matchings $M_1,M_2,\ldots ,M_{n^{\delta}}$ between $A^L$ and $A^R$. We call this entire distribution in the core unique to $\dyes$, $\dcyes$.
	\item In $\dno$, we add a random perfect matching between each group $B^L_i$ of $B^L$ and $A^R$ for $i \in [n^{\delta}]$.
	We also add an identity matching between each group $B^R_i$ of $B^R$ and $A^L$ for $i \in [n^{\delta}]$. This means that for every group $B^R_i$ we add edges $(b^R_{i,j} , a^L_j)$ for $j \in [n^{1-\delta}]$. We call this entire distribution in the core unique to $\dno$, $\dcno$.
\end{itemize}

\begin{figure}[t]
	\centering
	
	\begin{minipage}{0.45\textwidth}
		\centering
				\scalebox{0.9}{\input{hardInstance_yes_D_text}}
		\caption*{(a) Yes Instance ($\dyes$)}
	\end{minipage}
	\hfill
	\begin{minipage}{0.45\textwidth}
		\centering
				\scalebox{0.9}{\input{hardInstance_no_D_text}}
		\caption*{(b) No Instance ($\dno$)}
	\end{minipage}
	
	\caption{Illustration of a Yes and a No instance. \textcolor{colorTwo}{Blue} edges denote identity matchings: a matching of size $8$ in the Yes case, and two matchings of size $4$ each in the No case. \textcolor{colorSeven}{Pink} and \textcolor{colorThree}{Green} edges form matchings that are structurally identical across the instances but connect to different vertex sets on the left. \textcolor{colorOne}{Orange} squiggly lines represent edges from $D^L$ to $A^R \cup B^R$ (same for $D^R$), existing with probability roughly $1/2$.}
	\label{fig:yes-no-D-side-by-side}
\end{figure}

This concludes the construction of the yes and no distributions $\dyes$ and $\dno$. 
We note that this construction gives us a multi-graph.
The edges in the core will be called \textbf{core edges}. 
We will use the term $A$-to-$A$ edges to denote the edges between $A^L$ and $A^R$ (same for $B$-to-$B$ and $A$-to-$B$ edges).
For a vertex in the core, a neighbor in the core is called a \textbf{core neighbor} and its index in the adjacency list is called a \textbf{core index}.
We also emphasize that we pick a random \textbf{labeling permutation} $\pi$ over the vertices to give a random label to each vertex which will be seen by the algorithm in the adjacency list. The vertices have labels $\mathcal{U} := \set{u_1,u_2,\ldots, u_{n'}}$ for the $L$ side and $\mathcal{V} := \set{v_1,v_2,\ldots, v_{n'}}$ for the $R$ side. The set of all labeling permutations is $\Pi$.

We now describe the \textbf{adjacency lists} of these vertices. Note that the vertices will not have their original names in the structure but the labels given to them by the labeling permutation $\pi \in \Pi$.
We let $\ell(v)$ be the random variable denoting the adjacency list of vertex $v \in V$.
The adjacency list of every vertex should contain its neighbors in random order.
Thus, the randomness we need to define for the adjacency lists is for picking the locations of the core indices and random permutations over the elements that go in the core and the dummy indices.
%The non-core indices of the vertices in the core ($C$) have an edge to a uniformly random element of $D$ without replacement. 
%So adding a permutation over the non-core indices seems redundant, but we do so for convenience.

For $v \in C$, let $\psi(v)$ be the random variable which picks the core indices uniformly at random out of all the indices in the adjacency list of $v$. 
For $v \in C$, let $\sigma_C(v)$ be the random variable which picks a uniformly random permutation over the core neighbors of $v$.
For $v \in C$, let $\sigma_D(v)$ be the random variable which picks a uniformly random permutation over the dummy neighbors of $v$.
Note that for $v \in B$, $\sigma_C(v)$ is always the identity permutation because there is just one core index.
Therefore, for vertices $v \in C$, $\psi(v), \sigma_C(v)$ and $\sigma_D(v)$ together determine the positions and order of neighbors in $\ell(v)$.
For vertices $v\in D$, we just have a uniformly random permutation over all the neighbors in the adjacency list which is given by $\sigma(v)$.

Finally, we note that we give away the bipartition $L,R$ of the graph for free before the queries are 
made. 
An algorithm that makes $\Ot(n)$ queries can figure out the sets $D, C$ any vertex belongs to after the non-adaptive queries are made by looking at the approximate degree of every vertex. 
What is crucially hidden from the algorithm, however, is whether a vertex $v \in C$ belongs to 
$A$ or $B$.

We fix a deterministic algorithm $\Alg$ that makes a fixed number of non-adaptive adjacency list queries $q=n^{1+\eps}$ and differentiates between $\dyes$ and $\dno$.
For each vertex, the queries $\Alg$ makes for that vertex are fixed.
The number of queries $\Alg$ makes is $q_i^L$ for vertex $u_i$ and $q_j^R$ for vertex $v_j$.
If we consider some vertex $v \in V$ then we let $q_v$ be the number of queries $\Alg$ makes to the adjacency list of $v$.
When a query on a vertex say $u$ returns an edge, we think of the edge as directed going \emph{out of} $u$ for analysis purposes.
Any such edge returned in response to a query is referred to as an \textbf{observed} edge. 
We now formalize this notion by defining the \textbf{observed graph} $\gobs = (V, \eobs)$:
\begin{itemize}
	\item The graph is directed with vertex set $V$.
	\item For every vertex $v \in V$, all the edges observed by the queries on $v$ belong to the edge set $\eobs$ and are directed out of $v$.
	\item Each directed edge out of $v$ is labeled with its position in the adjacency list of $v$.
\end{itemize}
We also define $\eobs^C := \eobs(C)$ as the set of edges in the core and $\eobs^D := \eobs \setminus \eobs^C$ as the edges between the core and $D$.

\subsection{Properties of the Hard Instance}

In this subsection, we observe some simple properties of the hard instance.
We start with the following bounds on vertex degrees which follow immediately from the construction.
\begin{observation}
	For any graph $G$ drawn from $\dyes$ or $\dno$, the degree of any vertex in $C$ is $d^* := n^{1-\delta}/2$.
\end{observation}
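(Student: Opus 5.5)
The plan is a direct degree count: for each of the four types of core vertices ($A^L$, $A^R$, $B^L$, $B^R$), add the number of core edges to the number of dummy edges contributed by $\dd$, in both $\dyes$ and $\dno$. By the symmetry of the construction between the $L$ and $R$ sides, it is enough to handle $A^R$ and $B^R$ and then note that $A^L$ and $B^L$ behave the same way, with $D^R$ in place of $D^L$.

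First the dummy contributions, which are the same in both distributions because $\dd$ is common to both. Each $a \in A^R$ receives exactly $n^{1-\delta}/2 - n^{\delta}$ edges to $D^L$. These are sampled without replacement, so they are distinct and the count is exact. Each $b \in B^R$ receives exactly $n^{1-\delta}/2 - 1$ edges to $D^L$.

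Next the core contributions.

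In $\dyes$, each $b^R_i$ has exactly one core edge, the identity-matching edge $(b^L_i, b^R_i)$. Each $a \in A^R$ lies in each of the $n^{\delta}$ perfect matchings $M_1, \ldots, M_{n^{\delta}}$ exactly once. That gives it $n^{\delta}$ core edges, counted with multiplicity, since the graph is a multigraph and repeated edges add to the degree.

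In $\dno$, each $b^R_{i,j}$ has exactly one core edge, namely $(b^R_{i,j}, a^L_j)$. Each $a \in A^R$ is matched exactly once to each of the $n^{\delta}$ groups $B^L_i$, which again gives $n^{\delta}$ core edges.

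The mirrored statements hold for $A^L$ and $B^L$. In $\dyes$ this is immediate from the same matchings. In $\dno$, $a^L_j$ gets one edge from each group $B^R_i$ through the identity matching, and each $b^L_{i,j}$ gets exactly one edge from its perfect matching to $A^R$.

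Adding up, every $A$ vertex has degree $n^{\delta} + n^{1-\delta}/2 - n^{\delta} = n^{1-\delta}/2$, and every $B$ vertex has degree $1 + n^{1-\delta}/2 - 1 = n^{1-\delta}/2$, so the degree is $d^*$ in both distributions.

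There is no real obstacle here. The only points needing care are counting multi-edges with multiplicity, and checking that each $A$ vertex is covered exactly once by each of the $n^{\delta}$ matchings in both cases. This in turn relies on every group having size $n^{1-\delta}$, the same as $|A^R|$, so that the matchings are perfect. We assume $n^{1-\delta}/2 - n^{\delta} \geq 0$ so that the dummy sampling is well defined.
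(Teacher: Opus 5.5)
Your proposal is correct and is the same direct count the paper relies on. The paper says the observation is immediate from the construction: $A$ vertices have $n^{\delta}$ core neighbors and $n^{1-\delta}/2-n^{\delta}$ dummy neighbors, and $B$ vertices have $1$ core neighbor and $n^{1-\delta}/2-1$ dummy neighbors, which is exactly your per-side, per-distribution tally (with multi-edges counted with multiplicity).
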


Notice that the degrees of vertices in $A$ and $B$ are the same, and it is difficult for algorithm $\Alg$ to tell these two apart with a few number of queries. 
Vertices in $A$ have $n^{1-\delta}/2-n^{\delta}$ neighbors in $D$ and $n^{\delta}$ neighbors in the core.
Vertices in $B$ have $n^{1-\delta}/2-1$ neighbors in $D$ and $1$ neighbor in the core.
However, it is easy for $\Alg$ to identify $D$ vertices by just querying the degrees.
%%%%%%%%%%%%%%%%   Degrees of the vertices in D  %%%%%%%%%%%%%%%%
This is because the degrees of the vertices in $D$ are very concentrated around their expectation.
\begin{claim}\label{clm:deg-D-verts}
	With high probability, for all $v\in D$, $\expect{\deg(v)}=n/2 + n^{1-\delta}/2 - 2n^{\delta}$, $\deg(v) \in (1\pm 0.2) \cdot (n/2)$, and the number of edges $v$ has to $A$ is in $(1\pm 0.2) \cdot (n^{1-\delta}/2)$.
\end{claim}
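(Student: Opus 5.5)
Fix a dummy vertex $v$, say $v \in D^L$; the case $D^R$ is symmetric. Its edges all come from $\dd$, so its degree is a sum of indicators over the core vertices on the right side. Each $a \in A^R$ independently chooses $n^{1-\delta}/2 - n^{\delta}$ distinct neighbors uniformly from $D^L$, which has size $n^{1-\delta}$. Hence $a$ is adjacent to $v$ with probability $p_A := (n^{1-\delta}/2 - n^{\delta})/n^{1-\delta} = 1/2 - n^{2\delta-1}$. Similarly each $b \in B^R$ is adjacent to $v$ with probability $p_B := (n^{1-\delta}/2-1)/n^{1-\delta} = 1/2 - n^{\delta-1}$. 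These indicators are mutually independent across different core vertices, because each core vertex samples its dummy neighbors independently of the others. By linearity,
\[
\expect{\deg(v)} = n^{1-\delta} p_A + n\, p_B = n^{1-\delta}/2 - n^{\delta} + n/2 - n^{\delta},
\]
which matches the claimed $n/2 + n^{1-\delta}/2 - 2n^{\delta}$.

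For the degree bound, $\deg(v)$ is a sum of $\Theta(n)$ independent $\{0,1\}$ variables with mean $\mu = (1+o(1))\, n/2$. I would apply \Cref{prop:chernoff} with a constant deviation, say $0.1$. Since $\mu$ is within a $(1+o(1))$ factor of $n/2$, a $0.1$ relative deviation around $\mu$ sits inside the $(1\pm 0.2)\cdot (n/2)$ window. Both tails are then at most $\exp(-\Omega(n))$.

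The number of $A$-neighbors of $v$ is handled the same way. It is a sum of $n^{1-\delta}$ independent indicators with mean $n^{1-\delta} p_A = (1-o(1))\, n^{1-\delta}/2$. Chernoff with a $0.1$ deviation gives failure probability $\exp(-\Omega(n^{1-\delta}))$.

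Finally, I would take a union bound over all $2n^{1-\delta}$ dummy vertices and both events. The total failure probability is $O(n)\cdot \exp(-\Omega(n^{1-\delta})) = o(1)$, since $\delta < 1$ is implied by the constraint $n^{2\eps+3\delta}\log^3 n = o(n)$.

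The only step needing care is the independence claim. Within a single core vertex, the indicators for different dummy vertices are dependent, because that vertex samples without replacement. For a fixed $v$, however, we use only one indicator per core vertex, and these come from independent samples, so the summands are truly independent. Nothing else is delicate; the $o(1)$ correction terms $n^{2\delta-1}$ and $n^{\delta-1}$ are absorbed in the constant slack.
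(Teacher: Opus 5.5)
Your proposal is correct and follows essentially the same route as the paper: one indicator per core vertex, linearity of expectation, and a Chernoff bound with constant deviation justified by independence across core vertices, followed by a union bound over $D$. The only cosmetic difference is that you apply Chernoff directly to $\deg(v)$, whereas the paper concentrates the $A$-part and $B$-part separately and then adds them.
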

\begin{proof}
	Let $v\in D$ be a vertex and let $X_i$ for $i\in[n^{1-\delta}]$ be $1$ if $a_i$ has an edge to $v$ and $0$ otherwise. 
	Let $Y_i$ for $i\in[n]$ be $1$ if $b_i$ has an edge to $v$ and $0$ otherwise.
	Note that if $v$ is on the left then it only has edges to vertices on the right and vice versa, so we drop the superscript of $L$ and $R$.
	
	We have $\expect{X_i} = \prob{X_i=1} = (d^*-n^{\delta})/n^{1-\delta} = 1/2 - n^{2\delta-1}$.
	Let $X = \sum_i X_i$ be the number of edges $v$ has to $A$.
	We have $\expect{X} = n^{1-\delta} \cdot (1/2- n^{2\delta-1}) = n^{1-\delta}/2 - n^{\delta}$.
	Similarly, we have $\expect{Y_i} = \prob{Y_i=1} = (d^*-1)/n^{1-\delta} = 1/2 - n^{\delta-1}$.
	Let $Y = \sum_i Y_i$ be the number of edges $v$ has to $B$.
	We have $\expect{Y} = n \cdot (1/2- n^{\delta-1}) = n/2  - n^{\delta}$.
	
	Then $Z=X+Y$ represents the degree of $v$. We have $\expect{Z} = \expect{X} +\expect{Y} = n/2 +n^{1-\delta}/2 -2 n^{\delta}$. 
	We now show concentration on $Z$ by showing concentration on $X$ and $Y$.
	$X$ and $Y$ are sums of independent random variables (since the choices across vertices are independent), so we can use the Chernoff bound (\Cref{prop:chernoff}) to prove concentration for them:	
	\begin{align*}
		\prob{\card{X - \expect{X}} > 0.1 \expect{X} } \leq 2 \exp(- (0.1)^2 \expect{X}/3) \ll 1/\poly(n),
	\end{align*}
	\begin{align*}
		\prob{\card{Y - \expect{Y}} > 0.1 \expect{Y} } \leq 2 \exp(- (0.1)^2 \expect{Y}/3) \ll 1/\poly(n).
	\end{align*}
	Thus, we get that $Z \in (1 \pm 0.1) \expect{Z}$ with probability $1-1/\poly(n)$.
	We union bound over all vertices and relax the bounds slightly to get the bounds in the claim.
\end{proof}

We now look at the graphs in $\dyes$ and $\dno$ and show that the matching sizes in both cases are very different.
\begin{lemma}\label{lem:matching-sizes}
	Let  $\gyes \sim \dyes$ and $\gno \sim \dno$. Then it holds with probability 1 that 
	$\card{\mu(\gyes)} \geq n$ and $\card{\mu(\gno)} \leq 4n^{1-\delta}$.
\end{lemma}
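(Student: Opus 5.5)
For the yes case we exhibit a matching explicitly. For the no case we exhibit a small vertex cover and invoke weak duality: every matching edge must contain a distinct cover vertex. Both arguments hold deterministically for every graph in the support, which gives ``with probability 1''.

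\emph{Yes case.} Every $\gyes \sim \dyes$ contains the identity matching $\{(b^L_i,b^R_i): i\in[n]\}$ between $B^L$ and $B^R$. These $n$ edges are vertex-disjoint because each $b^L_i$ and each $b^R_i$ appears in exactly one of them. So $\mu(\gyes)\geq n$. (One could also add a perfect matching between $A^L$ and $A^R$, but this is not needed.)

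\emph{No case.} We claim that $S := A \cup D = A^L\cup A^R\cup D^L\cup D^R$ is a vertex cover of every $\gno$ in the support of $\dno$. The edges of $\gno$ come in three kinds.
\begin{itemize}
\item Edges from $\dd$ go between $A\cup B$ and $D$, so each has an endpoint in $D\subseteq S$.
\item Edges from the random perfect matchings between each group $B^L_i$ and $A^R$ have an endpoint in $A^R\subseteq S$.
\item Edges from the identity matchings between each group $B^R_i$ and $A^L$ have an endpoint in $A^L\subseteq S$.
\end{itemize}
In $\dno$ there are no $B$-to-$B$ edges and no edges with both endpoints in $B$. Every edge, including repeated edges of the multigraph, therefore touches $S$. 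A matching picks disjoint edges, and each of them contains at least one vertex of $S$, so
\[
\mu(\gno)\leq \card{S} = 2n^{1-\delta}+2n^{1-\delta}=4n^{1-\delta}.
\]

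No step here is a real obstacle. The only point needing care is checking that the case analysis of edge types in $\dno$ is exhaustive, in particular that $D$ has no edges outside $\dd$ and that $B$ is independent in $\gno$. Both follow directly from the construction.
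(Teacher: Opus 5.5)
Your proof is correct and follows the paper's argument: the identity matching between $B^L$ and $B^R$ gives the yes-case bound, and $A \cup D$ is a vertex cover of size $4n^{1-\delta}$ for every no-instance. The only difference is that you use weak duality (each matching edge contains a distinct cover vertex) where the paper cites K\"onig's equality for bipartite graphs. Your version needs only the inequality direction, which holds without bipartiteness.
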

\begin{proof}
	All graphs in $\dyes$ have a matching of size $n$ between the $B$ vertices.
	The set $A \cup D$ forms a vertex cover for all graphs in $\dno$ implying that the matching size is at most $4n^{1-\delta}$ (the size of the maximum matching and minimum vertex cover are the same in bipartite graphs).
\end{proof}

Although the matching sizes in the yes and no instances differ significantly, we aim to show that since $\Alg$ makes a limited number of queries, it cannot distinguish between them with probability better than $\frac{1}{2} + o(1)$, which is only slightly better than random guessing. The central lemma we will establish is the following:
\begin{lemma}\label{lem:distinguish-yes-no}
The success probability of $\Alg$ in distinguishing between $\dno$ and $\dyes$ using $q = n^{1+\eps}$ queries is at most $1/2 + o(1)$. This holds under the constraint $n^{2\eps+ 3 \delta}\log^3 n = o(n)$.
\end{lemma}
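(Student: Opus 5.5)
Our plan is to follow the strategy of the technical overview in three stages. First we show that, with high probability, the core part $\eobs^C$ of the observed graph is a vertex-disjoint union of stars. Next we reveal extra information to $\Alg$, which only makes distinguishing easier: we reveal the dummy structure ($\dd$, $\sigma_D$, $\sigma$, and the labels of $D$ vertices), and for every star center we reveal whether it lies in $A$ or $B$. Finally, conditioned on this information, we build a coupling between the block structures of $\dyes$ and $\dno$ under which the observed graph is preserved for a $(1-o(1))$ fraction of the support. Since $\Alg$ is deterministic and non-adaptive, its answer depends only on $\gobs$. Therefore its success probability is at most $\frac12 + \frac12\,\mathrm{TV}(\gobs^{\text{yes}},\gobs^{\text{no}}) \leq 1/2+o(1)$.

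\textbf{Step 1 (stars).} A vertex is called \emph{happy} if some query on it lands on a core index. Fix a label $u$ with $q_u$ queries. Under the random labeling $\pi$ it is an $A$ vertex with probability $O(n^{-\delta})$. Each query then hits one of its $n^{\delta}$ core indices (out of $d^*$) with probability $O(n^{2\delta-1})$. For a $B$ vertex the probability per query is $O(n^{\delta-1})$. By a union bound, $\Pr[u \text{ happy}] = O(q_u n^{\delta-1})$, so the expected number of happy vertices is $O(n^{\eps+\delta})$. Queries with $q_u > d^*$ are capped, since there are only $d^*$ positions. A non-star configuration requires a core edge both of whose endpoints are happy. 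We will bound the probability of this by summing over pairs of labels $(u,u')$ of the product-type bound $O(q_uq_{u'} n^{2\delta-2})\cdot\Pr[(u,u') \text{ is a core edge}]$, where $\Pr[(u,u')\text{ core edge}] = O(n^{\delta-1})$. This gives $O(q^2 n^{3\delta-3}) = O(n^{2\eps+3\delta-1})$, which is $o(1)$ under the hypothesis $n^{2\eps+3\delta}\log^3 n=o(n)$. The $\log$ factors absorb the concentration of happy counts and the near-independence corrections that come from sampling without replacement. We also use \Cref{clm:deg-D-verts}, conditioning on its good event, so that $D$ vertices are identified and queries on $D$ vertices reveal only $D$-to-core edges. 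Those edges have the same distribution in both cases, because $\dd$ is shared.

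\textbf{Step 2 (blocks and coupling).} Condition on the set of star centers, their $A/B$ labels, the star sizes, and the observed positions. A star with at least two petals has an $A$ center in both distributions. In $\dyes$ its petals are in $A$; in $\dno$ they are in $B$ (within distinct groups $B_i$ on the opposite side). A star with one petal can have either kind of center. Partition the supports of $\dyes\times\Pi\times(\text{list randomness})$ and of $\dno\times\Pi\times(\text{list randomness})$ into blocks according to $\gobs$. We then define a bijection: take a no-instance, and replace the labels of petals by a relabeling that swaps the $A$ and $B$ roles of petal vertices. For example, an $A$-center star with $B$-petals $b_{i_1},b_{i_2},\dots$ in the no-instance is mapped to a yes-instance in which those labels sit on $A$ vertices matched to the center through distinct matchings $M_{i_1},M_{i_2},\dots$. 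The unobserved parts are completed to a valid yes-instance by the same counting. We will show that the two block sizes agree up to a $1\pm o(1)$ factor. Both are products of counts over (i) choices of the hidden petal identities and (ii) completions of the remaining random matchings. Each factor is a ratio such as $n^{1-\delta}(n^{1-\delta}-1)\cdots$ versus $n\cdots$, which is normalized by the $O(n^{-\delta})$ prior of being in $A$.

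\textbf{Main obstacle.} The hardest part is Step 2: making the block-size ratios provably $1\pm o(1)$ for all but an $o(1)$ fraction of blocks. We would first restrict to \emph{typical} blocks, in which there are at most $O(n^{\eps+\delta}\log n)$ stars, each with at most $O(\log n)$ petals. In such blocks, the constraints the stars impose on the random matchings involve only $\mathrm{poly}\log\cdot n^{\eps+\delta}$ vertices. The number of completions of $n^{\delta}$ perfect matchings on $n^{1-\delta}$ vertices, conditioned on this many fixed edges, changes by a factor of $\exp(O(n^{2\eps+3\delta}\log^3 n/n)) = 1+o(1)$ relative to the counts on the other side. This is exactly where the constraint of the lemma is used again. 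Atypical blocks have total mass $o(1)$ by Markov's inequality applied to the expectations from Step 1, which finishes the bound.
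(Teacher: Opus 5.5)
\textbf{Gap 1: the extra information you reveal makes the problem easy.} Revealing $\dd$, $\sigma_D$, $\sigma$ and the $D$-labels is not a harmless strengthening, because the number of dummy neighbors is exactly what separates $A$ from $B$ ($d^*-n^{\delta}$ versus $d^*-1$). Once these are known, every core label returned by a query on a $D$ vertex is matched to an underlying vertex, and its $D$-degree in $\dd$ gives its type. Even without $\sigma$, comparing the $O(\log n)$ dummy answers to queries on a petal label against $\dd$ pins down the underlying vertex.

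An algorithm that spreads its queries evenly sees $\Theta(n^{\eps+\delta})$ petals with about $n^{\eps}$ queries each. It therefore learns petal types and, with the revealed center types, decides yes versus no with high probability. So no coupling can give $1\pm o(1)$ block ratios after your conditioning. Your claim that the $D$-to-core edges ``have the same distribution in both cases'' holds only marginally, not jointly with the core observations.

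The same problem sits inside your bijection. Moving a label from a no-petal (a $B$ vertex) to the corresponding yes-petal (an $A$ vertex) changes the dummy answers on that label and the $D$-queries landing on it. These belong to $\gobs$, not to the ``unobserved parts'', so the map does not preserve the observed graph.

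\textbf{Gap 2: the missing idea, and where the loss really is.} You need to leave the dummy edges of each petal pair $(u,u')\in\ppair$ partially unrevealed and pool them. The paper couples the core by the quad bijection, which is exact. So your ``main obstacle'' about completing the $n^{\delta}$ matchings is not where any loss occurs. The paper then couples the labeling with its dual $\pi'$, which swaps labels inside petal pairs. It merges the dummy edges of $u$ and $u'$ into a shared neighbor list $\snl_{u,u'}$, fixing only the queried slots and the entries observed from $D$.

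All of the $1\pm o(1)$ loss comes from counting the splits of these lists. The loss is governed by $\sum_{(u,u')\in\ppair}(q_u+q_{u'})\,n^{2\delta-1}$, since a petal showing only dummies on its $q_u$ queried slots is less likely if it lies in $A$. The $O(n^{\eps})$ bound on in-edges from $D$ also enters. Bounding that sum needs the heavy-petal count of \Cref{clm:discovered-heavy-edges}, and this is where $n^{2\eps+3\delta}\log^3 n=o(n)$ is really used. Your typical-block conditions (number of stars, petals per star) do not control it.

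\textbf{Step 1 also needs repair.} The event ``a core edge with both endpoints happy'' is not rare. For an $A$--$A$ edge each end is happy with probability about $q_u n^{2\delta-1}$, which gives up to $\Theta(n^{2\eps+4\delta-1})$ such edges. Your product bound missed this because it multiplied probabilities that are correlated through $A$-membership. Use instead ``$u$ observes this particular edge and its head is happy'', which has probability $O(n^{2\eps+3\delta-1})$.

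You also omit two configurations that must be excluded for the petal swap to be well defined. One is two centers observing the same petal (\Cref{clm:no-happy-pairs}). The other is an observed multi-edge, which only $\dyes$ can produce (\Cref{clm:no-multi-edges}).
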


With \Cref{lem:distinguish-yes-no} in hand, we can now immediately prove \Cref{thm:lb}.
\begin{proof}[Proof of \Cref{thm:lb}]
	Assume towards a contradiction that there is a randomized algorithm that uses $o(n^{1+\eps})$ non-adaptive adjacency list queries and gives better than a $n^{\delta}/4$ approximation to the maximum matching size with probability at least $2/3$.
	
	We know by Yao's minimax principle \cite{yao1977probabilistic} that distributional query complexity for any distribution is a lower bound on randomized query complexity.
	So consider the distribution $\mathcal{D} = \frac{1}{2} \dyes +\frac{1}{2} \dno$.
	The existence of such an algorithm implies that there is a deterministic algorithm $\Alg$ which when given an input from $\mathcal{D}$, gives better than a $n^{\delta}/4$ approximation to the maximum matching size with probability at least $2/3$.  
	
	We also know by \Cref{lem:matching-sizes} that every graph in $\dyes$ has a matching size of at least $n$ and the size of the maximum matching for every graph in $\dno$ is at most $4 n^{1-\delta}$.
	Thus, any algorithm that can approximate the maximum matching size with a factor better than $n^{\delta}/4$ with probability at least $2/3$ can differentiate between $\dno$ and $\dyes$ with probability at least $2/3$.
	But we know by \Cref{lem:distinguish-yes-no} that $\Alg$ makes $q$ queries and distinguishes between $\dno$ and $\dyes$ with success probability at most $1/2 +o(1)$.
	This gives us a contradiction because we got a deterministic algorithm with probability of success $2/3$.
	Thus, we get a lower bound of $\Omega(n^{1+\eps})$ queries for better than an $n^{\delta}/4$ approximation.
	We can re-scale to get the same lower bound for an $n^{\delta}$ approximation, proving the theorem.
	Finally, we have the constraint $n^{2\eps+ 3 \delta}\log^3 n = o(n)$.
\end{proof}

\paragraph{Near-optimality on our hard instance.}
Before diving into the complete proof we want to mention that the lower bounds we get are almost \textbf{optimal} for our hard instance.
For a $\polylog{n}$-approximation, $\widetilde{\Omega}(n^{1.5})$ bits is the best lower bound we can get. Similarly, for a lower bound of $\Omega(n^{1+\eps})$ for a constant $\eps>0$, the best approximation factor we can get is $n^{1/3}$. 

First consider the lower bound of $\widetilde{\Omega}(n^{1.5})$ bits.
We can distinguish between $\dno$ and $\dyes$ in $\Ot(n^{1.5})$ non-adaptive adjacency list queries with probability $1-1/\poly(n)$.
The algorithm just makes all $n$ queries on $O(\sqrt{n} \log n)$ randomly chosen vertices on the left and right. By the birthday paradox, with probability $1-1/\poly(n)$ there is an edge $e$ in the core such that both of its endpoints are sampled (there are $2n$ edges in the core). Since we make $n$ queries on each vertex, we know whether that vertex is in $A$ or $B$ implying that we know both endpoints of some edge $e$. Thus, we know whether we are in the yes or the no case with probability $1-1/\poly(n)$.

We now consider the $n^{1/3}$-approximation.
We can distinguish between $\dno$ and $\dyes$ in $\Ot(n)$ non-adaptive adjacency list queries with probability $1-1/\poly(n)$ when $\delta=1/3$.
The algorithm just makes $\Ot(n^{1/3})$ queries on $\Ot(n^{2/3})$ randomly chosen vertices on the left and right.
We will show that this is enough for the algorithm to detect an $A$-to-$A$ edge in the yes case implying that we can distinguish between the two cases.
There are $n^{2/3}$ vertices in $A$ so sampling $\Ot(n^{2/3})$ random vertices will sample $\Ot(n^{1/3})$ vertices in $A$. Each of these vertices has $n^{1/3}$ neighbors in $A$ and $O(n^{2/3})$ neighbors to $D$. Thus, sampling $\Ot(n^{1/3})$ random neighbors will pick at least $\log n$ neighbors in $A$. This also identifies that such a vertex is an $A$ vertex because we see more than one core edge (all vertices in $D$ are identified by degree queries). 

We fix the randomness for these $\Ot(n^{1/3})$ $A$ vertices on the left each of which have at least $\log n$ edges to $A$ vertices on the right. Let the set of these vertices on the right be $S$ of size $\Omega(n^{1/3})$.
We now look at the $\Ot(n^{1/3})$ $A$ vertices on the right. The probability that we sample no vertices from $S$ on the right is: 
\[
\left(1 - \Omega(n^{1/3})/n^{2/3} \right)^{\Ot(n^{1/3})} \leq \exp(- \polylog{n}) \ll 1/\poly(n).
\]
Therefore, we will sample at least one vertex from $S$ and observe an $A$-to-$A$ edge in the yes case.
Thus, we can distinguish between the two instances using $\Ot(n)$ non-adaptive adjacency list queries.
Finally, we note that we do not need access to random neighbor queries. We can simulate that with an $O(\log n)$ overhead (\Cref{sec:alg}).

\paragraph{Two-Round Limitation of the Construction.}
We also remark that our hard distribution admits a simple two-round distinguishing strategy.
In the first round, the algorithm queries the degrees of all vertices and samples
$O(\log n)$ random vertices on one side of the bipartition.
For each sampled vertex, it queries the entire adjacency list (i.e., makes $n$ adjacency list queries).
With high probability, at least one sampled vertex $u$ lies in $B$.
Since we reveal the full adjacency list of $u$ and know the degrees of all vertices,
we can identify that $u$ is a $B$ vertex and determine its unique non-dummy neighbor $v$.
Indeed, $v$ has degree $d^*$, whereas all dummy neighbors of $u$ have degree $\Theta(n)$.

In the second round, it remains only to determine whether $v$ belongs to $A$ or $B$,
which uniquely identifies whether the instance was drawn from $\dyes$ or $\dno$.
This can be done by querying the entire adjacency list of $v$.
If $v$ has $n^\delta$ non-dummy (core) neighbors, then $v \in A$,
which corresponds to the No case.
If instead $v$ has exactly one non-dummy neighbor, then $v \in B$,
which corresponds to the Yes case.
Thus, two adaptive rounds suffice to distinguish the two distributions with high probability.

%% file: hardInstance_yes_D_text.tex
\begin{tikzpicture}[every node/.style={font=\small}, node distance=0.3cm and 0.5cm, decoration={snake, amplitude=0.8pt, segment length=5pt}]
	
	% Box dimensions
	\def\aboxheight{2.0}
	\def\bboxheight{3.2}
	\def\dheight{1.0}
	\def\boxwidth{1.2}
	\def\xsep{3.2} % reduced from 4.5
	
	% Left side blocks
	\node[draw, rounded corners, minimum width=\boxwidth cm, minimum height=\aboxheight cm] (AL) at (0,0) {};
	\node[draw, rounded corners, minimum width=\boxwidth cm, minimum height=\bboxheight cm, below=of AL, yshift=-1cm] (BL) {};
	\node[draw, rounded corners, minimum width=\boxwidth cm, minimum height=\dheight cm, below=of BL, yshift=-0.7cm] (DL) {$D^L$};
	
	% Right side blocks
	\node[draw, rounded corners, minimum width=\boxwidth cm, minimum height=\aboxheight cm, right=\xsep cm of AL] (AR) {};
	\node[draw, rounded corners, minimum width=\boxwidth cm, minimum height=\bboxheight cm, below=of AR, yshift=-1cm] (BR) {};
	\node[draw, rounded corners, minimum width=\boxwidth cm, minimum height=\dheight cm, below=of BR, yshift=-0.7cm] (DR) {$D^R$};
	
	% === Core Box ===
	%	\node[draw, dashed, fit=(AL)(BL)(AR)(BR), inner sep=8pt, label={[yshift=10pt]above:\textbf{Core}}] (CORE) {};
	
	\draw[dashed]
	([xshift=-0.4cm, yshift=-0.3cm]BL.south west) rectangle
	([xshift=0.4cm, yshift=0.8cm]AR.north east);

	% Labels
	\node at (AL.north) [above=0.2cm] {$A^L$};
	\node at (BL.north) [above=0.2cm] {$B^L$};
	\node at (AR.north) [above=0.2cm] {$A^R$};
	\node at (BR.north) [above=0.2cm] {$B^R$};
%	\node at ($(AL)!0.5!(BL) + (-1.4,0)$) {Core (L)};
%	\node at ($(AR)!0.5!(BR) + (1.4,0)$) {Core (R)};
	
	% Nodes in A^L and A^R
	\foreach \i in {1,...,4} {
		\node[circle, draw, inner sep=1pt, fill=white] (al\i) at ($(AL.north)+(0,-0.4*\i)$) {};
		\node[circle, draw, inner sep=1pt, fill=white] (ar\i) at ($(AR.north)+(0,-0.4*\i)$) {};
	}
	
	% Nodes in B^L and B^R
	\foreach \i in {1,...,8} {
		\node[circle, draw, inner sep=1pt, fill=white] (bl\i) at ($(BL.north)+(0,-0.35*\i)$) {};
		\node[circle, draw, inner sep=1pt, fill=white] (br\i) at ($(BR.north)+(0,-0.35*\i)$) {};
	}
	
	% Identity matching between B^L and B^R using colorFive (blue)
	\foreach \i in {1,...,8} {
		\draw[colorTwo, thick] (bl\i) -- (br\i);
	}
	
	% First random matching between A^L and A^R using colorTwo (sky blue)
	\draw[colorSeven, thick] (al1) -- (ar3);
	\draw[colorSeven, thick] (al2) -- (ar1);
	\draw[colorSeven, thick] (al3) -- (ar4);
	\draw[colorSeven, thick] (al4) -- (ar2);
	
	% Second random matching between A^L and A^R using colorThree (bluish green)
	\draw[colorThree] (al1) to[bend left=20] (ar2);
	\draw[colorThree] (al2) to[bend left=20] (ar3);
	\draw[colorThree ] (al3) to[bend left=20] (ar1);
	\draw[colorThree] (al4) to[bend left=20] (ar4);

	% === Squiggly connections from D blocks to Core ===
	
	% D^L to right side of Core
	\draw[decorate, thick, colorOne] (DL.east) -- ++(3.8, 1.2);
	
	% D^R to left side of Core
	\draw[decorate, thick, colorOne] (DR.west) -- ++(-3.8, 1.2);

	% Legend
%	\node[colorFive] at ($(br4)!0.5!(bl4) + (0,-0.6)$) {\small{Identity matching}};
%	\node[colorTwo] at ($(al2)!0.5!(ar2) + (0,1)$) {\small{Matching 1}};
%	\node[colorThree] at ($(al3)!0.5!(ar3) + (0,1.4)$) {\small{Matching 2}};
	
	% Text Node
	\draw (5.8,0) node [anchor=north west][inner sep=0.75pt]   [align=left] {\large{$n^{1-\delta}$}};

	% Text Node
	\draw (5.8,-3.5) node [anchor=north west][inner sep=0.75pt]   [align=left] {\large{$n$}};
	
	% Text Node
	\draw (5.8,-6.7) node [anchor=north west][inner sep=0.75pt]   [align=left] {\large{$n^{1-\delta}$}};
	
\end{tikzpicture}

%% file: hardInstance_no_D_text.tex
\begin{tikzpicture}[every node/.style={font=\small}, node distance=0.3cm and 0.5cm, decoration={snake, amplitude=0.8pt, segment length=5pt}]
	
	% Box dimensions
	\def\aboxheight{2.0}
	\def\bboxheight{3.2}
	\def\dheight{1.0}	
	\def\boxwidth{1.2}
	\def\xsep{3.2} % horizontal gap between L and R blocks
	
	% === Core Blocks ===
	% Left side
	\node[draw, rounded corners, minimum width=\boxwidth cm, minimum height=\aboxheight cm] (AL) at (0,0) {};
	\node[draw, rounded corners, minimum width=\boxwidth cm, minimum height=\bboxheight cm, below=of AL, yshift=-1cm] (BL) {};
	\node[draw, rounded corners, minimum width=\boxwidth cm, minimum height=\dheight cm, below=of BL, yshift=-0.7cm] (DL) {$D^L$};	
	
	% Right side
	\node[draw, rounded corners, minimum width=\boxwidth cm, minimum height=\aboxheight cm, right=\xsep cm of AL] (AR) {};
	\node[draw, rounded corners, minimum width=\boxwidth cm, minimum height=\bboxheight cm, below=of AR, yshift=-1cm] (BR) {};
	\node[draw, rounded corners, minimum width=\boxwidth cm, minimum height=\dheight cm, below=of BR, yshift=-0.7cm] (DR) {$D^R$};
	
	% === Core Box ===
	%	\node[draw, dashed, fit=(AL)(BL)(AR)(BR), inner sep=8pt, label={[yshift=10pt]above:\textbf{Core}}] (CORE) {};
	
	\draw[dashed]
	([xshift=-0.6cm, yshift=-0.3cm]BL.south west) rectangle
	([xshift=0.6cm, yshift=0.8cm]AR.north east);

	% === Labels for blocks ===
	\node at (AL.north) [above=0.2cm] {$A^L$};
	\node at (BL.north) [above=0.2cm] {$B^L$};
	\node at (AR.north) [above=0.2cm] {$A^R$};
	\node at (BR.north) [above=0.2cm] {$B^R$};
%	\node at ($(AL)!0.5!(BL) + (-1.4,0)$) {Core (L)};
%	\node at ($(AR)!0.5!(BR) + (1.4,0)$) {Core (R)};
	
	% === A^L and A^R Nodes ===
	\foreach \i in {1,...,4} {
		\node[circle, draw, inner sep=1pt, fill=white] (al\i) at ($(AL.north)+(0,-0.4*\i)$) {};
		\node[circle, draw, inner sep=1pt, fill=white] (ar\i) at ($(AR.north)+(0,-0.4*\i)$) {};
	}
	
	% === B^L and B^R Nodes ===
	\foreach \i in {1,...,8} {
		\node[circle, draw, inner sep=1pt, fill=white] (bl\i) at ($(BL.north)+(0,-0.35*\i)$) {};
		\node[circle, draw, inner sep=1pt, fill=white] (br\i) at ($(BR.north)+(0,-0.35*\i)$) {};
	}
	
	% === Group Boxes with Shifts for Easy Editing ===
	
	% B^L_1
	\begin{scope}[shift={(0,0)}]
		\node[
		draw, dashed, rounded corners,
		fit=(bl1)(bl2)(bl3)(bl4),
		inner sep=3pt,
		label={[xshift=-8pt, yshift=0pt]left:$B^L_1$}
		] {};
	\end{scope}
	
	% B^L_2
	\begin{scope}[shift={(0,0)}]
		\node[
		draw, dashed, rounded corners,
		fit=(bl5)(bl6)(bl7)(bl8),
		inner sep=3pt,
		label={[xshift=-8pt, yshift=0pt]left:$B^L_2$}
		] {};
	\end{scope}
	
	% B^R_1
	\begin{scope}[shift={(0,0)}]
		\node[
		draw, dashed,rounded corners,
		fit=(br1)(br2)(br3)(br4),
		inner sep=3pt,
		label={[xshift=8pt, yshift=0pt]right:$B^R_1$}
		] {};
	\end{scope}
	
	% B^R_2
	\begin{scope}[shift={(0,0)}]
		\node[
		draw, dashed,rounded corners,
		fit=(br5)(br6)(br7)(br8),
		inner sep=3pt,
		label={[xshift=8pt, yshift=0pt]right:$B^R_2$}
		] {};
	\end{scope}
	
	% === Edges for NO Instance ===
	
	% B^R_1 → A^L (identity, colorThree)
	\draw[colorTwo, thick] (br1) -- (al1);
	\draw[colorTwo, thick] (br2) -- (al2);
	\draw[colorTwo, thick] (br3) -- (al3);
	\draw[colorTwo, thick] (br4) -- (al4);
	
	% B^R_2 → A^L (identity, colorThree)
	\draw[colorTwo] (br5) -- (al1);
	\draw[colorTwo] (br6) -- (al2);
	\draw[colorTwo] (br7) -- (al3);
	\draw[colorTwo] (br8) -- (al4);

	% B^L_1 → A^R (colorTwo)
	% A^R → B^L_1 (same structure as previous A^L–A^R matching, now reversed)
	\draw[colorSeven, thick] (ar3) -- (bl1); % a^R_3 → b^L_{1,1}
	\draw[colorSeven, thick] (ar1) -- (bl2); % a^R_1 → b^L_{1,2}
	\draw[colorSeven, thick] (ar4) -- (bl3); % a^R_4 → b^L_{1,3}
	\draw[colorSeven, thick] (ar2) -- (bl4); % a^R_2 → b^L_{1,4}

	% B^L_2 → A^R (colorTwo)
	% A^R → B^L_2 (same as second A^L–A^R matching in yes instance)
	\draw[colorThree] (ar2) -- (bl5); % a^R_2 → b^L_{2,1}
	\draw[colorThree] (ar3) -- (bl6); % a^R_3 → b^L_{2,2}
	\draw[colorThree] (ar1) -- (bl7); % a^R_1 → b^L_{2,3}
	\draw[colorThree] (ar4) -- (bl8); % a^R_4 → b^L_{2,4}

	% === Squiggly connections from D blocks to Core ===
	
	% D^L to right side of Core
	\draw[decorate, thick, colorOne] (DL.east) -- ++(3.8, 1.2);
	
	% D^R to left side of Core
	\draw[decorate, thick, colorOne] (DR.west) -- ++(-3.8, 1.2);
	
	% === Legend (Optional) ===
%	\node[colorTwo] at ($(bl3)!0.5!(ar3) + (0,-0.8)$) {\small{$B^L_i \rightarrow A^R$}};
%	\node[colorThree] at ($(br3)!0.5!(al3) + (0,-1.2)$) {\small{$B^R_i \rightarrow A^L$}};
	
		% Text Node
	\draw (5.8,0) node [anchor=north west][inner sep=0.75pt]   [align=left] {\large{$n^{1-\delta}$}};
	
	% Text Node
	\draw (5.8,-3.5) node [anchor=north west][inner sep=0.75pt]   [align=left] {\large{$n$}};
	
	% Text Node
	\draw (5.8,-6.7) node [anchor=north west][inner sep=0.75pt]   [align=left] {\large{$n^{1-\delta}$}};
	
\end{tikzpicture}

%% file: structure.tex
\subsection{Limitations of the Algorithm}\label{subsec:Limitations-of-Alg}

\paragraph{High-Level Proof Plan.}
In this section, we show the limitations of the algorithm $\Alg$. Since $\Alg$ is restricted to making only $q = n^{1+\eps}$ queries, it can uncover only a small portion of the graph. Furthermore, because every vertex in the core has many edges to dummy vertices ($D$), the observed core subgraph—denoted $\eobs^C$—is extremely sparse. To be more precise, the main lemma of this section states that, with high probability, the subgraph observed by $\Alg$ in the core ($\eobs^C$) forms a disjoint union of stars (see \Cref{fig:stars-structure}). This structural property is key to showing that $\Alg$ cannot distinguish between the yes and no instances based on limited query access.
We state the lemma formally below:
\begin{lemma}\label{lem:disj-union-stars}
	Let $G \sim \mathcal{D}$ and consider the algorithm $\Alg$ that makes $n^{1+\eps}$ 
	non-adaptive adjacency list queries. 
	Then, with high probability, the edges $\Alg$ finds inside the core ($\eobs^C$) forms a disjoint union of stars.
\end{lemma}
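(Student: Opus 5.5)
We prove \Cref{lem:disj-union-stars} by showing that, with high probability, no core vertex is incident to two observed core edges in a way that creates a non-star component. Call a core vertex $v$ \emph{happy} if some query made on $v$ returns a core neighbor, i.e., $v$ has an outgoing edge in $\eobs^C$. Since every core vertex has degree exactly $d^* = n^{1-\delta}/2$, the sets of positions of the queries on each labeled vertex are fixed. The graph $\eobs^C$ fails to be a disjoint union of stars only if one of two things happens. Either some core edge $(x,y)$ is observed from both sides, or there is a path of length at least $3$ in the undirected version. The latter forces a core edge both of whose endpoints are happy: the middle edge of the path must have each endpoint with another observed incident edge, and a petal with an extra edge going into it would need to be the tail of its own query. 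More carefully, a component is a star unless there exist two observed core edges $e,e'$ that share no common center. One can check that this means some core edge $(x,y)$ has both $x$ and $y$ happy. So the plan reduces to bounding
\[
\Pr\big(\exists \text{ core edge } (x,y) \text{ with } x,y \text{ both happy}\big),
\]
where the core has $2n$ edges in either distribution. Multi-edges in the yes case only help, since they cannot create non-star structures beyond a doubled edge, which we count as the same event.

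\textbf{Step 1: the probability that a fixed vertex is happy.} Fix an original vertex $w\in C$ with core degree $k_w\in\{1,n^{\delta}\}$. The label $\pi(w)$ is uniform on its side, so $w$ receives $q_{\pi(w)}$ queries, whose expected number is $q/n' = O(n^{\eps})$. Conditioned on $\pi$, each queried position is a core index with probability $k_w/d^* \le 2n^{2\delta-1}$, by the randomness of $\psi(w)$. A union bound over queries then gives $\Pr(w \text{ happy}) \le \mathbb{E}[q_{\pi(w)}]\cdot 2n^{2\delta-1} = O(n^{\eps+2\delta-1})$ for $w \in A$, and $O(n^{\eps+\delta-1})$ for $w\in B$.

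\textbf{Step 2: both endpoints of an edge.} For a core edge $(x,y)$, the adjacency-list randomness $\psi(x),\psi(y)$ is independent. The labels $\pi(x),\pi(y)$ are nearly independent: they lie on opposite sides, so they are exactly independent under a permutation that respects the bipartition. Hence
\[
\Pr(x,y\text{ happy}) \le \mathbb{E}[q_{\pi(x)}]\,\mathbb{E}[q_{\pi(y)}]\cdot \frac{k_x k_y}{(d^*)^2}.
\]
We then sum over edge types. In the yes case there are $n$ $B$–$B$ edges, each contributing $O(n^{2\eps}\cdot n^{2\delta-2})$, and $n^{1-\delta}\cdot n^{\delta}=n$ $A$–$A$ edges, each contributing $O(n^{2\eps}n^{4\delta-2})$. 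In the no case there are $2n$ $A$–$B$ edges, each contributing $O(n^{2\eps}n^{3\delta-2})$. The total is $O(n^{2\eps+4\delta-1})$ in the worst case.

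\textbf{Main obstacle.} The dominant term is the $A$–$A$ term, and the crude bound $n^{2\eps+4\delta-1}$ is weaker than the stated constraint $n^{2\eps+3\delta}\log^3 n=o(n)$. I expect the tightening to be the hard step. The key observation is that an $A$–$A$ edge $(x,y)$ needs the specific edge to be seen, not merely that both endpoints are happy. For a component to fail to be a star, we need either the same edge seen from both ends, with probability $\approx (q_xq_y)/(d^*)^2$ per edge, giving $n\cdot n^{2\eps}n^{2\delta-2}$ in total, or a vertex $y$ that is the head of an observed edge from $x$ and also the tail of some observed core edge. I would bound the latter by first fixing the observed edges out of $x$ and then bounding the chance that the specific head $y$ is happy, which is $O(n^{\eps+2\delta-1})$. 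The number of observed core edges is concentrated around $O(n^{1+\eps}\cdot n^{2\delta-1}\cdot n^{-\delta}\cdot n^{\delta})=O(n^{\eps+\delta})$ over $A$ and $B$. This yields $O(n^{2\eps+3\delta-1})$. To handle the non-uniformity of $q_v$, I would use Chernoff (\Cref{prop:chernoff}) to show that the number of labels landing in $A$ among heavily queried vertices concentrates, costing the $\log$ factors. Finally, a Markov bound gives probability $o(1)$ under $n^{2\eps+3\delta}\log^3 n=o(n)$.
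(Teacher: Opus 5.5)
Your reduction in the first paragraph is where the proof breaks: a non-star component does \emph{not} force a core edge with both endpoints happy. Take the middle edge $x\to y$ of an undirected path. The other observed edge at $y$ can be $z\to y$, revealed by a query on a \emph{different} vertex $z$, so $y$ need not be happy. Your parenthetical about the petal being ``the tail of its own query'' is exactly the false step.

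Concretely, in the yes case take $x,z\in A^L$ and $y,w\in A^R$. Let $x$ observe $x\to y$ and $x\to w$, and let $z$ observe $z\to y$. This is the undirected path $w-x-y-z$, no edge is seen from both ends, and no core edge has two happy endpoints. Your refined dichotomy in the last paragraph misses the same configuration. It is not vacuous: in the yes case every $A$ vertex has $n^{\delta}$ core neighbors, and in the no case every vertex of $A$ has a neighbor in each of the $n^{\delta}$ groups of $B$.

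Even the bare in-star $x\to y\leftarrow z$ must be excluded. The coupling later needs every petal to hang off a unique happy center, which is what makes petal pairs and the dual permutation well defined. The paper handles this with a separate estimate, \Cref{clm:no-happy-pairs}. There are $O(n^{1+\delta})$ pairs of distinct same-side vertices with a common core neighbor. For each pair, the probability that both observe their edge to it is at most $\frac{1}{n'(n'-1)}\sum_{v\neq v'}\frac{q_v q_{v'}}{(d^*)^2}=O(n^{2\eps+2\delta-2})$, since labels are drawn without replacement from one side. The total is $O(n^{2\eps+3\delta-1})=o(1)$.

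Second, multi-edges cannot be waved away. In the yes case two of the matchings $M_i$ can share an $A$--$A$ edge. An observed doubled edge is not a star in the sense the paper needs. Under the core coupling, for $x=a^L_j$ its two copies correspond to two \emph{distinct} no-case petals $b^R_{i,j}$ and $b^R_{i',j}$, so the observed graphs could never be matched. This needs its own bound, \Cref{clm:no-multi-edges}. A label $v$ sees two parallel edges with probability $O(q_v^2\,n^{4\delta-3})$, and $\sum_v q_v^2\le d^*q$ gives $O(n^{\eps+3\delta-1})=o(1)$.

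The part you do handle, that the head of an observed edge is not happy, is correct and is the paper's \Cref{clm:no-happy-edge}. However, no concentration or Chernoff step is needed there. Averaging $\frac{q_{\pi(x)}}{d^*}\cdot\frac{n^{\delta}q_{\pi(y)}}{d^*}$ over the independent opposite-side labels and summing over the $2n$ core edges already handles non-uniform $q_v$ and gives $8n^{2\eps+3\delta-1}$.
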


To establish this, we will prove that, with high probability, any observed endpoint $v$ of an observed edge $u \rightarrow v$ neither observes any additional edges in the core (\Cref{clm:no-happy-edge}), nor has any other observed edges incident on it (\Cref{clm:no-happy-pairs}). Moreover, no vertex observes multi-edges (\Cref{clm:no-multi-edges}).
We will also establish some additional properties in this section that hold with high probability and will be used in the analysis of \Cref{sec:coupling}:
\begin{itemize}
	\item The number of observed edges in the core is small—at most $\Ot(n^{\eps+\delta})$ (\Cref{clm:no-edges-core}).
	\item Any observed endpoint $v$ of an observed edge $u \rightarrow v$ receives only a small number of queries, specifically $q_v \leq \Ot(n^{2\eps+\delta})$ (\Cref{clm:discovered-heavy-edges}).
	This lemma also shows that the number of petal vertices with many queries (below the previous threshold of $\Ot(n^{2\eps+\delta})$) is small, revealing a tradeoff between the number of queries and the number of such vertices.
	\item Every vertex $v \in C$ has small in-degree from dummy vertices $D$, bounded by $O(n^{\eps})$ (\Cref{clm:indeg-from D}).
\end{itemize}

To prove these claims, we rely on the randomness of the labeling permutation $\pi \in \Pi$ and the adjacency list $\ell \in \mathcal{L}$. 
\Cref{clm:no-multi-edges} is the only one where we also use the randomness in the distribution $\dyes$.
The algorithm $\Alg$ specifies queries for each vertex label such that their numbers sum to $q=n^{1+\eps}$. 
While the query distribution over labels in $\mathcal{U} \cup \mathcal{V}$ may be highly non-uniform, the random permutation $\pi$ ensures that, in expectation, any fixed underlying vertex (e.g., $a_1$) receives $n^{\eps}$ queries. For intuition, one can think of the analysis as if each vertex label receives $n^{\eps}$ queries. The bounds we derive will be the same even under the actual, potentially non-uniform query distribution.

The randomness used in the adjacency lists of core vertices comes from the random placement of core indices, determined by $\psi$ and the random permutations $\sigma_C,\sigma_D$. 
For vertices in $D$, we use the randomness of the adjacency list being randomly permuted, as specified by $\sigma$.
%Importantly, we do not rely on the randomness of $\dcno$ or $\dcyes$.

Finally, to prove the claims we need, we will set the following constraint:
\begin{constraint}\label{constraint:no-happy-pairs}
	n^{2\eps+ 3 \delta}\log^2 n =o(n).
\end{constraint}

\begin{table}[t]
	\footnotesize
	\centering
	\renewcommand{\arraystretch}{1.2}
	\begin{tabular}{|c|>{\raggedright\arraybackslash}p{12.2cm}|}
		\hline
		\textbf{Symbol / Term} & \textbf{\makebox[\linewidth][c]{Description}} \\
		\hline
		Happy vertex & A vertex that reveals at least one edge in the core through its queried indices. \\		
		$S$ & Set of star centers which are also the set of happy vertices. \\
		$\pyes$, $\pno$ & Set of petals (neighbors of star centers) in the yes and no cases, respectively. \\
		$\tau$-heavy vertex & A vertex with at least $\tau$ queries made on it. $\tau$ is taken as powers of 2: $\tau = 2^i$ for $i \in [\alpha, \beta]$. \\				
		$\alpha$, $\beta$ & $\alpha = \log(n^\eps \log n)$ and $\beta = \log(16n^{2\eps+\delta} \log^2 n)$. \\
		\hline
	\end{tabular}
	\caption{Notation for \Cref{subsec:Limitations-of-Alg}.}
\end{table}

We now turn to the formal claims and their proofs, as outlined in the high-level plan.
We begin by showing that, with high probability, only a small number of core edges are observed by $\Alg$.
\begin{claim}\label{clm:no-edges-core}
	With high probability, the number of core edges seen by the algorithm $\Alg$ is at most $4 n^{\eps+\delta} \log n$ in $\dyes$ and $\dno$.
\end{claim}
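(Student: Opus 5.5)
We bound the expected number of observed core edges and then show concentration. Every observed core edge is returned by some query $(x,i)$ made by $\Alg$: the label $x\in\mathcal{U}\cup\mathcal{V}$ together with an index $i$ in its adjacency list. There are exactly $q=n^{1+\eps}$ such queries. We count queries that land on a core index.

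First fix a single query $(x,i)$. The label $x$ is mapped by $\pi$ to an underlying vertex $v$. If $v\in D$, the returned edge is not a core edge. If $v\in C$, then $\deg(v)=d^*=n^{1-\delta}/2$. Moreover, $\psi(v)$ picks the core indices uniformly at random among the $d^*$ positions, and $v$ has at most $n^{\delta}$ core indices. Hence, conditioned on $\pi$,
\[
\Pr\paren{\text{index } i \text{ of } v \text{ is a core index}}\leq \frac{n^{\delta}}{d^*} = 2n^{2\delta-1}.
\]
This bound holds for every core vertex regardless of whether it lies in $A$ or $B$, and in both $\dyes$ and $\dno$. We could also average over $\pi$, since only a $\Theta(n^{-\delta})$ fraction of core vertices lie in $A$, but the crude bound suffices.

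This crude bound gives expectation at most $2n^{1+\eps}\cdot n^{2\delta-1}=2n^{\eps+2\delta}$, which is too large by a factor $n^{\delta}$. So we use $\pi$. Under $\pi$, a fixed label is an $A$-vertex with probability $\Theta(n^{-\delta})$, and in that case the hit probability is $2n^{2\delta-1}$. It is a $B$-vertex with probability $\Theta(1)$, and then the hit probability is $1/d^*=2n^{\delta-1}$. Each single query therefore hits a core index with probability $O(n^{\delta-1})$, and the expected number of observed core edges is at most about $2n^{\eps+\delta}$ up to constants.

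For concentration, let $X_x$ be the number of core hits among the $q_x$ queries on label $x$. Conditioned on $\pi$, the $X_x$ are independent across labels, since the $\psi(v)$ are independent. Two issues remain. First, a single label may carry many queries, so $X_x$ is not in $[0,1]$. Second, the queries on one vertex are sampled without replacement among positions, which makes them negatively associated.

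We handle this by decomposing into indicators $Y_{x,i}$, one per query, equal to $1$ if index $i$ of $\pi(x)$ is a core index. Within a vertex these indicators come from choosing a uniformly random subset of positions, so they are negatively associated. Across vertices they are independent given $\pi$. The remaining difficulty is that the $A/B$ membership of labels is itself random through $\pi$. We first apply a Chernoff bound (\Cref{prop:chernoff}, valid for negatively associated variables) conditioned on $\pi$. Its mean is $\mu_\pi=\sum_x q_x p_{\pi(x)}$, where $p_v\in\{2n^{2\delta-1},2n^{\delta-1}\}$.

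The main obstacle is controlling $\mu_\pi$. An adversarial $\Alg$ could concentrate queries on few labels, so that if those labels land in $A$ the mean inflates to $\approx n^{\eps+2\delta}$. We bound $\mu_\pi\le 2n^{\delta-1}q+2n^{2\delta-1}Q_A$, where $Q_A$ is the number of queries landing on $A$-vertices. Then $\expect{Q_A}=q\cdot\Theta(n^{-\delta})$, so by Markov's inequality $Q_A\le q n^{-\delta}\log n/8$ with probability $1-O(1/\log n)$, which is high probability. This gives $\mu_\pi\le 2n^{\eps+\delta}+n^{\eps+\delta}\log n/4\cdot O(1)$.

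Finally, applying Chernoff with $\mu_H=2n^{\eps+\delta}\log n$ gives at most $4n^{\eps+\delta}\log n$ core edges with probability $1-\exp(-\Omega(n^{\eps+\delta}))$. A union of the two failure events gives the claim in both $\dyes$ and $\dno$.
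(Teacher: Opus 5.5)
Your argument is correct, but it takes a longer route than the paper. The paper computes the same expectation you reach in your third paragraph. Averaging jointly over $\pi$ and $\psi$, each label contributes at most $n^{-\delta}\cdot n^{\delta}q_v/d^* + q_v/d^* = 4n^{\delta-1}q_v$ expected core hits, so the total is at most $4n^{\eps+\delta}$ (slightly more than your ``about $2n^{\eps+\delta}$''). The paper then applies Markov's inequality once to the total count, which gives the bound $4n^{\eps+\delta}\log n$ with failure probability $1/\log n$.

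You instead condition on $\pi$, apply Markov only to $Q_A$, and then use a Chernoff bound for negatively associated indicators. Your Markov step on $Q_A$ already has failure probability $\Theta(1/\log n)$, so your final guarantee is of the same order as the paper's. The negative-association and Chernoff machinery therefore does not strengthen the claim; the unconditional expectation plus a single Markov step already finishes the proof.

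What your decomposition does buy is a finer picture. The only non-concentrated randomness is the labeling $\pi$, through how many queries land on $A$-vertices, and conditioned on a typical $\pi$ the count is exponentially concentrated. That refinement is not needed here, since the paper only uses this claim as an $o(1)$-probability bad event that is discarded in the coupling.
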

\begin{proof}
	Consider any vertex $v$ in $C$ for which we make $q_v$ queries.
	Since we pick a random labeling permutation $\pi$ over the vertices, the probability that this vertex $v$ is in $A$ is $2n^{1-\delta}/(2n+4n^{1-\delta}) \leq n^{-\delta}$ and the probability $v$ is in $B$ is at most $1$.
	
	Let the vertex $v$ be in $A$. We want to calculate the expected number of core edges that are queried for $v$.
	Let $X^v_i =1$ if the $i^{th}$ core edge of $v$ is queried for $i \in [n^{\delta}]$.
	Since the core indices are chosen randomly using $\psi(v)$, this just means that the $i^{th}$ core edge lands in the queried region.
	We have that $\expect{X^v_i} = \prob{X^v_i=1} = q_v/d^*$ since exactly $q_v$ out of $d^*$ indices are queried.
	$X^v = \sum_i X^v_i$ represents the number of core edges of $v$ that land in the queried region.
	We have that $\expect{X^v} = \sum_i \expect{X^v_i} = n^\delta \cdot q_v/d^*$.
	
	Let the vertex $v$ be in $B$. We want to calculate the expected number of core edges that are queried for $v$. Since a $B$ vertex has just one core edge this is the same as the probability that the core edge lands in the queried indices.
	Let $Y^v =1$ if the core edge of $v$ is queried for and $0$ otherwise.
	We have that $\expect{Y^v} = \prob{Y^v=1} = q_v/d^*$ since exactly $q_v$ out of $d^*$ indices are queried.
	
	We now come back to the case where $v$ is an $A$ vertex or a $B$ vertex with some probability.
	Let $Z^v$ be the random variable that denotes the number of core edges of $v$ that land in the queried region.
	We have: 
	\begin{align*}
		\expect{Z^v} &= \prob{v \in A} \cdot \expect{Z \mid v \in A} + \prob{v \in B} \cdot \expect{Z \mid v \in B} \\
		&\leq n^{-\delta} \cdot n^{\delta} \cdot q_v/d^* + 1 \cdot q_v/d^* \\
		&= 4 n^{\delta} q_v/n.
	\end{align*}
	
	Let $Z = \sum_{v} Z^v$ be the total number of observed edges in the core.
	The expected number of observed edges is $\expect{Z} = \sum_v \expect{Z^v} \leq \sum_v 4 n^{\delta} q_v/n = n^{1+\eps} \cdot 4 n^{\delta}/n = 4n^{\eps+\delta}$.
	The second last equality is true because $\sum_v q_v = q =n^{1+\eps}$.
	By Markov's inequality the probability that we exceed $\log n$ times the expectation is $1/\log n$, implying the bound in the claim.
\end{proof}

To prove \Cref{lem:disj-union-stars}, we define a vertex as \textbf{happy} if the queries on it reveal at least one edge in the core.
We first calculate the probability that a vertex is happy.
\begin{claim}\label{clm:happy-vertex}
	Let $a \in A$ be an arbitrary vertex and $v \in \mathcal{U} \cup \mathcal{V}$ be an arbitrary label.
	The probability that vertex $a$ with label $v$ is happy is at most $n^\delta \cdot q_v/d^*$.
\end{claim}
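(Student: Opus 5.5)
The plan is a union bound over the $n^{\delta}$ core neighbors of $a$, using only the randomness of the placement $\psi(a)$ of the core indices inside the adjacency list $\ell(a)$. We fix the label $v$ assigned to $a$ by $\pi$. Since $\Alg$ is deterministic and non-adaptive, this fixes the set $Q_v$ of queried positions of $a$'s list, with $\card{Q_v} = q_v$. The vertex $a$ is happy exactly when some core index of $a$ lies in $Q_v$. If $q_v \ge d^*$ then $n^\delta \cdot q_v/d^* \ge 1$ and there is nothing to prove, so we may assume that $Q_v$ is a subset of the $d^*$ positions.

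The key step is to compute the marginal distribution of a single core index. By construction, $\psi(a)$ chooses the set of $n^{\delta}$ core positions uniformly at random among all $d^*$ positions of $\ell(a)$. It is independent of $\pi$, and also independent of which core neighbors are present, since every vertex of $A$ has exactly $n^{\delta}$ core neighbors in both $\dyes$ and $\dno$. So we label the core neighbors $1,\dots,n^{\delta}$ (via $\sigma_C(a)$, although any fixed ordering works). By symmetry, the position of the $i$-th core neighbor is then uniform over $[d^*]$, and
\[
\Pr\paren{\text{$i$-th core index of $a$ lies in } Q_v} = \frac{q_v}{d^*}.
\]
This is the same computation as the one for $X^v_i$ in the proof of \Cref{clm:no-edges-core}.

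Finally, the event that $a$ is happy is the union over $i\in[n^\delta]$ of these events. The union bound (equivalently, Markov's inequality applied to $X^v=\sum_i X^v_i$, whose expectation is $n^{\delta} q_v/d^*$) gives the claimed bound $n^\delta \cdot q_v/d^*$.

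The only delicate point is justifying that conditioning on the label $v$ leaves $\psi(a)$ uniform. This holds because $\pi$, $\psi$, $\sigma_C$, $\sigma_D$ and the core edge distribution are sampled independently, and because the degree $d^*$ and the number of core neighbors of $a$ are deterministic. No other obstacle is expected.
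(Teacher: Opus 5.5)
Your proposal is correct and follows the paper's own argument: a union bound over the $n^{\delta}$ core edges of $a$, each of which lands among the $q_v$ queried positions out of $d^*$ with probability $q_v/d^*$. Your justification that $\psi(a)$ stays uniform after fixing the label is care the paper leaves implicit. One small nit: $q_v < d^*$ does not by itself put the queried positions inside $[d^*]$, but replacing $Q_v$ by $Q_v \cap [d^*]$ only lowers the probability, so the bound is unaffected.
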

\begin{proof}
	Since $v\in A$, it has $n^\delta$ core edges.
	Let $\mathcal{E}_i$ be the event that the $i^{th}$ core edge lands in the queried portion for $i \in [n^\delta]$.
	We have $\prob{\mathcal{E}_i} = q_v/d^*$ and by a union bound:
	\[
	\prob{v \text{ is happy}} = \prob{\mathcal{E}_1 \cup \mathcal{E}_2 \cup \ldots \cup \mathcal{E}_{n^\delta}} \leq n^{\delta} \cdot q_v/d^*. \qedhere
	\]
\end{proof}

\begin{figure}[t]
	\centering
	
	\begin{subfigure}[t]{0.58\columnwidth}
		\centering
		\scalebox{0.7}{\input{star1}}
		\caption{Observed core structure: a disjoint union of stars.}
		\label{fig:stars-good}
	\end{subfigure}
	\hfill
	\begin{subfigure}[t]{0.38\columnwidth}
		\centering
		\scalebox{0.7}{\input{star2}}
		\caption{Forbidden configurations.}
		\label{fig:stars-bad}
	\end{subfigure}
	
	\caption{
		Structure of the observed core subgraph under non-adaptive queries.
		With high probability, the induced subgraph on the core forms a
		disjoint union of stars, where edges are directed away from the queried
		vertex that reveals the edge. Star centers correspond to happy vertices,
		and petals are not happy. The configurations on the right occur with
		probability $o(1)$ and are ruled out by \Cref{clm:no-happy-edge,clm:no-happy-pairs,clm:no-multi-edges}.
	}
	
	\label{fig:stars-structure}
\end{figure}
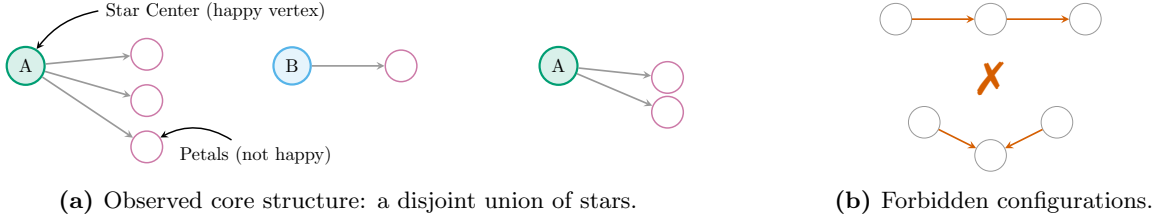

We now show that the incoming endpoint of every observed edge in the core is not happy.
\begin{claim}\label{clm:no-happy-edge}
	With high probability, the incoming endpoint of every observed edge in the core is not a happy vertex, in both $\dcyes$ and $\dcno$.
\end{claim}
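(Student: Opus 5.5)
We will bound the expected number of ``bad'' configurations, namely observed core edges $u \to v$ whose endpoint $v$ is happy, and then apply Markov's inequality. Concretely, we sum over ordered pairs of labels $(u,v)$ on opposite sides. For each pair we bound the probability of two events occurring together: (i) the queries on $u$ reveal a core edge to $v$, and (ii) the queries on $v$ reveal some core edge. Event (ii) may be witnessed by the edge $(u,v)$ itself seen from $v$'s side, so we count any core edge revealed at $v$. The randomness used is the labeling permutation $\pi$ together with the adjacency-list randomness $\psi,\sigma_C$. We will not use the randomness of $\dcyes$ or $\dcno$, so the argument applies to any fixed core structure.

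\textbf{Step 1: bound event (i).} Fix the labels $u,v$. Under $\pi$, the vertex behind label $u$ lies in $A$ with probability at most $n^{-\delta}$, and in $B$ with probability at most $1$. Suppose $u$ lies in $A$. Its $n^\delta$ core neighbors form a set of size $n^\delta$ among roughly $n$ core labels on the other side, so a specific label $v$ is a core neighbor of $u$ with probability about $n^{\delta}/n$. That edge lands in $u$'s queried indices with probability $q_u/d^*$. Suppose instead $u$ lies in $B$. Its unique core neighbor equals $v$ with probability about $1/n$, and the edge is queried with probability $q_u/d^*$. Combining the two cases gives
\[
\Pr\paren{u \to v \text{ observed}} \lesssim q_u/(n d^*).
\]

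\textbf{Step 2: bound event (ii) conditioned on (i).} Condition on $u \to v$ being observed. Now $v$ lies in $A$ or $B$, and the positions of the core indices in $\ell(v)$ are still independent of the event at $u$, since $\psi(v)$ is independent of $\psi(u),\sigma(u)$. Conditioning on the event at $u$ can raise the chance that $v\in A$. By Bayes' rule this probability is at most about $1/2$ when $u\in B$, and at most $1$ in general. So we use the crude bound given by \Cref{clm:happy-vertex}:
\[
\Pr\paren{v \text{ happy} \mid u \to v} \leq n^\delta q_v/d^*.
\]
Multiplying the two bounds, the expected number of bad pairs is at most
\[
\sum_{u,v} \frac{q_u}{n d^*}\cdot\frac{n^\delta q_v}{d^*} \;\lesssim\; \frac{n^{\delta}}{n (d^*)^2}\Big(\sum_u q_u\Big)\Big(\sum_v q_v\Big) \;=\; O\!\left(\frac{n^{\delta} n^{2+2\eps}}{n\cdot n^{2-2\delta}}\right)=O(n^{2\eps+3\delta-1}).
\]
By \Cref{constraint:no-happy-pairs} this is $o(1)$, so Markov's inequality gives the claim. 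It also explains where the constraint $n^{2\eps+3\delta}=o(n)$ comes from.

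\textbf{Main obstacle.} The delicate point is the conditioning in Step 2. Correlations between $u$'s event and the identity of $v$ must not inflate the bound. The worst case, $v\in A$, already carries the factor $n^\delta$. It remains to verify that, given $u$'s event, the core indices of $v$ are still uniformly placed, so each lands in the queried set with probability at most $q_v/d^*$, or $(q_v-1)/(d^*-1)$ after excluding the index already known. The other subtlety is handling the joint randomness of $\pi$ over labels cleanly, since the $q_u$ are arbitrary. We handle it by treating each label pair separately and using only the marginal probabilities $\Pr(u\in A)\le n^{-\delta}$ and $\Pr(v \text{ is a specific core neighbor})=O(1/n)$, which remain valid for any query distribution.
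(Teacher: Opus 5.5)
Your proposal is correct and is essentially the paper's proof. Both are a first-moment count of configurations in which an observed core edge $u\to v$ has a happy head, using $q_u/d^*$ for the edge being queried at $u$, the independent bound $n^{\delta}q_v/d^*$ on $v$ being happy (it depends only on $\psi(v)$), and $\bigl(\sum_u q_u\bigr)\bigl(\sum_v q_v\bigr)\le n^{2+2\eps}$ to reach $O(n^{2\eps+3\delta-1})=o(1)$, followed by Markov. The only difference is bookkeeping: the paper fixes each of the $2n$ core edges and averages over its label assignment, charging every edge the $A$-to-$A$ worst case, whereas you fix the label pair first and offset the $n^{\delta}$ core neighbors of an $A$ vertex by $\Pr(u\in A)\le n^{-\delta}$. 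Your Bayes aside is slightly off for a fixed core (in $\dcno$, $u\in B$ forces $v\in A$), but this is harmless since you only use the crude bound.
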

\begin{proof}
	Consider the argument for $A$-to-$A$ edges. The cases of $A$-to-$B$ edges and $B$-to-$B$ edges can only have better bounds because the probability that a vertex in $B$ is happy is $1/d^*$ which is much smaller than $n^{\delta}/d^*$.
	Fix any $A$-to-$A$ edge $e=(a_i^L,a_j^R)$.
	Since $\pi$ is a random labeling permutation, $a_i^L$ and $a_j^R$ get different labels implying different queries to their adjacency lists depending on the label.
	
	Consider the case where $a_i^L$ gets label $u$ and $a_j^R$ gets label $v$.
	This happens with probability $1/{n'}^2$.
	We have $\prob{u \text{ observes } e} = q_u/d^*$ and $\prob{v \text{ is happy}} \leq n^{\delta} \cdot q_v/d^*$ (\Cref{clm:happy-vertex}).
	Also, note that these events are independent of each other because they only depend on the randomness of $\psi(u),\psi(v)$ in their adjacency lists.
	We now calculate the probability of event $\mathcal{E}$ that $a_i^L$ observes $e$ and $a_j^R$ is happy:
	\begin{align*}
		\prob{\mathcal{E}} &= \sum_{u \in \mathcal{U}} \; \sum_{v \in \mathcal{V}} (1/{n'}^2) \cdot \prob{u \text{ observes } e} \cdot \prob{\text{$v$ is happy}} \\
		&\leq \sum_{u \in \mathcal{U}} \; \sum_{v \in \mathcal{V}} (1/n^2) \cdot (q_u/d^*) \cdot (n^{\delta} q_v/d^*) \\
		&\leq 4n^{3\delta}/n^4 \sum_{u \in \mathcal{U}} q_u  \; \sum_{v \in \mathcal{V}}  q_v \\
		&\leq (4n^{3\delta}/n^4) \cdot n^{2+2\eps} \\
		&= 4 n^{2\eps+3\delta-2}.
	\end{align*}
	
	The same calculation also works for $A$-to-$B$ and $B$-to-$B$ edges with better bounds however we just use the worst-case bound obtained here.
	Since there are $2n$ edges in the core, the expected number of happy edges is at most $2n \cdot 4 n^{2\eps+3\delta-2} = 8 n^{2\eps+3\delta-1}=o(1)$ (by \Cref{constraint:no-happy-pairs}).
	Thus, by Markov's inequality, we conclude that no edge is happy with probability $1-o(1)$.
\end{proof}

We also show that no pair of vertices share an observed neighbor with high probability.
\begin{claim}\label{clm:no-happy-pairs}
	With high probability, no pair of vertices in the core observe an edge to the same neighbor for $\dcyes$ and $\dcno$.
\end{claim}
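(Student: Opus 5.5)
The plan is to follow the template of \Cref{clm:no-happy-edge}: a first-moment calculation over pairs of core edges sharing an endpoint, followed by Markov's inequality. A ``bad pair'' is two distinct core vertices $x,y$ that both observe an edge to a common core neighbor $w$. Then $w$ is incident to two distinct core edges $e_1=(x,w)$ and $e_2=(y,w)$, with $x$ observing $e_1$ and $y$ observing $e_2$.

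Note which vertices can serve as $w$. In $\dcyes$, only $A$ vertices have more than one core edge. In $\dcno$, the $A$ vertices have $n^{\delta}$ core neighbors, all of which lie in $B$. So in every case $w\in A$. The number of such configurations is the number of ordered pairs of distinct core edges at a common $A$ vertex, which is at most $2n^{1-\delta}\cdot (n^{\delta})^2 = 2n^{1+\delta}$. If $x=y$ (a multi-edge, or two edges from the same vertex to $w$), this is the multi-edge case handled separately by \Cref{clm:no-multi-edges}. I would only treat $x\neq y$ here.

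Fix one such configuration and average over the labeling permutation, exactly as before. The vertices $x,y$ receive distinct labels $u,u'$ on the same side, each specific pair with probability about $1/{n'}^2$. The event that $u$ observes $e_1$ depends only on $\psi(x),\sigma_C(x)$, and the event that $u'$ observes $e_2$ depends only on $\psi(y),\sigma_C(y)$. These are independent, with probabilities at most $q_u/d^*$ and $q_{u'}/d^*$.

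This gives the bound
\begin{align*}
\Pr(\text{bad}) \le \sum_{u\neq u'} \frac{1}{{n'}^2}\cdot\frac{q_u}{d^*}\cdot\frac{q_{u'}}{d^*} \le \frac{4n^{2\delta}}{n^4}\Big(\sum_u q_u\Big)^2 \le 4n^{2\eps+2\delta-2}.
\end{align*}
Multiplying by the $2n^{1+\delta}$ configurations gives an expected number of bad pairs of $O(n^{2\eps+3\delta-1})=o(1)$ by \Cref{constraint:no-happy-pairs}. Markov's inequality then finishes the proof.

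The step most likely to need care is the accounting of which label receives which queries. The labels of $x$ and $y$ must be handled jointly through the uniform permutation, and the resulting sum of $q_uq_{u'}$ over pairs is bounded by $q^2$. Otherwise the argument is a direct repetition of the earlier calculation.
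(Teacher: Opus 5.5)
Your proposal is correct and follows essentially the same route as the paper. Both arguments bound the expected number of vertex pairs sharing an observed core neighbor, average over the labeling permutation to get $4n^{2\eps+2\delta-2}$ per configuration, multiply by $O(n^{1+\delta})$ configurations, and finish with Markov's inequality under \Cref{constraint:no-happy-pairs}. The differences are cosmetic: you index configurations by the common neighbor $w\in A$, which handles $\dcyes$ and $\dcno$ at once instead of the paper's separate treatment of $A$-pairs in the yes case and $B$-pairs in the no case, and you explicitly set aside the $x=y$ multi-edge case for \Cref{clm:no-multi-edges}.
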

\begin{proof}
	A vertex in $A$ and a vertex in $B$ can never share a common neighbor in both cases by construction.
	In the yes case, only $A$-vertices can share neighbors, and in the no case, only $B$-vertices can share neighbors.
	Consider the argument for $A$ vertices in the yes case. The exact same argument works for $B$ vertices in the no case. 
	
	We fix any two vertices $a,a' \in A$ that share a neighbor $a''$.
	The number of possible choices for $a$ are $n^{1-\delta}$. After fixing $a$, $a'$ has at most $n^{2\delta}$ choices because $a$ has $n^\delta$ neighbors from which we pick $a''$ and $a''$ has $n^{\delta}$ different neighbors from which we choose $a'$.
	Thus, we fix a pair $a,a'$ that share neighbor $a''$, out of the $n^{1+\delta}$ choices and calculate the probability that both edges $e_1=(a,a'')$ and $e'_1=(a',a'')$ are observed.
	
	Consider the case where $a$ gets label $v$ and $a'$ gets label $v'$ ($v' \neq v$).
	This happens with probability $1/n' (n'-1)$.
	This is slightly different from the previous proof because both the vertices are on the same side.
	$\prob{(v,a'') \text{ is observed}} \leq q_v/d^*$ and $\prob{(v',a'') \text{ is observed}} = q_{v'}/d^*$.
	We now calculate the probability of the event $\mathcal{E}_1$ that both $(v,a'')$ and $(v',a'')$ are observed:
	\begin{align*}
		\prob{\mathcal{E}_1} &= \sum_{v \in \mathcal{V}} \; \sum_{v' \in \mathcal{V}-\set{v}} (1/n' (n'-1)) \cdot \prob{\text{$(v,a'')$ and $(v',a'')$ are observed}} \\
		&\leq \sum_{v \in \mathcal{V}} \; \sum_{v' \in \mathcal{V}} (1/n^2) \cdot (  q_v/d^*) \cdot (q_{v'}/d^*) \\
		&\leq 4n^{2\delta-4} \sum_{v \in \mathcal{V}} q_v \; \sum_{v' \in \mathcal{V}}  q_{v'} \\
		&\leq 4n^{2\delta-4} \cdot n^{2+2\eps} \\
		&= 4 n^{2\eps+2\delta-2}.
	\end{align*}
	
	Since there are $n^{1+\delta}$ choices for $a,a'$, the expected number of pairs of vertices in $A$ that share an observed neighbor is at most $n^{1+\delta} \cdot 4 n^{2\eps+2\delta-2} = 4 n^{2\eps+3\delta-1}=o(1)$ (by \Cref{constraint:no-happy-pairs}).
	Thus, by Markov's inequality, we conclude that no pair of happy vertices in $A$ share a neighbor with probability $1-o(1)$.
	
	We now consider $B$ vertices sharing an edge in the no case.
	We fix any two vertices $b,b' \in B$ that share a neighbor.
	The number of possible choices for $b$ are $n$. 
	After fixing $b$, $b'$ has $n^{\delta}$ choices because $b$ has $1$ neighbor $b''$ which has $n^{\delta}$ different neighbors from which we choose $b'$.
	Thus, we fix a pair $b,b'$ out of the $n^{1+\delta}$ choices calculate the probability that both edges $e_2=(b,b'')$ and $e'_2=(b',b'')$ are observed.
	
	Consider the case where $b$ gets label $v$ and $b'$ gets label $v'$ ($v' \neq v$).
	This happens with probability $1/n' (n'-1)$.
	$\prob{(v,b'') \text{ is observed}} \leq q_v/d^*$ and $\prob{(v',b'') \text{ is observed}} = q_{v'}/d^*$.
	We now calculate the probability of the event $\mathcal{E}_2$ that both $(v,b'')$ and $(v',b'')$ are observed:
	\begin{align*}
		\prob{\mathcal{E}_2} &= \sum_{v \in \mathcal{V}} \; \sum_{v' \in \mathcal{V}-\set{v}} (1/n' (n'-1)) \cdot \prob{\text{$v$ and $v'$ are happy}} \\
		&\leq \sum_{v \in \mathcal{V}} \; \sum_{v' \in \mathcal{V}} (1/n^2) \cdot (q_v/d^*) \cdot (q_{v'}/d^*) \\
		&\leq 4n^{2\delta-4} \sum_{v \in \mathcal{V}} q_v \; \sum_{v' \in \mathcal{V}}  q_{v'} \\
		&\leq 4n^{2\delta-4} \cdot n^{2+2\eps} \\
		&= 4 n^{2\eps+2\delta-2}.
	\end{align*}
	
	Since there are $n^{1+\delta}$ choices for $b,b'$, the expected number of pairs of happy vertices in $B$ that share a neighbor is at most $n^{1+\delta} \cdot 4 n^{2\eps+2\delta-2} = 4 n^{2\eps+3\delta-1}=o(1)$ (by \Cref{constraint:no-happy-pairs}).
	Thus, by Markov's inequality, we conclude that no pair of happy vertices in $B$ share a neighbor with probability $1-o(1)$.
	We can union bound over both cases and conclude the proof.
\end{proof}

We finally show that no vertex in the core observes a multi-edge with high probability.
\begin{claim}\label{clm:no-multi-edges}
	With high probability, no vertex in the core observes more than one edge to the same neighbor (multi-edges) for $\dcyes$.
\end{claim}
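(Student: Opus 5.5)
In $\dcyes$, multi-edges in the core can only arise among $A$-to-$A$ edges. The $B$-to-$B$ edges form a single identity matching, and there are no $A$-to-$B$ edges. A vertex $a \in A$ can therefore observe a multi-edge only if two of the $n^{\delta}$ random perfect matchings $M_1,\ldots,M_{n^{\delta}}$ give $a$ the same partner and both copies land in the queried region of its adjacency list. So the plan is to bound the expected number of pairs (vertex $a \in A$, pair of core indices $i<j$) such that $M_i(a)=M_j(a)$ and both the $i^{th}$ and $j^{th}$ core edges of $a$ are observed, and then conclude with Markov's inequality.

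\textbf{Step 1: collisions.} Fix $a\in A$ and indices $i\neq j$. Since $M_i,M_j$ are independent uniformly random perfect matchings between $A^L$ and $A^R$, we have $\Pr(M_i(a)=M_j(a)) = 1/n^{1-\delta}$. This is the only step where the randomness of $\dyes$ is used.

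\textbf{Step 2: both copies observed.} Condition on $a$ receiving label $v$, which happens with probability $1/n'$. The core indices are placed by $\psi(v)$ uniformly among the $d^*$ positions, independently of the matchings. The probability that two specific core edges both land among the $q_v$ queried positions is at most $(q_v/d^*)^2$; more precisely it is $\binom{q_v}{2}/\binom{d^*}{2}$, which is at most $(q_v/d^*)^2$. Summing over the at most $n^{2\delta}$ index pairs, the $2n^{1-\delta}$ vertices of $A$, and the labels, the expected number of observed multi-edges is at most
\[
2n^{1-\delta}\cdot n^{2\delta}\cdot n^{\delta-1}\cdot \sum_v \frac{1}{n'}\Big(\frac{q_v}{d^*}\Big)^2 \;=\; O\Big(n^{2\delta}\cdot \frac{n^{2\delta-2}}{n}\sum_v q_v^2\Big).
\]

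\textbf{Step 3: controlling $\sum_v q_v^2$ (main obstacle).} The algorithm may concentrate queries on a few labels, so $\sum_v q_v^2$ is not bounded by $n^{1+2\eps}$ in general. I would first use the trivial cap $q_v\le d^*$, i.e. bound the two-copy probability by $\min(1,q_v/d^*)\cdot (q_v/d^*)$. Then $\sum_v q_v^2/{d^*}^2 \le \sum_v q_v/d^* \le 2n^{1+\eps}/n^{1-\delta}=2n^{\eps+\delta}$. This gives an expectation of
\[
O\big(n^{2\delta}\cdot n^{-1}\cdot n^{\eps+\delta}\big)=O\big(n^{\eps+3\delta-1}\big),
\]
which is $o(1)$ by \Cref{constraint:no-happy-pairs}. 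In the step above, $n^{2\delta}\cdot n^{-1}$ comes from the factors $n^{1-\delta}\cdot n^{2\delta}\cdot n^{\delta-1}/n'$. Markov's inequality then shows that with probability $1-o(1)$ no core vertex observes a multi-edge. If the cap turns out too lossy for some regime, I would fall back on a dyadic split of the labels by $q_v$, in the spirit of the $\tau$-heavy classes used later, but the cap should already suffice here.
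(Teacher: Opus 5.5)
Your proposal is correct and is essentially the paper's argument. Both are a first-moment bound that multiplies the collision probability $n^{\delta-1}$ by the probability at most $(q_v/d^*)^2$ that both copies fall in the queried region. Both then use the cap $q_v \le d^*$ to bound $\sum_v q_v^2 \le d^* \cdot n^{1+\eps}$, which gives $O(n^{\eps+3\delta-1}) = o(1)$. The only difference is cosmetic: you sum over pairs of core edges at each $A$ vertex, while the paper sums over pairs of queries at each label, which is the same count in a different order.
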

\begin{proof}
	Note that there can be many multi-edges in the graph between $A$ vertices. However, we will show that $\Alg$ cannot observe any multi-edge with high probability.
	
	Fix a vertex $a\in A^L$ and consider two distinct queries on $a$. 
	Because the adjacency list is uniformly permuted, each query chooses a uniformly random incident edge of $a$. 
	The probability that \emph{both} queries land on $A$ neighbors is at most $\left(n^{\delta}/d^*\right)^2 = 4n^{4\delta-2}$.

	Conditioned on both queries landing on $A$ vertices, their right endpoints in $A^R$ are independent and uniform over $\card{A^R}=n^{1-\delta}$, so the chance the endpoints coincide is $n^{\delta-1}$. 
	Hence, the probability that two fixed queries on $a$ are edges to the same vertex in $A^R$ is
	\[
	p_{\mathrm{same}} \leq  4n^{4\delta-2}\cdot n^{\delta-1} = 4n^{5\delta-3}.
	\]
	
	Fix a vertex label $v$ on the left. The probability $v$ lands in $A$ is at most $n^{-\delta}$. 
	Recall that $q_v$ is the number of queries made at $v$. Any pair of these queries could potentially observe a multi-edge.
	Thus, the probability that $v$ observes a multi-edge is:
	\[
	\prob{v \text{ observes a multi-edge}} \leq \prob{v \text{ is in } A} \cdot (q_v)^2 \cdot p_{\mathrm{same}} \leq n^{-\delta} \cdot (q_v)^2 \cdot 4n^{5\delta-3} = (q_v)^2 \cdot 4n^{4\delta-3}.
	\]
	
	Therefore, by a union bound, the probability we observe a multi-edge is:
	\[
	\prob{\text{Any multi-edge observed}} \leq  \sum_{v \in \mathcal{U}} (q_v)^2 \cdot 4n^{4\delta-3}
	\]
	
	To bound this probability expression, we need to bound the sum $\sum_{v \in \mathcal{U}} (q_v)^2$.
	Note that each $v$ has degree $d^*$, so $q_v\leq d^*$ (if $q_v$ is larger it sees NULL for the remaining queries) implies $q_v^2\leq d \cdot q_v$. 
	Thus, we have $\sum_{v \in \mathcal{U}} (q_v)^2 \leq d^* \cdot n^{1+\eps} = 0.5 n^{2+\eps-\delta}$.
	Substitute into the bound above:
	\begin{align*}
		\prob{\text{Any multi-edge observed}} &\leq  \sum_{v \in \mathcal{U}} (q_v)^2 \cdot 4n^{4\delta-3} \\
		&\leq 0.5 n^{2+\eps-\delta} \cdot 4n^{4\delta-3} \\
		&= 2n^{3\delta+\eps-1} =o(1).
	\end{align*}

	The last equality holds by \Cref{constraint:no-happy-pairs}.
	Thus, we conclude that no multi-edge is observed with probability $1-o(1)$.
	We can union bound over the cases for queries on each side and conclude the proof.
\end{proof}

Having established the necessary preliminaries, we now proceed to prove \Cref{lem:disj-union-stars}.
\begin{proof}[Proof of \Cref{lem:disj-union-stars}]
	We condition on the high probability events in \Cref{clm:no-happy-edge,clm:no-happy-pairs,clm:no-multi-edges}.
	Consider a happy vertex $v$ and its observed edge $e = v \rightarrow u$.
	We know from \Cref{clm:no-happy-edge} that the observed endpoint $u$ of an observed edge cannot be happy.
	Thus, there are no observed edges going out of $u$.
	Also, we know by \Cref{clm:no-happy-pairs} that no pair of vertices in the core observe an edge to the same vertex.
	Thus, there are no other observed edges incident on $u$.
	Finally, there are no multi-edges observed by \Cref{clm:no-multi-edges}.
	Therefore, the edges $\Alg$ finds inside the core ($\eobs^C$) form a disjoint union of stars with high probability.
\end{proof}

The answers to the queries made by $\Alg$ form a disjoint union of stars in the subgraph induced by the core. 
The center of each star can be either an $A$ or $B$ vertex.
We let $S$ denote the set of \textbf{star centers} and we call their observed neighbors \textbf{petals} denoted by $\pyes$ in the yes case and $\pno$ in the no case. 
Note that the center of a star is always a happy vertex i.e. the vertex that was queried, resulting in an edge within the core being observed.
If the center is $A$ then the star has degree at most $n^{\delta}$. 
Otherwise, if the center is $B$ then the star has degree $1$.
$\Alg$ can figure out the labels of the centers of the stars based on their degree.
The important part is that all the neighbor queries made on the petals of the stars are $D$ vertices, 
so we do not know for sure whether they are $A$ or $B$ vertices.
However, if there are many queries made on a petal vertex then $\Alg$ can figure out with some advantage whether the petal vertex comes from $A$ or $B$.
We now show that the number of queries on a petal vertex is actually small.

Let a vertex be $\tau$-\textbf{heavy} if the number of queries made on it are at least $\tau$.
Given that the total number of queries $\Alg$ makes is $q=n^{1+\eps}$, there are at most $n^{1+\eps}/\tau$ $\tau$-heavy vertices.
We now show that the number of queries made on petal vertices are less than $\tau$.
For this to be false an edge must be observed and its petal endpoint must be heavy.
We show the number of such edges is $0$ with high probability.
\begin{claim}\label{clm:discovered-tau-heavy}
	The number of edges observed with $\tau$-heavy petal vertices is at most $8n^{2\eps+\delta} \log^2 n/\tau$ with probability $1-2/\log^2 n$.
\end{claim}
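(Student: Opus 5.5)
We bound the expected number of observed core edges whose petal (incoming) endpoint is $\tau$-heavy, and then apply Markov's inequality. The computation mirrors the proof of \Cref{clm:no-happy-edge}. The difference is that, instead of asking the petal to be happy, we ask its label to lie in the set $H_\tau$ of $\tau$-heavy labels. Since $\sum_v q_v = n^{1+\eps}$, we have $\card{H_\tau} \leq n^{1+\eps}/\tau$.

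Fix a core edge $e=(x,y)$; there are $2n$ of them. Fix also an orientation, with $x$ the queried endpoint and $y$ the petal. For $x$ to observe $e$, the core index of $e$ in $\ell(x)$ must fall among the queried indices. Given that $x$ has label $u$, this has probability at most $q_u/d^*$ by the randomness of $\psi(x)$. The petal $y$ receives an independent uniform label (up to the $1/(n'-1)$ correction when both are on the same side, handled as in \Cref{clm:no-happy-pairs}). The labels of $x$ and $y$ are drawn by $\pi$ independently of $\psi$, so
\[
\Pr\paren{x \text{ observes } e \text{ and } y \in H_\tau} \leq \sum_{u}\sum_{v \in H_\tau} \frac{1}{n^2}\cdot \frac{q_u}{d^*} \leq \frac{1}{n^2}\cdot\frac{2n^{1+\eps}}{n^{1-\delta}}\cdot \frac{n^{1+\eps}}{\tau} = \frac{2n^{2\eps+\delta-1}}{\tau}.
\]
Summing over the $2n$ core edges and both orientations gives an expectation of at most $8n^{2\eps+\delta}/\tau$.

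The main subtlety is that the probability an $A$ vertex observes $e$ should not pick up an extra factor $n^{\delta}$. This holds because we fix a specific edge, not "some core edge", so the per-edge observation probability stays $q_u/d^*$. Multi-edges are counted with multiplicity, and this only helps the upper bound. One also has to check that conditioning on the labels does not affect the $\psi$-randomness, which holds since $\pi$ and $\psi$ are independent.

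Finally, Markov's inequality gives
\[
\Pr\paren{\#\text{such edges} > 8n^{2\eps+\delta}\log^2 n/\tau} \leq 1/\log^2 n.
\]
This is within the claimed failure probability $2/\log^2 n$. The extra slack of $1/\log^2 n$ can absorb any low-probability conditioning, for example the $1/\poly(n)$ degree event of \Cref{clm:deg-D-verts}, or the slight nonuniformity of the labels.
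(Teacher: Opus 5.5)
Your proof is correct and follows the paper's argument: the same per-edge bound $\sum_u \frac{1}{n'}\cdot\frac{q_u}{d^*}\cdot\frac{n^{1+\eps}}{\tau n'} \leq 2n^{2\eps+\delta-1}/\tau$, summed over the $2n$ core edges and followed by Markov. The only difference is that you add both orientations before a single Markov step (failure $1/\log^2 n$), whereas the paper applies Markov to each orientation and takes a union bound (failure $2/\log^2 n$); your same-side correction is also unnecessary, since every core edge crosses the bipartition and its endpoints get labels from the disjoint sets $\mathcal{U}$ and $\mathcal{V}$.
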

\begin{proof}
	Consider the argument for $A$-to-$A$ edges. The proof is identical for other types of edges.
	Fix an $A$-to-$A$ edge $e=(a_i^L,a_j^R)$.
	Since $\pi$ is a random labeling permutation, $a_i^L$ and $a_j^R$ get random labels implying a different number of queries to their adjacency lists depending on the label.
	There are at most $n^{1+\eps}/\tau$ $\tau$-heavy vertex labels.
	Thus, the probability that $a_j^R$ gets a heavy label is $n^{1+\eps}/\tau \cdot n'$.
	
	Consider the case where $a_i^L$ gets label $u$.
	This happens with probability $1/n'$.
	$\prob{u \text{ observes } e} = q_u/d^*$.
	Note that the set of labels $\mathcal{U}$ and $\mathcal{V}$ are disjoint so the event that $u$ observes $e$ is independent of the event that $a_j^R$ picks a $\tau$-heavy label.
	We now calculate the probability that $e=(a_i^L,a_j^R)$ is observed-heavy:
	\begin{align*}
		\prob{e \text{ is observed-heavy}} &= \sum_{u \in \mathcal{U}} (1/n') \cdot \prob{u \text{ observes } e} \cdot \prob{\text{$v$ is $\tau$-heavy}} \\
		&\leq \sum_{u \in \mathcal{U}} (1/n) \cdot ( q_u/d^*) \cdot (n^{1+\eps}/\tau \cdot n) \\
		&=(2n^{\delta-2})\cdot (n^{\eps}/\tau) \cdot n^{1+\eps} \\
		&= 2n^{2\eps+\delta-1}/\tau.
	\end{align*}
	Since there are $2n$ edges in the core, the expected number of observed-heavy edges is at most $4n^{2\eps+\delta}/\tau$.
	Thus, by Markov's inequality, the probability that the number of observed-heavy edges is $\log^2 n$ times the expectation is $1/\log^2 n$.
	Note that the argument is the same for heavy-observed and observed-heavy edges, so we get at most $8n^{2\eps+\delta}\log^2 n/\tau$ such edges with probability $1-2/\log^2 n$ by a union bound.
\end{proof}

This claim implies that with high probability, there are no $\tau$-heavy petals for $\tau=16 n^{2\eps+\delta} \log^2 n$.
We have to be a bit careful with the other end. If we set $\tau=1$, we get that there are at most $8 n^{2\eps+\delta}$ $\tau$-heavy petals with high probability.
But we know by \Cref{clm:no-edges-core} that the number of observed edges which is same as the number of petals is at most $4n^{\eps+\delta}$.
The reason we get an extra factor of $n^{\eps}$ is that we count the number of $\tau$-heavy vertices as $n^{1+\eps}/\tau$. But when $\tau=1$ we have a tighter bound of $O(n)$ because there are only those many vertices in total.
Thus, at the lower end, we will set $\tau=n^\eps$, giving us at most $8n^{\eps+\delta} \log^2 n$ with high probability.
We now give the following lemma that captures these extreme cases along with others in the middle.
\begin{lemma}\label{clm:discovered-heavy-edges}
	With high probability, for all $i \in [\alpha,\beta]$, the number of edges observed with $2^i$-heavy petal vertices is at most $2^{-i} \cdot 8n^{2\eps+\delta} \log^2 n$ for $\alpha= \log (n^{\eps} \log n)$ and $\beta = \log (16 n^{2\eps+\delta} \log^2 n)$.
\end{lemma}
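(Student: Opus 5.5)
The plan is to apply \Cref{clm:discovered-tau-heavy} at each dyadic threshold $\tau = 2^i$ with $i \in [\alpha,\beta]$ and take a union bound over the $O(\log n)$ values of $i$. The difficulty is that \Cref{clm:discovered-tau-heavy} only fails with probability $2/\log^2 n$ per threshold. There are $\beta-\alpha+1 = O(\log n)$ thresholds, so a naive union bound gives total failure probability $O(\log n)\cdot 2/\log^2 n = O(1/\log n) = o(1)$. That is exactly a high-probability statement, so the union bound suffices.

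Concretely, first I will check that the number of integer values in $[\alpha,\beta]$ is at most $\log(16 n^{2\eps+\delta}\log^2 n) - \log(n^{\eps}\log n) + 1 = O(\log n)$, using $\eps,\delta$ constants. Next, for each fixed $i$, I will invoke \Cref{clm:discovered-tau-heavy} with $\tau = 2^i$. That claim bounds the expected number of observed edges whose petal endpoint is $2^i$-heavy by $2\cdot 4n^{2\eps+\delta}/2^i$, counting both orientations. By Markov, the count exceeds $2^{-i}\cdot 8n^{2\eps+\delta}\log^2 n$ with probability at most $2/\log^2 n$. The claim's proof is stated for a generic $\tau$ and uses only that there are at most $n^{1+\eps}/\tau$ $\tau$-heavy labels, so it applies verbatim for every $i$. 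Finally, a union bound over all $i\in[\alpha,\beta]$ gives simultaneous validity with probability at least $1 - O(\log n)\cdot 2/\log^2 n = 1-O(1/\log n)$.

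The only step needing care is confirming that the per-threshold failure probability is small enough to survive the union bound over $O(\log n)$ thresholds. If the constants were tighter, I would instead apply Markov with a $\log^3 n$ factor, at the cost of one extra log in the bound. Here the stated $\log^2 n$ factor already yields $o(1)$ total failure, so no change is needed. I would also remark on the two endpoints. At $i=\beta$ the bound becomes $\tfrac12$, so with high probability no petal is $16n^{2\eps+\delta}\log^2 n$-heavy. At $i=\alpha$ the bound is $8n^{\eps+\delta}\log n$, consistent with \Cref{clm:no-edges-core}.
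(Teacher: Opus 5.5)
Your proposal is correct and matches the paper's proof. Both apply \Cref{clm:discovered-tau-heavy} at each threshold $\tau = 2^i$ and take a union bound over the thresholds. The paper bounds their number by $\beta \leq \log n$ using the constraint, so the total failure probability is at most $\log n \cdot 2/\log^2 n = o(1)$.
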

\begin{proof}
	We condition on the event in \Cref{clm:discovered-tau-heavy} for $\tau = 2^i$, for all $i \in [\alpha, \beta]$. 
	To account for all such values of $\tau$, we apply a union bound over the failure probabilities of all the cases. 
	Since the number of cases are at most $\beta \leq \log n$ (by \Cref{constraint:no-happy-pairs}), the total failure probability is at most $\log n \cdot (2/\log^2 n) = o(1)$.	
\end{proof}

Finally, we show that a vertex does not have many observed edges from $D$.
Any fixed vertex $v$ in the core has in-degree at most $4 n^{\eps}$ from $D$.
\begin{claim}\label{clm:indeg-from D}
	Every vertex $v \in A\cup B$ has in-degree at most $4n^{\eps}$ from $D$ in the observed graph with high probability.
\end{claim}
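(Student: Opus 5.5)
We need to bound, for each core vertex $v$, the number of observed edges $w\to v$ with $w\in D$. The plan is to compute the expectation of this in-degree, show it is concentrated, and union bound over all $\Theta(n)$ core vertices.

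\textbf{Step 1: the expectation.} Fix a core vertex $v$, say $v\in A^R\cup B^R$, so its dummy neighbors lie in $D^L$. For each label $u\in\mathcal{U}$, the vertex carrying label $u$ lies in $D^L$ with probability at most $n^{-\delta}$ over $\pi$. Conditioned on this, let $X_u$ indicate that one of the $q_u$ queries on $u$ returns $v$.
\begin{itemize}
\item By \Cref{clm:deg-D-verts}, with high probability $\deg(u)\geq 0.8\cdot n/2$.
\item Since $\sigma(u)$ is a uniformly random permutation, $v$ occupies a uniform position if it is a neighbor.
\item Hence $\Pr(X_u=1)\leq q_u/(0.4n)$.
\end{itemize}
It is tempting to drop the $n^{-\delta}$ factor for simplicity, but that gives $\sum_u q_u/(0.4n)\approx 2.5n^{\eps}$, which is still $O(n^{\eps})$ yet fragile against the constant $4$. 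So we keep the factor.

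A cleaner way to organize the count is to sum over the dummy vertices $w\in D^L$ directly, with $w$ receiving label $u$ with probability $1/n'$. This gives
\[
\Exp[\text{indeg}_D(v)] \leq \sum_{w\in D^L}\sum_{u}\frac{1}{n'}\cdot\frac{q_u}{0.4n} \leq n^{1-\delta}\cdot\frac{n^{1+\eps}}{0.4\,n^2} = 2.5\,n^{\eps-\delta}.
\]
This is well below $4n^{\eps}$.

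\textbf{Step 2: concentration.} The in-degree is a sum of indicators over the distinct dummy vertices $w$. Given $\pi$, these are independent, because each depends only on $\sigma(w)$. We must first condition on the degree event of \Cref{clm:deg-D-verts} and on the labeling $\pi$. Each term lies in $[0,1]$: the no-multi-edge setting means $w$ has at most one edge to $v$, and since the $D$-to-core edges are sampled without replacement there are no multi-edges at all.

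Conditioned on $\pi$, the expectation is $\sum_{w}q_{\pi(w)}/(0.4n)\leq 2.5n^{\eps}$, because the labels $\pi(w)$ are distinct and the total query budget is $n^{1+\eps}$. This bound holds deterministically over $\pi$. Applying \Cref{prop:chernoff} with $\mu_H=2.5n^{\eps}$ and deviation parameter $0.6$ bounds the probability of exceeding $4n^{\eps}$ by $\exp(-\Omega(n^{\eps}))$, which is $1/\poly(n)$ since $\eps$ is a positive constant.

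\textbf{Step 3: union bound.} We take a union bound over all $O(n)$ core vertices and over the failure probability of \Cref{clm:deg-D-verts}.

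\textbf{Main obstacle.} The delicate point is justifying independence. The adjacency lists of different dummy vertices must be independent given the graph, and the event ``$w$ is adjacent to $v$'' comes from the random $\dd$ edges. We handle this by conditioning on the entire graph $G$ and on $\pi$, leaving only the permutations $\sigma(w)$ random, and using the degree lower bound $\deg(w)\geq 0.4n$ from the conditioned event.
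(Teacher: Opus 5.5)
Your proof is correct. It differs from the paper's in where the independence needed for Chernoff comes from.

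The paper indexes the indicators by the $d^*$ dummy edges of $v$. It bounds each probability by averaging over the random label that $\pi$ gives the endpoint, $\Pr(X_i=1)\le (1/n')\sum_u q_u/\deg(u)\le 4n^{\eps-1}$. It then asserts the $X_i$ are independent because ``each edge to $D$ is an independent sample with replacement.'' That justification is loose. The dummy neighbours of $v$ are sampled \emph{without} replacement in the construction. More importantly, their labels all come from the single permutation $\pi$, so their label-dependent query counts are correlated.

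You instead condition on $G$ (inside the degree event of \Cref{clm:deg-D-verts}) and on $\pi$. Only the mutually independent permutations $\sigma(w)$ remain random, so the per-dummy-vertex indicators are genuinely independent and $\{0,1\}$-valued. The bound $\sum_{w} q_{\pi(w)}/(0.4n)\le 2.5n^{\eps}$ holds for \emph{every} $\pi$ because the labels $\pi(w)$ are distinct. Chernoff with deviation $0.6$ then reaches $4n^{\eps}$ with failure probability $\exp(-n^{\eps}/4)$.

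This costs nothing relative to the paper, whose own $\mu_H=2n^{\eps}$ also discards the $n^{-\delta}$ factor. It is also the style of argument the paper uses for the tree-probe version of this claim, which groups by dummy vertex and uses independence of their adjacency-list permutations.

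One small point: your Step 1 averaging over $\pi$, and the remark about keeping the $n^{-\delta}$ factor, are never used. Step 2 works with the worst case over $\pi$, and the constant $4$ is not fragile at all, so Step 1 can simply be dropped.
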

\begin{proof}
	Fix a vertex $v \in C$.
	We will calculate the number of observed edges from $D$ to $v$.
	$v$ has at most $d^*$ neighbors to uniformly random vertices in $D$.
	We let $X_i$ be a random variable that is $1$ if the $i^{th}$ edge of $v$ to $D$ is observed and $0$ otherwise for $i \in [d^*]$.
	We are interested in $X=\sum_{i \in [d^*]} X_i$ which is the total number of edges of $v$ to $D$ that are observed.
	
	We now focus on the random variable $X_i$.
	Let the $i^{th}$ edge of $v$ to $D$ fall on a vertex $d$.
	Since the labeling permutation is random, $d$ gets assigned a random label.
	Assume the label is $u$ which has $q_u$ queries.
	This happens with probability $1/n'$.
	The probability that the edge is observed when the label is $u$ is $q_u/\deg(u)$.
	Thus, the probability that $X_i=1$ is:
	\begin{align*}
		\prob{X_i=1} &= (1/n') \sum_{u \in \mathcal{U}} q_u/\deg(u) \\
		&\leq (4/n^2) \sum_{u \in \mathcal{U}} q_u \\
		&=4 n^{\eps-1}.
	\end{align*}
	This implies that $\expect{X} = \sum_{i\in [d^*]} \expect{X_i} = \sum_{i \in [d^*]} \prob{X_i=1} \leq d^* \cdot 4 n^{\eps-1}= 2 n^{\eps-\delta} \leq 2n^{\eps} =\mu_H$.
	Observe that the $X_i$'s are independent because each edge to $D$ is an independent sample with replacement.
	Also, each $X_i$ is in $[0,1]$ so we can apply the Chernoff bound (\Cref{prop:chernoff}) and get:
	\[
	\prob{X > 2 \mu_H} \leq \exp(-\mu_H/4) \ll 1/\poly(n).
	\] 
	Therefore, the in-degree of $v$ from $D$ is at most $4 n^{\eps}$ with probability $1-1/\poly(n)$.
	We can union bound this over vertices $v\in C$ and get that with high probability, all the vertices have in-degree at most $4n^{\eps}$ from $D$.
\end{proof}

For the rest of the proof we condition on the high probability events in \Cref{clm:deg-D-verts,clm:no-edges-core,lem:disj-union-stars,clm:discovered-heavy-edges,clm:indeg-from D}.

%% file: star1.tex
\begin{tikzpicture}[
	centerA/.style={circle, draw=colorThree, fill=colorThree!15, very thick, minimum size=7mm, font=\small},
	centerB/.style={circle, draw=colorTwo, fill=colorTwo!15, very thick, minimum size=7mm, font=\small},
	petal/.style={circle, draw=colorSeven, minimum size=6mm, thick},
	edge/.style={->, line width=0.9pt, draw=colorNine, >=stealth},
	annot/.style={->, thick, >=stealth, draw=colorEight},
	x=2.3cm, y=1.1cm
	]
	
	% --- Star 1 (A center, degree 3) ---
	\node[centerA] (A1) at (0,0) {A};
	\foreach \i in {1,...,3} {
		\node[petal] (N1\i) at (1,1-\i*0.8) {};
		\draw[edge] (A1) -- (N1\i);
	}
	
	% --- Star 2 (B center, degree 1) ---
	\node[centerB] (A2) at (2.2,0) {B};
	\node[petal] (N21) at (3.1,0) {};
	\draw[edge] (A2) -- (N21);
	
	% --- Star 3 (A center, degree 2) ---
	\node[centerA] (A3) at (4.4,0) {A};
	\foreach \i in {1,...,2} {
	\node[petal] (N3\i) at (5.3,0.4-\i*0.6) {};
		\draw[edge] (A3) -- (N3\i);
	}
	
%	% --- Global label ---
%	\node[draw=none, font=\small] at (3,-2) 
%	{Observed core graph forms a directed disjoint union of stars};
	
	% --- Annotations ---
%	\node[draw=none, font=\small] at (-1,0.6) {Star Center (happy vertex)};
%	\draw[annot, bend right=20] (-1,0.4) to (A1);
	
	\node[draw=none, font=\small] at (1.57,0.95) {Star Center (happy vertex)};
	\draw[annot, bend right=20] (0.6,0.95) to (A1);

	\node[draw=none, font=\small] at (1.9,-1.55) {Petals (not happy)};
	\draw[annot, bend right=30] (1.7,-1.28) to (N13);
	
\end{tikzpicture}

%% file: star2.tex
\begin{tikzpicture}[
	node/.style={circle, draw=colorNine, minimum size=6mm},
	edge/.style={->, line width=0.9pt, draw=colorSix, >=stealth},
	x=1.8cm, y=0.7cm
	]
	
	% =========================
	% TOP: your A2--N21 and N22--A2 snippet
	% =========================
	\node[node] (A2)  at (0, 1.6) {};
	\node[node] (N21) at (1, 1.6) {};
	\draw[edge] (A2) -- (N21);
	
	\node[node] (N22) at (-1, 1.6) {};
	\draw[edge] (N22) -- (A2);
	
	% =========================
	% MIDDLE: cross
	% =========================
%	\node[draw=none, font=\Huge, text=colorSix] at (0, 0) {$\times$};
	% If you prefer \ding{55} instead, use:
	 \node[draw=none, font=\Huge] at (0,0) {\textcolor{colorSix}{\ding{55}}};
	
	% =========================
	% BOTTOM: two vertices pointing to one below
	% =========================
	\node[node] (U1) at (-0.7, -1.2) {};
	\node[node] (U2) at (0.7,  -1.2) {};
	\node[node] (V1) at (0,    -2.1) {};
	
	\draw[edge] (U1) -- (V1);
	\draw[edge] (U2) -- (V1);
	
\end{tikzpicture}

%% file: coupling.tex
\section{The Coupling Argument}\label{sec:coupling}

\paragraph{High Level Idea.}
In this section, we will use coupling to show that after conditioning on some high probability events, the probability that an observed graph comes from $\dno$ is roughly $1/2$ and the probability it comes from $\dyes$ is roughly $1/2$.
Note that the support of $\dno$ and $\dyes$ are completely disjoint. However, the algorithm $\Alg$ only makes a few queries on this graph revealing a small portion of the graph. For most choices of the revealed part of the graph called the observed graph, it looks like it could have come from $\dno$ or $\dyes$ with roughly equal probability.
 
Ideally we want the coupling to be a perfect matching between elements of $\dno$ and $\dyes$ such that when we uniformly sample an element of $\mathcal{D} = \frac{1}{2} \dyes +\frac{1}{2} \dno$ we can do the sampling by picking a uniformly random edge of the matching between $\dno$ and $\dyes$ and then randomly choosing an endpoint (we will have $\card{\dno} = \card{\dyes}$).
The property the matching has is that for any edge in the matching the observed graph is exactly the same for both the endpoints of the edge. 
This means that any algorithm cannot tell apart the endpoints of an edge with probability better than $1/2$.
This sampling process is the same as sampling a uniformly random element from the no case and the yes case independently and then picking one of them with probability $1/2$.
The problem with this however is that the samples in the no and yes case could give different observed graphs.
We crucially need the observed graphs to be the same, so we can argue that the algorithm $\Alg$ does not know whether the instance that the observed graph came from is a yes instance or a no instance.
This is the reason we need a coupling
%The matching represents how the distributions are coupled together which means that if we uniformly sample an element from $\dno$ then its matched element is sampled from $\dyes$.

Unfortunately, we will not get a perfect matching between $\dno$ and $\dyes$ such that the observed graphs for the endpoints of each matching edge are identical.
However, if we ignore a $o(1)$ fraction of elements, then we indeed get a perfect matching between the two distributions (call this the successful matching).
We will add an arbitrary perfect matching between the ignored $o(1)$ fraction of elements and call this the failed matching.
When $\mathcal{D}$ samples an edge of the failed matching the algorithm $\Alg$ succeeds with probability $1$ in distinguishing between the two cases.
When $\mathcal{D}$ samples an edge of the successful matching the algorithm $\Alg$ succeeds with probability $1/2$ in distinguishing between the two cases.
Since the fraction of edges in the failed matching is $o(1)$ and in the successful matching is $(1-o(1))$, $\Alg$ is correct only with probability $1/2+o(1) < 2/3$.
Finally, we will need the following constraint in this section:
\begin{constraint}\label{constraint:matching-size}
	n^{2\eps+ 3 \delta}\log^3 n = o(n).
\end{constraint}

\begin{table}[t]
	\scriptsize
	\centering
	\renewcommand{\arraystretch}{1.2}
	\begin{tabular}{|c|>{\raggedright\arraybackslash}p{12.8cm}|}
		\hline
		\textbf{Symbol / Term} & \textbf{\makebox[\linewidth][c]{Description}} \\
		\hline
		$\hh = (\xx \cup \yy, \mm)$ & Bipartite graph used to define the coupling between $\dno$ and $\dyes$. \\
		$\xx, \yy$ & Sets of nodes in $\hh$, where each node is a tuple consisting of a graph instance from $\dno$ or $\dyes$, a labeling permutation $\pi \in \Pi$, and a set of adjacency lists $\ell \in \mathcal{L}$. \\
		$\mm$ & A matching between $\xx$ and $\yy$ of size $(1 - o(1)) \cdot \card{\xx}$, where each edge connects nodes whose observed graphs are identical. \\
		\hline
		{block} & A subset of nodes in $\xx$ or $\yy$ defined by fixing part of the underlying randomness (graph, permutation, or adjacency lists). \\
		{bad node} & A node that violates one or more of the high-probability events we condition on. \\
		{block-compliant} & A property of matching $\mm$ such that, when blocks are treated as supernodes, $\mm$ remains a valid matching on the resulting supernode graph. \\
		{confined matching} & The portion of matching $\mm$ restricted to operate within the current partition of blocks, preserving block-compliance. \\
		\hline
	\end{tabular}
	\caption{Notation for \Cref{sec:coupling}.}
\end{table}

We will fix elements in $\dno$ and then show which element in $\dyes$ they are coupled with.
When constructing the coupling we first couple the edges in the core which is a coupling between distributions $\dcno$ with $\dcyes$.
We then couple the labeling permutations which come from $\Pi$ which is the set of all labeling permutations.
Then we couple the edges to $D$ which come from the distribution $\dd$ which is the same in the no and the yes case.
We finally couple the adjacency lists between the two cases which come from $\mathcal{L}$.
This is just a rough idea of how the coupling works on a high level (see \Cref{fig:coupling-levels})—the final picture is more intricate.
To be very clear about the coupling, we define the following bipartite graph $\hh = (\xx \cup \yy, \mm)$.

\begin{figure}[t]
	\centering
	\scalebox{0.5}{\input{coupling_fig2}}
\caption{High-level illustration of the coupling. We progressively fix the randomness in the two distributions in parallel. We omit several steps and only give a flavor of the construction. In particular, we fix the core edges, the labeling permutation, and parts of the adjacency lists. At each stage the supports are restricted until both induce the same observed graph $\gobs$.}
	\label{fig:coupling-levels}
\end{figure}
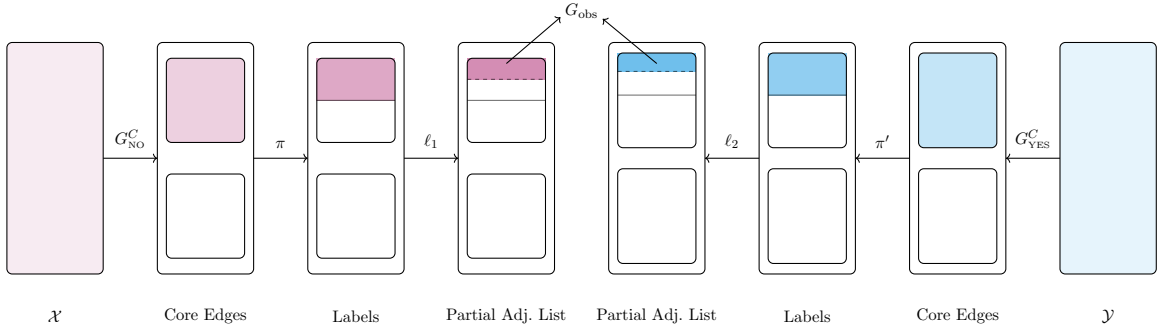

Each node of the graph in the no case (set $\xx$) corresponds to a graph $\gno$ from $\dno$, a labeling permutation $\pi$ from $\Pi$ and the adjacency lists of all the vertices denoted by $\ell \in \mathcal{L}$. 
The nodes of the graph in the yes case (set $\yy$) correspond to a graph $\gyes$ from $\dyes$, a permutation $\pi$ from $\Pi$ and permutations of the adjacency lists of all the vertices denoted by $\ell \in \mathcal{L}$.
Recall that the random variable $\ell(v)$ for adjacency lists of vertices $v \in C$ corresponds to picking the locations of the core indices uniformly at random using $\psi(v)$, randomly permuting the core neighbors using $\sigma_C(v)$ and the dummy neighbors using $\sigma_D(v)$.
The random variable $\ell(v)$ for adjacency lists of vertices $v \in D$ corresponds to $v$'s edges to the core and randomly permuting the entire adjacency list using $\sigma(v)$.
Note that we will use the term nodes for elements in $\xx$ and $\yy$, and reserve the term vertices for the graphs in $\dno$ and $\dyes$.

We will add edges to this graph and ensure that the edges of the graph $\mm$ forms a matching. We will also have the property that for any edge $(x,y) \in \mm$, the observed graph on $x$ and $y$ will be identical.
We first want to show that the sizes of $\xx$ and $\yy$ are the same.
We start by showing that the number of possible core graphs in the no case ($\dcno$) and the yes case ($\dcyes$) are the same.
\begin{claim}\label{clm:size-core}
$\card{\dcno} = \card{\dcyes}$ and $\card{\dno} = \card{\dyes}$.
\end{claim}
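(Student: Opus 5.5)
We prove \Cref{clm:size-core} by counting the supports directly, treating each distribution as uniform over its support. Here a core graph is an ordered tuple of the random choices that define it, so multi-edges in the yes case are counted as distinct outcomes.

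\textbf{Core in the yes case.} The identity matching between $B^L$ and $B^R$ is deterministic. The only randomness is the $n^{\delta}$ independent random perfect matchings $M_1,\ldots,M_{n^{\delta}}$ between $A^L$ and $A^R$. Each such matching corresponds to a permutation of $[n^{1-\delta}]$, so
\[
\card{\dcyes} = \bigl((n^{1-\delta})!\bigr)^{n^{\delta}}.
\]

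\textbf{Core in the no case.} The identity matchings between each group $B^R_i$ and $A^L$ are deterministic. The only randomness is, for each $i\in[n^{\delta}]$, a random perfect matching between $B^L_i$ and $A^R$. Each of these is again a permutation of $[n^{1-\delta}]$, so
\[
\card{\dcno} = \bigl((n^{1-\delta})!\bigr)^{n^{\delta}} = \card{\dcyes}.
\]
In fact there is a natural bijection sending $M_i$ to the matching of $B^L_i$ with $A^R$ given by the same permutation. This is exactly the pink/green correspondence in \Cref{fig:yes-no-D-side-by-side}, and it will be useful in the coupling later.

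\textbf{Full distributions.} A full instance is a core graph together with the dummy edges drawn from $\dd$. The distribution $\dd$ is identical in both cases and is chosen independently of the core. Hence $\card{\dno} = \card{\dcno}\cdot\card{\dd} = \card{\dcyes}\cdot\card{\dd} = \card{\dyes}$. If one also includes the labeling permutation $\pi\in\Pi$ and the adjacency-list randomness, the same product argument applies: $\card{\Pi}$ is the same in both cases, and the number of choices of $\psi(v),\sigma_C(v),\sigma_D(v),\sigma(v)$ depends only on vertex degrees and core degrees. Those degrees agree across the two cases: every core vertex has degree $d^*$, $A$ vertices have $n^{\delta}$ core neighbors and $B$ vertices have $1$. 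Dummy degrees are determined by $\dd$ alone. So $\card{\xx}=\card{\yy}$ as well.

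\textbf{Main obstacle.} There is no real technical difficulty. The only care needed is the convention that supports count ordered tuples of random choices rather than distinct multigraphs, so that multi-edges in the yes case do not collapse outcomes. With this convention both distributions are uniform over their supports.
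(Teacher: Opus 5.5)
Your proof is correct and follows the paper's argument. Both cores are determined by $n^{\delta}$ independent perfect matchings between sides of size $n^{1-\delta}$, so the counts agree, and the full distributions are products with the common $\dd$. Your explicit convention of counting ordered tuples $(M_1,\ldots,M_{n^{\delta}})$ rather than distinct multigraphs is something the paper leaves implicit but genuinely relies on: without it, reordering the yes-case matchings would collapse outcomes, whereas the no-case matchings are attached to distinguishable groups $B^L_i$.
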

\begin{proof}
	In $\dcno$, the randomness comes from the matchings between $A^R$ and $B^L$. There is a random perfect matching between $A^R$ and each group of $B^L$. There are $n^{\delta}$ groups each of size $n/n^{\delta}$.
	In $\dcyes$, the randomness comes from the matchings between $A^L$ and $A^R$. There are $n^{\delta}$ random perfect matchings each of size $n/n^{\delta}$.
	
	Thus, both cases have $n^{\delta}$ random perfect matchings of size $n/n^{\delta}$ implying that the number of possible core graphs are the same in both cases ($\card{\dcno} = \card{\dcyes}$).
	This also implies that $\card{\dno} = \card{\dyes}$ because $\dno$ is a product distribution of $\dcno$ and $\dd$, and $\dyes$ is a product distribution of $\dcyes$ and $\dd$.
\end{proof}

We use this to prove that the support sizes of $\xx$ and $\yy$ are the same.
\begin{claim}\label{clm:size-XY}
	$\card{\xx} = \card{\yy}$.
\end{claim}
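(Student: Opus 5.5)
The plan is to write $\xx$ and $\yy$ as product sets and compare the factors one at a time. A node of $\xx$ is a triple $(\gno,\pi,\ell)$ with $\gno$ in the support of $\dno$, $\pi\in\Pi$, and $\ell\in\mathcal{L}$ a valid choice of adjacency lists for $\gno$. A node of $\yy$ is the analogous triple $(\gyes,\pi,\ell)$.

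The labeling permutation $\pi$ ranges over the same set $\Pi$ in both cases, and it is chosen independently of the graph. Hence
\[
\card{\xx}=\card{\Pi}\cdot\sum_{\gno}\card{\mathcal{L}(\gno)}, \qquad \card{\yy}=\card{\Pi}\cdot\sum_{\gyes}\card{\mathcal{L}(\gyes)},
\]
where $\mathcal{L}(G)$ denotes the set of adjacency-list configurations compatible with $G$. Given \Cref{clm:size-core}, which says $\card{\dno}=\card{\dyes}$, it suffices to show that $\card{\mathcal{L}(G)}$ is the same number for every $G$ in either support.

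To see this, I will count $\card{\mathcal{L}(G)}$ vertex by vertex, since the random variables $\psi(v),\sigma_C(v),\sigma_D(v)$ for $v\in C$ and $\sigma(v)$ for $v\in D$ are chosen independently across vertices.

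First consider a vertex $v\in A$. It has degree $d^*$ with exactly $n^{\delta}$ core indices in both cases. So $\psi(v)$ has $\binom{d^*}{n^{\delta}}$ choices, $\sigma_C(v)$ has $(n^{\delta})!$ choices, and $\sigma_D(v)$ has $(d^*-n^{\delta})!$ choices.

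A vertex $v\in B$ likewise has degree $d^*$ with exactly one core index in both cases. So there are $d^*$ choices for $\psi(v)$, one for $\sigma_C(v)$, and $(d^*-1)!$ for $\sigma_D(v)$.

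For $v\in D$, the number of choices is $\deg(v)!$. This depends only on the $\dd$ part of the graph, which is distributed identically in the yes and no cases.

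The one subtle point, which I expect to be the only real obstacle, is multi-edges in the yes case. Two identical core neighbors could make distinct permutations $\sigma_C(v)$ yield the same adjacency list, which would make $\card{\mathcal{L}(G)}$ depend on $G$. The fix is to treat the adjacency-list randomness formally as the tuple $(\psi(v),\sigma_C(v),\sigma_D(v))$, or $\sigma(v)$ for dummy vertices, with parallel edges regarded as distinct. This is exactly how $\ell(v)$ is defined as a random variable, and under that convention the count is independent of the multiplicities.

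Putting this together, for a fixed $\dd$-component the product over vertices is identical on both sides. Summing over the common $\dd$-components and over the core graphs, which are equinumerous by \Cref{clm:size-core}, and then multiplying by $\card{\Pi}$, gives $\card{\xx}=\card{\yy}$.
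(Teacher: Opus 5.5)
Your proof is correct and follows essentially the same route as the paper: use \Cref{clm:size-core} for the core, note that $\Pi$ and $\dd$ are common to both cases and independent of the core, and check that the supports of $\psi(v),\sigma_C(v),\sigma_D(v)$ (resp.\ $\sigma(v)$ for $v\in D$) have the same sizes in both cases. One wording fix: $\card{\mathcal{L}(G)}$ is not the same for every $G$, because the $\deg(v)!$ factors for $v\in D$ depend on the $\dd$-component, so the reduction should be stated for each fixed $\dd$-component, which is exactly what your final paragraph does; your remark that multi-edges are handled by counting tuples rather than the resulting lists is a useful clarification that the paper leaves implicit.
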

\begin{proof}
	\Cref{clm:size-core} proves that $\card{\dcno} = \card{\dcyes}$.
	The distribution $\dd$ is used in both $\dno$ and $\dyes$ and is independent of $\dcno$ and $\dcyes$, so we pick an arbitrary fixing of $\dd$.
	We now consider the set of labeling permutations $\Pi$ which is the same for both cases and independent of $\dno,\dyes$. Thus, we pick an arbitrary fixing of $\Pi$.
	 
	We finally look at the different possibilities for the adjacency lists in $\mathcal{L}$.
	For every vertex in the core $v\in C$, $\psi(v),\sigma_C(v)$ and $\sigma_D(v)$ have the same support size in the no and the yes case because we just have to randomly choose the core indices from the $d^*$ total indices and then randomly permute both parts.
	For every dummy vertex $v \in D$, the number of adjacency lists is the same in both cases since they are randomly permuted ($\deg(v)!$ possibilities in both cases).
	
	For each fixing of $\dd$ and $\Pi$ we have that $\card{\mathcal{L}}$ is the same in both cases and $\card{\dcno} = \card{\dcyes}$.
	Therefore, we can conclude that $\card{\xx} = \card{\yy}$.
\end{proof}

We now want to show that the matching $\mm$ is large.
We will prove that $\mm$ matches $(1-o(1))$ fraction of the vertices in $\xx$.
\begin{lemma}\label{lem:large-matching-coupling}
	$\card{\mm} \geq (1-o(1)) \card{\xx}$.
\end{lemma}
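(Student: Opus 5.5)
We construct $\mm$ explicitly by progressively fixing randomness in parallel on both sides and working within \emph{blocks}: sets of nodes in $\xx$ (resp. $\yy$) sharing the same fixed partial randomness. At every stage the current partial matching is block-compliant. We first discard \emph{bad} nodes, meaning those violating the events of \Cref{clm:deg-D-verts,clm:no-edges-core,lem:disj-union-stars,clm:discovered-heavy-edges,clm:indeg-from D}. By the previous section these are an $o(1)$ fraction of each side. It then suffices to give an injection from good nodes of $\xx$ to nodes of $\yy$ with identical observed graph, whose image covers all but an $o(1)$ fraction of $\yy$; \Cref{clm:size-XY} lets us count either side.

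The map on a good no-node $(\gno,\pi,\ell)$ works on the observed star structure. The observed core is a disjoint union of stars $S$ with petals $\pno$. In the no case, $A$-centers have $B$-petals and $B$-centers have $A$-petals; in the yes case the types agree. We modify the hidden structure. For each $A$-center $a$ with observed petals $b$ (each $b\in B^L_i$ matched to $a$ in group $i$), we choose a yes-graph where $a$ is matched through $M_i$ to an $A$-vertex. That vertex is relabeled (via a modified $\pi$) to carry $b$'s label and $b$'s observed adjacency list entries. The observed list of a petal consists only of dummy edges, by \Cref{clm:no-happy-edge}, and has few queries, by \Cref{clm:discovered-heavy-edges}. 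Symmetrically, for each $B$-center we swap its $A$-petal label with a $B$-vertex. Concretely, the coupling swaps labels of petal vertices between $A$ and $B$ and transplants the observed prefix of their adjacency lists. This includes the positions of the observed dummy neighbors and the in-edges from $D$, which are bounded by \Cref{clm:indeg-from D}. The unobserved parts and the core structure are then completed in a canonical bijective way, using that $\dcno$ and $\dcyes$ are both $n^{\delta}$ perfect matchings of size $n^{1-\delta}$ (\Cref{clm:size-core}).

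The key steps, in order, are the following.
\begin{enumerate}[label=(\roman*)]
\item Couple core graphs block-by-block via a bijection between the $n^{\delta}$ matchings of $\dcno$ and the matchings $M_1,\dots,M_{n^{\delta}}$ of $\dcyes$.
\item Couple $\pi$ by composing with the transposition of swapped petal labels.
\item Keep $\dd$ identical, except for rerouting the dummy edges incident to swapped petals.
\item Couple $\ell$ by preserving the queried positions and their answers.
\end{enumerate}
Each step is a bijection inside a block, so injectivity is automatic.

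The main obstacle is step (iii)/(iv). An $A$-petal has $n^{1-\delta}/2-n^{\delta}$ dummy neighbors and $n^{\delta}$ core indices, while a $B$-petal has $n^{1-\delta}/2-1$ dummy neighbors and one core index. The swap must therefore change the number of dummy edges. Moreover, the dummy degrees seen through in-edges from $D$ must match, and a queried non-core index must not become a core index. The swap of these counts cannot be exactly measure preserving, so I would show it is \emph{approximately} bijective. For a petal with $q_v$ queries, the fraction of adjacency-list configurations where the queried positions avoid the extra $n^{\delta}$ core indices is $1-O(q_v n^{\delta}/d^*)$. Summing over petals, with the heavy/light stratification of \Cref{clm:discovered-heavy-edges} across scales $2^i$, gives a loss of
\[
\sum_i 2^{-i}\,8n^{2\eps+\delta}\log^2 n\cdot 2^{i+1}\cdot n^{2\delta-1}=O\!\left(n^{2\eps+3\delta-1}\log^3 n\right).
\]
This is $o(1)$ by \Cref{constraint:matching-size}. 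Discarding these unmatched nodes, together with the bad nodes, gives $\card{\mm}\ge(1-o(1))\card{\xx}$.
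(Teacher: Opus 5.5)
\textbf{Verdict: there is a genuine gap.} Your outline matches the paper's: couple the cores by the quad bijection, swap each observed petal with its corresponding petal via the dual permutation $\pi'$, and sum per-pair losses over the heavy-petal scales of \Cref{clm:discovered-heavy-edges}. The gap is at the step you call the main obstacle, which is asserted rather than proved, and that step is the substance of the lemma.

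\emph{The rerouting of dummy edges must be made precise.} ``Rerouting the dummy edges incident to swapped petals'' has to be pinned down, because its constraints are exactly what make the two sides comparable. The observed graph includes the out-edges of $D$, and NULL answers reveal $D$-degrees. So for each petal pair $(u,u')$ you must:
\begin{itemize}
\item keep the multiset union of the two petals' dummy neighbourhoods fixed;
\item keep the queried entries and the observed in-edges from $D$ attached to the same labels;
\item respect sampling without replacement, so a dummy vertex adjacent to both petals stays adjacent to both.
\end{itemize}
The paper encodes this as the shared neighbor list $\snl_{u,u'}$ and fixes the split of $\snlobs$. It then compares the sizes of the blocks $\xx_f^{\pi}$ and $\yy_f^{\pi'}$, on which the observed graph is constant, and matches arbitrarily inside them. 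In general $\card{\xx_f^{\pi}}\neq\card{\yy_f^{\pi'}}$. So no bijection between corresponding completions exists, and your claim that ``each step is a bijection inside a block, so injectivity is automatic'' fails for steps (iii)--(iv). An exact block-size comparison is what replaces it.

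\emph{The loss estimate must come out of that count.} The effect you isolate is real to first order: an $A$-type label is less likely than a $B$-type one to have all $q_v$ queried slots dummy. It is not the only asymmetry, though.
\begin{itemize}
\item Each observed in-edge from $D$ to a swapped label is an edge of an $A$-type vertex in one world and of a $B$-type vertex in the other. There are up to $4n^{\eps}$ of these per label by \Cref{clm:indeg-from D}.
\item The duplicated dummy neighbours already used up by fixed entries differ between $u$ and $u'$.
\end{itemize}
In the exact count the core-slot placements cancel ($r_2r_3=1$). This is because a petal with $q$ queried slots has $(d^*-q)!$ completions of its unqueried slots whether it is of type $A$ or $B$. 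Every asymmetry then lands in the single ratio $\binom{\kappa}{\lambda_{u'}}/\binom{\kappa}{\lambda_u}$. Since $\lambda_u+\lambda_{u'}=\kappa+n^{\delta}-1$, this ratio equals $\prod_{k=0}^{n^{\delta}-2}(\lambda_u-k)/(\lambda_{u'}-k)$. It is $1-O(n^{2\delta-1}(q_u+q_{u'}+n^{\eps}))$ because $\card{\lambda_u-\lambda_{u'}}=O(q_u+q_{u'}+n^{\eps})$. Your queried-slot effect is the $q$-part of this bound; the in-edges and duplicates supply the rest.

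Summing over pairs reproduces your $O(n^{2\eps+3\delta-1}\log^3 n)$ total. This requires including the light pairs with $q_u<2^{\alpha}$, bounded via \Cref{clm:no-edges-core}, which your sum omits. But only the pooled count shows that this sum actually bounds $1-\card{\yy_f^{\pi'}}/\card{\xx_f^{\pi}}$.
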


\begin{figure}[t]
	\centering
	\scalebox{0.85}{\input{distribution_coupling}}
	\caption{
		The perfect matching $\mm^*=\mm\cup\mm'$ pairs instances from $\xx$ and $\yy$.
		For edges in $\mm$, the endpoints look identical, so any algorithm succeeds with probability $1/2$.
		Edges in $\mm'$ form only a $o(1)$ fraction, and their endpoints may be distinguished with probability at most $1$.
		Hence the overall success probability is at most $1/2+o(1)$.
	}
	\label{fig:full-coupling-large-yes}
\end{figure}
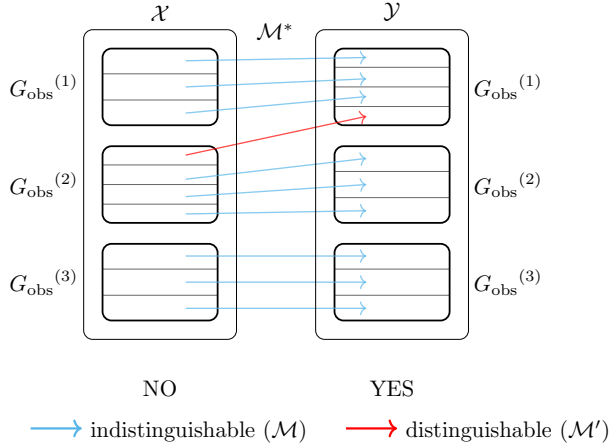

We will use this to prove \Cref{lem:distinguish-yes-no} (see \Cref{fig:full-coupling-large-yes}).
\begin{proof}[Proof of  \Cref{lem:distinguish-yes-no}]
	Consider algorithm $\Alg$ that makes $q=n^{1+\eps}$ queries and distinguishes between $\dno$ and $\dyes$.
	For any edge $e \in \mm$, the observed graph for both the endpoints is the same (property of the edges of $\mm$).
	So any algorithm cannot do better than guessing when asked to distinguish between the endpoints of $\mm$ and will be wrong with probability $1/2$.
	
	We will add an arbitrary perfect matching between the ignored $o(1)$ fraction of elements and call this the \textbf{failed} matching $\mm'$.
	For any edge $e \in \mm'$, $\Alg$ can be correct with probability at most $1$.
	 Also, note that $\mm^* = \mm \cup \mm'$ is a perfect matching between $\xx$ and $\yy$.
	 
	We now look at the distribution $\mathcal{D} = \frac{1}{2} \dyes +\frac{1}{2} \dno$ that samples a random edge of $\mm^*$ and then samples a random endpoint of that edge which is a random sample from $\mathcal{D}$.
	When $\mathcal{D}$ samples an edge of the successful matching $\mm$, the algorithm $\Alg$ succeeds with probability $1/2$ in distinguishing between the two cases.
	When $\mathcal{D}$ samples an edge of the failed matching $\mm'$, the algorithm $\Alg$ succeeds with probability at most $1$ in distinguishing between the two cases.
	Finally, the number of edges in the failed matching is $o(1) \cdot \card{\xx}$ and in the successful matching is $(1-o(1)) \cdot \card{\xx}$. 
	Thus, $\Alg$ is correct only with probability at most $(1-o(1)) \cdot (1/2) + o(1) \cdot 1 \leq 1/2+ o(1)$.
	
	We finally note that we have constraints \Cref{constraint:no-happy-pairs,constraint:matching-size} but  \Cref{constraint:no-happy-pairs} is weaker giving us the final constraint: $n^{2\eps+ 3 \delta}\log^3 n = o(n)$.
\end{proof}

We will now prove \Cref{lem:large-matching-coupling} in the rest of the section.
The idea is to divide $\xx$ and $\yy$ into blocks such that each block corresponds to a potentially different observed graph.
However, all the nodes in the block crucially have the same observed graph.
We will create these blocks by \textbf{fixing} some part of the randomness of the distributions.
Recall that $\xx$ and $\yy$ are the supports of distributions over tuples consisting of a graph, a labeling permutation, and a set of adjacency lists. By fixing parts of the randomness within these distributions, we can partition $\xx$ and $\yy$ into blocks, where each block corresponds to a particular fixing of that randomness.
For instance, if we fix the first element of the tuple to take on different graphs from $\dno$, then $\xx$ is partitioned into $\card{\dno}$ blocks, each corresponding to a particular graph from $\dno$.

We will have the property that if we think of the blocks as supernodes, $\mm$ still forms a matching on the supernode graph.
This introduces the notion of the matching being \textbf{block-compliant}.
The matching $\mm$ is block-compliant with respect to some fixing if for every edge $e \in \mm$ the following holds. 
Say edge $e$ is between two blocks $\xx_1$ and $\yy_1$ then any edge incident on a vertex in $\xx_1$ goes to $\yy_1$ and vice versa.

We will then show that if we match two blocks (on the supernode level) then their sizes are within a $(1 \pm o(1))$ factor of each other so $\mm$ matches all the nodes in one block and most nodes in the other.
We ignore a $o(1)$ fraction of nodes overall because there is a $o(1)$ fraction of nodes that does not satisfy the high probability events we condition on.
Also, when we match two blocks, their sizes could be a $(1 \pm o(1))$ factor apart.
Therefore, we get that matching $\mm$ has size $(1-o(1)) \cdot (1-o(1)) \cdot \card{X}=(1-o(1)) \cdot \card{X}$.

We call a node \textbf{bad} if it does not satisfy the high probability events we condition on and \textbf{good} otherwise. 
We will call a block is \textbf{bad} if all the nodes in it are bad.
We will show that every bad node will be part of a bad block, and we will ignore bad blocks.
These bad blocks we ignore will only be a $o(1)$ fraction of all the nodes because they are filled only with bad nodes and the bad nodes only form $o(1)$ fraction of all the nodes.
We will also say that a block is \textbf{good} if it is not bad.
Since all bad nodes belong to bad blocks, a good block will only contain good nodes.

We will not fix the randomness in one shot and get the blocks we want.
We will start with blocks of a large size and then divide them into finer and finer blocks.
Each time we fix some part of the randomness we will get sub-blocks of a block. We will find bad blocks at almost every level of this partitioning, and we will ignore such blocks.
We will consider arbitrary but fixed fixings on both sides that give good blocks and will keep $\mm$ \textbf{confined} within those blocks.
This means that the matching $\mm$ will be block-compliant at each level of the partitioning into finer blocks and will have edges incident on only good blocks (see \Cref{fig:dist}).

\subsection{Coupling the Core}\label{subsec:Coupling-the-Core}

%We sample $\gno \sim \dno$ and let $\pino$ be the permutation used on the vertices. We will now construct $\gyes \sim \dyes$ using coupling with $\gno$.
%Note that $\gno$ and $\pino$ are fixed along with the queries $q_i^L,q_j^R$ the algorithm makes.
%We condition on the event that the output is a disjoint union of stars inside the core.
%Let $S$ be the set of all star centers.

Here we describe the coupling for the core of the graph.
We consider an arbitrary core $\gcno$ in the no case $\dcno$. 
First consider the edges $(b^L_{i,j}, a_{j'}^R)$ and $(b^R_{i,j}, a_{j}^L)$ we have in $\gcno$ for all $i \in [n^{\delta}]$ and $j \in [n/n^{\delta}]$.
We will remove these edges and add edges $(b_{i,j}^L,b_{i,j}^R)$ and $(a_{j}^L,a_{j'}^R)$ which belong to $\gcyes$.
These four edges together form a \textbf{quad} in the coupling.

Using this we can also define the notion of \textbf{corresponding} edges.
For any edge $(u,v)$ in $\gcno$ directed out of $u$, its corresponding edge in $\gcyes$ is the only other edge in the coupling \textbf{quad} incident on $u$ and directed out of $u$.
We can similarly define this for $(u,v)$ in $\gcyes$ as well.
For instance, consider any edge $(b_{i,j}^L, a_{j'}^R)$ in the core in $\gcno$ and consider the direction $b_{i,j}^L \rightarrow a_{j'}^R$.
Its corresponding edge in the yes case will be the edge $b_{i,j}^L \rightarrow b_{i,j}^R$.
On the other hand, if we had considered the direction $a_{j'}^R \rightarrow b_{i,j}^L$ then the corresponding edge in the yes case would have been the edge $a_{j'}^R \rightarrow a_{j}^L$.

This procedure describes the coupling for all the edges within the core.
The instances shown in \Cref{fig:yes-no-D-side-by-side} illustrate examples of such coupled cores.
We now show that this coupling, denoted by $\mm^C$, always gives us an instance of $\dcyes$.

\begin{claim}\label{clm:core-coupling-valid}
	If we start with any $\gcno \in \dcno$ and apply the coupling procedure we get a core graph $\gcyes \in \dcyes$.
\end{claim}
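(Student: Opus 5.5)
The plan is to verify directly that the graph produced by the quad replacement has exactly the structure required of a member of $\dcyes$. That is, we check that its $B$-to-$B$ edges form the identity matching between $B^L$ and $B^R$, and that its $A$-to-$A$ edges decompose into $n^{\delta}$ perfect matchings between $A^L$ and $A^R$. Every edge of $\gcno$ lies in exactly one quad. For each $i\in[n^{\delta}]$ and $j\in[n^{1-\delta}]$, the quad indexed by $(i,j)$ consists of the edge $(b^L_{i,j},a^R_{j'})$ of the random matching between $B^L_i$ and $A^R$, together with the identity edge $(b^R_{i,j},a^L_j)$. Here $j'$ is determined by $(i,j)$ through the random matching of group $i$. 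So the quads partition the core edges of $\gcno$, and the output graph is obtained by replacing each quad's two no-edges with its two yes-edges.

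The first step handles the $B$-to-$B$ part. The yes-edges contributed are $(b^L_{i,j},b^R_{i,j})$ over all $i,j$, and each appears exactly once. Since $b^L_{i,j}$ and $b^R_{i,j}$ are the same-indexed vertices of $B^L$ and $B^R$, these are precisely the identity matching edges $(b^L_k,b^R_k)$ for $k\in[n]$.

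The second step handles the $A$-to-$A$ part, group by group. Fix $i$ and let $\rho_i$ be the perfect matching between $B^L_i$ and $A^R$ in $\gcno$. Write it as the bijection $j\mapsto j'=\rho_i(j)$ on $[n^{1-\delta}]$. The yes-edges produced from group $i$ are $M_i:=\{(a^L_j,a^R_{\rho_i(j)}) : j\in[n^{1-\delta}]\}$. Since $\rho_i$ is a bijection, $M_i$ is a perfect matching between $A^L$ and $A^R$. Ranging over all $i$ gives exactly $n^{\delta}$ perfect matchings $M_1,\ldots,M_{n^{\delta}}$, possibly with repeated edges across different $i$, which is allowed since $\dcyes$ permits multi-edges. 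No other core edges appear, so the output lies in the support of $\dcyes$.

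Finally, I would note that the map $(\rho_1,\ldots,\rho_{n^{\delta}})\mapsto(M_1,\ldots,M_{n^{\delta}})$ is a bijection between the two supports. This matches \Cref{clm:size-core} and will be useful later, though the claim itself only needs membership. There is no real obstacle in this argument. The only point requiring care is the indexing: the identity edges $(b^R_{i,j},a^L_j)$ are what supply the left endpoint $a^L_j$ with the same index $j$ as $b^L_{i,j}$, and this is what makes each $M_i$ a well-defined matching rather than an arbitrary edge set.
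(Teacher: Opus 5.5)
Your proof is correct and follows essentially the same route as the paper. Both check that the quad replacement produces the identity $B$-to-$B$ matching, and that for each group $i$, relabeling $b^L_{i,j}$ as $a^L_j$ (via the identity edges $(b^R_{i,j},a^L_j)$) turns the group-$i$ matching into a perfect matching between $A^L$ and $A^R$; your closing bijection remark is exactly what the paper proves in the claim that follows.
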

\begin{proof}
This is because firstly we get edges $(b_i^L,b_i^R)$ for all $i \in [n]$.
Secondly, every group $B^L_i$ for $i \in [n^{\delta}]$ gives a uniformly random matching between $A^L$ and $A^R$.
This is because the matching between $B^L$ and $B^R$ is an identity matching in the yes case and the matching between $B^R$ and $A^L$ is an identity matching in the no case.
So if there is an edge between $b^L_{i,j}$ and $a^R_{j'}$ in the no case then there is an edge between $a^L_{j}$ and $a^R_{j'}$ in the yes case.
Therefore, the matching between $B^L_i$ and $A^R$ in the no case is the same as the matching between $A^L$ and $A^R$ in the yes case if we relabel $b^L_{i,j}$ with $a^L_j$.
Thus, we get $n^{\delta}$ uniformly random matchings between $A^L$ and $A^R$ in the yes case concluding the proof.
\end{proof}

\begin{table}[t]
	\footnotesize
	\centering
	\renewcommand{\arraystretch}{1.2}
	\begin{tabular}{|c|>{\raggedright\arraybackslash}p{12.8cm}|}
		\hline
		\textbf{Symbol / Term} & \textbf{\makebox[\linewidth][c]{Description}} \\
		\hline
		$\gcno, \gcyes$ & Core subgraphs of $\gno$ and $\gyes$, sampled from $\dcno$ and $\dcyes$, respectively. \\
		$\mm^C$ & Coupling (perfect matching) between $\dcno$ and $\dcyes$ that ensures the observed core subgraphs are identical for matched endpoints. \\
		{quad} & A group of four edges (two from $\gcno$, two from $\gcyes$) that define how individual edges are coupled. \\
		{corresponding edge} & For an edge $u \rightarrow v$ in $\gcno$, the coupled edge $u \rightarrow v'$ in $\gcyes$ belonging to the same quad (or vice versa). \\
		$\xx_1, \yy_1$ & Subsets of $\xx, \yy$ corresponding to a fixed core structure $\gcno$ and its coupled core structure $\gcyes$ \\
		\hline
	\end{tabular}
	\caption{Notation introduced in \Cref{subsec:Coupling-the-Core}.}
\end{table}

We now show that this coupling indeed gives a perfect matching.
\begin{claim}
	The coupling given by $\mm^C$ is a perfect matching between $\dcno$ and $\dcyes$. 
\end{claim}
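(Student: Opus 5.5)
The coupling $\mm^C$ is a map $\Phi:\dcno\to\dcyes$, and \Cref{clm:core-coupling-valid} already shows that $\Phi$ lands in $\dcyes$. Since \Cref{clm:size-core} gives $\card{\dcno}=\card{\dcyes}$, it suffices to show that $\Phi$ is injective; a bijection is exactly a perfect matching. I would rather exhibit the inverse map explicitly, which proves injectivity and surjectivity at the same time.

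First I parametrize both supports. A no-core $\gcno$ is determined by the $n^{\delta}$ perfect matchings $\rho_i$ between $B^L_i$ and $A^R$, for $i\in[n^{\delta}]$, written as bijections $[n^{1-\delta}]\to[n^{1-\delta}]$ with $b^L_{i,j}\mapsto a^R_{\rho_i(j)}$. The identity matchings $B^R_i$--$A^L$ are fixed and carry no randomness. A yes-core is determined by the ordered tuple of matchings $(M_1,\dots,M_{n^{\delta}})$ between $A^L$ and $A^R$, again viewed as bijections $\mu_i$ with $a^L_j\mapsto a^R_{\mu_i(j)}$, since the $B$--$B$ identity matching is fixed. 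I will treat the matchings $M_i$ as indexed by $i$, consistent with the counting in \Cref{clm:size-core}, where ordered tuples of matchings are counted. This indexing is also why the multigraph setting causes no ambiguity: parallel edges are distinguished by their index $i$.

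Next I read off $\Phi$ in these coordinates. The quad construction sends each edge $(b^L_{i,j},a^R_{\rho_i(j)})$ together with $(b^R_{i,j},a^L_j)$ to the pair $(b^L_{i,j},b^R_{i,j})$ and $(a^L_j,a^R_{\rho_i(j)})$. Hence $\Phi$ maps $(\rho_1,\dots,\rho_{n^\delta})$ to $(\mu_1,\dots,\mu_{n^\delta})$ with $\mu_i=\rho_i$, where the $A$--$A$ edge coming from group $i$ is assigned to $M_i$.

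This is evidently a bijection between the two sets of $n^{\delta}$-tuples of permutations. Its inverse sends $\mu_i$ back to the group-$i$ matching $b^L_{i,j}\mapsto a^R_{\mu_i(j)}$. I would also check that the inverse agrees with reversing each quad, so that $\mm^C$ is a matching edge by edge and each yes-core is paired with exactly one no-core.

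The only delicate point is the bookkeeping: the yes-side edges must remember which group $i$ produced them. If the $M_i$ were viewed as an unordered multiset, the map would be many-to-one, and the support-size equality of \Cref{clm:size-core} would also fail. I would therefore state explicitly that the matchings are labelled by the group index, and that this labelling is what makes $\mm^C$ a perfect matching.
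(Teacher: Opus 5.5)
Your proposal is correct and follows essentially the same route as the paper. The paper also proves that each yes-core has a unique preimage by inverting the quads: an edge $(a^L_j,a^R_{j'})$ of $M_i$ forces the no-edge $(b^L_{i,j},a^R_{j'})$, which in your coordinates is exactly $\mu_i=\rho_i$. Your explicit remark that the $M_i$ must be treated as an ordered tuple indexed by the group $i$ is something the paper uses only implicitly, both in \Cref{clm:size-core} and when it pairs $M_i$ with $B^L_i,B^R_i$, and it is worth keeping.
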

\begin{proof}
	To show that this coupling is a perfect matching we need to show that the sizes of both sides are equal and that every node has degree exactly $1$ in this coupling.
	Note that \Cref{clm:size-core} showed that $\card{\dno^C} = \card{\dcyes}$. 
	Also, note that the degree of a node in $\dcno$ is exactly $1$ because the coupling gives exactly one instance of $\dcyes$ (\Cref{clm:core-coupling-valid}).
	All we need to show now is that every node in $\dcyes$ has degree $1$ in this coupling.
	
	Consider any $\gcyes \in \dcyes$ which is defined by matchings $M_1,M_2,\ldots, M_{n^{\delta}}$. 
	We will now invert the coupling to show exactly one pre-image.
	Consider any matching $M_i$ for $i \in [n^{\delta}]$ and groups $B^L_i, B^R_i$ in $B$. 
	We know that $B^R_i$ and $A^L$ have an identity matching between them in the no case and $B^L_i$ and $B^R_i$ have an identity matching between them in the yes case. 
		
	Consider an arbitrary edge between $(a^L_j , a^R_{j'})$ in the yes case.
	We look at the quad this edge is a part of in the coupling.
	This includes edges $(b^L_{i,j},b^R_{i,j})$ and $(a^L_{j},b^R_{i,j})$.
	To get the edge $(a^L_j , a^R_{j'})$ using the coupling, the last edge in the quad has to be $(b^L_{i,j} , a^R_{j'})$. If we had some other edge $(b^L_{i,j} , a^R_{j''})$ in the no case, then we would get the edge $(a^L_{j} , a^R_{j''})$ in the yes case.
	
	Thus, every node in the yes case has a unique pre-image in the coupling implying that the coupling is a perfect matching between $\dcno$ and $\dcyes$.
	The matching between $B^L_i$ and $A^R$ in the no case is essentially the same matching as the one between $A^L$ and $A^R$ in the yes case (up to changing the names of the vertices in a fixed order).
\end{proof}

Therefore, we can fix any arbitrary structure for the core $\gcno$ in the no case and its coupled structure $\gcyes$ in the yes case. 
This fixing creates blocks, and we need to ensure that $\mm$ is block-compliant.
We define the set of nodes $\xx_1 \subseteq \xx$ as $\xx$ restricted to $\gcno$ i.e. the nodes where the structure of the core is $\gcno$.
Similarly, define $\yy_1 \subseteq \yy$ as $\yy$ restricted to $\gcyes$.
Finally, we note that the sizes of these sets of nodes will be the same under any fixing because everything else (labeling permutation, edges to $D$, permutation of the adjacency lists) is independent of it.
%We say that the matching $\mm$ is \textbf{confined} to $(\xx',\yy')$ for any $\xx' \subseteq \xx$ and $\yy' \subseteq \yy$ if any edge of $\mm$ incident on a node in $\xx'$ will have its other endpoint incident on $\yy'$ and any edge of $\mm$ incident on a node in $\yy'$ will have its other endpoint incident on $\xx'$.
For any fixing of the core $(\xx_1,\yy_1)$, we will ensure that the matching $\mm$ is confined to it (see \Cref{fig:dist}).
This also makes sense because if we go to a block with a different structure then the observed graph could be different but for every edge in $\mm$ the endpoints should have the same observed graph.
For the rest of the proof, we will only look at the induced subgraph on $\xx_1$ and $\yy_1$.

\begin{center}
	\begin{tcolorbox}[colback=gray!5!white, colframe=black!50, boxrule=0.5pt, arc=4pt, width=0.95\linewidth, left=6pt, right=6pt, top=6pt, bottom=6pt]
		\textbf{Fixings So Far}
		\vspace{4pt}
		
		\begin{itemize}
			\item \Cref{subsec:Coupling-the-Core}: Fixed structure for the core: $\gcno$ in the no case and its coupled structure $\gcyes$ in the yes case.
		\end{itemize}
	\end{tcolorbox}
\end{center}

\subsection{Fixing the Permutation and the Star Centers}\label{subsec:coupling-perm}

We now fix an arbitrary labeling permutation $\pi \in \Pi$ in both $\xx_1$ and $\yy_1$.
We also fix the set $S$ of all the star centers (these are the set of vertices in the core that are happy).
Lastly, we also fix $\psi(S),\sigma_C(S)$ which fixes the locations of the core indices in the adjacency lists of vertices in $S$ and also the permutation of these core edges.
Doing this fixes the entire observed graph in the core $\eobs^C$ because the queries are deterministically fixed.
We also note that there can be a bad choice of the fixing which disagrees with the high probability event in \Cref{lem:disj-union-stars} ($\eobs^C$ is not a disjoint union of stars) in the no or the yes case. The fixing is also bad if it disagrees with the high probability event in \Cref{clm:no-edges-core} (the number of observed edges are bounded) or \Cref{clm:discovered-heavy-edges} (the number of observed-heavy edges are not bounded, which will happen if petals have many queries).
In these cases, the entire block on at least one side will be bad, and so we can ignore this fixing.

Consider a star center $v$ and its observed neighbor $u$ in the no case and its observed neighbor $u'$ in the yes case. 
$\Alg$ will see the edge $(v,u)$ in the no case and the edge $(v,u')$ in the yes case. Thus, the observed graphs will be different in both cases.
This implies that $\mm$ will not be confined to this particular fixing if we try to match the nodes in these blocks.
To solve this problem, we construct a new permutation $\pi'$ which is called the \textbf{dual permutation} of $\pi$.
A \textbf{dual} permutation $\pi'$ of a labeling permutation $\pi$ is one which starts with $\pi$ but swaps the label of every petal observed in the no case ($\pno$) with the label of the \textbf{corresponding petal} observed in the yes case ($\pyes$).
Consider a star center $v$ and its petal $u \in \pno$ in the no case. The corresponding edge for $v \rightarrow u$ is the edge $v \rightarrow u'$. Thus, $u' \in \pyes$ is the corresponding petal of $u$.
We can similarly go from the yes case to the no case.

Note that $\pi'$ is the same as $\pi$ on all vertices in $D$ (dummy vertices), all vertices in $S$ (star centers), and all isolated vertices in $\eobs^C$ (the observed graph inside the core). 
It differs only for vertices in $P:= \pno \cup \pyes$ under permutation $\pi$.
Note that the sets $\pyes$ and $\pno$ are different under $\pi$ but $\pno$ under $\pi$ will be the same as $\pyes$ under $\pi'$.
%Consider $v\in S$ and let $\nno(v)$ represent $v$'s neighbors in the core $\gcno$ and $\nyes$ represent $v$'s neighbors in the core $\gcyes$.
We define $\ppair$ to be the set of \textbf{petal pairs} where each petal pair is a set ($(u,u') \in \ppair$) containing a petal observed in the no case ($u \in \pno$) and its corresponding petal observed in the yes case ($u' \in \pyes$).
In $\pi'$ we swap the labels within each petal pair in $\ppair$.
Note that this is possible because $\eobs^C$ is a disjoint union of stars, so observed vertices in the no case and the yes case do not overlap for different vertices in $S$.

$\pi'$ is called a dual permutation of $\pi$ because if we start with permutation $\pi'$ then its dual permutation will be $\pi$ since we will swap back the labels.
\begin{claim}
	If we start with a labeling permutation $\pi$ and get $\pi'$ as the dual permutation then if we start with $\pi'$ we will get $\pi$ as the dual permutation.
\end{claim}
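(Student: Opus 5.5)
The plan is to show that every ingredient used to build the dual permutation is determined by data that stays the same when $\pi$ is replaced by $\pi'$. Since $\pi'$ is obtained from $\pi$ by applying a product of disjoint transpositions, this will prove the claim. There are two ingredients: the set of star centers together with their observed core positions, and the petal pairs $\ppair$. Throughout, the core structures $\gcno,\gcyes$ and the fixings $S,\psi(S),\sigma_C(S)$ are held fixed, as in \Cref{subsec:coupling-perm}.

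First I would check that the star centers are unchanged. By definition, $\pi'$ agrees with $\pi$ on $D$, on $S$, and on every vertex isolated in $\eobs^C$; it differs only on $P=\pno\cup\pyes$. Each $v\in S$ therefore keeps its label, and so it receives the same set of queries from $\Alg$. Because $\psi(v)$ and $\sigma_C(v)$ are fixed, $v$ observes exactly the same core indices and the same core-neighbor slots. The edge $v\to u$ in $\gcno$ and its corresponding edge $v\to u'$ in $\gcyes$ are determined by the quad structure alone, which does not depend on labels. So under $\pi'$ the star centers and their observed corresponding edges are the same as under $\pi$.

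Next I would identify the petal pairs under $\pi'$. For each observed position at a star center $v$, the no-case endpoint is still the vertex $u$ and the yes-case endpoint is still $u'$. Hence the pair $(u,u')$ is unchanged as a pair of underlying vertices; only the labels attached to $u$ and $u'$ have been exchanged. I also need that no new happy vertices appear. A vertex of $P$ now carries the label its partner had, and that label belonged to a non-happy petal. At this point I would invoke the conditioning on \Cref{lem:disj-union-stars}: petals are not happy, and the stars are disjoint, so the transpositions act on disjoint pairs. This gives $\ppair(\pi')=\ppair(\pi)$ as sets of vertex pairs.

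It follows that the dual of $\pi'$ swaps the labels within exactly the same pairs. Swapping twice within disjoint pairs is the identity, so the dual of $\pi'$ is $\pi$. The main obstacle is the step showing that $S$ and $P$ do not change under relabeling. In the non-adaptive model, whether a vertex is happy depends on its label through $q_v$. The care needed is to make sure the argument uses the fixed $\psi,\sigma_C$ only on $S$. For petals, the only thing used is the star-center side of the observation, which is label-independent for them; their own happiness status is handled by the good-block conditioning, which must be required to hold symmetrically for $\pi$ and $\pi'$.
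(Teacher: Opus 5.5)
Your proof is correct and follows the paper's argument. Since $\pi'$ agrees with $\pi$ on $S$, and $\psi(S)$ and $\sigma_C(S)$ are fixed, the star centers observe the same underlying no-case and yes-case neighbors, so the petal pairs are the same vertex pairs and swapping twice within them returns $\pi$; your discussion of petals not becoming happy under $\pi'$ is not needed for the involution itself, because the dual is defined relative to the fixed block data $S,\psi(S),\sigma_C(S)$ (and the remark that a petal ``inherits a non-happy label'' would not by itself establish non-happiness, so deferring to the block conditioning, as you do, is the right move).
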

\begin{proof}
	Note that we start with the labeling permutation $\pi$. The set $S$ of star centers is fixed along with their core indices ($\psi(S)$).
	We then swap the names of the observed neighbors (petals) of vertices in $S$ in the no and the yes case.
	Notice however that when we start with the labeling permutation $\pi'$, the queries on vertices in $S$ will be the same and $\psi(S)$ is the same, so we will find exactly the same neighbors. This means that the underlying vertices whose names will be swapped will be identical under $\pi$ and $\pi'$. Hence, swapping twice gives us back $\pi$.
\end{proof}

This fixes the problem we had because we can use $\pi$ in the no case and $\pi'$ in the yes case. This way in both cases the edge the algorithm sees will be $(v,u)$.
Note that the dual permutation is a function of the observed graph. So if we are in another fixing then the dual permutation for $\pi$ may not be $\pi'$.

We are now ready to define blocks such that $\mm$ is block-compliant with respect to it.
Let $\xx_2$ be the subset of $\xx_1$ where the set of star centers is $S$, the core indices in their adjacency lists are $\psi(S)$ with permutations $\sigma_C(S)$ and the labeling permutation is either $\pi$ or $\pi'$.
Define $\yy_2$ similarly.
We will ensure that $\mm$ is confined to $(\xx_2,\yy_2)$.
This makes sense because if the star centers, their adjacency lists or the labeling permutations are different, the observed graph may be different.
However, edges of $\mm$ will go from nodes that have permutation $\pi$ in the no case to nodes that have permutation $\pi'$ in the yes case and vice versa.

%\begin{claim}
%	A block with such a fixing is completely good or completely bad.
%\end{claim}
%
%\begin{claim}
%	The sizes of these blocks are the same.
%\end{claim}
%
%\vihan{dont need to prove the above two}

\begin{claim}\label{clm:same-obs-graph-core}
	For any node in $\xx_2$ with labeling permutation $\pi$ and any node in $\yy$ with labeling permutation $\pi'$, the observed graph in the core $\gobs^C$ is the same.
\end{claim}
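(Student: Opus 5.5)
We compare the observed core graph edge by edge, restricted to the current block: the core is fixed to $\gcno$ in $\xx_2$ and to its coupled $\gcyes$ in $\yy_2$, the star centers $S$ are fixed, and $\psi(S),\sigma_C(S)$ are fixed. Since $\Alg$ is deterministic and non-adaptive, the queries issued to each label are fixed in advance. So the observed core graph is determined by three things: which underlying vertex carries each label, which core index sits at each queried position, and which core neighbor appears at that index. We will show the labeled outcome agrees in both cases. We only need to consider good blocks, where \Cref{lem:disj-union-stars} holds on both sides, so the observed core graph is a disjoint union of stars. The centers are exactly the happy vertices $S$, and every other core vertex observes no core edge.

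First, the star centers. By the construction of the dual permutation, $\pi$ and $\pi'$ agree on $S$, so each center $v \in S$ has the same label $x$ in both cases. The queries on $x$ are the same in both cases, and $\psi(v)$ is fixed. Hence the queried positions that land on core indices are identical. Since $\sigma_C(v)$ is also fixed, the $k$-th core neighbor of $v$ in the no case and the $k$-th core neighbor in the yes case occupy the same position.

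The key step is to check that the core neighbor in slot $k$ under $\gcno$ and the one in slot $k$ under $\gcyes$ form a corresponding pair in the sense of the quads. This requires identifying the core neighbors of $v$ in the two cases by a quad-preserving bijection before $\sigma_C(v)$ is applied. For a $B$ center this is trivial, since it has a single core neighbor. For an $A$ center, the quads give a canonical bijection between its $n^{\delta}$ no-case neighbors and its $n^{\delta}$ yes-case neighbors, and we interpret $\sigma_C(v)$ relative to this bijection. Each observed edge $v\to u$ in the no case then sits at the same position as its corresponding edge $v\to u'$ in the yes case. By definition, the pair $(u,u')$ lies in $\ppair$ and $\pi'$ swaps their labels, so $\pi'(u')=\pi(u)$. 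The observed directed labeled edge is therefore the same in both cases, including its position label.

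Second, the remaining core vertices. Every core vertex outside $S$ has the same happiness status in both cases, namely not happy; this is exactly what the fixing of $S$ encodes. So non-centers, including the petals, contribute no core edges, regardless of whether $\pi$ and $\pi'$ relabel them. We must also check that the set of observed petal labels is the same in both cases and that no extra petals appear. This follows because the swap is an involution on disjoint pairs; disjointness holds because the observed graph is a disjoint union of stars on both sides. Together these give $\gobs^C$ equal in both cases.

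The main obstacle is the $A$-center step: making precise that fixing $\sigma_C(S)$ means the same thing on both sides, i.e.\ choosing the quad-induced bijection of core neighbors so that the same permutation yields corresponding edges at identical indices. I would settle this first by stating explicitly that $\sigma_C(v)$ is indexed through the quads. The rest is bookkeeping.
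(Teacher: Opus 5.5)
Your proposal is correct and follows the paper's own argument, with more detail. In both, the coupled cores, the centers $S$ (whose labels agree since $\pi'=\pi$ on $S$) and $\psi(S),\sigma_C(S)$ are shared, so each observed no-case edge $v\to u$ and its corresponding yes-case edge $v\to u'$ sit at the same position and carry the same label because $\pi'(u')=\pi(u)$; non-centers observe no core edges by the definition of the block. Your explicit indexing of $\sigma_C(v)$ through the quad bijection pins down what ``fixing $\sigma_C(S)$ on both sides'' means, which the paper leaves implicit.

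One caveat, which does not affect this claim once $\pi'$ is taken as given. Your closing justification that petal pairs are disjoint because both observed graphs are disjoint unions of stars does not follow; the paper gives the same justification when it defines $\pi'$. Suppose $a^R_{j'}$ is matched in $\gcno$ to $b^L_{i,j}$ and to $b^L_{i',j''}$ with $i\neq i'$. Let the center $a^L_j$ observe $b^R_{i,j}$ in the no case, hence $a^R_{j'}$ in the yes case. Let the center $b^L_{i',j''}$ observe $a^R_{j'}$ in the no case, hence $b^R_{i',j''}$ in the yes case. Both sides are still disjoint unions of stars, yet the resulting pairs $\{b^R_{i,j},a^R_{j'}\}$ and $\{a^R_{j'},b^R_{i',j''}\}$ overlap.
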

\begin{proof}
	Since $\xx_2$ is a finer division of $\xx_1$, any node in it has the core graph $\gcno$.
	Similarly, any node in $\yy_2$ has core graph $\gcyes$ which is the coupled graph of $\gcno$.
	The set of star centers $S$ is the same for all nodes in $\xx_2$ and $\yy_2$.
	Finally, the set of petals for a node in $\xx_2$ under $\pi$ has the same labels as the set of petals for a node in $\yy_2$ under $\pi'$ by definition of $\pi'$.
	Therefore, the observed graph in the core $\gobs^C$ is the same for all these nodes.
\end{proof}

\begin{table}[t]
	\footnotesize
	\centering
	\renewcommand{\arraystretch}{1.2}
	\begin{tabular}{|c|>{\raggedright\arraybackslash}p{12.8cm}|}
		\hline
		\textbf{Symbol / Term} & \textbf{\makebox[\linewidth][c]{Description}} \\
		\hline
		$P := \pno \cup \pyes$ & Set of petal vertices observed in either the no or yes case under a fixed permutation $\pi$. \\
		$\ppair$ & Set of petal pairs $(u, u')$ where $u \in \pno$ and $u'$ is its \emph{corresponding petal} in $\pyes$. \\
		$\pi, \pi'$ & Labeling permutation $\pi$ and its \emph{dual permutation} $\pi'$; $\pi'$ is obtained by swapping labels within petal pairs under labeling permutation $\pi$. \\
		$\xx_2, \yy_2$ & Subsets of $\xx_1, \yy_1$ with the following fixing: labeling permutations are $\pi$ or $\pi'$, star centers are $S$ and the randomness of $\psi(S),\sigma_C(S)$ in the adjacency lists is fixed. \\
		$\xx_3, \yy_3$ & Subsets of $\xx_2, \yy_2$ with the following fixing: $\ell(C \setminus P)$ (which include edges to $D$), edges to $D$ for queried locations for all $v \in P$, and edge to $D$ for pairs $(u,u') \in \ppair$.\\
		$\snl_{u,u'}$ & Shared neighbor list of petal pair $(u, u')$—unordered list of edges to $D$ assigned to the pair $(u,u')$ to be split later between $u$ and $u'$. \\
		$\snlobs$ & Discovered portion of the shared neighbor list via queries to $D$. \\
		\hline
	\end{tabular}
	\caption{Notation for \Cref{subsec:coupling-perm,subsec:coupling-edges-to-D}.}
\end{table}

\begin{center}
	\begin{tcolorbox}[colback=gray!5!white, colframe=black!50, boxrule=0.5pt, arc=4pt, width=0.95\linewidth, left=6pt, right=6pt, top=6pt, bottom=6pt]
		\textbf{Fixings So Far}
		\vspace{4pt}
		
		\begin{itemize}
			\item \Cref{subsec:Coupling-the-Core}: Fixed structure of core: $\gcno$ in the no case and $\gcyes$ in the yes case.
			\item \Cref{subsec:coupling-perm}: Fixed set of star centers $S$ along with $\psi(S)$ and $\sigma_C(S)$. Also, fixed the labeling permutations $\pi$ and its dual $\pi'$.
		\end{itemize}
	\end{tcolorbox}
\end{center}

\subsection{Fixing edges to D and Adjacency Lists of Non-Petals}\label{subsec:coupling-edges-to-D}

\begin{figure}[t]
	\centering

	\begin{minipage}{0.44\textwidth}
		\centering
		\input{distribution}
		\caption*{(a) This figure shows coupling of the core for graph $\hh$ and how the matching stays block-compliant for the blocks and the final sub-blocks.}
	\end{minipage}
	\hfill
	\begin{minipage}{0.44\textwidth}
		\centering
		\input{distribution2}
		\caption*{(b) This figure shows that the matching is block-compliant for the final blocks and how we pick the final edges of $\mm$.}
	\end{minipage}

	\caption{Showing the block-compliant matching $\mm$.}
	\label{fig:dist}
\end{figure}

We are currently in a block where the structure in the core $C$ is fixed, the set of star centers $S$ is fixed along with $\psi(S),\sigma_C(S)$ and the labeling permutation is either $\pi$ or $\pi'$.

We now describe the fixing we have for the edges to $D$.
We fix the edges to $D$ for all vertices in $C - P$ (the non-petal vertices in the core).
For vertices in $P$ we also fix edges to $D$ for each petal pair in $\ppair$ i.e. if $(u, u') \in \ppair$ is a pair then we fix edges to $D$ for $u \cup u'$ but do not specify the exact endpoint. 
The number of neighbors to $D$ we have to pick for $u \cup u'$ is $2 d^* - n^{\delta} -1$.
This is because one petal vertex is in $A$ and the other is in $B$, so they have $n^{\delta}$ and $1$ core indices respectively.
To do this we take the union of a set of $d^*-1$ distinct neighbors in $D$ with a set of $d^*-n^{\delta}$ distinct neighbors in $D$.
If the intersection size of these two sets is outside the range $n^{1-\delta}/4 \pm 3 \sqrt{\log n} \cdot n^{(1-\delta)/2}$ for any petal pair then the corresponding block is bad. Using \Cref{clm:sample-overlap} and union bounding over all petal pairs we can say that this happens with probability $1/n$.

We call this unordered list the \textbf{shared neighbor list} of $u$ and $u'$ ($\snl_{u,u'}$).
We have that $\card{\snl_{u,u'}} = 2 d^* - n^{\delta} -1$.
When we later partition those edges between $u$ and $u'$, we will add these edges to their adjacency lists.
Also, note that the maximum frequency of an element of $D$ in $\snl_{u,u'}$ is $2$ because we have a union of two lists each of which contains distinct elements of $D$. If some element appears twice in $\snl_{u,u'}$ then it will be a neighbor of both $u$ and $u'$.

\begin{claim}\label{clm:sample-overlap}
	Let $S_1$ be sample of $D^R$ of size $d^*-1$ without replacement and let $S_2$ be sample of $D^R$ of size $d^*-n^{\delta}$ without replacement.
	The intersection size of $S_1$ and $S_2$ is concentrated in $n^{1-\delta}/4 \pm 3 \sqrt{\log n} \cdot n^{(1-\delta)/2}$ with high probability.
\end{claim}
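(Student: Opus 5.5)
Fix $S_1$ first; since $S_2$ is chosen independently, conditioning on $S_1$ leaves $S_2$ a uniform sample without replacement from $D^R$, whose size is $N := n^{1-\delta}$. Then $X := \card{S_1 \cap S_2}$ is hypergeometric: it counts how many of the $m := d^*-n^{\delta}$ draws land in a marked subset of size $K := d^*-1$. The plan is to compute $\expect{X}$ exactly, check that it is within lower-order terms of $n^{1-\delta}/4$, and then apply a Chernoff/Hoeffding bound with deviation $\Theta(\sqrt{\log n}\cdot\sqrt{N})$.

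\textbf{Expectation.} Write $X=\sum_{j=1}^{m} X_j$, where $X_j$ indicates that the $j$-th element of $S_2$ lies in $S_1$. Each draw is marginally uniform on $D^R$, so
\[
\expect{X} = \frac{m K}{N} = \frac{(n^{1-\delta}/2 - n^{\delta})(n^{1-\delta}/2-1)}{n^{1-\delta}} = \frac{n^{1-\delta}}{4} - \frac{n^{\delta}}{2} - \frac12 + O(n^{2\delta-1}).
\]
By \Cref{constraint:no-happy-pairs} we have $3\delta<1$, hence $\delta < (1-\delta)/2$. The shift $n^{\delta}/2$ is therefore $o(n^{(1-\delta)/2})$ and can be absorbed into the slack of the window.

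\textbf{Concentration.} The indicators $X_j$ of a sample without replacement are negatively associated, which is a standard fact and is the reason \Cref{prop:chernoff} is stated for negatively associated variables. We apply \Cref{prop:chernoff} with $\mu := \expect{X}=\Theta(N)$ and relative error $\eta := 2.5\sqrt{\log n}\cdot N^{1/2}/\mu = \Theta(\sqrt{\log n / N})=o(1)$. Each tail is then bounded by $\exp(-\Theta(\eta^2\mu)) = \exp(-c\cdot 6.25\log n\cdot N/\mu)$, and since $N/\mu\approx 4$ the constant is large. Each tail is thus at most $n^{-c'}$ for some $c' \geq 3$. So with probability $1-O(n^{-3})$,
\[
\card{X-\mu}\le 2.5\sqrt{\log n}\, n^{(1-\delta)/2},
\]
and adding the $o(n^{(1-\delta)/2})$ gap between $\mu$ and $n^{1-\delta}/4$ keeps $X$ inside $n^{1-\delta}/4 \pm 3\sqrt{\log n}\cdot n^{(1-\delta)/2}$. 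Alternatively, Hoeffding's inequality for sampling without replacement gives $\exp(-2t^2/m)$ with $t=3\sqrt{\log n}\,N^{1/2}$ and $m\le N/2$, i.e.\ $\exp(-36\log n)$, which also covers the later union bound over all petal pairs mentioned before the claim.

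\textbf{Main obstacle.} The only delicate points are justifying the use of Chernoff for dependent draws, handled by negative association or Hoeffding's without-replacement result, and checking that the $n^{\delta}$ bias in the mean is dominated by the window. The latter is exactly where $3\delta<1$ is needed.
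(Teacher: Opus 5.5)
Your proposal is correct and follows essentially the same route as the paper: fix $S_1$, write the overlap as a sum of negatively associated indicators over the draws of $S_2$, compute $\expect{X}=n^{1-\delta}/4-n^{\delta}/2-1/2+n^{2\delta-1}$, and apply \Cref{prop:chernoff} with relative error $\Theta(\sqrt{\log n/n^{1-\delta}})$ to get a deviation of about $2.5\sqrt{\log n}\cdot n^{(1-\delta)/2}$. You also check explicitly that the $n^{\delta}/2$ offset between the mean and $n^{1-\delta}/4$ fits inside the window because $3\delta<1$; the paper leaves this step implicit, so your version is slightly more complete.
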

\begin{proof}
	To show bounds on the intersection size we will fix the randomness of $S_1$ and then use the randomness of $S_2$ to get the bounds on the size.
	Let $X_i$ be the random variable which is $1$ if the $i^{th}$ sample of $S_2$ lands in $S_1$ and $0$ otherwise for $i \in [d^*-n^{\delta}]$.

	Even though the draws are \emph{without} replacement, every element of the ground set is equally likely to appear in position $i$.  Hence, we have $\expect{X_i} = \prob{X_i=1} = \frac{d^*-1}{n^{1-\delta}}$.
	Let $X=\sum_i X_i$ be the random variable denoting the intersection size.
	Using linearity of expectation we get:
	\[
	\expect{X}
	\;=\;
	\sum_{i=1}^{d^*-n^{\delta}} \expect{X_i}
	\;=\;
	(d^*-n^{\delta}) \cdot \frac{d^*-1}{n^{1-\delta}}
	\;=\;
	n^{1-\delta}/4 - n^{\delta}/2 -1/2 +n^{2\delta-1}.
	\]
	Since elements in $S_2$ are sampled uniformly without replacement, the variables $\{X_i\}$ are negatively associated. Since each $X_i$ is in $[0,1]$, we can apply the Chernoff bound in \Cref{prop:chernoff} with $\eps = 10 \sqrt{\log n/n^{1-\delta}}$:
	\begin{align*}
		\prob{\card{X - \expect{X}} > \eps \expect{X} } &\leq 2 \exp \left( - \eps^2 \cdot \expect{X}/4 \right) \\
		&\leq 2 \exp \left( - \eps^2 \cdot n^{1-\delta}/20 \right) \\
		&\leq 2 \exp \left( - 100 \log n /20 \right) \\
		&\leq n^{-4}.
	\end{align*}
	Therefore, with high probability, $X$ is in $n^{1-\delta}/4 \pm 3 \sqrt{\log n} \cdot n^{(1-\delta)/2}$.
\end{proof}

We also fix the adjacency list $\ell(v)$ for all $v \in V-P$.
$\psi(v),\sigma_C(v)$ have already been fixed for $v\in S$ so we only fix $\sigma_D(v)$ for $v\in S$. We now also fix this for other vertices in $C-P-S$.
The only vertices in $C-P-S$ are the isolated vertices in $\eobs^C$ (all queries on them go to $D$).
We fix their entire adjacency lists.
Note that we can fix $\ell(D)$ because for all $v \in D$, their edges have been fixed (up to switching within petal pairs).
So their adjacency lists are not completely fixed because they may have an edge to petal pair $(u,u')$ which implies that the endpoint of the edge is not fixed.
However, if it has two edges to a petal pair $(u,u')$ then we arbitrarily fix one of neighbors as $u$ and the other one as $u'$.
Note that any edge to a petal pair $(u,u')$ corresponds to a fixed element in $\snl_{u,u'}$. So it is a function of whether that particular neighbor is assigned to $u$ or $u'$.
If this fixing violates \Cref{clm:deg-D-verts} then the corresponding block is bad, and we ignore it. We can do this because the degrees of all vertices in $D$ have been fixed.

Fixing $\ell(D)$ fixes the observed edges from $D$ to $P$ but the endpoints of these edges are not fixed within the petal pairs.
These observed edges are present in the shared neighbor list $\snl$.
For $(u,u') \in \ppair$, $\snlobs$ represents the \emph{observed} part of the shared neighbor list of $(u,u')$.
The only queried edges that are not fixed are the ones in $P$.
Thus, for vertices in $P$, we now fix edges to $D$ for all the queried locations in their adjacency lists.
We will do this by picking neighbors from $\snl$. These can be the observed elements ($\snlobs$), the duplicate elements, or remaining ones which are not observed and have just one copy.

Consider vertex $u \in P$ in the no case where labeling permutation $\pi$ is used. The vertex $u$ has $q_u$ queried locations out of $d^*$ locations.
The vertex $u \in P$ in the yes case where labeling permutation $\pi'$ is used, is a different vertex in terms of its name in the underlying structure, but it has the same set of $q_u$ out of $d^*$ queried indices.
Thus, we fix the edges to $D$ in these locations for all petals in $P$.
After this the size of $\snl_{u,u'}$ is $\snlsize:=2d^* -1 - n^{\delta} -q_u - q_{u'}$.
Note that when fixing these edges to $D$ we also have to fix their \emph{order}. 
This fixes some randomness of $\sigma_D(P)$.

We are now ready to define blocks such that $\mm$ is block-compliant with respect to it.
Let $\xx_3$ be the subset of $\xx_2$ with all the above fixings and define $\yy_3$ similarly.
We will ensure that $\mm$ is confined to $(\xx_3,\yy_3)$.
All the edges of the graph have been fixed except for endpoints of some edges that can switch within petal pairs.
Note that since the queries are fixed, this also fixes the observed edges from $D$.
The observed edges from $D$ to $C - P$ are fixed and the observed edges to $P$ are fixed in the shared neighbor lists. 
The edges in $\snlobs$ will be part of the observed graph irrespective of whether they are assigned to $u$ or $u'$. 
We now show that the observed graph is almost fixed within the blocks.
\begin{claim}\label{clm:same-obs-graph-except-petals}
	For any node in $\xx_3$ with permutation $\pi$ and any node in $\yy_3$ with permutation $\pi'$, the observed graph $\gobs$ is the same up to edges from $D$ to $P$ switching within petal pairs. Also, the observed edges are fixed in the shared neighbor lists.
\end{claim}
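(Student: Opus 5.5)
We prove \Cref{clm:same-obs-graph-except-petals} by splitting the observed graph $\gobs$ into its three kinds of edges and checking, one kind at a time, that the fixings made so far determine them. The three kinds are: (i) core edges $\eobs^C$; (ii) edges out of core vertices into $D$; (iii) edges out of $D$ into the core. We compare a node of $\xx_3$ under $\pi$ with a node of $\yy_3$ under $\pi'$. Since $\Alg$ is deterministic and non-adaptive, the set of (label, index) queries is the same in both. So it suffices to show that the answer at every queried position agrees, except for the ambiguity stated in the claim.

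For (i), we invoke \Cref{clm:same-obs-graph-core}. Since $\xx_3\subseteq\xx_2$ and $\yy_3\subseteq\yy_2$, the observed core graph is identical under $\pi$ and $\pi'$. This includes the position labels of the core edges, because $\psi(S)$ and $\sigma_C(S)$ are fixed.

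For (ii), we split on the queried vertex.
\begin{itemize}
\item For a label on a vertex of $S$ or an isolated core vertex ($C\setminus(P\cup S)$), $\pi$ and $\pi'$ agree. Its whole adjacency list $\ell(v)$ is fixed in the $\xx_3,\yy_3$ fixing, and its dummy neighbors carry the same labels because $\pi=\pi'$ on $D$. Hence its answers coincide.
\item For a label $u$ on a petal, $\pi$ and $\pi'$ place it on the two members of a petal pair. We fixed the edges to $D$, together with their order, in exactly the $q_u$ queried locations of that label in both cases. By \Cref{lem:disj-union-stars}, a petal is not happy, so every queried location of a petal is a dummy index; this uses \Cref{clm:no-happy-edge} and holds because we are in a good block. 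Hence these answers also coincide.
\end{itemize}

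For (iii), we use that $\ell(D)$ is fixed and $\pi=\pi'$ on $D$. Queries on a dummy label therefore return the same position and the same neighbor whenever that neighbor lies in $C\setminus P$, since those endpoints are fixed. When the neighbor lies in $P$, the returned edge is a fixed element of $\snl_{u,u'}$, in fact of $\snlobs$. Whether it is assigned to $u$ or to $u'$ is still free, and this is exactly the allowed switching within petal pairs. The observed edges from $D$ to $P$ are thus fixed as elements of the shared neighbor lists, which gives the second sentence of the claim.

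The main obstacle is the consistency of the petal case. We must check that fixing the queried locations of petals is compatible with the shared neighbor list bookkeeping: these fixed entries are drawn from $\snl_{u,u'}$, and a dummy edge seen from the $D$ side may coincide with one of them. We also need that no petal queries a core index. I would handle this by noting that in good blocks the petal core indices lie outside the queried set. Any double appearance of an element of $D$ is accounted for by the arbitrary $u/u'$ assignment made when fixing $\ell(D)$, so the observed labels remain fixed up to the petal-pair switch.
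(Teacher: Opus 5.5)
Your proposal is correct and follows essentially the same route as the paper's proof: you use \Cref{clm:same-obs-graph-core} for the core, the fixed adjacency lists (with $\pi=\pi'$ off $P$) for non-petal core vertices, and the fixed queried locations for petals. For dummy vertices you use the fixed $\ell(D)$, whose endpoints in $P$ are determined only up to the petal-pair switch through the shared neighbor lists, and your extra remarks (petal queries hit only dummy indices in good blocks; dummy labels agree since $\pi=\pi'$ on $D$) just make explicit what the paper uses implicitly.
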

\begin{proof}
	We know from \Cref{clm:same-obs-graph-core} that the observed graph in the core $\gobs^C$ is the same for these nodes.
	For all vertices in $C -P$, we have fixed their edges to $D$ and their adjacency lists. Since the queries are deterministically fixed, all the observed edges from these vertices will be identical for all nodes under consideration.
	For vertices in $P$, we have fixed the edges in the queried locations implying that all the observed edges from the petal vertices will be identical for all nodes under consideration.
	
	Finally, for vertices in $D$, their adjacency lists have been fixed except some edges are to petal pairs instead of the explicit vertex within the petal pair.
	However, the observed edges are fixed in the shared neighbor lists for petal pairs (specific positions of $\snl_{u,u'}$ are part of the observed edges irrespective of whether they are assigned to $u$ or $u'$).
	Therefore, $\gobs$ is the same for all nodes under consideration up to edges from $D$ to $P$ switching within petal pairs. 
\end{proof}

\begin{center}
	\begin{tcolorbox}[colback=gray!5!white, colframe=black!50, boxrule=0.5pt, arc=4pt, width=0.95\linewidth, left=6pt, right=6pt, top=6pt, bottom=6pt]
		\textbf{Fixings So Far}
		\vspace{4pt}
		
		\begin{itemize}
			\item \Cref{subsec:Coupling-the-Core}: Fixed structure of core: $\gcno$ in the no case and $\gcyes$ in the yes case.
			\item \Cref{subsec:coupling-perm}: Fixed set of star centers $S$ along with $\psi(S),\sigma_C(S)$ and labeling permutations $\pi,\pi'$.
			\item \Cref{subsec:coupling-edges-to-D}: Fixed edges to $D$ for all vertices in $C-P$ and for vertices in $P$ fixed edges to $D$ for petal pairs using $\snl$. Fixed $\ell(V-P)$ except for neighbors of $D$ in $P$ which can switch within petal pairs. Finally, fixed all queried indices in $\ell(P)$.
		\end{itemize}
	\end{tcolorbox}
\end{center}

\subsection{Fixing Final Blocks}\label{subsec:coupling-final-blocks}

We now fix what the final blocks are.
We know that the observed edges from $D$ to $P$ are fixed but the endpoints of these edges are not fixed within the petal pairs (\Cref{clm:same-obs-graph-except-petals}).
However, these observed edges are fixed in the shared neighbor list.
So for each petal pair we fix the endpoints for these observed edges ($\snlobs$) by partitioning the observed edges in the shared neighbor lists between the petals in the petal pairs.
Note that this could be any arbitrary partitioning for the observed edges $\snlobs$ for all $(u,u') \in \ppair$. 
We let $\snlobs(u)$ be the subset of $\snlobs$ assigned to $u$ with size $\snlcount_u$ and let $\snlobs(u')$ be the subset of $\snlobs$ assigned to $u'$ with size $\snlcount_{u'}$.

If this fixing violates \Cref{clm:indeg-from D} then the corresponding block is bad, and we ignore it. We can do this because the observed edges from $D$ have been fixed.
We are now ready to define the final blocks such that $\mm$ is block-compliant with respect to it.
Let $\xx_f$ be the subset of $\xx_3$ with this fixing that partitioning of observed edges in the shared neighbor lists of the petals, and define $\yy_f$ similarly.
We will ensure that $\mm$ is confined to $(\xx_f,\yy_f)$.
This final fixing has now completely fixed the observed graph:
\begin{claim}\label{clm:same-obs-graph-full}
	For any node in $\xx_f$ with permutation $\pi$ and any node in $\yy_f$ with permutation $\pi'$, the observed graph $\gobs$ is the same.
\end{claim}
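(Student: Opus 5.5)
The plan is to reduce to \Cref{clm:same-obs-graph-except-petals}, since the only freedom left is where observed $D$-edges to petal pairs land. Fix a node $x\in\xx_f$ with permutation $\pi$ and a node $y\in\yy_f$ with permutation $\pi'$. Both lie in the same pair of blocks $(\xx_3,\yy_3)$, so \Cref{clm:same-obs-graph-except-petals} gives that $\gobs(x)$ and $\gobs(y)$ agree except possibly on observed edges from $D$ into petal pairs. For those edges it also says the set of observed edges is already fixed as the entries $\snlobs$ of each shared neighbor list $\snl_{u,u'}$. So it suffices to show that every such entry has the same labeled endpoint in both nodes.

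For a petal pair $(u,u')\in\ppair$, the final fixing partitions $\snlobs$ into $\snlobs(u)$ and $\snlobs(u')$, and this partition is common to every node of $\xx_f$ and $\yy_f$. The key step is to check that this partition is defined in terms of \emph{labels}, not underlying vertices. The label $u$ corresponds under $\pi$ in the no case and under $\pi'$ in the yes case to the two corresponding petals. The reason is that $\pi'$ swaps exactly the labels inside each petal pair, and $\pno$ under $\pi$ equals $\pyes$ under $\pi'$. Hence an observed edge $d\to u$ seen by the algorithm from a dummy vertex $d$ has the same source label, the same position in $\ell(d)$ (positions are fixed when $\ell(D)$ is fixed), and the same target label in both nodes. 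The labels of $D$ agree because $\pi$ and $\pi'$ coincide on $D$.

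Combining this with the agreement of all other observed edges gives full equality of $\gobs$. Concretely, the other observed edges are the core edges (\Cref{clm:same-obs-graph-core}), the edges out of $C\setminus P$ and out of the queried positions of $P$, and the $D$-edges to $C\setminus P$. The main subtlety I expect is the bookkeeping for a dummy vertex with two edges into the same pair $(u,u')$, an element of multiplicity $2$ in $\snl_{u,u'}$. I would handle it by using the convention already fixed when $\ell(D)$ was fixed: one of the two positions is assigned to $u$ and the other to $u'$, identically in both nodes. Hence even duplicated observed entries get the same labeled endpoints. Position labels on edges out of $D$ then match as well.
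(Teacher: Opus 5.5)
Your proposal is correct and follows the same route as the paper. You reduce to \Cref{clm:same-obs-graph-except-petals} and observe that fixing the split of each $\snlobs$ between the two petals determines the only remaining observed edges, namely those from $D$ into petal pairs; your explicit point that this split (and the convention for duplicated entries of $\snl_{u,u'}$) must be read in terms of labels rather than underlying vertices is a useful clarification that the paper's short proof leaves implicit.
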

\begin{proof}
	We know from \Cref{clm:same-obs-graph-except-petals} that the observed graph $\gobs$ is the same up to edges from $D$ to $P$ switching within petal pairs.
	Fixing the partitioning of the queried edges from $D$ in $\snl_{u,u'}$ for all $(u,u') \in \ppair$ fixed all the observed edges from $D$ to $P$ (this has removed all the randomness of endpoints of edges switching within petal pairs).
	Thus, all the edges that are observed by $\Alg$ are now same among all nodes in $\xx_f$ under $\pi$ and $\yy_f$ under $\pi'$ making $\gobs$ the same.
\end{proof}

\begin{center}
	\begin{tcolorbox}[colback=gray!5!white, colframe=black!50, boxrule=0.5pt, arc=4pt, width=0.95\linewidth, left=6pt, right=6pt, top=6pt, bottom=6pt]
		\textbf{Fixings So Far}
		\vspace{4pt}
		
		\begin{itemize}
			\item \Cref{subsec:Coupling-the-Core}: Fixed structure of core: $\gcno$ in the no case and $\gcyes$ in the yes case.
			\item \Cref{subsec:coupling-perm}: Fixed set of star centers $S$ along with $\psi(S),\sigma_C(S)$ and labeling permutations $\pi,\pi'$.
			\item \Cref{subsec:coupling-edges-to-D}: Fixed edges to $D$ for $C-P$ and for $P$ fixed edges to $D$ for petal pairs using $\snl$. Fixed $\ell(V-P)$ except for neighbors of $D$ in $P$ which can switch within petal pairs. Finally, fixed all queried indices in $\ell(P)$.
			\item \Cref{subsec:coupling-final-blocks}: Fixed the split of observed edges in $\snl$ ($\snlobs$) between vertices in petal pairs. This fixes the entire observed graph.
		\end{itemize}
		\vspace{4pt}
		\textbf{Randomness left:} Splitting remaining edges in $\snl$ and placing them in the adjacency lists of the petals. Note that the petals only have elements in the queried indices and all other indices are empty.
	\end{tcolorbox}
\end{center}

\begin{table}[t]
	\footnotesize
	\centering
	\renewcommand{\arraystretch}{1.2}
	\begin{tabular}{|c|>{\raggedright\arraybackslash}p{12.2cm}|}
		\hline
		\textbf{Symbol / Term} & \textbf{\makebox[\linewidth][c]{Description}} \\
		\hline
		$\xx_f, \yy_f$ & Final blocks of nodes in $\xx$ and $\yy$, obtained by taking subsets of $\xx_3$ and $\yy_3$ with the following fixing: For all $(u,u') \in \ppair$, arbitrarily split up $\snlobs$ between $u$ and $u'$. \\
		$\xx_f^{\pi}, \xx_f^{\pi'}$, $\yy_f^{\pi}, \yy_f^{\pi'}$ & Subsets of $\xx_f$ and $\yy_f$ where labeling permutations are fixed to either $\pi$ or $\pi'$. \\
		$\snlobs(u)$ & Subset of observed edges from the shared neighbor list $\snl_{u,u'}$ assigned to petal $u$. $\snlcount_u := \card{\snlobs(u)}$ \\
		\hline
	\end{tabular}
	\caption{Notation for \Cref{subsec:coupling-final-blocks}.}
\end{table}

The observed graph $\gobs$ is the same for any node in $\xx_f$ with permutation $\pi$ and any node in $\yy_f$ with permutation $\pi'$.
The same proof also shows that the observed graph $\gobs$ is the same for any node in $\xx_f$ with permutation $\pi'$ and any node in $\yy_f$ with permutation $\pi$.
We let $\xx_f^{\pi}$ represent the part of block with permutation $\pi$ and $\xx_f^{\pi'}$  represent the part of block with permutation $\pi'$.
Define $\yy_f^{\pi}$ and $\yy_f^{\pi'}$ similarly.
The only randomness left now is in partitioning the shared neighbor list between the petals and $\psi(v),\sigma_C(v)$ and $\sigma_D(v)$ for all $v \in P$.
We now show that the sizes of the blocks $\xx_f$ and $\yy_f$ are the same.
\begin{claim}
	$\card{\xx_f^{\pi}} =\card{\yy_f^{\pi}}$, $\card{\xx_f^{\pi'}} = \card{\yy_f^{\pi'}}$ and  $\card{\xx_f} = \card{\yy_f}$.
\end{claim}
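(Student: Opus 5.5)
The plan is to compare $\xx_f^{\pi}$ and $\yy_f^{\pi}$ by counting, for each of them, the completions of the randomness that remains once all the fixings are in place. I would show that both counts equal the same explicit product over petal pairs. After the fixings of \Cref{subsec:Coupling-the-Core,subsec:coupling-perm,subsec:coupling-edges-to-D,subsec:coupling-final-blocks}, a node of $\xx_f^{\pi}$ (resp.\ $\yy_f^{\pi}$) is determined by the following data for every petal pair $(u,u')\in\ppair$:
\begin{enumerate}
\item a split of the unobserved part of $\snl_{u,u'}$ between $u$ and $u'$;
\item the choice $\psi$ of the core indices of $u$ and $u'$ among their unqueried positions, since the queried positions were already fixed to $D$;
\item the permutation $\sigma_C$ of their core neighbors;
\item the placement, via the rest of $\sigma_D$, of their remaining dummy neighbors in the remaining empty positions.
\end{enumerate}
The choices for different petal pairs are independent, because $\eobs^C$ is a disjoint union of stars and the shared neighbor lists of different pairs are disjoint objects. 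Hence $\card{\xx_f^{\pi}}=\prod_{(u,u')}N^{\mathrm{no}}_{u,u'}$, and similarly for $\yy_f^{\pi}$.

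Next I would write $N_{u,u'}$ explicitly. Say $u$ is the underlying $A$-vertex of the pair and $u'$ the $B$-vertex. Each element of frequency $2$ in $\snl_{u,u'}$ must go to both $u$ and $u'$, since each adjacency list has distinct dummy neighbors. The $\snlsize$ remaining unobserved slots must then be filled so that $u$ receives exactly $d^*-n^{\delta}-q_u-\snlcount_u$ further dummy neighbors and $u'$ receives $d^*-1-q_{u'}-\snlcount_{u'}$; this accounts for the entries already placed in queried positions and for the assigned part of $\snlobs$. The number of such splits is a binomial coefficient in the number of single-copy unassigned elements. The remaining factors are $\binom{d^*-q_u}{n^{\delta}}(n^{\delta})!$ for the core positions and order of $u$, the analogous factor $\binom{d^*-q_{u'}}{1}$ for $u'$, and the factorials ordering the leftover dummy neighbors in the leftover positions. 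Every quantity appearing here is one of the following:
\begin{enumerate}
\item $d^*$ or $n^{\delta}$;
\item $q_u$ or $q_{u'}$, which depend only on labels;
\item the multiset $\snl_{u,u'}$ and its observed part, fixed in \Cref{subsec:coupling-edges-to-D};
\item $\snlcount_u$ or $\snlcount_{u'}$, fixed in \Cref{subsec:coupling-final-blocks};
\item the information of which member of the pair is the $A$-vertex and which is the $B$-vertex.
\end{enumerate}

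The main point to verify, and the step I expect to be the real obstacle, is that this last item agrees between $\xx_f^{\pi}$ and $\yy_f^{\pi}$. Under the same labeling permutation $\pi$, a label names the same underlying vertex in both the no and yes instances. The partition of vertex names into $A,B,D$ is part of the construction and not of the random core, so the label $u$ denotes an $A$-vertex in $\xx_f^{\pi}$ if and only if it does in $\yy_f^{\pi}$. The petal pair itself consists of the same two underlying vertices in both cases, because $\ppair$ was defined from the fixed core pair $(\gcno,\gcyes)$, $S$, $\psi(S)$ and $\sigma_C(S)$, which are common to both sides. It is also exactly one $A$ and one $B$ vertex, as noted when $\snl_{u,u'}$ was sized $2d^*-n^{\delta}-1$. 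I would also check that the remaining core neighbors of the petals are already determined by the fixed cores, so no further choice is hidden there. Given this, $N^{\mathrm{no}}_{u,u'}=N^{\mathrm{yes}}_{u,u'}$ factor by factor, which gives $\card{\xx_f^{\pi}}=\card{\yy_f^{\pi}}$.

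The identical argument with $\pi'$ in place of $\pi$ gives $\card{\xx_f^{\pi'}}=\card{\yy_f^{\pi'}}$. Since $\xx_f=\xx_f^{\pi}\sqcup\xx_f^{\pi'}$ and likewise for $\yy_f$ (as $\pi\neq\pi'$ whenever $P\neq\emptyset$; if $P=\emptyset$ the two parts coincide and the statement is immediate), adding the two equalities gives $\card{\xx_f}=\card{\yy_f}$.
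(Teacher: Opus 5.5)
Your proposal is correct and follows essentially the same route as the paper. For each petal pair in $\ppair$ you count the remaining completions: the split of the unassigned part of $\snl_{u,u'}$, the core positions and their order, and the arrangement of the leftover dummy neighbours. You multiply over pairs and observe that under the same labeling permutation every factor agrees between the no and yes cases, in particular which member of the pair lies in $A$ and which in $B$. Your explicit remark that $\xx_f$ is the disjoint union of its $\pi$ and $\pi'$ parts (with the degenerate case $P=\emptyset$) adds a small piece of care that the paper leaves implicit when it concludes $\card{\xx_f}=\card{\yy_f}$.
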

\begin{proof}
	Consider labeling permutation $\pi$ and any petal pair $(u,u') \in \ppair$ where $u' \in A$ and $u \in B$. The following calculations will be the same in the yes and the no case under the same permutation because the partitioning and picking core indices does not depend on it.
	
	Under labeling permutation $\pi$, $u$ has one core index and $u'$ has $n^{\delta}$ core indices.
	Also, the $q_u$ queried positions of $u$ and the $q_{u'}$ positions of $u'$ are already filled.
	Thus, $u$ needs $d^* - 1 -q_u$ elements of $\snl_{u,u'}$ and $u'$ needs the remaining $d^* - n^{\delta} - q_{u'}$.
	We have already assigned $\snlcount_u$ neighbors to $u$ and $\snlcount_{u'}$ neighbors to $u'$ from the shared neighbor list (apart from the $q_u$ and $q_{u'}$ elements that have already been put in the adjacency lists of $u$ and $u'$).
	Thus, we need to assign $d^* - 1 -q_u - \snlcount_u$ neighbors to $u$ out of $\snlsize - \snlcount_{u,u'}$ neighbors where $\snlcount_{u,u'}:= \card{\snlobs}$.
	We need to be careful here because duplicate elements ($\snldup$) have to be split between $u$ and $u'$.
	Some of them were already placed in the queried positions of $u$ and $u'$ so the remaining ones have count $\snldupct_{u,u'}=\snldupct_u+\snldupct_{u'}$.
	Thus, under permutation $\pi$, the number of choices to pick the remaining dummy neighbors of $u$ is
	$\binom{\snlsize - \snlcount_{u,u'} - \snldupct_{u,u'}}{d^* - 1 -q_u - \snlcount_u-\snldupct_u}$.
	
	We also have to choose the core indices for $u$ and $u'$ which can be done in $d^*- q_u$ ways for $u$ and $\binom{d^*-q_{u'}}{n^{\delta}}$ ways for $u'$.
	Fixing this fixes the randomness of $\psi(u),\psi(u')$.
	The number of permutations $\sigma_C(u')$ is $(n^{\delta})!$ and $\sigma_C(u)$ is $1$.
	The number of permutations $\sigma_D(u')$ is $(d^*-n^{\delta}-q_{u'})!$ and $\sigma_D(u)$ is $(d^*-1-q_u)!$.
	This fixes the randomness of $\sigma_C,\sigma_D$ along with $\psi$.
	
	We note that all of these steps are independent of each other.
	This implies that the number of fixings for pair $(u,u')$ is
	\[
		f_{u,u'}^{\pi}:= \binom{\snlsize - \snlcount_{u,u'} - \snldupct_{u,u'}}{d^* - 1 -q_u - \snlcount_u-\snldupct_u}   \left[ \binom{d^*-q_{u'}}{n^{\delta}} (d^*- q_u) (n^{\delta})! (d^*-1-q_u)!  (d^*-n^{\delta}-q_{u'})!  \right].
	\]
	Thus, the size of $\xx_f^{\pi}$ and $\yy_f^{\pi}$ is:
	\[
		\prod_{(u,u') \in \ppair}	\binom{\snlsize - \snlcount_{u,u'} - \snldupct_{u,u'}}{d^* - 1 -q_u - \snlcount_u-\snldupct_u}  \left[ \binom{d^*-q_{u'}}{n^{\delta}} (d^*- q_u) (n^{\delta})! (d^*-1-q_u)!  (d^*-n^{\delta}-q_{u'})!  \right].
	\]
	
	We can do a similar process under the permutation $\pi'$.
	The only difference there is that $u \in A$ and $u' \in B$.
	Thus, we get that the size of $\xx_f^{\pi'}$ and $\yy_f^{\pi'}$ is:
	\[
	\prod_{(u,u') \in \ppair}	\binom{\snlsize- \snlcount_{u,u'} - \snldupct_{u,u'}}{d^* - 1 -q_{u'} - \snlcount_{u'}-\snldupct_{u'}}  \left[ \binom{d^*-q_{u}}{n^{\delta}} (d^*- q_{u'}) (n^{\delta})! (d^*-1-q_{u'})!  (d^*-n^{\delta}-q_{u})!  \right].
	\]
	These also imply that $\card{\xx_f} = \card{\yy_f}$.
\end{proof}

Recall that $\xx_f$ with $\pi$ has the same observed graph as $\yy_f$ with $\pi'$ (\Cref{clm:same-obs-graph-full}).
The same proof also gives us that $\xx_f$ with $\pi'$ has the same observed graph as $\yy_f$ with $\pi$.
Thus, we need to add edges of $\mm$ between $\xx_f^{\pi}$ and $\yy_f^{\pi'}$ (similarly for  $\xx_f^{\pi'}$ and $\yy_f^{\pi}$).
We will add an arbitrary matching between $\xx_f^{\pi}$ and $\yy_f^{\pi'}$ which is perfect on the smaller side (see \Cref{fig:dist}). This means that the size of the matching will be $\min \left( \card{\xx_f^{\pi}} , \card{\yy_f^{\pi'}} \right)$.
We will get a matching of the same size between $\xx_f^{\pi'}$ and $\yy_f^{\pi}$.
We now have to show that the size of the matching is large enough. To show that we will show that these sets are of a comparable size.
To prove this claim we first show that for any pair $(u,u') \in \ppair$ their number of possible fixings are similar under $\pi$ and $\pi'$.
\begin{claim}
	We have $\card{\yy_f^{\pi'}} = (1\pm o(1)) \card{\xx_f^{\pi}}$.
\end{claim}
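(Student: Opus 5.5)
The sizes $\card{\xx_f^{\pi}} = \card{\yy_f^{\pi}}$ and $\card{\xx_f^{\pi'}} = \card{\yy_f^{\pi'}}$ are given by the product formulas in the previous claim. It therefore suffices to compare, for each petal pair $(u,u') \in \ppair$, the factor $f_{u,u'}^{\pi}$ with its $\pi'$ counterpart $f_{u,u'}^{\pi'}$. The plan is to prove
\[
f_{u,u'}^{\pi'}/f_{u,u'}^{\pi} = 1 \pm o\paren{1/(n^{\eps+\delta}\log n)}
\]
uniformly over petal pairs. By \Cref{clm:no-edges-core}, $\card{\ppair} \le 4n^{\eps+\delta}\log n$. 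Multiplying these ratios therefore gives a total ratio $(1\pm o(1))$.

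For a single pair I would split the ratio into two parts. The first is the bracketed ``index'' part: the binomials for core positions, $(n^\delta)!$, the factors $(d^*-q)$, and the $\sigma_D$ factorials. The second is the binomial for splitting the remaining shared neighbor list.

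For the index part, swapping the roles of $u$ and $u'$ changes only which of $q_u,q_{u'}$ appears in each factor. Writing everything as falling factorials, the bracket equals
\[
(d^*-q_u)!\,(d^*-q_{u'})!
\]
up to factors symmetric in $u,u'$. Indeed, $\binom{d^*-q_{u'}}{n^\delta}(n^\delta)!(d^*-n^\delta-q_{u'})! = (d^*-q_{u'})!$ and $(d^*-q_u)(d^*-1-q_u)! = (d^*-q_u)!$. So this part is exactly symmetric, and its ratio is $1$.

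All the discrepancy thus lives in the binomial
\[
\binom{N}{d^*-1-q_u-\snlcount_u-\snldupct_u} \quad\text{versus}\quad \binom{N}{d^*-1-q_{u'}-\snlcount_{u'}-\snldupct_{u'}},
\]
where $N=\snlsize-\snlcount_{u,u'}-\snldupct_{u,u'}$ is the same in both. Using the identity $\snlsize = (d^*-1-q_u)+(d^*-n^\delta-q_{u'})$, the second bottom index equals $N$ minus something close to the first. So both are binomials of $N$ at points
\[
k_1 \approx N/2 + x \quad\text{and}\quad k_2 \approx N/2 - x + (\text{shift}),
\]
whose offsets from $N/2$ differ by $O(n^\delta + \Delta)$. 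Here $\Delta$ collects $\card{q_u-q_{u'}}$, $\card{\snlcount_u-\snlcount_{u'}}$ and the duplicate counts. The standard estimate
\[
\binom{N}{N/2+t}\big/\binom{N}{N/2+t'} = \exp\paren{O\paren{\card{t^2-t'^2}/N}}
\]
then bounds the ratio by $\exp(O((n^\delta+\Delta)^2/d^*))$, using $N=\Theta(d^*)=\Theta(n^{1-\delta})$.

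The inputs to bound $\Delta$ are the following.
\begin{itemize}
\item \Cref{clm:indeg-from D} gives $\snlcount_u,\snlcount_{u'} \le 4n^\eps$.
\item \Cref{clm:sample-overlap} controls the duplicate counts: roughly a quarter of the neighbors are duplicates, and the queried positions contain duplicates in proportion $\pm\sqrt{\log n}\,\cdot$ fluctuations.
\item The queries on petals are bounded by \Cref{clm:discovered-heavy-edges}.
\end{itemize}

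The main obstacle is that the crude bound $q_u\le 16n^{2\eps+\delta}\log^2 n$ alone would only give a per-pair error of about $n^{4\eps+2\delta}/n^{1-\delta}$. This is too large when multiplied over $n^{\eps+\delta}$ pairs. I would handle this with the dyadic tradeoff in \Cref{clm:discovered-heavy-edges}: at most $2^{-i}\cdot 8n^{2\eps+\delta}\log^2n$ petals have $q\ge 2^i$. Summing the per-pair errors
\[
\sum_{(u,u')} \frac{(q_u+q_{u'}+n^\delta)^2}{n^{1-\delta}}
\]
over dyadic classes, the class at $2^i$ contributes about
\[
2^{-i}\, n^{2\eps+\delta}\log^2 n \cdot \frac{2^{2i}}{n^{1-\delta}}.
\]
This is maximized at $2^i = 2^\beta$, giving $O(n^{4\eps+3\delta}\log^4 n/n)$. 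The low classes ($q\le n^\eps\log n$) together with the $n^\delta$ term contribute $n^{\eps+\delta}\log n \cdot (n^\eps\log n+n^\delta)^2/n^{1-\delta}$.

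I would then check that both totals are $o(1)$ under \Cref{constraint:matching-size}. This is exactly where I expect the constants and log powers, possibly with a sharper linear rather than quadratic treatment of the cross term, to require care. If the quadratic bound is too lossy, I would instead bound the log-ratio by the linear term $O(\card{t-t'}\cdot\card{t+t'}/N)$, using that $t+t'$ is only $O(\sqrt{N\log n})$ from the concentration of \Cref{clm:sample-overlap}.
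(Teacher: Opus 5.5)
Your skeleton matches the paper's. The bracketed index factor is exactly symmetric: your identity $(d^*-q_u)!\,(d^*-q_{u'})!$ is the paper's $r_2r_3=1$. Everything then reduces to $\binom{\kappa}{\lambda_{u'}}/\binom{\kappa}{\lambda_u}$, which is your $\binom{N}{k_2}/\binom{N}{k_1}$. The per-pair errors must also be summed over the dyadic classes of \Cref{clm:discovered-heavy-edges}, not made uniformly $o(1/(n^{\eps+\delta}\log n))$; that uniform target fails for petals with $q_u$ near $2^{\beta}$.

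The gap is in how you estimate that binomial ratio. Bounding $\card{t^2-t'^2}$ by $(n^\delta+\Delta)^2$ is valid but too lossy. As you compute, it leaves the terms $n^{4\eps+3\delta-1}\log^4 n$, $n^{3\eps+2\delta-1}\log^3 n$ and $n^{\eps+4\delta-1}\log n$. None of these is forced to be $o(1)$ by \Cref{constraint:matching-size}; for instance, $\eps=0.4,\ \delta=0.05$ gives $2\eps+3\delta<1<4\eps+3\delta$. You would therefore lose exactly the regime $\delta\to 0,\ \eps\to 1/2$ behind the $\widetilde{\Omega}(n^{1.5})$ bound. Your fallback does not repair this. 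Taking $\card{t+t'}=O(\sqrt{N\log n})$ from \Cref{clm:sample-overlap} and $\card{t-t'}=O(n^\delta+\Delta)$ gives a per-pair bound $O((n^\delta+\Delta)\sqrt{\log n/N})$. Summing that requires $2\eps+5\delta<1$ and $4\eps+3\delta<1$.

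The missing observation is an exact identity. Set $\lambda_u=d^*-1-q_u-\snlcount_u-\snldupct_u$ and define $\lambda_{u'}$ symmetrically. Then $\lambda_u+\lambda_{u'}=\kappa+n^\delta-1$. In your notation this means $t+t'=n^\delta-1$ exactly and $t-t'=\lambda_u-\lambda_{u'}$. Hence
\[
\frac{\binom{\kappa}{\lambda_{u'}}}{\binom{\kappa}{\lambda_u}}=\frac{\lambda_u!\,(\lambda_{u'}-n^\delta+1)!}{\lambda_{u'}!\,(\lambda_u-n^\delta+1)!}=\prod_{j=0}^{n^\delta-2}\frac{\lambda_u-j}{\lambda_{u'}-j}.
\]
This product has only $n^\delta-1$ factors. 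Each is $1\pm O(\card{\lambda_u-\lambda_{u'}}/d^*)$, since $\lambda_u,\lambda_{u'}=\Theta(d^*)$. So the per-pair log-ratio is $O(n^{2\delta-1}\card{\lambda_u-\lambda_{u'}})$, which is linear in the discrepancy.

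Compare the factors pairwise here. If you replace every numerator by $\lambda_u-n^\delta+2$ and every denominator by $\lambda_{u'}$, as the paper's displayed chain does, you reintroduce an additive $n^\delta$ per pair. That brings back the uncontrolled $n^{\eps+4\delta-1}\log n$.

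Second, you need $\card{\lambda_u-\lambda_{u'}}\le q_u+q_{u'}+O(n^\eps)$. This rests on $\card{\snldupct_u-\snldupct_{u'}}\le q_u+q_{u'}$. Both remaining duplicate counts come from the same intersection of the two $D$-samples. They differ only by the duplicates already placed in queried positions; this is the paper's step $\snldupct_u\le\snldupct_{u'}+q_{u'}$. If you bound each count separately by the window of \Cref{clm:sample-overlap}, you leave a $\sqrt{\log n}\,n^{(1-\delta)/2}$ discrepancy and the sum breaks.

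With both points in place,
\[
\sum_{(u,u')\in\ppair} n^{2\delta-1}\bigl(q_u+q_{u'}+O(n^\eps)\bigr)\le n^{2\delta-1}\cdot O\bigl(2^{\beta}\log n+2^{\alpha}n^{\eps+\delta}\log n\bigr)=O\bigl(n^{2\eps+3\delta-1}\log^3 n\bigr)=o(1).
\]
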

\begin{proof}
	Without loss of generality let $\card{\xx_f^{\pi}} \geq \card{\yy_f^{\pi'}}$. Then we have to show $\card{\yy_f^{\pi'}} \geq (1-o(1)) \card{\xx_f^{\pi}}$.
	We will first bound $f_{u,u'}^{\pi'}/f_{u,u'}^{\pi}$ for any $(u,u') \in \ppair$ and then use that to bound $\card{\yy_f^{\pi'}}/ \card{\xx_f^{\pi}}$.
We have the following bounds:
\begin{itemize}
	\item $\card{\snl_{u,u'}} =\snlsize= 2d^* - q_u - q_{u'} - n^{\delta} - 1$.
	\item $\snlcount_u, \snlcount_{u'} \leq 4n^{\eps}$  (by \Cref{clm:indeg-from D}).
	\item $\card{\snlobs}= \snlcount_{u,u'} = \snlcount_u + \snlcount_{u'}$ since the observed edges belong to either $u$ or $u'$.
	\item $n^{1-\delta}/2 - 6 \sqrt{\log n} \cdot n^{(1-\delta)/2} -q_u -q_{u'} \leq \snldupct_{u,u'} \leq n^{1-\delta}/2 + 6 \sqrt{\log n} \cdot n^{(1-\delta)/2}$.
	\item $n^{1-\delta}/4 - 3 \sqrt{\log n} \cdot n^{(1-\delta)/2} -q_u \leq \snldupct_{u} \leq n^{1-\delta}/4 + 3 \sqrt{\log n} \cdot n^{(1-\delta)/2}$ (similar for $\snldupct_{u'}$).
	\item $\alpha= \log (n^{\eps} \log n)$ and $\beta = \log (16 n^{2\eps+\delta} \log^2 n)$ (defined in \Cref{clm:discovered-heavy-edges}).
%	\item Let $\tau = 2^i$ be the smallest power of $2$ such that $\tau \geq q_u$, for some $i \in [\alpha, \beta]$ (\Cref{clm:discovered-heavy-edges}).
	
\end{itemize}

We define the following:
\begin{align*}
	\kappa &:= \snlsize - \snlcount_{u,u'} -\snldupct_{u,u'} \leq 2d^*, \\
	\lambda_u &:= d^* - 1 - q_u - \snlcount_u -\snldupct_u , \\
	\lambda_{u'} &:= d^* - 1 - q_{u'} - \snlcount_{u'} -\snldupct_{u'}, \\
	\zeta &:= 3 \sqrt{\log n} \cdot n^{(1-\delta)/2}.
\end{align*}

	The values of $f_{u,u'}^{\pi}$ and $f_{u,u'}^{\pi'}$ are:
	\[
	f_{u,u'}^{\pi}:= \binom{\snlsize - \snlcount_{u,u'} - \snldupct_{u,u'}}{d^* - 1 -q_u - \snlcount_u-\snldupct_u}   \left[ \binom{d^*-q_{u'}}{n^{\delta}} (d^*- q_u) (n^{\delta})! (d^*-1-q_u)!  (d^*-n^{\delta}-q_{u'})!  \right].
	\]
	\[
	f_{u,u'}^{\pi'}:=\binom{\snlsize- \snlcount_{u,u'} - \snldupct_{u,u'}}{d^* - 1 -q_{u'} - \snlcount_{u'}-\snldupct_{u'}}  \left[ \binom{d^*-q_{u}}{n^{\delta}} (d^*- q_{u'}) (n^{\delta})! (d^*-1-q_{u'})!  (d^*-n^{\delta}-q_{u})!  \right].
	\]

	Our goal is to lower bound the ratio
	$r := \frac{f_{u,u'}^{\pi'}}{f_{u,u'}^{\pi}}$.
	To do this, we decompose $r$ into three parts:
	$
	r = r_1 \cdot r_2 \cdot r_3 ,
	$
	where:
	\begin{itemize}
		\item $r_1$ is the ratio of the first binomial terms:
		\[
		r_1 := \frac{\binom{\kappa}{\lambda_{u'}}}{\binom{\kappa}{\lambda_u}},
		\]
		\item $r_2$ is the ratio of the second binomial terms and the last two terms:
		\[
		r_2 :=  \frac{\binom{d^* - q_u}{n^{\delta}} }{\binom{d^* - q_{u'}}{n^{\delta}}} \cdot \frac{ (d^*-1-q_{u'})!  (d^*-n^{\delta}-q_{u})! }{(d^*-1-q_u)!  (d^*-n^{\delta}-q_{u'})!}.
		\]
		\item $r_3$ is the ratio of the remaining two terms:
		\[
		r_3 :=\frac{ (d^* - q_{u'})\cdot (n^{\delta})!}{(d^* - q_u)\cdot (n^{\delta})!}.
		\]
	\end{itemize}
	We will lower bound each of them separately to get a bound on $r$.
We first show that $\kappa$ is always between $d^* - o(d^*)$ and $d^* +2\zeta$. First consider the lower bound:
\[
\kappa = 2d^* - q_u -q_{u'}-n^{\delta} - 1 - \snlsize- \snldupct_{u,u'} \geq 2d^* -2^{\beta+1} -1 - n^\delta - 8 n^{\eps} - d^* -2\zeta \geq d^* -3\cdot 2^{\beta} -2\zeta \geq d^* -o(d^*)
\]
where the last inequality holds because $2^\beta +\zeta=o(d^*)$ (\Cref{constraint:matching-size}).
We now show the upper bound:
 \[
 \kappa = 2d^* - q_u -q_{u'}-n^{\delta} - 1 - \snlsize- \snldupct_{u,u'} \leq 2d^* - q_u -q_{u'} - (d^* -2\zeta - q_u -q_{u'}) \leq d^* +2\zeta
 \]
where the last inequality holds because $\zeta=o(d^*)$.
%To bound $r_1$ from below, we first need a lower bound on $\lambda_u$ in terms of $\kappa$. We proceed as follows:
%	\begin{align*}
%	\lambda_u &= d^* -1 -q_u  -\snlcount_{u} -\snldupct_u \\ 
%	&\geq d^*-1 -q_u - \snlcount_{u} - d^*/2 -  \zeta \\
%	&= \kappa/2 + d^*/2 -1 -q_u  -\snlcount_{u} - \zeta - \kappa/2 \\
%	&\geq \kappa/2 +d^*/2 -2\tau -\zeta -d^*/2 -\zeta \\
%	&= \kappa/2 -2\tau -2\zeta.
%	\end{align*}
	We now start by lower bounding $r_1$:
	\begin{align*}
		r_1 &= \binom{\kappa}{\lambda_{u'}} \binom{\kappa}{\lambda_{u}}^{-1} \\
		&= \frac{\lambda_{u}! \cdot (\kappa - \lambda_{u})!}{\lambda_{u'} ! \cdot (\kappa - \lambda_{u'})!} \\
		&= \frac{\lambda_{u}! \cdot (\lambda_{u'} - (n^{\delta} -1))!}{(\lambda_{u}-(n^{\delta}-1)) ! \cdot (\lambda_{u'})!} \tag{By definition of $\lambda_u,\lambda_{u'}$} \\
		&= \frac{(\lambda_u) \cdot (\lambda_u-1) \ldots (\lambda_u - n^{\delta} +2)}{(\lambda_{u'}) \cdot (\lambda_{u'}-1) \ldots (\lambda_{u'} - n^{\delta} +2)} \\
		&\geq \left( \frac{\lambda_u -n^{\delta}+2}{\lambda_{u'}} \right)^{n^\delta -1} \\
		&\geq \left( \frac{d^* -q_u -n^{\delta} - \snlcount_u -\snldupct_u }{d^* - \snldupct_{u'}} \right)^{n^\delta} \\
		&\geq \left( \frac{d^* -q_u -n^{\delta} - \snlcount_u -\snldupct_{u'} - q_{u'} }{d^* - \snldupct_{u'}} \right)^{n^\delta} \tag{$\snldupct_{u} \leq \snldupct_{u'} + q_{u'}$} \\
		&= \left(1 - \frac{ q_u +q_{u'} + n^{\delta} + 4n^{\eps} }{d^* - \snldupct_{u'}} \right)^{n^\delta} \tag{$\snlcount_u \leq 4n^{\eps}$} \\
		&\geq \exp \left( - 16 (q_u +q_{u'} + n^{\delta} + 4n^{\eps}) \cdot n^{2 \delta-1} \right) \tag{Since $1-x\geq \exp(-2x)$ for $x \in [0,1/2]$}
	\end{align*}
	where the last inequality holds because $d^* - \snldupct_{u'} \geq d^*/4$.
	We now consider $r_2$:	
	\begin{align*}
		r_2 &= \frac{\binom{d^* - q_u}{n^{\delta}} }{\binom{d^* - q_{u'}}{n^{\delta}}} \cdot \frac{ (d^*-1-q_{u'})! \cdot (d^*-n^{\delta}-q_{u})! }{(d^*-1-q_u)!  \cdot (d^*-n^{\delta}-q_{u'})!} \\
		&=\frac{(d^* - q_u)! \cdot (d^*-q_{u'} - n^{\delta})! }{(d^* - q_{u'})! \cdot (d^*-q_u - n^{\delta})!} \cdot \frac{ (d^*-1-q_{u'})! \cdot (d^*-n^{\delta}-q_{u})! }{(d^*-1-q_u)! \cdot (d^*-n^{\delta}-q_{u'})!} \\
		&=\frac{d^*-q_u}{d^*-q_{u'}}.
	\end{align*}
	Notice that $r_3$ is exactly the reciprocal of $r_2$:
	\begin{align*}
		r_3 &= \frac{ (d^* - q_{u'})\cdot (n^{\delta})!}{(d^* - q_u)\cdot (n^{\delta})!} = \frac{d^*-q_{u'}}{d^*-q_{u}}.
	\end{align*}
	Thus, we get $r_2 \cdot r_3=1$. This happens because in both cases we have one adjacency list with $q_u$ filled slots and the other with $q_{u'}$ filled slots. Therefore, the number of permutations stays the same.
	Hence, we get $r = r_1 = \exp( - 8 (q_u +q_{u'} + n^{\delta} + 4n^{\eps}) \cdot n^{2 \delta-1})$.
	Note that $q_u$ and $q_{u'}$ are the only terms that depend on the petal pair.
	The final part is to show that $r_f:=\card{\yy_f^{\pi'}} \left( \card{\xx_f^{\pi}} \right)^{-1} \geq (1-o(1))$:
	\begin{align*}
		r_f &=\card{\yy_f^{\pi'}} \left( \card{\xx_f^{\pi}} \right)^{-1} \\
		&= \prod_{(u,u') \in \ppair} f_{u,u'}^{\pi'}/f_{u,u'}^{\pi} \\
		&\geq \exp \left( - \sum_{(u,u') \in \ppair}  16 (q_u +q_{u'} + n^{\delta} + 4n^{\eps}) \cdot n^{2 \delta-1} \right) \\
		&\geq \exp \left(-16 n^{3\delta-1} -64 n^{2\delta+\eps-1}  - 16 n^{2 \delta-1} \sum_{(u,u') \in \ppair}   q_u +q_{u'}  \right)
	\end{align*}
	
	To prove this, we need to upper bound the sum of queries over petals.
	We partition $\ppair$ into different chunks based on the number of queries on the petals.
	For all $i \in [\alpha,\beta]$ chunk $i$ contains all the petal pairs $(u,u')$ with $q_u$ between $2^{i}$ and $2^{i+1}$. We also define chunk $0$ to contain all the petal pairs $(u,u')$ with $q_u$ at most $2^{\alpha}$.
	The number of petal pairs in chunk $i$ is at most $2^{-i} \cdot 8n^{2\eps+\delta} \log^2 n$ since that is an upper bound on the number of $2^i$-heavy petal vertices \Cref{clm:discovered-heavy-edges}.
	For chunk $i$ we have the following:
%	Note that to prove a lower bound, we want to upper bound the number of terms because we are multiplying numbers smaller than $1$.
%	For chunk $i$ we have $\tau=2^{i+1}$ giving us:
	\begin{align*}
		s^{(i)} &:= \sum_{(u,u') \in \text{chunk } i} q_u \leq 2^{i+1} \cdot (2^{-i} \cdot 2^{\beta-1}) =2^{\beta}.
	\end{align*}
	For chunk $0$ we bound the number of queries with $2^\alpha$ and the number of petal pairs is at most $4 n^{\eps +\delta} \log n$. 
	This is because that is the bound on the number of observed edges (\Cref{clm:no-edges-core}).
	This gives us a bound for chunk $0$:
	\begin{align*}
		s^{(0)} &:= \sum_{(u,u') \in \text{chunk } 0} q_u \leq 2^{\alpha} \cdot 4 n^{\eps+\delta} \log n.
	\end{align*}
	This gives us $\sum_{(u,u') \in \ppair} q_u \leq 2^{\alpha+2} \cdot n^{\eps+\delta} \log n + 2^{\beta} \log n$.
	Thus, we finally have:
	\begin{align*}
		r_f &\geq \exp \left(-16 n^{3\delta-1} -64 n^{2\delta+\eps-1}  - 16 n^{2 \delta-1} \sum_{(u,u') \in \ppair}   q_u +q_{u'}  \right) \\
		&\geq \exp \left(-16 n^{3\delta-1} -64 n^{2\delta+\eps-1}  - 16 n^{2 \delta-1} \cdot (2^{\beta+1} \log n + 2^{\alpha+3} \cdot n^{\eps+\delta} \log n )  \right) \\
		&= \exp \left(- \Theta(n^{3\delta-1} +  n^{2\delta+\eps-1}  + n^{3 \delta +2 \eps -1} \log^3 n  ) \right) \\
		&= (1-o(1))
	\end{align*}
	where the last inequality holds because of \Cref{constraint:matching-size}.
\end{proof}

This finally implies that the matching $\mm$ has a large size.
\begin{proof}[Proof of \Cref{lem:large-matching-coupling}]
	We divide $\xx$ and $\yy$ into blocks by fixing some part of the randomness.
	$\xx_f$ and $\yy_f$ represent blocks under some fixing.
	We add an arbitrary matching between $\xx_f^{\pi}$ and $\yy_f^{\pi'}$ of size $\min( \card{\xx_f^{\pi}} , \card{\yy_f^{\pi'}} )$.
	Similarly, we add an arbitrary matching between $\xx_f^{\pi'}$ and $\yy_f^{\pi}$ of size $\min( \card{\xx_f^{\pi'}} , \card{\yy_f^{\pi}} )$.
	These describe the edges of $\mm$.
	
	The size of $\mm$ is exactly equal to the number of matched nodes in $\xx$.
	We have ignored $(1-o(1))$ fraction of the nodes in $\xx$ because of high probability events.
	For the remaining nodes, each of them belong in some final block $\xx_f$.
	For every such block we have matched between $\card{\xx_f}$ and $(1-o(1)) \card{\xx_f}$ nodes.
	Thus, $\card{\mm} \geq (1-o(1)) \cdot (1-o(1)) \card{\xx} = (1-o(1)) \card{\xx}$.
\end{proof}

%% file: coupling_fig2.tex
\begin{tikzpicture}[xscale=0.8, yscale=1.4, every node/.style={scale=1.15}]
	
	% =========================
	% X PIPELINE (nested shading)
	% =========================
	
	% X initial (largest region shaded)
	\begin{scope}[shift={(0,0)}]
		\fill[colorSeven!15, rounded corners] (-1.6,-2.2) rectangle (1.6,2.2);
		\draw[thick, rounded corners] (-1.6,-2.2) rectangle (1.6,2.2);
	\end{scope}
	
	% Core Edges
	\begin{scope}[shift={(5.0,0)}]
		\fill[colorSeven!35, rounded corners] (-1.3,0.3) rectangle (1.3,1.9);
		\draw[thick, rounded corners] (-1.3,0.3) rectangle (1.3,1.9);
		\draw[thick, rounded corners] (-1.3,-1.9) rectangle (1.3,-0.3);
		\draw[thick, rounded corners] (-1.6,-2.2) rectangle (1.6,2.2);
	\end{scope}
	
	% Labeling (smaller top portion shaded)
	\begin{scope}[shift={(10.0,0)}]
		\fill[colorSeven!65] (-1.3,1.1) rectangle (1.3,1.9);
		\draw[thick, rounded corners] (-1.3,0.3) rectangle (1.3,1.9);
		\draw[thick, rounded corners] (-1.3,-1.9) rectangle (1.3,-0.3);
		\draw[opacity=0.6] (-1.3,1.1) -- (1.3,1.1);
		\draw[thick, rounded corners] (-1.6,-2.2) rectangle (1.6,2.2);
	\end{scope}
	
	% Adj List (smallest region shaded)
	\begin{scope}[shift={(15.0,0)}]
		\fill[colorSeven!85] (-1.3,1.5) rectangle (1.3,1.9);
		\draw[thick, rounded corners] (-1.3,0.3) rectangle (1.3,1.9);
		\draw[thick, rounded corners] (-1.3,-1.9) rectangle (1.3,-0.3);
		\draw[opacity=0.6] (-1.3,1.1) -- (1.3,1.1);
		\draw[dashed,opacity=0.6] (-1.3,1.5) -- (1.3,1.5);
		\draw[thick, rounded corners] (-1.6,-2.2) rectangle (1.6,2.2);
	\end{scope}
	
	% X arrows
%	\draw[->, thick] (1.6,0) -- (3.4,0);
\draw[->, thick]
(1.6,0) -- (3.4,0)
node[midway, above=3pt] {$\gcno$};

\draw[->, thick]
(6.6,0) -- (8.4,0)
node[midway, above=3pt] {$\pi$};

\draw[->, thick]
(11.6,0) -- (13.4,0)
node[midway, above=3pt] {$\ell_1$};

%	\draw[->, thick] (6.6,0) -- (8.4,0);
%	\draw[->, thick] (11.6,0) -- (13.4,0);
	
	\node at (0,-3) {$\xx$};
	\node at (5,-3) {Core Edges};
	\node at (10,-3) {Labels};
	\node at (15,-3) {Partial Adj.\ List};

	% =========================
	% Y PIPELINE (nested shading, unequal size)
	% =========================
	
	% Adj List (smallest region shaded)
	\begin{scope}[shift={(20.0,0)}]
		\fill[colorTwo!85] (-1.3,1.65) rectangle (1.3,2.0);
		\draw[thick, rounded corners] (-1.3,0.2) rectangle (1.3,2.0);
		\draw[thick, rounded corners] (-1.3,-2.0) rectangle (1.3,-0.2);
		\draw[opacity=0.6] (-1.3,1.2) -- (1.3,1.2);
		\draw[dashed,opacity=0.6] (-1.3,1.65) -- (1.3,1.65);
		\draw[thick, rounded corners] (-1.6,-2.2) rectangle (1.6,2.2);
	\end{scope}
	
	% Labeling
	\begin{scope}[shift={(25.0,0)}]
		\fill[colorTwo!65] (-1.3,1.2) rectangle (1.3,2.0);
		\draw[thick, rounded corners] (-1.3,0.2) rectangle (1.3,2.0);
		\draw[thick, rounded corners] (-1.3,-2.0) rectangle (1.3,-0.2);
		\draw[opacity=0.6] (-1.3,1.2) -- (1.3,1.2);
		\draw[thick, rounded corners] (-1.6,-2.2) rectangle (1.6,2.2);
	\end{scope}
	
	% Core Edges
	\begin{scope}[shift={(30.0,0)}]
		\fill[colorTwo!35, rounded corners] (-1.3,0.2) rectangle (1.3,2.0);
		\draw[thick, rounded corners] (-1.3,0.2) rectangle (1.3,2.0);
		\draw[thick, rounded corners] (-1.3,-2.0) rectangle (1.3,-0.2);
		\draw[thick, rounded corners] (-1.6,-2.2) rectangle (1.6,2.2);
	\end{scope}
	
	% Y initial
	\begin{scope}[shift={(35.0,0)}]
	\fill[colorTwo!15, rounded corners] (-1.6,-2.2) rectangle (1.6,2.2);
		\draw[thick, rounded corners] (-1.6,-2.2) rectangle (1.6,2.2);
	\end{scope}
	
	% Y arrows
	
	\draw[->, thick]
	(33.4,0) -- (31.6,0)
	node[midway, above=3pt] {$\gcyes$};
	
	\draw[->, thick]
	(28.4,0) -- (26.6,0)
	node[midway, above=3pt] {$\pi'$};
	
	\draw[->, thick]
	(23.4,0) -- (21.6,0)
	node[midway, above=3pt] {$\ell_2$};

%	\draw[->, thick] (33.4,0) -- (31.6,0);
%	\draw[->, thick] (28.4,0) -- (26.6,0);
%	\draw[->, thick] (23.4,0) -- (21.6,0);
	
	\node at (35,-3) {$\yy$};
	\node at (30,-3) {Core Edges};
	\node at (25,-3) {Labels};
	\node at (20,-3) {Partial Adj.\ List};

	% =========================
	% Common Observed Graph
	% =========================
	
	\node at (17.5,2.8) {$\gobs$};
	
	\draw[->, thick] (15,1.8) -- (16.8,2.68);
	\draw[->, thick] (20,1.8) -- (18.2,2.65);
	
\end{tikzpicture}

%% file: distribution_coupling.tex
\begin{tikzpicture}[node distance=0.3cm and 1.8cm, every node/.style={font=\small}]
	
	% Box dimensions and styling
	\def\boxwidth{1.8}
	\def\boxheight{1.2}
	\def\inset{0.4}
	\def\subarrowopacity{0.7}
	\def\subdivideropacity{0.6}
	
	% === Left Blocks ===
	\node[draw, rounded corners, thick, minimum width=\boxwidth cm, minimum height=\boxheight cm] (AL) at (0,3.0) {};
	\node[draw, rounded corners, thick, minimum width=\boxwidth cm, minimum height=\boxheight cm, below=of AL] (BL) {};
	\node[draw, rounded corners, thick, minimum width=\boxwidth cm, minimum height=\boxheight cm, below=of BL] (CL) {};
	
	% === Right Blocks ===
	\node[draw, rounded corners, thick, minimum width=\boxwidth cm, minimum height=\boxheight cm, right=of AL] (AR) {};
	\node[draw, rounded corners, thick, minimum width=\boxwidth cm, minimum height=\boxheight cm, below=of AR] (BR) {};
	\node[draw, rounded corners, thick, minimum width=\boxwidth cm, minimum height=\boxheight cm, below=of BR] (CR) {};
	
	% === External Labels ===
	\node[left=6.5pt of AL] {${\gobs}^{(1)}$};
	\node[left=6.5pt of BL] {${\gobs}^{(2)}$};
	\node[left=6.5pt of CL] {${\gobs}^{(3)}$};
	
	\node[right=6.5pt of AR] {${\gobs}^{(1)}$};
	\node[right=6.5pt of BR] {${\gobs}^{(2)}$};
	\node[right=6.5pt of CR] {${\gobs}^{(3)}$};
	
	% ----------------------
	% Sub-block division lines
	% Pair (1): AL has 3, AR has 4
	\foreach \t in {0.3333,0.6667} {
		\draw[opacity=\subdivideropacity] ($ (AL.north)!\t!(AL.south) + (-0.9,0) $) -- ++(1.8,0);
	}
	\foreach \t in {0.25,0.5,0.75} {
		\draw[opacity=\subdivideropacity] ($ (AR.north)!\t!(AR.south) + (-0.9,0) $) -- ++(1.8,0);
	}
	
	% Pair (2): BL has 4, BR has 3
	\foreach \t in {0.25,0.5,0.75} {
		\draw[opacity=\subdivideropacity] ($ (BL.north)!\t!(BL.south) + (-0.9,0) $) -- ++(1.8,0);
	}
	\foreach \t in {0.3333,0.6667} {
		\draw[opacity=\subdivideropacity] ($ (BR.north)!\t!(BR.south) + (-0.9,0) $) -- ++(1.8,0);
	}
	
	% Pair (3): CL has 3, CR has 3
	\foreach \t in {0.3333,0.6667} {
		\draw[opacity=\subdivideropacity] ($ (CL.north)!\t!(CL.south) + (-0.9,0) $) -- ++(1.8,0);
		\draw[opacity=\subdivideropacity] ($ (CR.north)!\t!(CR.south) + (-0.9,0) $) -- ++(1.8,0);
	}

	% ----------------------
	% Subblock matchings (lighter arrows)
	% Pair (1): AL (3) -> AR (4)
	
	\draw[->, semithick, color=colorTwo, opacity=\subarrowopacity]
	($ (AL.north)!0.1667!(AL.south) + (\inset,0) $) --
	($ (AR.north)!0.125!(AR.south) + (-\inset,0) $);
	
	\draw[->, semithick, color=colorTwo, opacity=\subarrowopacity]
	($ (AL.north)!0.5!(AL.south) + (\inset,0) $) --
	($ (AR.north)!0.6!(AR.south) + (-\inset,0.25) $);
	
	\draw[->, semithick, color=colorTwo, opacity=\subarrowopacity]
	($ (AL.north)!0.8333!(AL.south) + (\inset,0) $) --
	($ (AR.north)!1.03!(AR.south) + (-\inset,0.5) $);
	
	% ----------------------
	% Pair (2): BL (4) -> BR (3)
	
	\draw[->, semithick, color=colorTwo, opacity=\subarrowopacity]
	($ (BL.north)!0.025!(BL.south) + (\inset,-0.5) $) --
	($ (BR.north)!0.1667!(BR.south) + (-\inset,0) $);
	
	\draw[->, semithick, color=colorTwo, opacity=\subarrowopacity]
	($ (BL.north)!0.45!(BL.south) + (\inset,-0.25) $) --
	($ (BR.north)!0.5!(BR.south) + (-\inset,0) $);
	
	\draw[->, semithick, color=colorTwo, opacity=\subarrowopacity]
	($ (BL.north)!0.875!(BL.south) + (\inset,0) $) --
	($ (BR.north)!0.8333!(BR.south) + (-\inset,0) $);

	% Pair (3): CL (3) -> CR (3)
	\foreach \p in {0.1667, 0.5, 0.8333} {
		\draw[->, semithick, color=colorTwo, opacity=\subarrowopacity]
		($ (CL.north)!\p!(CL.south) + (\inset,0) $) --
		($ (CR.north)!\p!(CR.south) + (-\inset,0) $);
	}

	% Extra: CL (2) -> CR (1)

	\draw[->, semithick, color=red, opacity=\subarrowopacity]
	($ (BL.north)!0.125!(BL.south) + (\inset,0) $) --
($ (AR.north)!0.875!(AR.south) + (-\inset,0) $);

	% ----------------------
	% Enclosing rectangles
	\node[draw, rounded corners, fit=(AL)(BL)(CL), inner sep=8pt, label={[xshift=0cm]$\xx$}] {};
	\node[draw, rounded corners, fit=(AR)(BR)(CR), inner sep=8pt, label={[xshift=0cm]$\yy$}] {};
	
	\node at ($ (AL.north)!0.20!(AL.south) + (1.8,0.5) $) {$\mm^*$};

	\node[below=0.8cm of CL] {\small NO};
	\node[below=0.8cm of CR] {\small YES};	
	
	%%%%%%%%%%%%% Legend 	%%%%%%%%%%%%%
	
	\node[below=1.4cm of CL, xshift=2.5cm] {
		\tikz{\draw[->, color=colorTwo, line width=1pt] (0,0)--(0.8,0);} indistinguishable ($\mm$)
		\hspace{1em}
		\tikz{\draw[->, color=red, line width=1pt] (0,0)--(0.8,0);} distinguishable ($\mm'$)
	};

\end{tikzpicture}

%% file: distribution.tex
\begin{tikzpicture}[every node/.style={font=\small}, node distance=0.3cm and 1.8cm]
	
	% Box dimensions and styling
	\def\boxwidth{1.8}
	\def\boxheight{1.2}
	\def\inset{0.4}
	\def\subarrowopacity{0.5}
	\def\subdivideropacity{0.6}
	
	% === Left Blocks ===
	\node[draw, rounded corners, thick, minimum width=\boxwidth cm, minimum height=\boxheight cm] (AL) at (0,3.0) {};
	\node[draw, rounded corners, thick, minimum width=\boxwidth cm, minimum height=\boxheight cm, below=of AL] (BL) {};
	\node[draw, rounded corners, thick, minimum width=\boxwidth cm, minimum height=\boxheight cm, below=of BL] (CL) {};
	
	% === Right Blocks ===
	\node[draw, rounded corners, thick, minimum width=\boxwidth cm, minimum height=\boxheight cm, right=of AL] (AR) {};
	\node[draw, rounded corners, thick, minimum width=\boxwidth cm, minimum height=\boxheight cm, below=of AR] (BR) {};
	\node[draw, rounded corners, thick, minimum width=\boxwidth cm, minimum height=\boxheight cm, below=of BR] (CR) {};
	
	% === External Labels ===
	\node[left=6.5pt of AL] {${\gcno}^{(1)}$};
	\node[left=6.5pt of BL] {${\gcno}^{(2)}$};
	\node[left=6.5pt of CL] {${\gcno}^{(3)}$};
	
	\node[right=6.5pt of AR] {${\gcyes}^{(1)}$};
	\node[right=6.5pt of BR] {${\gcyes}^{(2)}$};
	\node[right=6.5pt of CR] {${\gcyes}^{(3)}$};
	
	% === Sub-block division lines (slightly darker)
	\foreach \block in {AL, AR, BL, BR, CL, CR} {
		\draw[opacity=\subdivideropacity] ($(\block.north)!0.25!(\block.south) + (-0.9,0)$) -- ++(1.8,0);
		\draw[opacity=\subdivideropacity] ($(\block.north)!0.50!(\block.south) + (-0.9,0)$) -- ++(1.8,0);
		\draw[opacity=\subdivideropacity] ($(\block.north)!0.75!(\block.south) + (-0.9,0)$) -- ++(1.8,0);
	}
	
	% === Subblock matchings (lighter arrows)
	\foreach \p in {0.125, 0.375, 0.625, 0.875} {
		\draw[->, semithick, color=colorOne, opacity=\subarrowopacity] 
		($ (AL.north)!\p!(AL.south) + (\inset,0) $) -- 
		($ (AR.north)!\p!(AR.south) + (-\inset,0) $);
		
		\draw[->, semithick, color=colorTwo, opacity=\subarrowopacity] 
		($ (BL.north)!\p!(BL.south) + (\inset,0) $) -- 
		($ (BR.north)!\p!(BR.south) + (-\inset,0) $);
		
		\draw[->, semithick, color=colorSeven, opacity=\subarrowopacity] 
		($ (CL.north)!\p!(CL.south) + (\inset,0) $) -- 
		($ (CR.north)!\p!(CR.south) + (-\inset,0) $);
	}
	
	% === Block-level matchings
	\draw[very thick, ->, >=latex, color=colorOne]   (AL.east) -- (AR.west);
	\draw[very thick, ->, >=latex, color=colorTwo]   (BL.east) -- (BR.west);
	\draw[very thick, ->, >=latex, color=colorSeven] (CL.east) -- (CR.west);
	
	% === Enclosing rectangles
	\node[draw, rounded corners, fit=(AL)(BL)(CL), inner sep=8pt, label={[xshift=0cm]$\xx$}] {};
	\node[draw, rounded corners, fit=(AR)(BR)(CR), inner sep=8pt, label={[xshift=0cm]$\yy$}] {};
	
	\node at ($ (AL.north)!0.20!(AL.south) + (1.8,0.3) $) {$\mm$};
	
\end{tikzpicture}

%% file: distribution2.tex
\begin{tikzpicture}[every node/.style={font=\small}, node distance=0.3cm and 1.8cm]
	
	% Box dimensions and styling
	\def\boxwidth{1.8}
	\def\boxheight{1.2}
	\def\inset{0.4}
	\def\subarrowopacity{0.8}
	\def\subdivideropacity{0.6}
	
	% === Left Blocks ===
	\node[draw, rounded corners, thick, minimum width=\boxwidth cm, minimum height=\boxheight cm] (AL) at (0,3.0) {};
	\node[draw, rounded corners, thick, minimum width=\boxwidth cm, minimum height=\boxheight cm, below=of AL] (BL) {};
	\node[draw, rounded corners, thick, minimum width=\boxwidth cm, minimum height=\boxheight cm, below=of BL] (CL) {};
	
	% === Right Blocks ===
	\node[draw, rounded corners, thick, minimum width=\boxwidth cm, minimum height=\boxheight cm, right=of AL] (AR) {};
	\node[draw, rounded corners, thick, minimum width=\boxwidth cm, minimum height=\boxheight cm, below=of AR] (BR) {};
	\node[draw, rounded corners, thick, minimum width=\boxwidth cm, minimum height=\boxheight cm, below=of BR] (CR) {};
	
	% === External Labels ===
	\node[left=6.5pt of AL] {$\xx_f^1$};
	\node[left=6.5pt of BL] {$\xx_f^2$};
	\node[left=6.5pt of CL] {$\xx_f^3$};
	\node[right=6.5pt of AR] {$\yy_f^1$};
	\node[right=6.5pt of BR] {$\yy_f^2$};
	\node[right=6.5pt of CR] {$\yy_f^3$};
	
	% === Block-level matchings (background)
	\draw[very thick, ->, >=latex, color=colorOne, opacity=0.4]   (AL.east) -- (AR.west);
	\draw[very thick, ->, >=latex, color=colorTwo, opacity=0.4]   (BL.east) -- (BR.west);
	\draw[very thick, ->, >=latex, color=colorSeven, opacity=0.4] (CL.east) -- (CR.west);
	
	% === Sub-block division lines
	\foreach \block in {AL, AR} {
		\draw[opacity=\subdivideropacity] ($(\block.north)!0.40!(\block.south) + (-0.9,0)$) -- ++(1.8,0);
	}
	\foreach \block in {BL, BR} {
		\draw[opacity=\subdivideropacity] ($(\block.north)!0.65!(\block.south) + (-0.9,0)$) -- ++(1.8,0);
	}
	\foreach \block in {CL, CR} {
		\draw[opacity=\subdivideropacity] ($(\block.north)!0.55!(\block.south) + (-0.9,0)$) -- ++(1.8,0);
	}
	
	% === Crossed Subblock Matchings (foreground)
	
	% A (40/60)
	\draw[->, semithick, color=colorOne, opacity=\subarrowopacity]
	($ (AL.north)!0.20!(AL.south) + (\inset,0) $) -- 
	($ (AR.north)!0.80!(AR.south) + (-\inset,0) $);
	\draw[->, semithick, color=colorOne, opacity=\subarrowopacity]
	($ (AL.north)!0.80!(AL.south) + (\inset,0) $) -- 
	($ (AR.north)!0.20!(AR.south) + (-\inset,0) $);
	
	% B (65/35)
	\draw[->, semithick, color=colorTwo, opacity=\subarrowopacity]
	($ (BL.north)!0.325!(BL.south) + (\inset,0) $) -- 
	($ (BR.north)!0.825!(BR.south) + (-\inset,0) $);
	\draw[->, semithick, color=colorTwo, opacity=\subarrowopacity]
	($ (BL.north)!0.825!(BL.south) + (\inset,0) $) -- 
	($ (BR.north)!0.325!(BR.south) + (-\inset,0) $);
	
	% C (55/45)
	\draw[->, semithick, color=colorSeven, opacity=\subarrowopacity]
	($ (CL.north)!0.275!(CL.south) + (\inset,0) $) -- 
	($ (CR.north)!0.775!(CR.south) + (-\inset,0) $);
	\draw[->, semithick, color=colorSeven, opacity=\subarrowopacity]
	($ (CL.north)!0.775!(CL.south) + (\inset,0) $) -- 
	($ (CR.north)!0.275!(CR.south) + (-\inset,0) $);
	
	% === Label inside top subblock of A^L
	\node at ($ (AL.north)!0.20!(AL.south) + (0,0) $) {$\pi$};
	\node at ($ (AL.north)!0.20!(AL.south) + (3.6,0) $) {$\pi$};
	\node at ($ (AL.north)!0.20!(AL.south) + (0,-0.55) $) {$\pi'$};
	\node at ($ (AL.north)!0.20!(AL.south) + (3.6,-0.55) $) {$\pi'$};
	
	\node at ($ (AL.north)!0.20!(AL.south) + (1.8,0.3) $) {$\mm$};
	
	% === Enclosing rectangles
	\node[draw, rounded corners, fit=(AL)(BL)(CL), inner sep=8pt, label={[xshift=0cm]$\xx$}] {};
	\node[draw, rounded corners, fit=(AR)(BR)(CR), inner sep=8pt, label={[xshift=0cm]$\yy$}] {};
	
\end{tikzpicture}

%% file: algorithm.tex
\section{The Algorithm}\label{sec:alg}
In this section, we give an algorithm that gives an $n^{\delta}$-approximation to the maximum matching size using $O(n^{2-2\delta} \log^2 n)$ non-adaptive queries. In particular, this gives us a $\sqrt{n}$-approximation algorithm in $O(n \log^2 n)$ queries. 
\ubthm*

Consider the following algorithm:
\begin{Algorithm}\label{alg:ub}
	An $n^{\delta}$-approximation Algorithm.
	
	\medskip
	
	\textbf{Input:} Non-Adaptive Query access to the Adjacency List of a graph $G=(V,E)$.

	\medskip
	
	\textbf{Output:} An $n^{\delta}$-approximation to the maximum matching size of $G$.
	
	\begin{enumerate}
		\item Query $q:= 80 n^{1-2\delta} \log n$ random neighbors for each vertex in $V$.
		\item If $q< 1$ then sample a random neighbor with probability $q$.
		\item Output the size of the largest matching $M$ in the returned edges. 
	\end{enumerate}
\end{Algorithm}

\paragraph{Simulating random neighbor queries.}
Before going into the analysis we note that since we are in the non-adaptive model, and we do not know vertex degrees, our algorithm is allowed to query random neighbors of a vertex in the adjacency list.
It is very easy to get rid of this assumption if we increase the number of queries by a factor of $O(\log n)$.
We guess the size of the adjacency list of a vertex in powers of $2$ and then make the random queries assuming that the degree of the vertex is exactly the guess.
Let the true size of the list be between $2^{i-1}$ and $2^i$.
When the guess is $2^i$, a random query lands in the NULL part with probability at most $1/2$.
So a constant fraction of the queries will be random queries over the true adjacency list.
Thus, we have simulated random query access with an $O(\log n)$ factor overhead.

\paragraph{Running time.}
As currently stated, the algorithm invokes a maximum matching computation,
which would exceed the claimed $O(n^{2-2\delta}\log^2 n)$ running time.
However, this is unnecessary. Instead, we compute the size of a maximal
greedy matching in the sampled subgraph. This can be done in time linear
in the number of sampled edges, and hence in
$O(n^{2-2\delta}\log^2 n)$ time overall.

Using a maximal matching instead of a maximum matching loses at most a factor of $2$ in the approximation. This constant factor in the approximation can be absorbed into the number of queries by rescaling the approximation factor by $2$ (which increases the number of queries by a factor of $4$).

We now turn to the analysis.
Consider a graph $G=(V,E)$ and let $M^*$ be a maximum matching of $G$.
The main lemma we will prove in this section is the following:
\begin{lemma}\label{lem:ub-matching-size}
	$\card{M} \geq n^{-\delta} \card{M^*}$.
\end{lemma}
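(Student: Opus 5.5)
We prove \Cref{lem:ub-matching-size} by a case split on vertex degrees relative to the threshold $d := n^{1-\delta}$. Call a vertex \emph{light} if $\deg(v) \leq 2q/(80\log n)\cdot(\text{const})$, more concretely if $\deg(v) \leq q/(40 \log n)$ scaled so that querying $q$ random neighbors of $v$ reveals its entire neighborhood with probability $1-n^{-3}$. A coupon-collector bound gives this: each fixed neighbor is missed with probability $(1-1/\deg(v))^q \leq \exp(-q/\deg(v))$. So the threshold is $\deg(v) \leq q/(3\ln n) = \Theta(n^{1-2\delta})$. All other vertices are \emph{heavy}. By a union bound, with probability $1-n^{-1}$ every edge incident to a light vertex is sampled.

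We now split $M^*$ into $M^*_{\ell}$, the edges with at least one light endpoint, and $M^*_h$, the edges with both endpoints heavy. We treat two cases.

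\textbf{Case 1: $\card{M^*_\ell} \geq \card{M^*}/2$.} Every edge of $M^*_\ell$ is in the sampled graph, so the maximum matching in the returned edges has size at least $\card{M^*}/2$. After the factor-$2$ rescaling, this is at least $n^{-\delta}\card{M^*}$.

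\textbf{Case 2: $\card{M^*_h} \geq \card{M^*}/2$.} There are at least $\card{M^*}$ heavy vertices. Each has degree more than $\Theta(n^{1-2\delta})$ and receives $q$ samples, so each heavy vertex observes at least one edge with high probability; the sampled graph has no isolated heavy vertex. The plan is to lower bound the maximum matching in the sampled graph by $\Omega(k/n^{\delta})$, where $k$ is the number of heavy vertices with a sampled edge. The sampled edges out of the heavy vertices form a graph in which each heavy vertex has out-degree at least $1$. A star forest can cover all these vertices, and every star has at most $n$ petals, but that only gives a matching of size about $k/n$, which is too weak.

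This is the main obstacle, and a sharper argument is needed. If a vertex $w$ receives many sampled edges from heavy vertices, then it must have large degree. In that case, consider the heavy vertices whose sampled neighbors are concentrated in few vertices. Such vertices each have degree at least $n^{1-2\delta}$, while the total number of vertices is $n$. A counting argument should show the following: if the matching in the sample is smaller than $k n^{-\delta}$, then some set $W$ of size less than $2kn^{-\delta}$ covers all sampled edges (a vertex cover). Each heavy vertex $v\notin W$ has all its $q$ samples in $W$. Since the samples are uniform, with high probability this forces at least a $(1-1/(2n^{\delta}))$ fraction of $N(v)$ to lie in $W$. Such a statement is not automatic for a fixed $W$, so I would union bound over all candidate small covers $W$, using the $\log n$ factor in $q$ to pay for the $\binom{n}{|W|}$ choices. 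The conclusion is that the heavy vertices outside $W$ have almost all their edges into $W$, so $M^*_h$ has at most $O(|W|)$ edges not touching $W$ essentially. This gives $\card{M^*_h} \leq O(|W|) = O(kn^{-\delta})$, which contradicts the case assumption after tuning constants, since $k \leq 2\card{M^*}$.

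I expect this union bound over small vertex covers to be the delicate step. In particular, controlling it requires $q = \Theta(n^{1-2\delta}\log n)$ together with the heavy-degree lower bound, and making the exponents match is where I anticipate difficulty.
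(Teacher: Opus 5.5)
Your Case 1 is fine, but Case 2 has a genuine gap, and the route you sketch does not close it.

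\textbf{The union bound over covers fails.} Fix a heavy $v$ and a set $W$ such that more than a $1/(2n^{\delta})$ fraction of $N(v)$ lies outside $W$. The chance that all $q$ samples of $v$ land in $W$ is about $\exp(-q/(2n^{\delta})) = \exp(-40 n^{1-3\delta}\log n)$. But there are $\exp(\Theta(n^{1-\delta}\log n))$ candidate sets of size up to $\Theta(n^{1-\delta})$, so a per-vertex union bound does not close. Exploiting independence across vertices would at best control all but a few exceptional vertices.

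\textbf{The final deduction is false.} Suppose every heavy $v\notin W$ has almost all its neighbors in $W$. This does not bound the number of $M^*_h$ edges avoiding $W$. Each such $v$ may have exactly one neighbor outside $W$, namely its $M^*$-partner, so all of $M^*_h$ can avoid $W$. What actually rescues this configuration is a degree argument. Such a $v$ has $\deg(v)\lesssim |W| = O(n^{1-\delta})$, so its own $M^*$ edge is sampled with probability about $q/\deg(v) = \Omega(n^{-\delta}\log n)$.

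\textbf{The missing idea is the threshold.} You do not need a low-degree vertex's whole neighborhood. You only need an $n^{-\delta}$ fraction of the $M^*$ edges. The paper therefore splits $V(M^*)$ at degree $4n^{1-\delta}$ rather than $\Theta(n^{1-2\delta})$.

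\emph{Low-degree case.} If at least $|M^*|/2$ vertices of $V(M^*)$ have degree at most $4n^{1-\delta}$, each one samples its own $M^*$ edge with probability at least $q/\deg(v) \geq 20n^{-\delta}\log n$. These events are independent across vertices. A Chernoff bound then gives $\Omega(n^{-\delta}\log n\cdot |M^*|)$ sampled $M^*$ edges.

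\emph{High-degree case.} Otherwise at least $|M^*|/2$ vertices have degree above $4n^{1-\delta}$, and the paper processes them greedily. While the current matching has fewer than $n^{1-\delta}$ edges, fewer than $2n^{1-\delta}$ vertices are matched. So a single random neighbor of the next vertex is unmatched with probability at least $1/2$. This yields a matching of size $\min(n^{1-\delta}, \Omega(|M^*|))$. The cap suffices because $n^{1-\delta}\geq n^{-\delta}|M^*|$, since $|M^*|\leq n$. No union bound over vertex covers is needed.

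With your threshold $\Theta(n^{1-2\delta})$, the same greedy argument only reaches $\Theta(n^{1-2\delta})$. That is why vertices of intermediate degree, between $n^{1-2\delta}$ and $n^{1-\delta}$, were stuck in your Case 2. They belong in the low-degree case, handled by the ``find your own $M^*$ edge with probability $\Omega(n^{-\delta}\log n)$'' argument.
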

If $\card{M^*} \leq n^{\delta}$ then we get an $n^{\delta}$-approximation easily by outputting a single edge of the graph. We will be able to find one edge because we query all the vertices.
Thus, for the rest of the proof, we assume $\card{M^*} \geq n^{\delta}$.
We define $\vlow$ to be the low degree vertices in $V(M^*)$
\[
\vlow := \{ v \in V(M^*) \mid \deg(v) \leq 4n^{1-\delta} \}
\]
and define $\vhigh := V(M^*) -\vlow$ to be the high degree vertices in $V(M^*)$.
We break the rest of the analysis into two cases.
\begin{itemize}
	\item \textbf{Case 1:} At least $\card{M^*}/2$ of the vertices of $M^*$ have degree at most $4n^{1-\delta}$ i.e. $\card{\vlow} \geq \card{M^*}/2$. \\
	In this case, we will find a large fraction of the edges of $M^*$. 
	\item \textbf{Case 2:} At least $\card{M}/2$ of the vertices of $M^*$ have degree strictly larger than $4n^{1-\delta}$ i.e. $\card{\vhigh} \geq \card{M^*}/2$. 
	In this case, we will find many edges because the degrees of the vertices are large.
\end{itemize}

We first consider Case 1 and show that the matching found is large enough.
\begin{claim}\label{clm:ub-case-one}
	If $\card{\vlow} \geq \card{M^*}/2$, then $\card{M} \geq \card{M^*} \cdot n^{-\delta}$ with probability at least $1-1/n$.
\end{claim}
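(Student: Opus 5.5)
We restrict attention to the edges of $M^*$ with both endpoints in $\vlow$ and show that a constant fraction of them, and at least $n^{-\delta}\card{M^*}$ of them, are sampled. For each such edge we then lower bound the probability that it appears among the returned edges. The greedy/maximum matching $M$ on the sampled edges has size at least the size of any sub-matching, and a sampled subset of $M^*$ is itself a matching. So it suffices to count sampled edges of $M^*$.

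\textbf{Step 1 (counting good edges).} Every vertex of $V(M^*)\setminus \vlow = \vhigh$ kills at most one edge of $M^*$. Call an edge of $M^*$ \emph{good} if at least one endpoint lies in $\vlow$. Since $\card{V(M^*)} = 2\card{M^*}$ and $\card{\vlow}\geq \card{M^*}/2$, each edge has two endpoints, so the number of good edges is at least $\card{\vlow}/2 \geq \card{M^*}/4$. We only need one low-degree endpoint per edge: the algorithm queries every vertex, so querying from that endpoint suffices.

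\textbf{Step 2 (each good edge is sampled with decent probability).} Fix a good edge $(u,v)$ with $u\in\vlow$, so $\deg(u)\leq 4n^{1-\delta}$. Each of the $q = 80n^{1-2\delta}\log n$ random neighbor queries at $u$ returns $v$ with probability at least $1/\deg(u)\geq n^{\delta-1}/4$. Hence $v$ is missed by all queries with probability at most $(1-n^{\delta-1}/4)^{q}\leq \exp(-20 n^{-\delta}\log n)$. Thus the edge is found with probability $p\geq 1-\exp(-20n^{-\delta}\log n)\geq 10 n^{-\delta}\log n$ when $n^{-\delta}\log n$ is small (using $1-e^{-x}\geq x/2$ for $x\leq 1$), and $p \ge 1/2$ otherwise. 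In the regime $q<1$ (i.e.\ $\delta>1/2$ roughly), the single sample taken with probability $q$ gives the same bound $q/\deg(u)\ge 20 n^{-\delta}\log n$.

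\textbf{Step 3 (concentration).} The events for distinct good edges depend on queries at distinct vertices $u$, since $M^*$ is a matching and we choose one $\vlow$ endpoint per edge. They are therefore independent. Let $X$ be the number of good edges found. Then $\expect{X}\geq (\card{M^*}/4)\cdot \min(1/2,\,10n^{-\delta}\log n)$, which is at least $2 n^{-\delta}\card{M^*}\log n$. Using $\card{M^*}\geq n^{\delta}$, this is $\Omega(\log n)$. The lower-tail Chernoff bound (\Cref{prop:chernoff}) with constant deviation then gives $X\geq n^{-\delta}\card{M^*}$ with probability at least $1-1/n$, after adjusting the constant $80$ if needed.

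The main obstacle is Step 3's requirement $\expect{X}=\Omega(\log n)$, which is where the assumption $\card{M^*}\geq n^{\delta}$ and the $\log n$ factor in $q$ are used; the independence point in Step 2 also needs the care noted above (pick one low endpoint per edge). The rest is routine.
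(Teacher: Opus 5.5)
Your argument is correct and is essentially the paper's: bound from below the probability that a low-degree endpoint's queries hit its $M^*$-edge, use independence across vertex-disjoint $M^*$-edges, and apply the lower-tail Chernoff bound, with $\card{M^*}\geq n^{\delta}$ making the mean $\Omega(\log n)$. Choosing one \vlow endpoint per edge is slightly more careful than the paper, which fixes $\card{M^*}/2$ vertices and can count an edge twice when both its endpoints lie in \vlow. Two small fixes: your opening sentence says ``both endpoints in \vlow'', but there may be no such edge, so it should say ``at least one'', as Step~1 does. And you can drop the hedge about enlarging the constant $80$ by using $1-e^{-x}\geq x-x^2/2$ instead of $x/2$, which makes each edge found with probability $(1-o(1))\,20n^{-\delta}\log n$.
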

\begin{proof}
	Consider a vertex $v \in \vlow$ and let its edge in the maximum matching be called its \emph{special edge}.
	We make $q$ queries on the vertex $v$.
%	The probability we do not find its special edge in any query is $(1-n^{\delta-1})^{q} \leq \exp(-q \cdot n^{\delta-1}) = \exp(-10 n^{-\delta}) \leq 1 - 5 n^{-\delta}$. The last inequality holds because $\exp(-2x) \leq 1-x$ for $x \in [0,1/2]$.
	The queries find the special edge of $v$ with probability $q/\deg(v) \geq 20n^{-\delta} \log n$.
	If $q<1$ then we make a query with probability $q$ and find the special edge with probability $1/\deg(v)$ getting the same bound.
	This holds true for all vertices in $\vlow$.
	
	We know that $\card{\vlow} \geq \card{M^*}/2$ so we fix any $\card{M^*}/2$ vertices in the set and define the following.
	Let $X_i$ be the random variable which is $1$ if the queries find the special edge for the $i^{th}$ vertex in $\vlow$ and $0$ otherwise, for $i \in [\card{M^*}/2]$.
	We know that $\expect{X_i} = \prob{X_i=1} \geq 20 n^{-\delta} \log n$.
	Let $X= \sum_{i=1}^{\card{M^*}/2} X_i$ be the number of special edges found.
	We have that $X \leq \card{M}$ because we are only looking at a subset of the edges found.
	The expected value of $X$ is $\expect{X} = \sum_i \expect{X_i} \geq 20 n^{-\delta} \log n \cdot \card{M^*}/2 = \mu_L$.
	Each $X_i$ is independent and in $[0,1]$ so we can apply the Chernoff bound (\Cref{prop:chernoff}) with $\eps=1/2$:
	\[
	\prob{X < \mu_L/2} \leq \exp(- \mu_L/10) \leq \exp(- n^{-\delta} \log n \cdot \card{M^*}) \leq n^{-1}.
	\]
	Thus, $\card{M} \geq X \geq \mu_L/2 = 5 n^{-\delta} \log n \cdot \card{M^*}$ with probability $1-1/n$.
\end{proof}

We now consider Case 2 and show that the matching found is large enough in this case too.
\begin{claim}\label{clm:ub-case-two}
	If $\card{\vlow} < \card{M}/2$, then $\card{M} \geq \card{M^*} \cdot n^{-\delta}$ with probability at least $1-1/\poly(n)$.
\end{claim}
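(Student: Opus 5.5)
Note first that the hypothesis presumably means $\card{\vlow} < \card{M^*}/2$, i.e.\ $\card{\vhigh} \geq \card{M^*}/2$ (Case 2). The plan is to show that the sampled edge set contains a matching of size at least $n^{-\delta}\card{M^*}$ by building it greedily, using only edges whose one endpoint lies in $\vhigh$. Each $v \in \vhigh$ has $\deg(v) > 4n^{1-\delta}$. Set the target size $k := \lceil n^{-\delta}\card{M^*}\rceil$, and note $k \leq n^{1-\delta}$ since $\card{M^*}\leq n$. Also, $\card{\vhigh}\geq \card{M^*}/2 \geq k$ up to constants, because $\delta>0$.

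We process the vertices of $\vhigh$ in an arbitrary fixed order $v_1,v_2,\ldots$ and maintain a matching $M'$ built from sampled edges. When we reach $v_i$, suppose $v_i$ is unmatched and $\card{M'}<k$. Then at most $2k \leq 2n^{1-\delta}$ vertices are currently matched. These are at most half of $v_i$'s more than $4n^{1-\delta}$ neighbors, so at least half of $v_i$'s neighbors are free (with multi-edges, count positions in the adjacency list instead).

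Each random neighbor query at $v_i$ is independent of the queries at earlier vertices, and hence independent of $M'$. So each query lands on a free neighbor with probability at least $1/2$. We make $q = 80 n^{1-2\delta}\log n$ queries at $v_i$, and when $q\geq 1$ we therefore find a free neighbor except with probability $2^{-q}$. We then add that edge to $M'$.

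Some $v_i$ may already be matched, having been picked as a neighbor earlier. Each addition matches at most two vertices of $\vhigh$, so at least $\card{\vhigh}/2$ of the $v_i$ get processed before $M'$ would stop growing. Hence with at most $n$ failures of probability $2^{-q}$ each, we reach $\card{M'}\geq \min(k,\card{\vhigh}/4) \geq n^{-\delta}\card{M^*}$ after adjusting constants. The greedy maximal matching the algorithm computes only loses the factor $2$ already discussed. Finally, the union bound gives failure probability $n\cdot 2^{-q}\leq 1/\poly(n)$ when $n^{1-2\delta}\log n \gtrsim \log n$, i.e.\ $\delta\leq 1/2$.

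The main obstacle is the regime $\delta>1/2$, where $q<1$ and each vertex issues a single query only with probability $q$. There the per-vertex success is only about $q/2$, so the greedy argument instead aims for the expected count. We would get about $\card{\vhigh}\cdot q/2 \gtrsim 40\,n^{1-2\delta}\log n\cdot \card{M^*}/2$ successful additions, which must be compared with $n^{-\delta}\card{M^*}$. This comparison requires $n^{1-\delta}\log n \gtrsim 1$, which holds.

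Concentration then needs care, because the success events are not independent: they depend on the evolving $M'$. I would handle this with a martingale-style domination. Each step succeeds with probability at least $q/2$ conditioned on the past, as long as $\card{M'}<k$, so the number of successes stochastically dominates a $\mathrm{Bin}(\card{\vhigh}/2, q/2)$ variable truncated at $k$. The Chernoff bound (\Cref{prop:chernoff}) then applies, with mean at least $\Omega(n^{1-\delta}\log n\cdot n^{-\delta}\card{M^*}) \geq \Omega(\log n)$, since $\card{M^*}\geq n^{\delta}$. This yields failure probability $1/\poly(n)$.
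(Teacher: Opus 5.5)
Your proposal is correct and follows essentially the paper's route: scan $\vhigh$ in a fixed order, greedily extend a matching from the sampled edges, use that each step succeeds with probability at least $\min(1/2,q/2)$ while the matching has fewer than $n^{1-\delta}$ edges, and finish with a Chernoff bound. It is actually more careful than the paper on two points, namely accounting for $v_i$ already being matched and replacing the paper's claimed independence of the $Y_i$ with stochastic domination; the only slip is your multi-edge aside, since counting adjacency-list positions does not bound the fraction of positions pointing to matched vertices, but this is moot because the argument, like the paper's, is for simple graphs.
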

\begin{proof}
	In this case, we have at least $\card{M^*}/2$ vertices in $\vhigh$.
	We now consider the following procedure to find a large matching.
	We go over the vertices of $\vhigh$ in an arbitrary but fixed order and maintain a matching in the following inductive way.
	We start with an empty matching $M_0$.
	For the $i^{th}$ vertex in the order, we construct matching $M_i$ by greedily trying to increase the size of $M_{i-1}$ by adding an edge to it from the queried edges for the $i^{th}$ vertex.
	
	If at any point $j$ we find that the size of the matching found $M_j$ is $n^{1-\delta}$ then we stop the process since $\card{M} \geq \card{M_j} = n^{1-\delta} \geq n^{-\delta} \cdot \card{M^*}$, since $M^*$ cannot have more than $n$ edges.

	We now show that if at any point $\card{M_{i-1}} < n^{1-\delta}$ then $\card{M_i} = \card{M_{i-1}}+1$ with probability at least $1/2$.
	This is true because the current number of matched vertices is $\card{V(M_{i-1})} < 2n^{1-\delta}$ and the degree of the $i^{th}$ vertex is at least $4n^{1-\delta}$.
	So even a single query is enough to find an unmatched neighbor with probability at least $1/2$.
	If $q<1$ then we make a query with probability $q$ so we find an unmatched neighbor with probability at least $q/2$.
	Let $p:=\min(1/2,q/2)$ be the probability of finding an unmatched neighbor (this covers both cases). 
	If we find an unmatched neighbor then we can greedily add that edge to the matching to increase its size by $1$.
	
	For $i \in [\card{M^*}/2]$ let $Y_i$ be the random variable which is $1$ if $\card{M_i} = \card{M_{i-1}}+1$ i.e. the $i^{th}$ vertex in the order finds an unmatched neighbor.
	We have $\expect{Y_i} = \prob{Y_i=1} \geq p$.
	Let $Y=\sum_i Y_i$ be the size of the final matching $M_{\card{M^*}/2}$.
	We have 
	\[\expect{Y} = \sum_i \expect{Y_i} \geq p \cdot \card{M^*}/2 \geq \min(\card{M^*}/4,q\cdot n^{\delta}/4) \geq \min(\card{M^*}/4,20 n^{1-\delta}\log n) = \mu_L.
	\]
	Each $Y_i$ is independent and in $[0,1]$ so we can apply the Chernoff bound (\Cref{prop:chernoff}) with $\eps=1/2$:
	\[
	\prob{Y < \mu_L/2} \leq \exp(- \mu_L/10) \leq \exp(- \card{M^*}/40) \leq \exp(-n^{\delta}/40) \ll 1/\poly(n).
	\]
	Thus, $\card{M} \geq Y > \mu_L/2 = \min(\card{M^*}/8,10n^{1-\delta}\log n) \geq n^{-\delta} \card{M^*}$ with probability $1-1/\poly(n)$.
	Note that this analysis will fail if at some step $i$ we find a matching of size $n^{1-\delta}$.
	If this happens then we can only guarantee that the size of the matching $M$ is at least $n^{1-\delta}$.
	Therefore, we get that $\card{M} \geq \min(n^{1-\delta}, n^{-\delta} \card{M^*}) \geq n^{-\delta} \card{M^*}$.
\end{proof}

\begin{proof}[Proof of \Cref{thm:ub}]
The proof of the approximation (\Cref{lem:ub-matching-size}) follows from \Cref{clm:ub-case-one,clm:ub-case-two}.

When $q\ge 1$, \Cref{alg:ub} makes $O(n^{1-2\delta}\log n)$ queries per vertex,
so the base query complexity is $O(n^{2-2\delta}\log n)$ overall.
Accounting for the additional $O(\log n)$ overhead needed to simulate random
neighbor queries (by guessing degrees), the total becomes
$O(n^{2-2\delta}\log^2 n)$.

When $q<1$, we make a query on each vertex independently with probability $q$.
For $i \in [n]$ let $Y_i$ be the random variable which is $1$ if we make a query for vertex $i$.
We have $\expect{Y_i} = \prob{Y_i=1} = q$.
Let $Y=\sum_i Y_i$ be the total number of queries made.
We have $\expect{Y} = \sum_i \expect{Y_i} = q \cdot n = 80n^{2-2\delta}\log n = \mu_H$.
Each $Y_i$ is independent and in $[0,1]$ so we can apply the Chernoff bound (\Cref{prop:chernoff}) with $\eps=1$:
\[
\prob{Y > 2\mu_H} \leq \exp(- \mu_L/4) \leq \exp(- 20n^{2-2\delta}\log n)  \ll 1/\poly(n).
\]
Thus, the number of queries is at most $2\mu_H =O(n^{2-2\delta}\log n)$ with probability $1-1/\poly(n)$.
After the same $O(\log n)$ overhead for simulating random neighbor access, the total number
of adjacency-list queries is $O(n^{2-2\delta}\log^2 n)$.

Finally, for the running time, we compute a maximal greedy matching on the
sampled subgraph instead of a maximum matching. This takes time linear in the
number of sampled edges, which is $O(n^{2-2\delta}\log^2 n)$, matching the
query bound up to constant factors.
\end{proof}

%% file: tree-query.tex
\newcommand{\degIn}[1]{\deg_{\text{in}}\left(#1\right)}
\newcommand{\degOut}[1]{\deg_{\text{out}}\left(#1\right)}
\newcommand{\assign}[2]{#1 \mapsto #2}

\section{Non-Adaptive Tree Probe Model}
In this section, we consider a different notion of non-adaptive queries that was introduced very recently in \cite{azarmehr2025lower}. In this model, access to the adjacency list is structured as follows: the algorithm selects a \emph{root vertex} $r$ and commits to a fixed \emph{query tree} rooted at $r$. It then queries the adjacency list according to this tree—first querying $\deg(r)$ neighbors of the root, then querying neighbors of those vertices, and so on, traversing the entire tree structure.
Note that this model is also non-adaptive because the tree structure is fixed without adapting it based on any observed answers. 
Consider the following formal definition.
\begin{definition}[Non-adaptive Tree Probe Model]
	In this model, the algorithm selects a root vertex $r$ and commits to a sequence of query instructions $(a_1, b_1), (a_2, b_2), \dots, (a_q, b_q)$ such that $a_i \leq i$ and $1 \leq b_i \leq \Delta$. Here, $q$ is the query complexity of the algorithm.
	
	Afterwards, the instructions are used to explore the graph as follows: $u_1 = r$ is the first discovered vertex. Then, for every step $1 \leq i \leq Q$, the newly discovered vertex $u_{i+1}$ is set to the $b_i$-th vertex in the adjacency list of $u_{a_i}$. If $u_{a_i}$ has fewer than $b_i$ neighbors, then $u_{i+1}$ is set to a predesignated null value $\perp$, and any further queries made to $u_{i+1}$ also return $\perp$. The algorithm then computes the output based on the explored subgraph.
\end{definition}
We assume that after all the queries are made, degrees of all vertices are revealed to the algorithm along with all query answers.
We prove that any randomized algorithm operating in this non-adaptive tree probe model still requires $\Omega(n^{1+\eps})$ queries to achieve an $n^{\delta}$-approximation, for any $\eps,\delta>0$ satisfying $n^{2\eps+3\delta} \log^3 n =o(n)$, matching the lower bound in \Cref{thm:lb}.
\treelbthm*

Note that the queries could form a distribution over trees, but we will fix a deterministic tree structure and prove a distributional lower bound. This will imply a lower bound for distributions over trees by Yao’s minimax principle like in \Cref{thm:lb} (see also \cite{azarmehr2025lower}).

We start with some notation. The query tree $T=(V_T,E_T)$ is a directed tree with root vertex $r$.
We will call a node in the query tree as a tree-node. Each edge corresponds to a query, so we have $\card{E_T}=q =n^{1+\eps}$. This implies that $\card{V_T} = q+1$. The in-degree of a tree-node $v$ is denoted by $\degIn{v}$, and its out-degree is denoted by $\degOut{v}$. The in-degree of all tree-nodes is $1$ (an edge to the parent denoted $p(v)$) except for the root $r$ which has an in-degree of $0$. Finally, we use the notation $\assign{x}{v}$ to indicate that tree-node $x$ is assigned vertex $v \in V$ of the input graph and the notation $\assign{x}{D}$ to indicate that tree-node $x$ is assigned to a vertex in a subset $D \subseteq V$ of vertices in the input graph.

In this section, we will prove structural results similar to \Cref{subsec:Limitations-of-Alg}. The coupling argument of \Cref{sec:coupling} will remain untouched and will work as is along with this section.
Finally, to prove the claims we need, we will set the following constraint:
\begin{constraint}\label{constraint:tree-queries}
	n^{2\eps+ 3 \delta}\log^2 n =o(n).
\end{constraint}

We now turn to the formal claims and their proofs. 
We begin by showing that, the neighbor of a vertex in $D$ belongs to $A$ with a small probability.
\begin{claim}\label{clm:tree:degree-D-to-A}
	Let $d \in D$ be an arbitrary vertex and let $v$ be a random neighbor of $d$. The probability that $v\in A$ is at most $2 n^{-\delta}$.
\end{claim}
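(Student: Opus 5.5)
This is a direct consequence of \Cref{clm:deg-D-verts}; I will condition on its high-probability event throughout, as the rest of the analysis does. Fix $d \in D$. Since the adjacency list of $d$ is uniformly permuted by $\sigma(d)$, a random neighbor of $d$ is a uniformly random entry of its adjacency list. The probability that this entry lies in $A$ therefore equals the number of edges from $d$ to $A$ divided by $\deg(d)$.

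\Cref{clm:deg-D-verts} bounds both quantities. The number of edges from $d$ to $A$ is at most $1.2 \cdot n^{1-\delta}/2$, and $\deg(d) \geq 0.8 \cdot n/2$. Dividing gives
\[
\prob{v \in A} \;\leq\; \frac{1.2 \cdot n^{1-\delta}/2}{0.8 \cdot n/2} \;=\; 1.5\, n^{-\delta} \;\leq\; 2n^{-\delta}.
\]

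The only subtlety is that the bound is stated for an arbitrary $d$, whereas the degree bounds hold only with high probability. I would handle this exactly as in \Cref{sec:coupling}: nodes violating \Cref{clm:deg-D-verts} are treated as bad and discarded, so the claim is read as holding under that conditioning. If an unconditional statement were needed instead, I would compute the ratio of expectations $(n^{1-\delta}/2 - n^{\delta})/(n/2 + n^{1-\delta}/2 - 2n^{\delta}) \leq n^{-\delta}$. I would then use concentration (Chernoff, as in the proof of \Cref{clm:deg-D-verts}) to absorb the $1/\poly(n)$ failure probability into the slack between $n^{-\delta}$ and $2n^{-\delta}$. I expect no real obstacle beyond this bookkeeping.
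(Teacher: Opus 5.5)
Your proposal is correct and follows the paper's proof: both take the fraction of $d$'s uniformly permuted adjacency list that lies in $A$ and bound it using the concentration bounds of \Cref{clm:deg-D-verts}, giving $\frac{1.2\cdot n^{1-\delta}/2}{0.8\cdot n/2} \le 2n^{-\delta}$. Your extra remark on handling the failure event (conditioning, or absorbing its $1/\poly(n)$ probability into the slack) is sound bookkeeping that the paper leaves implicit.
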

\begin{proof}
	Since $d\in D$, we know that, with high probability, its degree is in $(1 \pm 0.2) \cdot n/2$ and also its number of edges to $A$ is in $(1 \pm 0.2) \cdot n^{1-\delta}/2$ (\Cref{clm:deg-D-verts}).
	Thus, the probability that a random neighbor of $d$ is a vertex in $A$ is at most $\frac{(1.2) \cdot n^{1-\delta}/2}{(0.8) \cdot n/2} \leq 2 n^{-\delta}$.
\end{proof}

We now show that the probability that any tree-node is a vertex in $A$ is at most $\Ot(n^{-\delta})$.
\begin{claim}\label{clm:prob-being-A}
	The probability that any tree-node is a vertex in $A$ is at most $2 n^{-\delta}$.
\end{claim}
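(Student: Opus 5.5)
I will prove this by induction on the depth of the tree-node in the query tree $T$, conditioning on the high-probability event of \Cref{clm:deg-D-verts}. The random labeling permutation $\pi$ is what makes the root behave like a uniform vertex. The uniformly random adjacency lists (via $\psi,\sigma_C,\sigma_D$ for core vertices and $\sigma$ for dummy vertices) are what make each child behave like a random neighbor of its parent.

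\textbf{Base case.} The root $r$ is a fixed label. Under the random $\pi$ it is assigned a uniformly random vertex on its side. So $\Pr(\assign{r}{A}) = n^{1-\delta}/n' \leq n^{-\delta}$.

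\textbf{Inductive step.} Consider a non-root tree-node $x$ with parent $p(x)$, where $x$ is the $b$-th entry of the adjacency list of $p(x)$. I would split on what $p(x)$ is assigned to. Call the three contributions $P_A$, $P_B$ and $P_D$.
\begin{itemize}
\item \emph{Case $\assign{p(x)}{D}$.} The list of $p(x)$ is permuted by $\sigma$ independently of everything that determined $p(x)$. So position $b$ holds a uniformly random neighbor of $p(x)$. By \Cref{clm:tree:degree-D-to-A}, this neighbor lies in $A$ with probability at most $2n^{-\delta}$. Hence $P_D \leq 2n^{-\delta}\Pr(\assign{p(x)}{D})$.
\item \emph{Case $\assign{p(x)}{C}$.} The core indices are placed by $\psi$ uniformly among the $d^*$ positions. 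So position $b$ is a core index with probability at most $n^{\delta}/d^*$ if $p(x)\in A$, and $1/d^*$ if $p(x)\in B$; otherwise it points into $D$. Hence $P_A \leq (2n^{2\delta-1})\Pr(\assign{p(x)}{A})$ and $P_B \leq (2/n^{1-\delta})\Pr(\assign{p(x)}{B})$.
\end{itemize}
Summing the three cases gives
\[
\Pr(\assign{x}{A}) \leq 2n^{-\delta}\Pr(\assign{p(x)}{D}) + 2n^{2\delta-1}\Pr(\assign{p(x)}{A}) + 2n^{\delta-1}.
\]
Since $\Pr(\assign{p(x)}{D}) \leq 1$, the first term alone already reaches $2n^{-\delta}$, so the extra terms must be absorbed. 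I would do this by proving the sharper bound $\Pr(\assign{p(x)}{D}) \leq 1 - \Omega(1)$ whenever the parent is at odd depth. Alternatively, I would use the tighter count from \Cref{clm:deg-D-verts}: the ratio is $(1.2/0.8)n^{-\delta} = 1.5n^{-\delta}$, which leaves slack $0.5n^{-\delta}$. That slack dominates $2n^{2\delta-1}\cdot 2n^{-\delta} + 2n^{\delta-1}$ by \Cref{constraint:tree-queries}. So the induction closes with the bound $2n^{-\delta}$.

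\textbf{Main obstacle.} The delicate point is justifying that, conditioned on the entire path from the root to $p(x)$ (including which list positions were already revealed), position $b$ is still a uniformly random unrevealed neighbor. The path may revisit a vertex, or several tree-nodes may probe the same list, which breaks exact uniformity.

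I would handle this by noting three things:
\begin{itemize}
\item The revealed positions of any list number at most the out-degree of the corresponding tree-nodes.
\item Under the conditioning, an unrevealed position is uniform over the remaining neighbors.
\item For $D$ vertices, removing a few revealed neighbors changes the $A$-fraction only by a $1+o(1)$ factor, because degrees are $\Theta(n)$ and only lists with $o(n)$ revealed entries matter.
\end{itemize}
For lists with many revealed entries, I would bound the count directly: the number of revealed $A$-neighbors cannot exceed the $A$-degree given by \Cref{clm:deg-D-verts}. This keeps the conditional $A$-probability bounded by the same ratio.
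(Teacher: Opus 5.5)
Your main line is the paper's argument: a uniformly random root, then a case split on whether $p(x)$ lies in $A$, $B$ or $D$, using the conditional bounds $2n^{2\delta-1}$, $2n^{\delta-1}$ and \Cref{clm:tree:degree-D-to-A}.

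The absorption step is unnecessary. If you keep the weight $\Pr(p(x)\in B)$ on the third term, the right-hand side is a convex combination, because the three weights sum to $1$. It is therefore at most the largest conditional bound, which is $2n^{-\delta}$ since $3\delta<1$. This is exactly how the paper concludes, with no induction and no $1.5n^{-\delta}$ refinement.

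The refinement does work, but your other suggestion is false. Already a depth-$1$ tree-node lies in $D$ with probability $1-O(n^{-\delta})$, since the root is a core vertex with that probability. So $\Pr(p(x)\in D)\le 1-\Omega(1)$ fails at odd depth.

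Your discussion of dependence along the root-to-$p(x)$ path goes beyond the paper, which simply treats every probe as a uniform neighbor of the parent's type. Its last bullet fails as stated: revealing many $B$-entries of a $D$-list increases the $A$-fraction of the unrevealed remainder. Such histories must therefore be shown to carry negligible probability, rather than be controlled by counting revealed $A$-neighbors.
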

\begin{proof}
	Consider any tree-node $t$. What is the probability that it will be a vertex in $A$? The root node is randomly picked so the probability it is a vertex in $A$ is at most $n^{1-\delta}/n = n^{-\delta}$.
	If we consider that $t$ is not a root node then this probability will depend on the parent of this tree-node.
	If the parent is in $A$ then the largest probability is in the yes case: $n^{\delta}/d^*=2n^{2\delta-1}$.
	If the parent is in $B$ then the largest probability is in the no case: $1/d^*=2n^{\delta-1}$.
	If the parent is in $D$ then the largest probability is: $2 n^{-\delta}$ (\Cref{clm:tree:degree-D-to-A}). The maximum out of all of these is $2 n^{-\delta}$ because of \Cref{constraint:tree-queries}.
%	Thus, the probability that $t \in A$ is at most $2n^{-\delta}$.
%	This is true for all tree-nodes thus, in expectation, at most $2 n^{1+\eps-\delta}$ tree-nodes are vertices in $A$.
%	Applying the Markov inequality we get that with probability $1-o(1)$, at most $2 n^{1+\eps-\delta} \log n$ tree-nodes are vertices in $A$.
\end{proof}

We now observe that, with high probability, most queries return edges to dummy vertices.
\begin{observation}\label{obs:in-core-prob}
	A query to a vertex in $A$ returns a core edge with probability at most $2n^{2\delta-1}$, while a query to a vertex in $B$ returns a core edge with probability at most $2n^{\delta-1}$.
\end{observation}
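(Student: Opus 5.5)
The observation is a direct count over the adjacency list of the queried vertex, and I would prove it using only the degree structure fixed by the construction. The first step is to recall that every core vertex has degree exactly $d^* = n^{1-\delta}/2$. A vertex in $A$ has exactly $n^{\delta}$ core neighbors (in $\dyes$ these come from the $n^{\delta}$ matchings $M_1,\ldots,M_{n^{\delta}}$; in $\dno$ they are one $B$-neighbor per group $B_i$). A vertex in $B$ has exactly one core neighbor in both cases. The remaining entries of each list point to $D$.

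The second step uses the randomness of the adjacency list. For $v \in C$, the core indices are placed by $\psi(v)$ uniformly among the $d^*$ positions. So for any fixed queried index $i \le d^*$, the probability that position $i$ holds a core edge is exactly (number of core neighbors)$/d^*$. This is $n^{\delta}/(n^{1-\delta}/2) = 2n^{2\delta-1}$ for $v\in A$ and $1/(n^{1-\delta}/2)=2n^{\delta-1}$ for $v \in B$. An index beyond $d^*$ returns $\perp$ and hence no core edge, which only helps.

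The one subtlety, and the only step needing care, concerns the tree probe model. There the index $b_i$ queried at tree-node $u_{a_i}$ is fixed in advance, but \emph{which} graph vertex sits at $u_{a_i}$ depends on earlier answers. I would handle this by conditioning on the identity of the vertex $v$ assigned to $u_{a_i}$ and on all previously revealed entries of $\ell(v)$. The bound still holds under this conditioning because the order of repeated or earlier queries to the same list only reveals other positions, and revealing $k$ positions changes the conditional density of core entries among the unrevealed ones only negligibly. Concretely, if $k$ positions are revealed and $c$ core neighbors remain unplaced, the conditional probability is $c/(d^*-k)$. Since $k \le q_v = o(d^*)$ for vertices we care about, this is still at most $2n^{2\delta-1}(1+o(1))$, respectively $2n^{\delta-1}(1+o(1))$. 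The constants can be absorbed, matching the way the paper uses these rates in \Cref{clm:happy-vertex}.
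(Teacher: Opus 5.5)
Your second paragraph is exactly the paper's justification: core positions in $\ell(v)$ are placed uniformly by $\psi(v)$, so a fixed index holds a core edge with probability $n^{\delta}/d^* = 2n^{2\delta-1}$ for $v\in A$ and $1/d^* = 2n^{\delta-1}$ for $v\in B$, which already proves the observation with the stated constants.

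The conditioning issue in your last paragraph is real, but it belongs to the later union-bound arguments, not to this observation, and your pointwise fix does not work in general. Nothing guarantees $q_v = o(d^*)$. A vertex can be probed at every index (for example, the root in the tree model), and after $d^*-1$ dummy answers at a $B$ vertex the conditional probability is $1$. Even when $k$ is small, that argument only recovers the bound up to a $(1+o(1))$ factor. So either drop that paragraph or state it as a separate, explicitly restricted claim.
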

This is a result of a vertex in $A$ having $n^{\delta}$ core neighbors, a vertex in $B$ having $1$ core neighbor, and the adjacency lists being randomly permuted. 
We now show that the number of directed $2$-paths in $T$ is at most $q$.
\begin{claim}\label{clm:no-directed-two-paths}
	The number of directed $2$-paths in the query tree $T$ is at most $q$.
\end{claim}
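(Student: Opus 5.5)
A directed $2$-path in $T$ is an ordered triple of tree-nodes $(x,y,z)$ with tree edges $x \to y$ and $y \to z$. The plan is to count these paths by their \emph{last} edge and use the fact that every non-root tree-node has a unique parent. This gives a bijection from $2$-paths to a subset of the tree edges, and hence a bound of $\card{E_T} = q$.

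The key step is the following observation. Fix a directed $2$-path $x \to y \to z$. Since $\degIn{z} = 1$, the tree-node $y$ is forced to be $p(z)$. Since $y$ also has an incoming edge from $x$, we get $y \neq r$, and $\degIn{y}=1$ forces $x = p(y) = p(p(z))$. So the whole path is determined by its final tree edge $y \to z$, or equivalently by the endpoint $z$. The map $(x,y,z) \mapsto (y,z)$ is therefore an injection from directed $2$-paths into $E_T$.

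Its image is exactly the set of edges $(p(z),z)$ with $p(z) \neq r$. Thus the number of directed $2$-paths equals the number of tree-nodes at depth at least $2$. This is at most $\card{V_T} - 1 = q$, and more precisely it equals $q - \degOut{r}$. We could equally count by the middle vertex, getting $\sum_{y \neq r} \degOut{y}$, which is the number of edges not leaving the root and so again at most $q$.

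There is no real obstacle here. The only point to state carefully is that the root has in-degree $0$, which is what excludes paths starting above $r$. We also use that $T$ is a genuine tree in which each instruction $(a_i,b_i)$ creates exactly one new tree-node $u_{i+1}$ with parent $u_{a_i}$. This is why $\card{E_T}=q$ and every non-root node has in-degree exactly $1$.
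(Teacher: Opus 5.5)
Your proof is correct and takes the same approach as the paper. Each directed $2$-path is determined by its edge farther from the root, because every non-root tree-node has a unique parent, so mapping a path to that edge injects the $2$-paths into $E_T$, which has $q$ elements. Your exact count $q-\deg_{\text{out}}(r)$ is a valid sharpening.
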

\begin{proof}
	The number of directed $2$-paths in the query tree is at most the number of edges in the query tree $q$. Consider the edge $e$ of the $2$-path that is farther from the root. The only edge closer to the root that can form a directed $2$-path with $e$ is its parent. This implies that the number of directed $2$-paths is upper bounded by the number of edges $q$. 
\end{proof}

We now show that no directed $2$-path in the tree can have both edges as core edges.
\begin{claim}\label{clm:no-core-two-path}
	With high probability, no directed $2$-path in the query tree can have both edges as core edges.
\end{claim}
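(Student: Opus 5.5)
We bound the expected number of directed $2$-paths $x \to y \to z$ in $T$ whose two query edges both return core edges, and then apply Markov's inequality. By \Cref{clm:no-directed-two-paths} there are at most $q = n^{1+\eps}$ such $2$-paths. So it suffices to show that any fixed $2$-path has both edges landing in the core with probability $O(n^{2\eps+3\delta-2})$, or at least $o(n^{-1-\eps})$. Summing over the at most $q$ paths then gives $o(1)$.

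Fix a $2$-path with tree-nodes $x \to y \to z$. For both edges to be core edges, $\assign{x}{C}$, the query at $x$ returns a core edge, and the query at $y$ returns a core edge. We split by whether $x$ is assigned to $A$ or to $B$.

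\textbf{Case $\assign{x}{A}$.} By \Cref{clm:prob-being-A}, this happens with probability at most $2n^{-\delta}$. Conditioned on it, the first query returns a core edge with probability at most $2n^{2\delta-1}$ by \Cref{obs:in-core-prob}. In that event $y$ is a core vertex, and in either distribution it lies in $A$ or $B$. The second query, at a fresh index of $y$'s randomly permuted list, returns a core edge with probability at most $2n^{2\delta-1}$, again by \Cref{obs:in-core-prob}. The product is $O(n^{3\delta-2})$.

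\textbf{Case $\assign{x}{B}$.} This happens with probability at most $1$. The first query returns a core edge with probability at most $2n^{\delta-1}$. The second is at most $2n^{2\delta-1}$. The product is $O(n^{3\delta-2})$.

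The two cases together give $O(n^{3\delta-2})$ per path. The expected number of bad paths is therefore $O(n^{1+\eps+3\delta-2}) = O(n^{\eps+3\delta-1})$, which is $o(1)$ by \Cref{constraint:tree-queries}.

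The delicate step is justifying the conditional independence used in multiplying these probabilities. The worry is that the identity of $y$ depends on the randomness of $x$'s list. The second query at $y$ uses $\psi(y)$, and we must handle the case where the index queried at $y$ coincides with the index through which $y$ was reached.

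In the tree model, the edge from $y$ back to $x$ sits at some position of $y$'s list. That position is one of $y$'s core indices, and this is the one correlation to handle. I would handle it as follows. Condition on $\psi(x),\sigma_C(x)$, which determine $y$, and then expose $\psi(y)$. Given that one particular core index of $y$ is the back-edge, the remaining $\le n^{\delta}$ core indices are uniform among the other $d^*-1$ positions. This gives probability at most $n^\delta/(d^*-1) = O(n^{2\delta-1})$ that the queried index is one of them.

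If the queried index equals the back-edge position, the probability is at most $1/d^* = 2n^{\delta-1}$ with no $n^{-\delta}$ gain, since the back-edge position is uniform over $y$'s list. This would weaken the $B$-case bound to $O(n^{2\delta-2})$ per path and still give an overall total of $o(1)$.

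If the labeling/root randomness makes the $\assign{x}{A}$ bound of \Cref{clm:prob-being-A} hold only in the high-probability event of \Cref{clm:deg-D-verts}, I would condition on that event first and absorb its failure probability into the final $o(1)$.
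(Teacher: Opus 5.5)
Your proposal is correct and is essentially the paper's argument: a per-path bound of $O(n^{3\delta-2})$ from \Cref{clm:prob-being-A} and \Cref{obs:in-core-prob}, summed over the at most $q$ directed $2$-paths of \Cref{clm:no-directed-two-paths}, gives $O(n^{\eps+3\delta-1})=o(1)$ under \Cref{constraint:tree-queries}; your split by the type of $x$ is only a cosmetic reorganization of the paper's split by yes/no case. Two small slips do not affect the conclusion. First, the preliminary target $O(n^{2\eps+3\delta-2})$ in your opening paragraph would not suffice after multiplying by $q$, but the $O(n^{3\delta-2})$ bound you actually prove does. Second, the back-edge needs no special handling: $\psi(y),\sigma_C(y)$ are independent of the randomness that selects $y$, so the queried slot of $y$ holds a core neighbor (back-edge included) with probability exactly (number of core neighbors of $y$)$/d^*$, and the back-edge term is a lower-order $O(n^{2\delta-2})$ contribution rather than a weakening.
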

\begin{proof}
	We will prove this separately for the yes and the no case.
	We start with the no case. Consider the $2$-path with tree-nodes $x,y$ and $z$ respectively. 
	Let $x$ be a vertex in $A$. The probability that $y$ is a vertex in the core is at most $2 n^{2\delta-1}$ (\Cref{obs:in-core-prob}). If $y$ is in the core it will be a vertex in $B$ (because we are in the no case) so the probability that $z$ is also in the core is at most $2 n^{\delta-1}$ (\Cref{obs:in-core-prob}). Thus, the probability that this $2$-path is entirely in the core is at most $4n^{3\delta-2}$. The calculation is the same when $x\in B$.
	
	We now want to union bound this over all $2$-paths. The number of directed $2$-paths in the query tree is at most $q$ (\Cref{clm:no-directed-two-paths}).
	Thus, the probability that any directed $2$-path lies in the core is at most $4 n^{3\delta-2} \cdot n^{1+\eps} = 4 n^{3\delta+\eps-1} =o(1)$ (by \Cref{constraint:tree-queries}). Therefore, with probability $1-o(1)$, there are no directed $2$-paths in the query tree that lie in the core.
	
	We now consider the yes case.
	The same argument as above works for $B$-to-$B$ edges, so we focus on $A$-to-$A$ edges.
	Consider the $2$-path $x,y,z$. In this case, for the $2$-path to be in the core we need $x$ to be an $A$ vertex, we need $y$ to lie in the core, and we need $z$ to lie in the core too.
	The probability that $x$ is a vertex in $A$ is at most $2 n^{-\delta}$ (\Cref{clm:prob-being-A}).
	The probability that $y$ is a vertex in the core is at most $2 n^{2\delta-1}$ (\Cref{obs:in-core-prob}). The probability that $z$ is in the core is also at most $2 n^{2\delta-1}$.  Thus, the probability that this $2$-path is entirely in the core $A$ vertices is at most $4n^{4\delta-2} \cdot 2n^{-\delta} = 8n^{3\delta-2}$. 
	
	We now union bound this over all $2$-paths. Thus, the probability that any directed $2$-path lies in the core is at most $n^{1+\eps} \cdot 8n^{3\delta-2} =o(1)$ (by \Cref{constraint:tree-queries}). Therefore, with probability $1-o(1)$, there are no directed $2$-paths in the query tree that lie in the core.
\end{proof}

\Cref{clm:no-core-two-path} implies that every query is essentially picking a random vertex and making a neighbor query on it. Thus, the power of this stronger non-adaptive tree probe model is intuitively the same as the power of our standard non-adaptive query model.
This implies that we should be able to prove the same claims we did in the non-adaptive query model.
We now show the rest of the proof for completeness.
Since the goal is to replicate the structural results from \Cref{subsec:Limitations-of-Alg}, we omit further intuition and explanations, and focus solely on the formal arguments.

\begin{claim}
Consider an arbitrary vertex $v\in C$ in the core. The probability $v$ is sampled as a neighbor of a dummy vertex $d \in D$ in a single neighbor query is at least $1/4n$ and at most $2.5/n \leq 4/n$.	
\end{claim}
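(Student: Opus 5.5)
The plan is to compute the probability directly. I read the single neighbor query as one that returns the entry at a fixed index of the adjacency list of a fixed dummy vertex $d$ on the side opposite to $v$. The probability is taken over the randomness of $\dd$ (which dummy vertices $v$ is adjacent to) and over the uniform permutation $\sigma(d)$ of $d$'s list. Throughout I condition on the high probability event of \Cref{clm:deg-D-verts}, so that $\deg(d)\in(1\pm 0.2)\cdot(n/2)$ for every $d\in D$.

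I split the event ``the query on $d$ returns $v$'' into two parts. The first is that $(v,d)$ is an edge. The second is that, given the edge, the queried index of $\sigma(d)$ lands on it. Since $\sigma(d)$ is a uniformly random permutation independent of the edge set, every fixed index holds any given neighbor with probability exactly $1/\deg(d)$. Hence the probability in question equals $\expect{\mathbb{1}[(v,d)\in E]/\deg(d)}$.

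\textbf{Upper bound.} Here I just drop the indicator and use $\deg(d)\ge 0.4n$, which gives at most $1/(0.4n)=2.5/n$. This also holds for any fixed realization of the edges satisfying \Cref{clm:deg-D-verts}, so the conditioning causes no trouble.

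\textbf{Lower bound.} First, $v$ picks its dummy neighbors without replacement from the $n^{1-\delta}$ vertices of the opposite dummy side. If $v\in A$ it picks $d^*-n^{\delta}$ of them, and if $v\in B$ it picks $d^*-1$. So $\prob{(v,d)\in E}\ge 1/2-n^{2\delta-1}\ge 1/3$ for large $n$. On this event I use $\deg(d)\le 0.6n$.

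The one subtlety is that $\deg(d)$ is correlated with the event $(v,d)\in E$, and that we are conditioning on the global degree event. I handle this as follows. The degree bound holds pointwise on the conditioned event, so $\expect{\mathbb{1}[(v,d)\in E]/\deg(d)}\ge \prob{(v,d)\in E \text{ and the degree event holds}}/(0.6n)$. The degree event fails with probability $1/\poly(n)$ by \Cref{clm:deg-D-verts}, so the numerator is at least $1/3-o(1)$. This yields a bound of at least $(1/3-o(1))/(0.6n)\ge 1/(4n)$.

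I expect the main obstacle to be exactly this bookkeeping: making sure that conditioning on \Cref{clm:deg-D-verts} does not distort the adjacency probability by more than a negligible amount. Beyond that, the argument is two lines of arithmetic.
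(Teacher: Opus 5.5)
Your proposal is correct and follows the paper's argument: you split the event into $(v,d)\in E$ and the query landing on that edge (probability $1/\deg(d)$), then use the degree bounds of \Cref{clm:deg-D-verts}. This gives at most $1/(0.4n)=2.5/n$, and at least (edge probability)$/(0.6n)\ge 1/(4n)$. Your explicit handling of the correlation between $\deg(d)$ and the edge event is more careful than the paper's proof, which simply multiplies a cruder edge bound of $1/4$ by $1/\deg(d)$; like the paper, you also implicitly need the queried index to lie within $d$'s list for the lower bound to hold.
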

\begin{proof}
The probability that $v$ is sampled as a neighbor of $d \in D$ in a single neighbor query is at most $1/\deg(d) \leq (2/0.8n) \leq 4/n$, implying the upper bound.

For $v$ to be sampled as a neighbor of a dummy vertex $d$, $(v,d)$ should be an edge and the query should pick $v$ out of the $\deg(d)$ neighbors in the adjacency list.
The probability that $v$ has an edge to $d$ is at least $\frac{d^*-n^{\delta}}{n^{1-\delta}} \geq \frac{d^*/2}{n^{1-\delta}}=1/4$. The probability that the query picks $v$ out of the $\deg(d)$ neighbors is $1/\deg(d) \geq (2/1.2 n) \geq 1/n$. This implies that the probability $v$ is picked as the query answer is at least $1/4n$.
\end{proof}

\begin{claim}\label{clm:tree:no-edges-core}
	The probability that a core edge $(u,v)$ is observed is at most $16 n^{\eps+\delta-1}$.
	Also, the number of core edges observed is at most $4 n^{\eps+\delta} \log^{1/2} n$ in $\dyes$ and $\dno$ with high probability.
\end{claim}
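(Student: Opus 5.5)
The plan is to bound the probability that a fixed core edge $(u,v)$ is observed, and then sum over the $2n$ core edges using Markov's inequality. An edge $(u,v)$ is observed only if some tree edge $(p(x),x)$ has $\assign{p(x)}{u}$ and $\assign{x}{v}$, or the reverse orientation. So I will bound, for each of the $q=n^{1+\eps}$ tree edges, the probability that this tree edge realises $u\to v$, and then union bound over tree edges.

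For a fixed tree edge $(p(x),x)$, I will condition on the parent $p(x)$ itself being assigned $u$. There are two routes to this.
\begin{itemize}
\item \textbf{Parent reached through $D$ (or is the root).} Here $p(x)$ is either the random root or a neighbour of a dummy vertex. By the preceding claim, the probability that it equals the specific core vertex $u$ is at most $4/n$; for the root it is $1/n'$.
\item \textbf{Parent reached through a core edge.} Here $p(x)$ and $p(p(x))$ form a core edge. The query from $p(x)$ to $x$ must then be a second core edge, giving a directed core $2$-path. This is ruled out with high probability by \Cref{clm:no-core-two-path}.
\end{itemize}
After conditioning on the first route, the query at $p(x)$ hits the specific neighbour $v$ with probability at most (number of copies of $(u,v)$)$/d^*$. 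For $B$-incident edges this is $1/d^* = 2n^{\delta-1}$. For $A$-to-$A$ edges, multiplicity could in principle help, but each copy contributes $1/d^*$ and the count of copies is absorbed when summing over distinct edges. The cleaner route is therefore to count observed edge-instances (each adjacency-list slot containing a core edge), so the per-slot probability is exactly $1/d^*$.

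Multiplying, each tree edge realises a given orientation of a given core slot with probability at most $(4/n)\cdot(2n^{\delta-1})=8n^{\delta-2}$. Over $q$ tree edges and both orientations this gives at most $16 n^{\eps+\delta-1}$, which is the first part of the claim.

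For the count, I will sum over the $2n$ core edges (equivalently, $O(n)$ core slots, since $A$ vertices total $n^{1-\delta}\cdot n^\delta=n$ slots). This gives an expectation of $O(n^{\eps+\delta})$. With a more careful split, using $\sum$ over endpoints rather than a flat $4/n$, the expectation should come out as at most $4n^{\eps+\delta}$, mirroring \Cref{clm:no-edges-core}. Markov's inequality with factor $\log^{1/2} n$ then yields the bound with probability $1-\log^{-1/2}n=1-o(1)$.

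The main obstacle is the conditioning on $\assign{p(x)}{u}$. Tree-nodes are not independent uniform vertices, so I must justify that reaching $u$ through a dummy parent has probability $O(1/n)$ even after conditioning on the history of the tree. I would argue this by exposing the tree top-down: given the parent is in $D$ with degree $\Theta(n)$ (by \Cref{clm:deg-D-verts}), the random adjacency permutation $\sigma$ makes the selected slot uniform. The earlier claim's upper bound $1/\deg(d)\le 4/n$ holds regardless of the history, so the tree-edge bounds can be chained. The failure event of \Cref{clm:no-core-two-path} is added to the $o(1)$ error.
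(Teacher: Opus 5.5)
Your proposal follows the paper's proof essentially step for step. You discard tree edges whose parent is reached through a core edge via \Cref{clm:no-core-two-path}, bound the probability that the parent is assigned $u$ by $4/n$ and the probability that the query hits $v$ by $1/d^*$, account for both orientations, and union bound over the $q$ tree edges to get $16n^{\eps+\delta-1}$, then sum over the $2n$ core edges and apply Markov. The paper simply takes the resulting expectation $32n^{\eps+\delta}$ and applies Markov at threshold $4n^{\eps+\delta}\log^{1/2} n$, with failure probability $8\log^{-1/2} n = o(1)$, so the ``more careful split'' you hoped would bring the expectation down to $4n^{\eps+\delta}$ is not needed.
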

\begin{proof}
We just count the number of core edges observed where their parent in the tree is a vertex in $D$. We already know that the number of core edges observed where their parent in the tree is not a vertex in $D$ is $0$ with high probability (\Cref{clm:no-core-two-path}).
	
Fix an edge $(u,v)$ in the core. What is the probability that this edge is observed? For that $u$ has to be sampled and then the edge $(u,v)$ has to be picked.
Fix a tree-edge $(x,y)$. The probability that $\assign{x}{u}$ is at most $4/n$ and the probability that $\assign{y}{v}$ is exactly $1/d^*$. Thus, the probability the tree edge $(x,y)$ is $(u,v)$ is at most $16 n^{\delta-2}$ because the edge could be observed in the other direction. Thus, the probability that this fixed core edge $(u,v)$ is observed over all query tree edges is at most $q \cdot 16 n^{\delta-2} = 16 n^{\eps+\delta-1}$.

There are $2n$ core edges in the graph. Therefore, the expected number of core edges observed is $32 n^{\eps+\delta}$.
Markov's inequality implies that with probability $1-o(1)$, the number of core edges observed is at most $4 n^{\eps+\delta} \log^{1/2} n$.
\end{proof}

\begin{claim}\label{clm:tree:no-happy-edge}
	With high probability, the incoming endpoint of every observed edge in the core is not a happy vertex, in both $\dcyes$ and $\dcno$.
\end{claim}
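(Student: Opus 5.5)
We mirror the proof of \Cref{clm:no-happy-edge}, but with tree-edges in place of label pairs. By \Cref{clm:no-core-two-path}, with high probability every observed core edge $x \rightarrow y$ in the query tree has a parent tree-node $p(x)$ assigned to a vertex of $D$ (or $x$ is the root). Moreover, the child tree-nodes of $y$ cannot produce a core edge, since that would again form a core $2$-path. We condition on this event.

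So the bad event, that the incoming endpoint $v$ of an observed core edge $u \rightarrow v$ is happy, requires the following. Vertex $v$ must appear as some \emph{other} tree-node $z$ whose parent is in $D$ (or which is the root), and some query out of $z$ must return a core edge.

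Fix a core edge $(u,v)$; we treat the worst case, where it is an $A$-to-$A$ edge. Fix also an ordered pair of distinct tree-edges: a tree-edge $(x,y)$ realizing $\assign{x}{u}$, $\assign{y}{v}$, and a tree-node $z \neq y$ with $\assign{z}{v}$ together with one child query out of $z$ that returns a core edge.

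\begin{itemize}
\item By the preceding claim, $\Pr(\assign{x}{u}) \le 4/n$, and then $\Pr(\assign{y}{v}) \le 1/d^*$. Together these give the $O(n^{\delta-2})$ bound per tree-edge already used in \Cref{clm:tree:no-edges-core}.
\item Independently, $\Pr(\assign{z}{v}) \le 4/n$. Given this, a single child query of $z$ lands on a core edge with probability at most $2n^{2\delta-1}$ by \Cref{obs:in-core-prob}.
\end{itemize}

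Summing over children of $z$ and over all $z$ counts each tree-edge once. The number of pairs is therefore at most $q \cdot q$, and the probability for the fixed core edge is
\[
O\!\left(q \cdot n^{\delta-2}\right)\cdot O\!\left(q \cdot n^{-1}\cdot n^{2\delta-1}\right) = O\!\left(n^{2\eps+3\delta-2}\right).
\]
A union bound over the $2n$ core edges gives $O(n^{2\eps+3\delta-1})=o(1)$ by \Cref{constraint:tree-queries}. For $A$-to-$B$ and $B$-to-$B$ edges the happiness factor $2n^{\delta-1}$ is smaller, so those cases only improve. We also need independence of the relevant events. Assignments reached through distinct $D$-parents use disjoint adjacency-list randomness, and the core positions in $v$'s list, $\psi(v)$, are independent of which $D$-neighbor's query reaches $v$. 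This lets the four factors multiply, up to the constant-factor slack inherited from \Cref{clm:deg-D-verts}.

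The main obstacle is justifying this factorization rigorously. The tree-nodes $y$ and $z$ both map to the same $v$, and they may share ancestors, so their assignments are not independent a priori. We would handle this as in the earlier claims. First, condition on the assignment of the ancestor tree-nodes up to the first divergence. Then bound the conditional probability that each $D$-parent's query hits $v$ by $1/\deg(d) \le 4/n$ uniformly, using \Cref{clm:deg-D-verts}. This per-step uniform bound holds regardless of the history, so chaining conditional probabilities gives the product bound. The remaining case is $z = y$, which means $v$ itself queries a core edge through $y$'s children. That case is exactly a core $2$-path, which \Cref{clm:no-core-two-path} already rules out.
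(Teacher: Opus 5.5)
Your proposal is correct and reaches the same exponent, but it is organized differently from the paper.

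\textbf{The paper's route.} It fixes a core vertex $v$ and bounds its \emph{marginal} probability of being happy by $\sum_{z:\,p(z)\mapsto D}\Pr(z\mapsto v)\cdot\deg_{\mathrm{out}}(z)\cdot n^{\delta}/d^*\le 8n^{2\delta+\varepsilon-1}$. It then union-bounds over the at most $4n^{\varepsilon+\delta}\log n$ observed core edges from \Cref{clm:tree:no-edges-core}, giving $O(n^{3\delta+2\varepsilon-1}\log n)=o(1)$.

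\textbf{Your route.} You bound, for each fixed core edge $(u,v)$, the \emph{joint} probability of two things: the edge is observed via a tree-edge $(x,y)$, and $v$ is happy via a different tree-node $z$. You sum over at most $q^2$ pairs of tree-edges and union-bound over the $2n$ core edges, in the style of \Cref{clm:no-happy-edge}.

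\textbf{What each buys.} The paper's version is shorter and recycles the count of observed edges. However, its union bound runs over a random set. Multiplying that set's size by the happiness probability of a \emph{fixed} vertex tacitly assumes that being an observed endpoint is uncorrelated with being happy. Your version computes exactly the joint quantity that justifies the union bound, and it has further advantages:
\begin{itemize}
\item it does not need the high-probability count of \Cref{clm:tree:no-edges-core};
\item it treats the cases $z=y$ and $z=r$ explicitly;
\item it avoids the $\log n$ loss, ending at $O(n^{2\varepsilon+3\delta-1})$.
\end{itemize}
The cost is the factorization across tree-nodes that share ancestors, which you rightly flag. Your chaining sketch is at least as careful as the paper's own unqualified use of $\Pr(z\mapsto v)\le 4/n$.

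\textbf{One small correction.} The per-step bound $1/\deg(d)$ does not hold literally ``regardless of the history.'' If the same position of $d$'s list was already revealed earlier in the exploration, the conditional probability that it equals $v$ is $0$ or $1$. So the chaining should only charge list positions revealed for the first time. The paper glosses over the same point.
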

\begin{proof}
	Consider any observed edge $u \rightarrow v$. We will show that $v$ is not happy.
	$v$ is happy if a query on $v$ returns an edge in the core. We know that no directed $2$-path in $T$ can lie in the core. Thus, for $v$ to be happy we need a tree-node $x$ such that $\assign{x}{v}$, $\assign{p(x)}{D}$, and one of the queries on $x$ observes an edge in the core.
	 
	The probability that a query on $v$ lands in the core is at most $n^{\delta}/d^*$. 
	The number of queries on $v$ is the sum of the out degrees of all the tree-nodes $x$ where $x$ is the vertex $v$.
	Thus, the probability $v$ is happy is:
	\begin{align*}
		\prob{\text{$v$ is happy}} &\leq \sum_{x\in V_T: \assign{p(x)}{D}} \prob{\assign{x}{v}} \cdot \degOut{x} \cdot (n^{\delta}/d^*) \\
		&\leq (4/n) \cdot (n^{\delta}/d^*)  \sum_{x\in V_T}  \degOut{x} \\
		&= 8 n^{2\delta-2} \cdot n^{1+\eps}.
	\end{align*}
	
	We now union bound this probability over all $4 n^{\eps+\delta} \log n$ observed edges (\Cref{clm:tree:no-edges-core}).
	Thus, the probability that any endpoint of an observed edge is happy is at most $8 n^{2\delta+\eps-1} \cdot 4 n^{\eps+\delta} \log n =o(1)$ (\Cref{constraint:tree-queries}).
\end{proof}

\begin{claim}\label{clm:tree:no-happy-pairs}
	With high probability, no pair of vertices in the core observe an edge to the same neighbor for $\dcyes$ and $\dcno$.
\end{claim}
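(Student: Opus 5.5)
We follow the template of \Cref{clm:no-happy-pairs}, replacing the label-averaging argument with the tree-node assignment probabilities established above. By construction, a vertex in $A$ and a vertex in $B$ never share a core neighbor. In the yes case only $A$ vertices can share neighbors (through the $A$-to-$A$ matchings), and in the no case only $B$ vertices can (two vertices of different groups $B_i$ matched to the same $A$ vertex). As before, we first fix an ordered pair $w,w'$ together with their common neighbor $w''$. There are at most $n^{1-\delta}\cdot n^{2\delta}=n^{1+\delta}$ such triples for $A$, and at most $n\cdot n^{\delta}=n^{1+\delta}$ for $B$. It then suffices to bound the probability that both core edges $(w,w'')$ and $(w',w'')$ are observed, directed out of $w$ and $w'$.

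\Cref{clm:no-core-two-path} holds with high probability, and we condition on it. Then any observed core edge $w\to w''$ comes from a tree-edge $(x,y)$ with $\assign{x}{w}$ and $\assign{p(x)}{D}$: either $x$ has a parent, which must lie in $D$, or $x=r$ is the random root. As in the proof of \Cref{clm:tree:no-happy-edge}, the probability that $w\to w''$ is observed is at most
\[
\sum_{x\in V_T}\prob{\assign{x}{w}}\cdot \degOut{x}\cdot (1/d^*)\leq (4/n)\cdot(2/n^{1-\delta})\cdot q = 8n^{\eps+\delta-1}.
\]
Here the factor $4/n$ is the per-query bound for sampling a fixed core vertex from $D$ (the root is even less likely, at most $1/n'$), and $1/d^*$ is the chance that a particular query hits the specific core index for $w''$. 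For the pair, we sum over two distinct tree-nodes $x\neq x'$ with $\assign{x}{w}$ and $\assign{x'}{w'}$. This yields a product bound of $(8n^{\eps+\delta-1})^2=64n^{2\eps+2\delta-2}$, provided the two events are approximately independent. Multiplying by the $n^{1+\delta}$ triples gives $64n^{2\eps+3\delta-1}=o(1)$ by \Cref{constraint:tree-queries}. Markov's inequality and a union bound over the $A$ and $B$ cases then finish the proof.

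The main obstacle is justifying the near-independence, which is automatic in the label-permutation model but not here. Two tree-nodes can be correlated through their ancestors: if $x,x'$ are siblings under a dummy parent $d$, then $\assign{x}{w}$ and $\assign{x'}{w'}$ both depend on $d$'s adjacency list. We would handle this by conditioning on the path to $p(x)$ first. For $x'$ we then use only that $\assign{p(x')}{D}$ and that a single query from a dummy vertex hits $w'$ with probability at most $4/n$ regardless of the rest of the history. This bound holds uniformly given the conditioned degrees from \Cref{clm:deg-D-verts}, and sampling without replacement within a list only changes constants. The adjacency-list randomness $\psi(w)$ and $\psi(w')$ of the two distinct core vertices is independent of everything used to reach them, so the $1/d^*$ factors multiply cleanly. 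The case $x=x'$ is impossible, since $w\neq w'$.
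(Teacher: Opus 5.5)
Your argument is sound at the paper's level of rigor, but it is organized differently from the paper's proof. The paper follows the template of \Cref{clm:tree:no-happy-edge}:
\begin{itemize}
\item it fixes an observed edge $u\to v$;
\item it bounds the probability that some tree-node with a dummy parent is assigned a vertex of $N(v)$ and then hits the edge to $v$ by $(4n^{\delta}/n)(1/d^*)\,q=8n^{2\delta+\eps-1}$;
\item it union-bounds over the $O(n^{\eps+\delta}\log n)$ observed core edges supplied by \Cref{clm:tree:no-edges-core}.
\end{itemize}
You instead transplant the triple counting of \Cref{clm:no-happy-pairs}: $O(n^{1+\delta})$ sharing triples times the squared single-edge observation probability $8n^{\eps+\delta-1}$.

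The two computations are the same product regrouped. The paper's expected count $32n^{\eps+\delta}$ of observed edges is $2n$ core edges times the per-edge probability. Its per-endpoint factor $8n^{2\delta+\eps-1}$ is $n^{\delta}$ times the per-edge probability. So both land on $n^{2\eps+3\delta-1}$.

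Your version buys a genuine first-moment bound, with no extra $\log n$ and no dependence on \Cref{clm:tree:no-edges-core}. The paper's union bound over a random set of observed edges silently conditions on which edges were observed. As written, its bad event ``some $w\in N(v)$ observes an edge to $v$'' does not even exclude $w=u$. The paper's route buys brevity and reuse of an already-established count.

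Both routes rest on the same unproved near-independence between assignments of different tree-nodes. The paper never mentions it; you at least isolate it. Your proposed fix, however, overclaims in two places.
\begin{itemize}
\item It is not true that a query from a dummy vertex hits $w'$ with probability at most $4/n$ ``regardless of the rest of the history.'' If the history has already revealed many entries of that dummy's list, a fresh position equals $w'$ with probability $1/(\text{number of unrevealed entries})$, which can far exceed $4/n$. A previously revealed position is even deterministic.
\item $\psi(w)$ is not independent of everything used to reach $x$ when the root-to-$x$ path itself passes through $w$.
\end{itemize}
What you actually need is only the averaged joint bound $\Pr[x\mapsto w,\ x'\mapsto w',\ p(x)\mapsto D,\ p(x')\mapsto D]=O(n^{-2})$ for distinct tree-nodes $x\neq x'$. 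You also need the two $1/d^*$ hitting factors to hold up to $1+o(1)$ on average. A uniform conditional bound is not required. This averaged statement is exactly the implicit assumption the paper's tree-model claims also make without proof, so your proposal is no weaker than the paper's. You should state that assumption rather than the false uniform version.
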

\begin{proof}
	Consider any observed edge $u \rightarrow v$. We will show that there is no other observed edge incoming to $v$.
	$v$ has at most $n^{\delta}$ neighbors.
	A bad event would be $w \in N(v)$ observing an edge to $v$.
	Note that the parent of $w$ in the tree has to be a vertex in $D$ because we know that no directed $2$-path in $T$ can lie in the core. Thus, to observe an edge to $v$, we need a tree-node $x$ such that $\assign{x}{N(v)}$, $\assign{p(x)}{D}$, and one of the queries on $x$ observes the edge to $v$.
	
	The probability that a query on $w \in N(v)$ observes an edge to $v$ is $1/d^*$. 
	The number of queries on $N(v)$ is the sum of the out degrees of all the tree-nodes $x$ where $x$ is assigned a vertex in $N(v)$.
	Thus, the probability we observe an edge to $v$ is:
	\begin{align*}
		\prob{\text{observe edge to $v$}} &\leq \sum_{x\in V_T: \assign{p(x)}{D}} \prob{\assign{x}{N(v)}} \cdot \degOut{x} \cdot (1/d^*) \\
		&\leq (4n^{\delta}/n) \cdot (1/d^*)  \sum_{x\in V_T}  \degOut{x} \\
		&= 8 n^{2\delta-2} \cdot n^{1+\eps}.
	\end{align*}
	
	We now union bound this probability over all $4 n^{\eps+\delta} \log n$ observed edges (\Cref{clm:tree:no-edges-core}).
	Thus, the probability that any endpoint of a observed edge is happy is at most $8 n^{2\delta+\eps-1} \cdot 4 n^{\eps+\delta} \log n =o(1)$ (\Cref{constraint:tree-queries}).
\end{proof}

\begin{lemma}\label{lem:tree:disj-union-stars}
	Let $G \sim \mathcal{D}$ and consider the query tree $T$ with $n^{1+\eps}$ adjacency list queries.
	Then, with high probability, the edges $\Alg$ finds inside the core ($\eobs^C$) forms a disjoint union of stars.
\end{lemma}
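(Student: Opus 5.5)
The plan mirrors the proof of \Cref{lem:disj-union-stars}: condition on the high-probability events of \Cref{clm:no-core-two-path}, \Cref{clm:tree:no-edges-core}, \Cref{clm:tree:no-happy-edge} and \Cref{clm:tree:no-happy-pairs}, together with an analogue of \Cref{clm:no-multi-edges} for the tree model. A union bound over this constant number of $o(1)$ failure events keeps the overall probability at $1-o(1)$. On the intersection of these events I will check that every connected component of $\eobs^C$ is a star: it has a center, its petals are not happy, no petal has any other incoming observed core edge, and no multi-edge is observed.

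The first step is the multi-edge claim, which I expect to be the only genuinely new part. A tree-node $x$ assigned to an $A$ vertex may issue several queries, and two of them could return the same $A^R$ neighbor. I would redo the calculation of \Cref{clm:no-multi-edges} with tree-nodes in place of labels.
\begin{itemize}
\item By \Cref{clm:prob-being-A}, each tree-node lies in $A$ with probability at most $2n^{-\delta}$.
\item For two fixed queries at $x$, the chance that both land on core neighbors that coincide is at most $4n^{4\delta-2}\cdot n^{\delta-1}$.
\item There are at most $\degOut{x}^2\le d^*\cdot\degOut{x}$ such pairs at $x$, and $\sum_x \degOut{x}=q$.
\end{itemize}
Summing gives a total of order $n^{-\delta}\cdot d^*\cdot n^{1+\eps}\cdot n^{5\delta-3}=O(n^{3\delta+\eps-1})=o(1)$ by \Cref{constraint:tree-queries}.

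There is one subtlety. The same underlying vertex may appear at several tree-nodes, and queries at different tree-nodes could also hit the same neighbor. I would handle this by noting that a repeated vertex is itself an event of small probability. By \Cref{clm:no-core-two-path}, every core-revealing tree-node has a parent in $D$, so its vertex is close to uniform (probability at most $4/n$ per vertex). Pairs of core-revealing tree-nodes assigned to the same $A$ vertex can then be bounded exactly like the happy-pair computation in \Cref{clm:tree:no-happy-pairs}.

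With these events in hand the structural argument is identical to the earlier one. Take an observed edge $v\to u$ in the core.
\begin{itemize}
\item By \Cref{clm:tree:no-happy-edge}, $u$ is not happy, so no observed core edge leaves $u$.
\item By \Cref{clm:tree:no-happy-pairs}, no other core vertex has an observed edge into $u$.
\item By the multi-edge claim, $v$ observes at most one edge to $u$.
\end{itemize}
Consequently every happy vertex is the center of a star whose petals have total degree one in $\eobs^C$. Distinct stars share no vertices: a shared petal would violate \Cref{clm:tree:no-happy-pairs}, and a center appearing as a petal would violate \Cref{clm:tree:no-happy-edge}. So $\eobs^C$ is a disjoint union of stars.

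The main obstacle is the bookkeeping in the multi-edge and repeated-vertex step. In the tree model the queries on a fixed vertex are not a fixed number $q_v$; they are spread over random tree-nodes, so one must take care that the out-degrees $\degOut{x}$ are summed correctly.
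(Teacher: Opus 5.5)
Your proposal is correct and follows the paper's argument. On the events of \Cref{clm:tree:no-happy-edge} and \Cref{clm:tree:no-happy-pairs}, every observed core edge $v\to u$ has a non-happy head $u$ with no other incoming observed edge, so $\eobs^C$ splits into stars. The paper's tree-model proof conditions only on those two claims and never states a multi-edge bound, so your tree-node analogue of \Cref{clm:no-multi-edges} is a sound addition. That includes the repeated-vertex case, which is also $o(1)$ because every core-revealing tree-node has a parent in $D$ or is the uniformly labeled root; the addition mirrors how the proof of \Cref{lem:disj-union-stars} handles the non-tree setting.
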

\begin{proof}
	We condition on the high probability events in \Cref{clm:tree:no-happy-edge} and \Cref{clm:tree:no-happy-pairs}.
	Consider a happy vertex $v$ and its observed edge $e = v \rightarrow u$.
	We know from \Cref{clm:tree:no-happy-edge} that the observed endpoint $u$ of an observed edge cannot be happy.
	Thus, there are no observed edges going out of $u$.
	Also, we know by \Cref{clm:tree:no-happy-pairs} that no pair of vertices in the core observe an edge to the same vertex.
	Thus, there are no other observed edges incident on $u$.
	Therefore, the edges observed in the core ($\eobs^C$) forms a disjoint union of stars with high probability.
\end{proof}

%We partition the tree-nodes of the query tree $T$ into at most $2\log n$ groups based on their degrees.
%Let $S_i$ be the set of tree nodes with degrees in the range $2^{i}$ to $2^{i+1}$.
%We now look at $\tau$-heavy vertices. 
\begin{claim}\label{clm:tree:discovered-tau-heavy}
	The number of edges observed with $\tau$-heavy petal vertices is at most $8n^{2\eps+\delta} \log^2 n/\tau$ with probability $1-1/\log^{1.5} n$.
\end{claim}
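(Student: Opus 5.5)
We mirror the proof of \Cref{clm:discovered-tau-heavy}, replacing the labeling-permutation argument with the tree-based sampling bounds established above. Here a vertex $v$ is $\tau$-heavy if the total number of queries made on it, $\sum_{x \in V_T : x \mapsto v} \degOut{x}$, is at least $\tau$. An observed core edge with a $\tau$-heavy petal is an observed edge $u \rightarrow v$ whose head $v$ is $\tau$-heavy.

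First, by \Cref{clm:no-core-two-path}, with high probability every observed core edge $u\rightarrow v$ arises from a tree-edge $(x,y)$ with $x \mapsto u$, $y \mapsto v$ and $p(x)\mapsto D$. We also note that $v$ becomes heavy only through tree-nodes $z \mapsto v$ whose parents are in $D$. A parent of $z$ in the core would mean a core edge into $v$ other than $u\rightarrow v$. Such an edge is either excluded by \Cref{clm:tree:no-happy-pairs}, or it is the reverse edge, which would make $v$ happy and is excluded by \Cref{clm:tree:no-happy-edge}. The one exception is $z=y$ itself, whose parent is $u$; its out-degree must be handled separately, as discussed below.

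Fix a core edge $(u,v)$ and a tree-edge $(x,y)$. We have $\prob{x\mapsto u}\leq 4/n$ by the single-query bound from $D$, and the probability that the query picks $v$ is $1/d^*$. Given this, $v$ is $\tau$-heavy only if the tree-nodes $z\neq y$ landing on $v$ via $D$-parents carry out-degree totalling at least $\tau-\degOut{y}$. Since each $z$ satisfies $\prob{z\mapsto v}\leq 4/n$, the expected total out-degree landing on $v$ is at most $\sum_z \degOut{z}\cdot 4/n \leq 4n^{\eps}$. By Markov's inequality, $\prob{v \text{ is } \tau\text{-heavy via others}} \leq 8n^{\eps}/\tau$ whenever $\degOut{y}\leq \tau/2$. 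This plays the role of the $n^{1+\eps}/(\tau n')$ bound in the original proof. We would like to say that this event is approximately independent of $x\mapsto u$; these events involve different tree-nodes, and conditioning on one assignment changes the other probabilities only by a $1\pm o(1)$ factor.

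Summing over the $q$ tree-edges gives
\[
\prob{(u,v)\text{ observed-heavy}} \leq q\cdot (4/n)(2/n^{1-\delta})(8n^{\eps}/\tau) = O\!\left(n^{2\eps+\delta-1}/\tau\right).
\]
Summing over the $2n$ core edges, the expected count is $O(n^{2\eps+\delta}/\tau)$. Applying Markov's inequality with slack $\log^{1.5} n$, and splitting constants so the result sits within $8n^{2\eps+\delta}\log^2 n/\tau$, yields the claim with failure probability $1/\log^{1.5} n$. We then add the $o(1)$ failure probabilities from \Cref{clm:no-core-two-path,clm:tree:no-happy-pairs}.

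The main obstacle is the case where $y$ itself has large out-degree, since queries on $y$ count toward $v$'s heaviness. Here we would use \Cref{clm:tree:no-happy-edge}: with high probability $v$ is not happy. We then bound edges with $\degOut{y}\geq \tau/2$ directly. Their expected number is at most
\[
\sum_{(x,y):\,\degOut{y}\geq\tau/2} \frac{4}{n}\cdot\frac{2n^{\delta}}{n^{1-\delta}}\cdot 2n.
\]
The number of such $y$ is at most $2q/\tau$, since out-degrees sum to $q$. This gives $O(n^{\eps+2\delta}/\tau)$, which is within the target bound. A second, smaller issue is rigorously decoupling $x\mapsto u$ from the later assignments. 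We would handle this by revealing the tree top-down and conditioning on ancestors, using that the $1/\deg(d)$ bounds hold conditionally under \Cref{clm:deg-D-verts}.
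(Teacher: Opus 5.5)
Your route differs from the paper's. The paper argues globally: by symmetry, all vertices of $A$ (resp.\ $B$) are $\tau$-heavy with a common probability $p_A$ (resp.\ $p_B$). It asserts $p_A\le p_B$ and uses the budget $n\,p_B\,\tau\le q$ to get $p_A,p_B\le n^{\eps}/\tau$. It then multiplies by the $4n^{\eps+\delta}\log^{1/2}n$ observed core edges and applies Markov with slack $\log^{1.5}n$. That argument tacitly treats ``$v$ is a petal'' and ``$v$ is heavy'' as independent. That is exactly the correlation you isolate: the petal's own tree-node counts toward its queries. So your edge-by-edge union bound is the more careful approach. However, your handling of that correlation has a genuine gap, in two places.

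\textbf{Repeated observations of $u\rightarrow v$.} The node $y$ is not the only tree-node assigned to $v$ with a core parent. Every other tree-node $x'$ with $x'\mapsto u$ that queries the position of $v$ in $\ell(u)$ also has a child assigned to $v$. These are repeated observations of the \emph{same} edge $u\rightarrow v$. Hence neither \Cref{clm:tree:no-happy-pairs} (which concerns distinct observers) nor \Cref{clm:tree:no-happy-edge} excludes them. They need not be rare: if every tree-node queries the same few positions, all tree-nodes assigned to $u$ observe $u\rightarrow v$ as soon as one does. Their out-degrees can sum to at least $\tau$ while each is below $\tau/2$, and then neither of your two cases applies.

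\textbf{The large-$\deg_{\text{out}}(y)$ case.} Your bound $O(n^{\eps+2\delta}/\tau)$ exceeds the target $8n^{2\eps+\delta}\log^2 n/\tau$ whenever $\delta>\eps$, which is the regime of \Cref{thm:lb-large-apx}. The loss comes from charging all $2n$ core vertices the $A$-rate $n^{\delta}/d^*$. Since the core degrees sum to $4n$, a fixed tree-edge observes a core edge with probability at most $(4/n)\cdot 4n/d^*=O(n^{\delta-1})$.

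\textbf{A fix for both.} Replace the single-$y$ split by a Markov bound on $W_{u,v}:=\sum\{\deg_{\text{out}}(y') : p(y')\mapsto u,\ y'\mapsto v\}$. By linearity, $\mathbb{E}[W_{u,v}]\le q\cdot(4/n)\cdot(1/d^*)=8n^{\eps+\delta-1}$, so $\Pr[W_{u,v}\ge\tau/2]\le 16n^{\eps+\delta-1}/\tau$. Summing over the $4n$ directed core edges gives $O(n^{\eps+\delta}/\tau)$, which is within the target. The remaining event is: the edge is observed, $W_{u,v}<\tau/2$, and arrivals at $v$ through $D$-parents total at least $\tau/2$. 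Your main computation bounds exactly this event.

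\textbf{Stating the decoupling correctly.} In that main computation, state the decoupling as a conditional upper bound: $\Pr[z\mapsto v\mid x\mapsto u,\ y\mapsto v]=O(1/n)$ for each $z$ with a $D$-parent. With linearity, this gives conditional expectation $O(n^{\eps})$. It is not a $1\pm o(1)$ statement, since learning that a $D$-vertex is adjacent to $v$ roughly doubles that probability. An upper bound is all you need.
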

\begin{proof}
	Consider two vertices $u, v \in A$. By symmetry, the probability that $u$ is $\tau$-heavy is the same as that for $v$. Let $p_A$ denote the probability that a vertex $v \in A$ is $\tau$-heavy. Similarly, let $p_B$ denote the probability that a vertex $v \in B$ is $\tau$-heavy.
	
	We know that $p_B \geq p_A$ because vertices in $D$ are more likely to sample a vertex $u \in B$ than a vertex $v \in A$, as $u$ has a higher degree to $D$ than $v$ does.
	We know that the number of queries on all $\tau$-heavy vertices is bounded by $n^{1+\eps}$.
	Thus, $n \cdot p_B \cdot \tau \leq n^{1+\eps}$. This implies $p_B \leq n^{\eps}/\tau$.
	This also implies that $p_A \leq n^{\eps}/\tau$.
	We can also show a similar bound using Markov's inequality because the expected number of queries on a vertex is $\Theta(n^{\eps})$.
	
	There are at most $4 n^{\eps+\delta} \log^{1/2} n$ observed core edges. The probability that their endpoints are heavy is at most $p_B$. This implies that in expectation there are at most $4 n^{\eps+\delta} \log^{1/2} n \cdot n^{\eps}/\tau$ petal vertices that are $\tau$-heavy.
	Thus, the probability that there are more than $8n^{2\eps+\delta} \log^2 n/\tau$ heavy vertices is at most $1/\log^{1.5} n$.
\end{proof}

\begin{lemma}\label{clm:tree:discovered-heavy-edges}
	With high probability, for all $i \in [\alpha,\beta]$, the number of edges observed with $2^i$-heavy petal vertices is at most $2^{-i} \cdot 8n^{2\eps+\delta} \log^2 n$ for $\alpha= \log (n^{\eps} \log n)$ and $\beta = \log (16 n^{2\eps+\delta} \log^2 n)$.
\end{lemma}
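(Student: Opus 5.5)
This lemma is the tree-model analogue of \Cref{clm:discovered-heavy-edges}, and I would prove it the same way: apply \Cref{clm:tree:discovered-tau-heavy} at every threshold $\tau = 2^i$, $i \in [\alpha,\beta]$, and take a union bound. No new probabilistic argument is needed. The per-threshold bound already gives exactly the target quantity $2^{-i}\cdot 8n^{2\eps+\delta}\log^2 n$, with failure probability $1/\log^{1.5} n$.

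The steps, in order, are as follows.
\begin{enumerate}
\item For each integer $i \in [\alpha,\beta]$, let $\mathcal{E}_i$ be the event of \Cref{clm:tree:discovered-tau-heavy} with $\tau = 2^i$. That claim gives $\Pr(\neg\mathcal{E}_i) \leq 1/\log^{1.5} n$.
\item Count the thresholds. Since $\beta = \log(16 n^{2\eps+\delta}\log^2 n)$ and $n^{2\eps+\delta}\log^2 n = o(n)$ by \Cref{constraint:tree-queries}, we get $\beta \leq \log n$ for large $n$. So there are at most $\log n$ values of $i$.
\item Union bound over these values. The total failure probability is at most $\log n \cdot \log^{-1.5} n = \log^{-1/2} n = o(1)$.
\item On the intersection of all $\mathcal{E}_i$, the stated bound holds simultaneously for every $i \in [\alpha,\beta]$, which is the lemma.
\end{enumerate}

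The main obstacle is the failure probability. In \Cref{clm:discovered-heavy-edges} each threshold failed with probability $2/\log^2 n$, which left a lot of room in the union bound. Here \Cref{clm:tree:discovered-tau-heavy} only gives $1/\log^{1.5} n$, so we must check that $\log n$ thresholds still leave $o(1)$. They do, since $\log n\cdot\log^{-1.5} n = \log^{-1/2} n$, but there is no slack beyond that.

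If a cleaner margin were wanted, one could slightly weaken the Markov threshold inside \Cref{clm:tree:discovered-tau-heavy}. The expectation there is $4n^{2\eps+\delta}\log^{1/2} n/\tau$, so the factor between that expectation and the target bound is a $\log^{1.5} n$ factor. I would simply use the claim as stated.
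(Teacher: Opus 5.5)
Your proposal is correct and follows the paper's proof essentially verbatim. The paper likewise applies \Cref{clm:tree:discovered-tau-heavy} at each threshold $\tau = 2^i$, notes there are at most $\beta \leq \log n$ thresholds by the constraint, and union bounds to get total failure probability $\log n \cdot \log^{-1.5} n = o(1)$.
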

\begin{proof}
	We condition on the event in \Cref{clm:tree:discovered-tau-heavy} for $\tau = 2^i$, for all $i \in [\alpha, \beta]$. 
	To account for all such values of $\tau$, we apply a union bound over the failure probabilities of all the cases. 
	Since the number of cases are at most $\beta \leq \log n$ (by \Cref{constraint:no-happy-pairs}), the total failure probability is at most $\log n \cdot (1/\log^{1.5} n) = o(1)$.	
\end{proof}

\begin{claim}\label{clm:tree:indeg-from D}
	Every vertex $v \in A\cup B$ has in-degree at most $4n^{\eps}$ from $D$ in the observed graph with high probability.
\end{claim}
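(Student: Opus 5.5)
The plan is to mirror the proof of \Cref{clm:indeg-from D}, replacing the per-label query counts $q_u$ with the tree-edges of $T$. Fix an arbitrary vertex $v \in A \cup B$. Its in-degree from $D$ in $\gobs$ is at most the number of tree-edges $(x,y) \in E_T$ with $\assign{x}{D}$ and $\assign{y}{v}$: every observed edge $d \rightarrow v$ with $d \in D$ comes from some such tree-edge. I will show that this count exceeds $4n^{\eps}$ with probability at most $1/\poly(n)$. A union bound over all $\card{A \cup B} = O(n)$ vertices then gives the statement, conditioned on the high probability event of \Cref{clm:deg-D-verts}.

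\textbf{Step 1 (per-edge bound).} Order the tree-edges of $T$ in BFS order and reveal the query answers one at a time. Consider the $i$-th tree-edge $(x,y)$, and condition on all earlier answers and on the event $\assign{x}{d}$ for some $d \in D$. Then $y$ is the $b_i$-th entry of $\ell(d)$, which is a uniformly random permutation via $\sigma(d)$. Earlier queries may already have revealed some positions of $\ell(d)$. Even so, the conditional probability that an unrevealed position holds $v$ is at most $1/(\deg(d) - q)$, where $q$ is the number of already revealed positions. If the position was already revealed, it is the same observed edge and adds nothing new. By \Cref{clm:deg-D-verts} we have $\deg(d) \geq 0.8 \cdot n/2$. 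I also need the number of revealed positions of any single $d$ to be $o(n)$. Then the conditional probability is at most $2.5/n$, matching the upper bound in the preceding claim on a dummy vertex sampling a fixed core vertex. Tree-edges whose tail is not in $D$ contribute $0$ by definition.

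\textbf{Step 2 (concentration).} Let $Z_i$ be the indicator that the $i$-th tree-edge has tail in $D$ and head $v$. By Step 1, $\Pr(Z_i = 1 \mid Z_1,\dots,Z_{i-1}) \leq 2.5/n$ for every history. Hence $Z = \sum_i Z_i$ is stochastically dominated by a $\mathrm{Bin}(q, 2.5/n)$ random variable, whose mean is $\mu_H = 2.5 n^{\eps}$. This can be shown by the standard coupling, or by the multiplicative Chernoff bound for sequences with bounded conditional probabilities. Applying \Cref{prop:chernoff} with deviation $0.6$ gives
\[
\Pr\paren{Z > 4n^{\eps}} \leq \exp\paren{-\Omega(n^{\eps})} \ll 1/\poly(n).
\]

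\textbf{Main obstacle.} The crux is the dependence structure. In the tree probe model the identity of $x$ depends on the answers at its ancestors. A single dummy vertex $d$ may also be assigned to many tree-nodes, so its adjacency list is read without replacement. This is why I use the sequential conditional bound rather than independence. What must be checked is that no single $d$ has $\Omega(n)$ revealed positions. If the tree can force this, I would handle such heavily queried dummies separately. Each of them reveals a uniformly random subset of its list, which contains $v$ with probability $O(n^{1-\delta}/n)$ per position summed appropriately. Alternatively, I would simply observe that $v$ appears in $\ell(d)$ at most once, since edges to $D$ are sampled without replacement. Then each dummy contributes at most one edge to the in-degree, and I only need to bound the number of distinct dummies whose revealed sets hit $v$. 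Because the total number of revealed positions is $q$, the same $2.5 n^{\eps}$ expectation and Chernoff argument goes through.
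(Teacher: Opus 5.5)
Your proof has a gap in Step~1, and your closing fallback, which is essentially the paper's route, is only gestured at, with a concentration step that does not work as you state it.

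\textbf{Where Step~1 fails.} What Step~1 actually gives is the bound $1/(\deg(d)-k_d)$, where $k_d$ is the number of positions of $\ell(d)$ read so far. In the tree probe model $k_d$ need not be $o(n)$. The root is a uniformly random vertex, so it lies in $B$ with probability $1-O(n^{-\delta})$. Consider a fixed tree that reads the first $n^{\eps}/2$ positions of the root, and then every position of each of these children, using fewer than $n^{1+\eps}$ queries. With high probability it reads the complete adjacency lists of $n^{\eps}/2$ distinct dummies. Near the end of such a list your bound is $\Theta(1)$, and it is attained. Suppose the history already certifies $v\in N(d)$ (for instance, $v$ is the root and $d$ was read from $\ell(v)$), and all but one position of $\ell(d)$ have been read without seeing $v$. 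Then the last position holds $v$ with conditional probability $1$. So the bound $\Pr(Z_i=1\mid\text{history})\le 2.5/n$ for every history, which Step~2 needs, is false, and the domination by $\mathrm{Bin}(q,2.5/n)$ has no basis. The claim itself survives, since each dummy contributes at most one edge to the in-degree of $v$; it is your proof that fails.

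\textbf{The fallback.} Counting distinct dummies, each contributing at most one, is exactly the paper's route. But your justification that the same Chernoff argument goes through is wrong. Passing to distinct dummies does not bound the conditional hit probability: the first time $v$ is found in a nearly exhausted list is precisely the step described above. The paper never conditions on the history. It proceeds in three steps:
\begin{itemize}
\item It uses only the marginal bound $\Pr[X_i=1]\le 1/\deg(d)\le 2.5/n$ for each query edge $(x,y)$ with $x\mapsto D$, which requires no control on $k_d$.
\item It groups the $X_i$ by dummy into indicators $Y_j$ (the $j$-th dummy observes an edge to $v$). With $Y=\sum_j Y_j$, linearity alone gives $\expect{Y}\le\sum_i\Pr[X_i=1]\le 2.5n^{\eps}$.
\item It gets concentration by applying \Cref{prop:chernoff} to the $Y_j$, and attributes their independence to the independence of the permutations $\sigma(d)$ of distinct dummies.
\end{itemize}
To complete your proof, make this per-dummy grouping the main argument and supply the independence step. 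Note that in the tree model, which tree-nodes land on a given dummy depends on the other dummies' lists. So a martingale argument that genuinely handles heavily read dummies would also serve. Your first fallback also needs correcting: a given position of $\ell(d)$ holds $v$ with probability at most $1/\deg(d)=O(1/n)$, not $O(n^{1-\delta}/n)$. The latter is the chance it holds \emph{some} vertex of $A$. Summing $1/\deg(d)$ over the read positions is exactly the marginal computation above.
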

\begin{proof}
	Fix a vertex $v \in C$. It has at most $d^*$ edges to $D$.
	Consider a query edge $(x,y)\in E_T$ where $\assign{x}{D}$. The probability that this edge goes to $v$ is at most $2/0.8n=2.5/n$.
	There are at most $q=n^{1+\eps}$ query edges $(x,y)\in E_T$ where $\assign{x}{D}$.
	Let $X_i$ be a random variable that is $1$ iff the $i^{th}$ query edge $(x,y)\in E_T$ where $\assign{x}{D}$, finds an edge to $v$ i.e. $\assign{y}{v}$ and $0$ otherwise.
	We know $\prob{X_i=1} \leq 2.5/n$.
	
	Observe that many $X_i$'s can correspond to the same vertex $d \in D$ so they may not be independent. We let $Y_j$ for $j \in [n^{1-\delta}]$ be a random variable that is $1$ iff the $j^{th}$ vertex in $D$ observes the edge to $v$ and $0$ otherwise. 
	Note that the $Y_j$'s form a disjoint sum of the $X_i$'s (each $X_i$ is assigned to exactly one $Y_j$).
	Also, we know that each $Y_j$ is independent because the random permutations on the adjacency lists of all vertices in $D$ are independent.
	
	Let $Y=\sum_j Y_j$ be the random variable denoting the number of observed edges from $D$ to $v$.
	We have $\expect{Y} = \sum_j \expect{Y_j} = \sum_i \expect{X_i} \leq n^{1+\eps} \cdot 2.5/n = 2.5n^{\eps}=\mu_H$.
	Since $Y_j$'s are independent and in $[0,1]$, we can apply the Chernoff bound (\Cref{prop:chernoff}):
	\[
	\prob{Y > 1.1 \mu_H} \leq \exp(-\mu_H/310) \ll 1/\poly(n).
	\] 
	Therefore, the in-degree of $v$ from $D$ is at most $4 n^{\eps}$ with probability $1-1/\poly(n)$.
	We can union bound this over vertices $v\in C$ and get that with high probability, all the vertices have in-degree at most $4n^{\eps}$ from $D$.
\end{proof}

Thus, we have proven \Cref{clm:tree:no-edges-core,lem:tree:disj-union-stars,clm:tree:discovered-heavy-edges,clm:tree:indeg-from D}, whose statements are identical to those of \Cref{clm:no-edges-core,lem:disj-union-stars,clm:discovered-heavy-edges,clm:indeg-from D}, respectively.
For the rest of the proof we condition on the high probability events in these claims along with \Cref{clm:deg-D-verts}.
The coupling argument in \Cref{sec:coupling} carries over directly, since the claims proven in this section are structurally identical to those in \Cref{subsec:Limitations-of-Alg}.
This establishes a distributional lower bound: any deterministic algorithm making $n^{1+\eps}$ queries can distinguish between $\dyes$ and $\dno$ with probability at most $1/2 + o(1)$. 
Applying Yao’s minimax principle then implies the desired randomized lower bound, completing the proof of \Cref{thm:tree:lb} (in the same way as the proof of \Cref{thm:lb}).

\begin{remark}
	Our lower bound also holds in an even more general model that we call the \emph{Non-Adaptive Forest Probe Model}. In this setting, the algorithm is allowed to initiate multiple query trees rather than being limited to a single root and tree structure. This model strictly generalizes both the standard Non-Adaptive model and the Non-Adaptive Tree Probe Model, yet the hardness result remains unchanged.
\end{remark}